\newcommand\ThesisTitle{%
	Discrete geometric analysis of message passing algorithm on graphs}
\newcommand\ThesisAuthor{%
	Yusuke\ Watanabe }
\newcommand\ThesisProgramme{%
	DOCTOR\ OF\ PHILOSOPHY}
\newcommand\ThesisYear{March, 2010}
\newtheorem{thm}{Theorem}[chapter]   %
\newtheorem{prop}{Proposition}[chapter]
\newtheorem{lem}{Lemma}[chapter]
\newtheorem{cor}{Corollary}[chapter]
\theoremstyle{definition} \newtheorem{defn}{Definition}
\theoremstyle{definition} \newtheorem{asm}{Assumption}
\theoremstyle{plain} \newtheorem*{claim}{Claim}
\theoremstyle{plain} \newtheorem*{remark}{Remark}
\theoremstyle{definition} \newtheorem{example}{Example}
\DeclareMathOperator*{\argmax}{argmax}
\DeclareMathOperator*{\argmin}{argmin}
\DeclareMathOperator*{\sgn}{sgn}
\def\bs#1{\boldsymbol#1}
\def\iunit{i}                          %
\newcommand{\sfrac}[2]{\frac{{\scriptstyle #1}}{{\scriptstyle #2}}}   %
\newcommand{\mat}[2]{{\rm M}(#1,#2)}                                  %
\newcommand{\pd}[2]{\frac{\partial #1}{\partial #2}}                  %
\newcommand{\pds}[3]{\frac{\partial^2 #1}{\partial #2 \partial #3}}   %
\newcommand{\pdseta}[3]{\frac{\partial^2 #1}{\partial \eta_{#2} \partial \eta_{#3}}} %
\newcommand{\norm}[1]{\| #1 \|}
\newcommand{\spec}[1]{{\rm Spec}(#1)}                 %
\newcommand{\specr}[1]{\rho(#1)}                      %
\newcommand{\interior}[1]{{\rm int}#1}
\newcommand{\E}[2]{{\rm E}_{#1}[#2]}                  %
\newcommand{\cov}[3]{{\rm Cov}_{#1}[#2,#3]}
\newcommand{\corr}[3]{{\rm Cor}_{#1}[#2,#3]}           %
\newcommand{\var}[2]{{\rm Var}_{#1}[#2]}
\newcommand{\inp}[2]{\langle#1,#2\rangle}             %
\newcommand{\core}{{\rm core}}
\newcommand{\cg}[1]{\hat{#1}}                         %
\newcommand{\diag}{{\rm diag}}
\newcommand{\ug}[1]{{\rm G}_{#1}}                     %
\newcommand{\e}{{\rm e}}                     %
\newcommand{\di}{{\rm d}}                    %
\newcommand{\tr}{{\rm tr}}                   %
\newcommand{\beliefs}{ \{b_{\alpha}(x_{\alpha}), b_i(x_i) \}_{\alpha \in F, i \in V}  }  %
\newcommand{\beliefsw}{ \{b_{\alpha}(x_{\alpha}), b_i(x_i) \}  }  %
\newcommand{\thetasw}{ \{\fa{\theta}, \theta_i \}  }                                 %
\newcommand{\fgdefn}{H=(V \cup F,\vec{E})}   %
\newcommand{\match}{{\bf D}}  %
\newcommand{\etea}{e' \rightharpoonup e}
\newcommand{\ete}[2]{#1 \rightharpoonup #2}
\newcommand{\vfe}{\mathfrak{X}(\vec{E})}     %
\newcommand{\vfv}{\mathfrak{X}(V)}           %
\newcommand{\sfe}{\mathfrak{F}(\vec{E})}     %
\newcommand{\sfv}{\mathfrak{F}(V)}           %
\newcommand{\matmu}{\mathcal{M}(\bs{u})}     %
\newcommand{\matmc}{\mathcal{M}(\bs{c})}     %
\newcommand{\ed}[2]{#1 \rightarrow #2}       %
\newcommand{\edai}{\alpha \rightarrow i}
\newcommand{\edaj}{\alpha \rightarrow j}
\newcommand{\edbi}{\beta \rightarrow i}
\newcommand{\edbj}{\beta \rightarrow j}
\newcommand{\edgi}{\gamma \rightarrow i}
\newcommand{\edbione}{\beta \rightarrow i_{1}}
\newcommand{\edbik}{\beta \rightarrow i_{k}}
\newcommand{\edij}{i \rightarrow j}
\newcommand{\edji}{j \rightarrow i}
\newcommand{\edik}{i \rightarrow k}
\newcommand{\edjk}{j \rightarrow k}
\newcommand{\edkl}{k \rightarrow l}
\newcommand{\fai}{\{i_{1},\ldots,i_{d_{\alpha}}\}} %
\newcommand{\prodv}{\prod_{i \in V}}
\newcommand{\prodf}{\prod_{\alpha \in F}}
\newcommand{\Hesse}{\nabla^{2}}    %
\newcommand{\bsb}{\bs{b}}
\newcommand{\bsu}{\bs{u}}
\newcommand{\bsw}{\bs{w}}
\newcommand{\bsx}{\bs{x}}
\newcommand{\bsh}{\bs{h}}
\newcommand{\bsphi}{\bs{\phi}}
\newcommand{\bstheta}{\bs{\theta}}  
\newcommand{\bseta}{\bs{\eta}}
\newcommand{\bsmu}{\bs{\mu}}
\newcommand{\bsbeta}{\bs{\beta}}
\newcommand{\bsgamma}{\bs{\gamma}}
\newcommand\Bfe{Bethe free energy\xspace}
\newcommand\LBP{Loopy Belief Propagation\xspace}
\newcommand\lbp{loopy belief propagation\xspace}
\newcommand\BP{Belief Propagation\xspace}
\newcommand\bp{belief propagation\xspace}
\newcommand\ifa{inference family\xspace}
\newcommand\ifas{inference families\xspace}
\newcommand\Ifa{Inference family\xspace}
\newcommand\epara{expectation parameter\xspace}
\newcommand\eparas{expectation parameters\xspace}
\newcommand\npara{natural parameter\xspace}
\newcommand\nparas{natural parameters\xspace}
\newcommand\ls{loop series\xspace}
\newcommand\Bzf{Bethe-zeta formula\xspace}
\newcommand\IB{Ihara-Bass\xspace}
\newcommand\ccm{correlation coefficient matrix\xspace}
\newcommand\ccms{correlation coefficient matrices\xspace}
\newcommand\dcr{deletion-contraction relation\xspace}
\newcommand\dcrs{deletion-contraction relations\xspace}
\newcommand{\fa}[1]{{{#1}_{\alpha}}}     %
\newcommand{\fx}[2]{{{#1}_{#2}}}      
\newcommand{\pa}[1]{ {{#1}_{{\scriptscriptstyle \langle}\! \! \;  \alpha \! \! \; {\scriptscriptstyle \rangle} } } }   %
\newcommand{\pb}[1]{ {{#1}_{{\scriptscriptstyle\langle} \! \! \; \beta \! \! \; {\scriptscriptstyle \rangle} } } } 
\newcommand{\px}[2]{{{#1}_{ {\scriptscriptstyle\langle}\! \! \;  #2 \! \! \; {\scriptscriptstyle \rangle} }}}
\newcommand{\pxw}[2]{{{#1}_{ {\scriptscriptstyle\langle}  #2 {\scriptscriptstyle \rangle} }}}
\newcommand{\va}[2]{{#1}_{\alpha : #2}}    
\newcommand{\vb}[2]{{#1}_{\beta : #2}} 
\newcommand{\vx}[3]{{#1}_{#2 : #3}} 
\renewcommand\maketitle{
\begin{titlepage}
\begin{center}
\hrule height 4pt
\vspace{9mm}
{\bf {\huge \@title{} }} \\
\vspace{10mm}
\hrule height 1pt
\vspace{30mm}
{\bf {\Large \@author{}} }\\

\vspace{18mm}
{\Large \ThesisProgramme}\\
{ Department of Statistical Science \\
School of Multidisciplinary Science\\
The Graduate University for Advanced Studies}\\

\vspace{15mm}
\ThesisYear
\end{center}
\end{titlepage}
}
\title{\ThesisTitle}
\author{\ThesisAuthor}
\begin{document}
\maketitle

\prefacesection{Abstract} 
We often encounter probability distributions given as unnormalized products of non-negative functions.
The factorization structures of the probability distributions are represented by hypergraphs
called factor graphs. 
Such distributions appear in various fields, including
statistics, artificial intelligence, statistical physics, error correcting codes, etc.

Given such a distribution,
computations of marginal distributions and the normalization constant are often required.
However, they are computationally intractable because of their computational costs.
One, empirically successful and tractable, approximation method is the Loopy Belief Propagation (LBP) algorithm.

The focus of this thesis is an analysis of the LBP algorithm.
If the factor graph is a tree, i.e. having no cycle, the algorithm gives the exact quantities, not approximations.
If the factor graph has cycles, however, the LBP algorithm does not give exact results and possibly exhibits oscillatory and non convergent
behaviors. 
The thematic question of this thesis is 
``How do the behaviors of the LBP algorithm are affected by the discrete geometry of the factor graph?''
Here, the word ``discrete geometry'' means the geometry of the factor graph as a space.  

The primary contribution of this thesis is the discovery of a formula called the \Bzf,
which establishes the relation between the LBP, the \Bfe and the graph zeta function.
This formula provides new techniques for analysis of the LBP algorithm, connecting properties of the graph and
of the LBP and the \Bfe.
We demonstrate applications of the techniques to several problems including
(non) convexity of the \Bfe, the uniqueness and stability of the LBP fixed point. 

We also discuss the \ls initiated by Chertkov and Chernyak (2006).
The \ls is a subgraph expansion of the normalization constant, or partition function, and reflects the graph geometry.
We investigate theoretical natures of the series.  
Moreover, we show a partial connection between the \ls and the graph zeta function.

\prefacesection{Acknowledgments}
I would like to express my sincere thanks to my adviser, Prof. Kenji Fukumizu, for his generous time and commitment. 
He placed his trust in me and allowed me the freedom to find my own way. 
At the same time, he made time for discussion on my research every week and gave me excellent feedbacks.
I also thank to his helps to improve my writings.

It is a pleasure for me to thank all people in the Institute of Statistical Mathematics
who helped my research during my Ph.D. course.
Especially, I would like to thank Prof. Shiro Ikeda for his careful reading of the manuscript and helpful comments.

I would like to thank Prof. Michael Chertkov and Dr. Jason Johnson for their hospitality during my stay in Los Alamos National Laboratory
and for things they taught me about the belief propagation algorithm.

Last but not least,
I would like to thank my parents, Kenjiro and Kanako, for their unwavering support.

This research was financially supported by Grant-in-Aid for JSPS Fellows
20-993 and Grant-in-Aid for Scientific Research (C) 19500249.

\chapter*{Notational Conventions}
\begin{table}[th]
\renewcommand{\arraystretch}{1.2}  %
\begin{tabular}{p{40mm}p{100mm}}
\toprule
{\bf General Notation} \\
\midrule
\mat{$r_1$}{$r_2$}               & set of $r_1 \times r_2$ matrices  \\
$\inp{x}{y}$                     & inner product of vectors $x$ and $y$: $\sum x_i y_i$  \\
$\norm{\cdot}$                   & norm                                  \\
$\spec{X}$                       & set of eigenvalues of matrix $X$  \\                          
$\specr{X}$                      & spectral radius of $X$                \\
$\diag (x)$                      & diagonal matrix with diagonal elements $x_i$  \\
$\Hesse$                         & Hessian operator \\
$\interior{\Theta}$              & interior of a set $\Theta$\\
$\sgn (x) $                      & sign of a real value $x$\\
$\E{p}{\phi}$                    & expectation of $\phi(x)$ under $p$ \\
$\cov{p}{\phi}{\phi'}$           & covariance of $\phi(x)$ and $\phi'(x)$ under $p(x)$ \\
$\corr{p}{\phi}{\phi'}$          & correlation of $\phi(x)$ and $\phi'(x)$ under $p(x)$ \\
$\var{p}{\phi}$                  & variance $\phi(x)$ and under $p(x)$\\
\bottomrule
\end{tabular}
\end{table}
\begin{table}[h]
\renewcommand{\arraystretch}{1.2}  %
\begin{tabular}{p{40mm}p{100mm}}
\toprule
{\bf Graphs} \\
\midrule
$G$                              & undirected graph $G=(V,E)$\\
$V$                              & vertex set \\
$E$                              & edge set                                  \\
$H$                              & hypergraph $H=(V,F)$ \\                          
$F$                              & hyperedge set                              \\
$N_i$                            & neighborhood of vertex $i$.                            \\
$N_{\alpha}$                     & neighborhood of hyperedge $\alpha$.                            \\
$d_i$                            & degree of vertex $i$.                            \\
$d_{\alpha}$                     & degree of hyperedge $\alpha$.                            \\
$\core (H)$                      & core of hypergraph $H$  \\
$e$                              & directed edge (undirected edge in chapter 7) \\
\bottomrule
\end{tabular}
\end{table}
\begin{table}[th]
\renewcommand{\arraystretch}{1.2}  %
\begin{tabular}{p{40mm}p{100mm}}
\toprule
{\bf Graphical models} \\
\midrule
$\Psi=\{\Psi_{\alpha}\}$         & set of compatibility functions, graphical model \\
$x_i$                            & random variable on $i \in V$\\
$\mathcal{X}_i$                  & value set of $x_i$\\
$\mathcal{X}$                    & $\prod_{i \in V} \mathcal{X}_i$ \\
$Z$                              & partition function, normalization constant \\
\bottomrule
\end{tabular}
\end{table}
\begin{table}[h]
\renewcommand{\arraystretch}{1.2}  %
\begin{tabular}{p{40mm}p{100mm}}
\toprule
{\bf Exponential families} \\
\midrule
$\mathcal{E}$                    & exponential family \\
$\phi(x)$                        & sufficient statistics \\
$\psi$                           & log partition function \\
$\varphi$                        & negative entropy \\
$\Lambda$                        & Legendre map        \\
$\Theta$                         & set of \nparas  \\
$Y$                              & set of \eparas \\
\bottomrule
\end{tabular}
\end{table}
\begin{table}[th]
\renewcommand{\arraystretch}{1.2}  %
\begin{tabular}{l@{\hspace{4mm}}l}
\toprule
{\bf Inference family, LBP} \\
\midrule
$\mathcal{I}=\{\mathcal{E}_{\alpha},\mathcal{E}_i \}_{\alpha \in F, i \in V}$                    
& \ifa \\
$\fa{\phi}(x_{\alpha})=(\pa{\phi}(x_{\alpha}),\phi_{i_1}(x_{i_1}),\ldots,\phi_{i_{d_{\alpha}}}(x_{i_{d_{\alpha}}}) )$      
&sufficient statistics of $\mathcal{E}_{\alpha}$ \\
$ \fa{\theta}=(\pa{\theta},\va{\theta}{i_1},\ldots,\va{\theta}{i_{d_{\alpha}}} )$
&\nparas of $\mathcal{E}_{\alpha}$ \\
$\fa{\eta}=(\pa{\eta},\va{\eta}{i_1},\ldots,\va{\eta}{i_{d_{\alpha}}})$
&\eparas of $\mathcal{E}_{\alpha}$ \\
$\Theta_{\alpha}$
&set of \nparas of $\mathcal{E}_{\alpha}$. $\fa{\theta} \in \Theta_{\alpha}$ \\
$Y_{\alpha}$
&set of \eparas of $\mathcal{E}_{\alpha}$. $\fa{\eta} \in Y_{\alpha}$ \\
$\mathcal{E}(\mathcal{I})$       & global exponential family \\
$F$                              & type 1 \Bfe function \\
$\mathcal{F}$                    & type 2 \Bfe function \\
$L(\mathcal{I})$                 & domain of type 1 BFE function \\
$A(\mathcal{I},\Psi)$            & domain of type 2 BFE function \\
$m_{\edai}$                      & message from $\alpha$ to $i$ \\
$\mu_{\edai}$                    & natural parameter of $m_{\edai}$ \\
\bottomrule
\end{tabular}
\end{table}
\begin{table}[h]
\renewcommand{\arraystretch}{1.2}  %
\begin{tabular}{p{40mm}p{108.2mm}}
\toprule
{\bf Graph zeta} \\
\midrule
$\vec{E}$                        & set of directed edges \\
$s(e)$                           & starting factor of directed edge $e$\\
$t(e)$                           & terminus vertex of directed edge $e$\\
$\ete{e}{e'}$                    & $t(e) \in s(e')$ and $t(e) \neq t(e')$\\ 
$\mathfrak{P}_H$                 & set of prime cycles of hypergraph $H$ \\
$\mathfrak{p}$                   & prime cycle\\
$r_e$                            & positive integer (dimension) associated with $e$ \\
$r_i$                            & positive integer (dimension) associated with $i \in V$ \\
$\vfe$                           & set of $\mathbb{C}^{r_e}$ valued functions on $\vec{E}$ \\
$\vfv$                           & set of $\mathbb{C}^{r_i}$ valued functions on ${V}$ \\
$\matmu$                         & linear operator on $\vfe$, defined in Eq.~(\ref{def:matmu}) \\   
$\bs{\iota}(\bs{u})$             & linear operator on $\vfe$, defined in Eq.~(\ref{def:iota}) \\ 
$U_{\alpha}$                     & diagonal block of $I+\bs{\iota}(\bs{u})$ \\   
$w^{\alpha}_{\edij}$             & $(j,i)$ element of $W_{\alpha}=U_{\alpha}^{-1}$ \\   
$\mathcal{D}, \mathcal{W}$       & linear operators on $\vfv$, defined in Eq.~(\ref{eq:defDW}) \\    
$\zeta_H$                        & zeta function of a hypergraph $H$\\
$Z_G$                            & zeta function of a graph $G$\\
\bottomrule
\end{tabular}
\end{table}
\newpage
\tableofcontents %

\chapter{Introduction}
%

This chapter provides a background of main topics of this thesis,
motivating our approach: discrete geometric analysis.
Formal definitions and problem settings are given in later chapters.
The first section gives a short introduction of graphical models, which is the main object discussed in this thesis,
explaining the important associated computational tasks, i.e., the computation of marginal distributions and the partition function.
The second section introduces an efficient and powerful approximation algorithm: Loopy Belief Propagation (LBP),
which is thoroughly analyzed in this thesis.
This section also explains the importance of considering the graph geometry for the analysis.
In this thesis, we refer to ``graph geometry'' as discrete geometry.
In the third section, we discuss the discrete geometry that should be considered in the context of LBP.
We first review that interplays between geometric spaces and objects on it, often have appeared in the history of mathematics.
We also review tools in graph theory devised for understanding graphs.
The final section is devoted to the description of the organization of this thesis as well as
a short summary of each chapter.

\section{Graphical models}

\subsection{Introduction of graphical models}
A {\it graphical model} consists of a set of random variables which has a dependence structure represented by a certain type of graph.
There are many classes of graphical models such as pairwise models and Bayesian networks.
Among them, factor graph models include wide classes of graphical models and express factorization structure that are needed for our purpose.

Here we give an example of a factor graph model.
Let $x=(x_1,x_2,x_3,x_4)$ be random variables
and assume that the probability density function of $x$ is given by a factorized form, e.g.,
\begin{equation}
p(x)=\frac{1}{Z}
\Psi_{123}(x_1,x_2,x_3) \Psi_{134}(x_1,x_3,x_4), \label{eq:factorgraphexample1}
\end{equation}
where $\Psi_{123}$ and $\Psi_{134}$ are non-negative functions called {\it compatibility functions} and $Z$ is the normalization constant.
This factorization structure is cleverly depicted by a factor graph;
the factor graph for this example is given by Figure \ref{factorgraphexample1}.
Each square represents a compatibility function and a circle represents a variable.
If a compatibility function has a variable as an argument, the corresponding square and the circle are joined by an edge.

Formally, in this thesis, a {\it graphical model} is referred to as a set of compatibility functions,
which defines the probability distribution by the product.
For general graphical models, the way of illustrating factor graphs is obvious, i.e.,
draw squares and circles corresponding to the compatibility functions and variables respectively,
and join them if a variable is an argument of a compatibility function.
Usually, the index sets of variables and compatibility functions are denoted by $V$ and $F$ respectively.

\begin{figure}
\begin{center}
\includegraphics[scale=0.25]{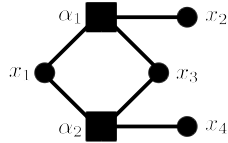} 
\caption{The factor graph associated with the factorization Eq.~(\ref{eq:factorgraphexample1})} \label{factorgraphexample1}
\end{center}
\end{figure}

Note that, if all the compatibility functions have two variables as the arguments,
the factor graph is more simply represented by an undirected graph $G=(V,E)$.
Indeed, we can replace each square and the two edges in the factor graph by a single edge
without loss of information.

\subsection{Important computational tasks}
When a graphical model on $x=(x_i)_{i \in V}$ is given by 
\begin{equation}
 p(x) \propto \prod_{\alpha \in F} \Psi_{\alpha}(x_{\alpha}), \qquad x_{\alpha}=(x_i)_{i \in \alpha}
\end{equation}
one sometimes needs to compute {\it marginal distributions} over relatively small subsets of the variables.
For example, the marginal distribution of $x_1$ is 
\begin{equation}
 p_1(x_1)= \sum_{(x_{i})_{i \in V \smallsetminus 1}} p(x).
\end{equation}
It is also important to compute the partition function, i.e. the normalization constant:
\begin{equation}
 Z= \sum_{x}  \prod_{\alpha \in F} \Psi_{\alpha}(x_{\alpha}).
\end{equation}
Several examples that need these computations are given in the next subsection.

Despite its importance, the computation of marginal distributions and the partition function are
often unfeasible tasks especially if the number of variables is large and the ranges of variables are discrete.
Assuming that the variables are binary,
one observes that the direct computation of each of these quantities requires $O(2^{|V|})$ sums. 
In fact, in the discrete variables cases, 
the problem of computing partition functions is NP-hard \cite{Bcomputational,Ccomputational}. 
Therefore, we need to develop approximation methods that give accurate results for graphical models appearing in real worlds.

It is noteworthy that the exact computation is sometimes feasible using devices for reducing computational cost.
A major approach is the {\it junction tree} algorithm \cite{CDLSprobabilistic}, 
which makes a tree from the associated graph.
This algorithm requires the computational cost exponential to the largest clique size of
the triangulated graph.
Rather than the junction tree algorithm, we analyze the LBP algorithm in this thesis.
One reason is that the largest size of cliques is often too big for running the junction tree algorithm in a practical time 
even if the LBP algorithm can be executed quickly. 
Another reason is more theoretical; we would like to capture graph geometry
as discrepancies between local computations and global computations.
Since the junction tree algorithm reduces to a tree, which has globally trivial geometry, 
the junction tree algorithm does not have such an aspect.

\subsection{Examples and applications of graphical models}
This subsection gives examples of graphical models and explains the importance
of the partition function and/or marginal distributions in each case.
Typically, graphical model is used to implement our knowledge of dependency among random variables.
One example comes from statistical physics.
Let us consider the following form of graphical model on $G=(V,E)$
called (disordered) {\it Ising model}, {\it Spin-glass model} or {\it binary pairwise model}:
\begin{equation}
  p(x)=\frac{1}{Z} \exp(\sum_{ij \in E} J_{ij} x_i x_j  + \sum_{i \in V} h_i x_i), \label{defIsing}
\end{equation}
where $x_i = \pm 1$.
This model abstracts behaviors of spins laid on vertices of the graph.
Each spin has two (up and down) states and only interacts with neighbors. 
Importance of the one-variable marginal distributions may be agreed because they describe probabilities of states of spins. 
However, importance of the partition function may be less obvious.
One reason comes from its logarithmic derivatives;
the expected values and correlations of the variables are given by the derivatives of the log partition function:
\begin{equation}
 \pd{\log Z}{J_{ij}}= \E{p}{x_i x_j}, \quad \pd{\log Z}{h_i}= \E{p}{x_i}. \label{derivativeZ}
\end{equation}
Important physical quantities such as energy, entropy etc are easily calculated by 
the log partition function, or equivalently the {\it Gibbs free energy} \cite{Nstatistical}.
Another example comes from error correcting codes.
From the original information, the sender generates a certain class of binary sequence called codeword
and transmit it thorough a noisy channel \cite{PWerror}.
If the number of errors is relatively small, the receiver can correct them using added redundancy.
The decoding process can be formulated as an inference problem of finding a plausible codeword.
In linear codes, a codeword is made to satisfy local constraints, i.e.,
the sum of certain subsets of bits is equal to zero in modulo two arithmetic.
Then the probabilities of codewords are given by a graphical model. 
The marginals can be used as an estimate of each bit.
LDPC codes and turbo codes are included in this type of algorithms  \cite{MCturbo,Mgood,GLDPC}.

We also find examples in the field of image processing.
In the super-resolution problem, one would like to infer a high resolution image
from multiple low resolution images \cite{BSsuper,BBsuper}.
The high resolution image can be interpreted as a graphical model imposing local constraints on pixels.
The marginal distributions of the model give the inferred image.
The compatibility functions are often learned from training images \cite{FPClowlevelvision,GRPHnonparametric}.
Another example is found in statistics and artificial intelligence.
Domain-specific knowledge of experts can be structured and quantified using graphical models.
One can perform inference from new information, processing the model.
Such a system is called {\it expert system} \cite{CDLSprobabilistic}.  
For example, medical knowledge and experiences can be interpreted as a graphical model.
If a person is a smoker, he or she is likely to have lung cancer or bronchitis compared to non smokers.
This empirical knowledge is represented by a compatibility function
having variables ``smoking,'' ``lung cancer'' and ``bronchitis.''
Moreover, medical experts have a lot of data on the relation between diseases, symptom and other information.
Utilizing such knowledge, statisticians can make a graphical model over the variables related to medical diagnosis, such as
``smoking,'' ``lung cancer,'' ``bronchitis,'' ``dyspnoea,'' ``cough'' etc.
If a new patient, who is coughing and non smoker, comes,
the probability of being bronchitic is the marginal probability of the graphical model with fixed observed variables.
This example is taken from a book by Cowell et al \cite{CDLSprobabilistic} and is called CH-ASIA.
Moreover, there are many general computational problems that are reduced to computations of the partition functions.
Indeed, the counting problems of perfect matchings, graph colorings and SAT are equivalent to 
evaluating the partition functions of certain class of graphical models. 
Computation of the permanent of a matrix is also translated into the partition function of a graphical model on a complete bipartite graph.
The partition function of the perfect matching problem will be discussed in Section \ref{sec:perfectmatching}.

\subsection{Approximation methods}
Because of the computational difficulty,
problem settings in the language of graphical models are useful only if such a formulation is combined with efficient algorithms.
In this subsection, we list approximation approaches except for the \lbp algorithm,
which is comprehensively discussed in the next section. 

\subsubsection{Mean field approximation}
One of the simplest approximation scheme is the (naive) {\it mean field approximation} \cite{Pstatistical}.
For simplicity, let us consider the binary pairwise model Eq.~(\ref{defIsing}) on a graph $G=(V,E)$.
Let $m_i=\E{}{x_i}$ be the mean.
We approximate the partition function by replacing the state of the nearest neighbor variables 
by its mean:
\begin{align*}
 Z 
&=\sum_{x} \exp(\sum_{ij \in E} J_{ij} (m_i+\delta x_i)(m_j+\delta x_j)  + \sum_{i \in V} h_i x_i) \\
&\approx \sum_{x} \exp(\sum_{ij \in E} J_{ij}m_i m_j+ \sum_{i \in V}\sum_{j \in N_i} J_{ij} m_i \delta x_j  + \sum_{i \in V} h_i x_i)  \label{MFapprox}\\
&=2^{|V|} \exp(-\sum_{ij \in E} J_{ij}m_i m_j) \prod_{i \in V}\cosh( \sum_{j \in N_i}J_{ij}m_j + h_i). 
\end{align*}
From $\E{}{x_i}=\pd{\log Z}{h_i}$, we obtain constraints called {\it self consistent equation}.
\begin{equation}
 m_i=    \tanh (  \sum_{j \in N_i}J_{ij}m_j + h_i). \label{eq:selfconsistent}
\end{equation}
The solution of this equation gives an approximation of the means.

This approximation is also formulated as a variational problem \cite{JGJSintroduction}. %
Let $p$ be the probability distribution in Eq.~(\ref{defIsing}) and let $q$ be a fully decoupled distribution
with means $m_i$:
\begin{equation}
 q(x)= \prod_{i \in V} \left( \frac{1+ m_i x_i}{2} \right). \label{eq:fullfactor}
\end{equation}
The variational problem is the minimization of the Kullback-Leibler divergence 
\begin{align*}
 D(q||p)
&= \sum_{x} q(x) \log \left( \frac{q(x)}{p(x)} \right) \\
&= \sum_{i \in V} \left[ \frac{1+m_i}{2} \log \left( \frac{1+m_i}{2} \right)
+  \frac{1-m_i}{2} \log \left( \frac{1-m_i}{2} \right) \right] \\
& \qquad - \sum_{ ij \in E} J_{ij} m_i m_j - \sum_{i \in V} h_i m_i
+ \log Z
\end{align*}
with respect to $q$.
One observes that the condition $\pd{}{m_i}D(q||p)=0$ is equivalent to Eq.~(\ref{eq:selfconsistent}).
Therefore, this variational problem is equivalent to the mean field approximation method.

Empirically, this approximation gives good results
especially for large and densely connected graphical models with relatively weak interactions  \cite{Pstatistical,Jtutorial}.
However, the full factorization assumption Eq.~(\ref{eq:fullfactor}) is often too strong to capture the structure of the true distribution,
yielding a poor approximation.
One approach for correction is the {\it structured mean field approximation} \cite{SJexploiting},
which extends the region of variation to a sub-tree structured distributions,
keeping computational tractability \cite{Wstochastic}.

\subsubsection{Randomized approximations}
Another lines of approximation is randomized (or Monte Carlo) methods.
For the computation of a marginal probability distribution, one can generate a stochastic process that
converges to the distribution \cite{GSsampling}.
The partition function can also computed by sampling.
For ferromagnetic (attractive) case ($J_{ij} \geq 0$), the partition function of the 
Ising model Eq.~(\ref{defIsing}) is accurately approximated in a polynomial time \cite{JSpolynomial}.
More precisely the algorithm is a {\it fully polynomial randomized approximation scheme} (FPRAS).
One major disadvantage of these methods is that these are often too slow for practical purposes \cite{JGJSintroduction}. %
In this thesis, we do not focus on such randomized approaches.

\section{Loopy belief propagation}
\subsection{Introduction to (loopy) belief propagation}
\label{sec:introLBPalgorithm}
Though the evaluation of marginal distributions and the partition function are 
intractable tasks in general, there is a tractable class of graph structure: tree.
A graph is called a tree if it is connected and does not contain cycles.
In 1982, Judea Pearl proposed an efficient algorithm for
calculation of marginal distributions on tree structured models, 
called {\it \BP} (BP) \cite{Preverend,Pearl}.
Roughly speaking, the \bp is a message passing algorithm, i.e.
a message vector is associated with each direction of an edge and
updated by local operations.
Since these local operations can be defined irrespective of the global graph structure,
BP algorithm is directly applicable to graphical models with cycles.
This method is called the {\it Loopy Belief Propagation} (LBP), showing empirically successful performance \cite{MWJempiricalstudy}.
Especially, the method is good for sparse graphs, which do not have short cycles. 
Here we simply explain operations of the (loopy) \bp algorithm.
First let us consider
a pairwise binary model Eq.~(\ref{defIsing}) on a tree in Figure \ref{fig:ExampleTree}. 
We write $\Psi_{ij}(x_i,x_j)= \exp(J_{ij} x_i x_j)$ and $\Psi_{i}(x_i)=\exp(h_i x_i)$.
Then a marginal distribution $p_2$ is given by
\begin{align}
 p_2(x_2)  \propto 
&\sum_{x_1,x_3,x_4,x_5} \Psi_{12} \Psi_{23} \Psi_{34} \Psi_{35} \Psi_1 \Psi_2 \Psi_3 \Psi_4 \Psi_5  \nonumber \\
=&\Psi_2
\left( \sum_{x_1} \Psi_{12} \Psi_1 \right)
\left(
\sum_{x_3} \Psi_{23} \Psi_{3}
\left( \sum_{x_4} \Psi_{34} \Psi_4 \right)
\left( \sum_{x_5} \Psi_{35} \Psi_5 \right) \label{eq:marginalp2}
\right).
 \end{align}
In the above equality, we used the commutativity and the distributive law of the sum and the product. 
If we define messages by
\begin{align*}
 m_{1 \rightarrow 2}(x_2):= \sum_{x_1} \Psi_{12}(x_1,x_2) \Psi_1(x_1) \\
 m_{4 \rightarrow 3}(x_3):= \sum_{x_4} \Psi_{34}(x_3,x_4) \Psi_4(x_4) \\
 m_{5 \rightarrow 3}(x_3):= \sum_{x_5} \Psi_{35}(x_3,x_5) \Psi_5(x_5) \\
\end{align*}
and 
\begin{equation*}
  m_{3 \rightarrow 2}(x_2):= \sum_{x_3} \Psi_{23}(x_2,x_3) \Psi_3(x_3)  m_{4 \rightarrow 3}(x_3) m_{5 \rightarrow 3}(x_3),
\end{equation*}
Eq.~(\ref{eq:marginalp2}) becomes
\begin{equation*}
 p_2(x_2) \propto \Psi_{2}(x_2)  m_{1 \rightarrow 2}(x_2)  m_{3 \rightarrow 2}(x_2).
\end{equation*}
\begin{figure}
\begin{center}
\includegraphics[scale=0.25]{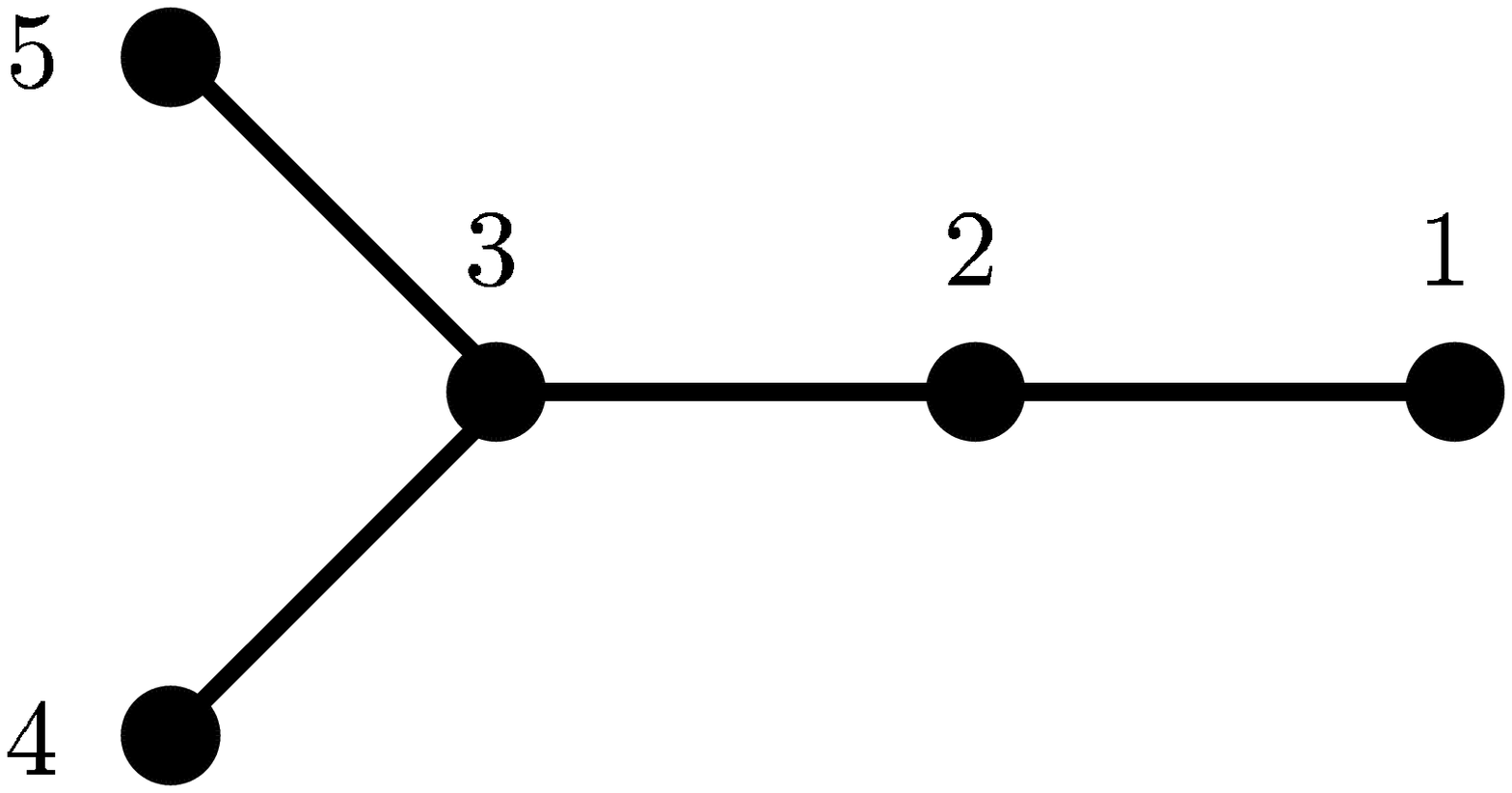} 
\caption{A tree.} \label{fig:ExampleTree}
\end{center}
\end{figure} 
The partition function is also computed using messages;
\begin{equation}
 Z= \sum_{x_2}  \Psi_{2}(x_2)  m_{1 \rightarrow 2}(x_2)  m_{3 \rightarrow 2}(x_2).
\end{equation}
Obviously, this method is applicable to arbitrary trees; it is called the belief propagation algorithm.
More formally, define messages for all directed edges $(j \rightarrow i)$ inductively by
\begin{equation}
 m_{j \rightarrow i}(x_i) = \sum_{x_j} \Psi_i(x_i) \Psi_{ij}(x_i,x_j) \prod_{k \in N_j \smallsetminus i} m_{k \rightarrow j}(x_j), \label{eq:bplbpupdate}\\
\end{equation}
where $N_j$ is the neighboring vertices of $j$.
Since we are considering a tree, this equation uniquely determines the messages. 
The marginal distribution of $x_i$ and the partition function are given by
\begin{align}
p_i(x_i) \propto \Psi_{i}(x_i) \prod_{j \in N_i} m_{j \rightarrow i}(x_i), \label{eq:bplbpmarginal} \\
Z = \sum_{x_i} \Psi_{i}(x_i) \prod_{j \in N_i} m_{j \rightarrow i}(x_i).
\end{align}
These computations requires only $O(|E|)$ steps. 
Therefore, the marginals and the partition function of a graphical model associated with a tree can be computed in practical time.

Secondly, let us consider the case that the underlying graph is not a tree.
In this case, Eq.~(\ref{eq:bplbpupdate}) does not determine the messages explicitly.
However, we can solve Eq.~(\ref{eq:bplbpupdate}) and obtain a set of messages as a solution.
Though this equation has possibly many solutions,
we take one solution that is obtained by iterative applications of Eq.~(\ref{eq:bplbpupdate}).
In other word, we use the equation as an update rule of the messages and find a fixed point.
Then, at a fixed point, the approximation of a marginal distribution is given by Eq.~(\ref{eq:bplbpmarginal}).
This method is called the loopy belief propagation.
The approximation for the partition function is slightly involved; we will explain it in the next subsection.

\subsection{Variational formulation of LBP}
At first sight, the \lbp looks groundless because it is just a diversion of the belief propagation, which is guaranteed to work only on trees.
However, Yedidia et al \cite{YFWGBP} have shown the equivalence to the Bethe approximation, making the algorithm on a concrete theoretical ground.
More precisely, the LBP algorithm is formulated as a variational problem of the Bethe free energy function.

Again, we explain the variational formulation in the case of the model Eq.~(\ref{defIsing}).
Let $b=\{b_{ij},b_i\}_{ij \in E, i \in V}$ be a set of {\it pseudomarginals},
i.e., functions satisfying $\sum_{x_i}b_{ij}(x_i,x_j)=b_{j}(x_j)$, $\sum_{x_i,x_j}b_{ij}(x_i,x_j)=1$ and $b_{ij}(x_i,x_j) \geq 0$.
The {\it \Bfe function} is defined on this set by
\begin{align}
  F(b)= 
&-\sum_{ij \in E}\sum_{x_i,x_j} b_{ij}(x_i,x_j)\log\Psi_{ij}(x_i,x_j)  
-\sum_{i \in V}\sum_{x_i} b_{i}(x_i)\log\Psi_{i}(x_i)  
\nonumber \\
&+ \sum_{ij \in E}\sum_{x_i,x_j} b_{ij}(x_i,x_j) \log b_{ij}(x_i,x_j)  
+ \sum_{i \in V}(1-d_i)\sum_{x_i}b_i(x_i)\log b_i(x_i), \label{eq:introBfe}
\end{align}
where $d_i=|N_i|$ is the number of neighboring vertices of $i$.
Note that this function is not convex in general and possibly has multiple minima.
The result of \cite{YFWGBP} says that the stationary points of this problem correspond to the solutions of the \lbp.
(The positive definiteness of the Hessian of the \Bfe function will be discussed in Section \ref{sec:PDC}.
The uniqueness of LBP fixed point will be discussed in Chapter \ref{chap:unique}.)

Similar to the case of the mean field approximation,
this variational problem can be viewed as a KL-divergence minimization \cite{YFWconstructing}, i.e.,
if we take (not necessarily normalized) distribution
\begin{equation}
 q(x) = \prod_{ij} \frac{b_{ij}(x_i,x_j)}{b_i(x_i)b_j(x_j)} \prod_{i \in V} b_i(x_i),
\end{equation}
then $D(q||p) \approx F(b)+ \log Z$.
Since we are expecting the KL-divergence is nearly zero at a stationary point $b^{*}$,
this relation motivates to define the approximation $Z_B$ of the partition function $Z$ by
\begin{equation}
 \log Z_B := -F(b^{*}). 
\end{equation}

\subsection{Applications of LBP algorithm}
\label{sec:applicationsLBP}
Since LBP is essentially equivalent to the Bethe approximation,
its application dates back to the 1930's when Bethe invented the Bethe approximation \cite{Bethe}. 
In 1993, Berrou et al \cite{BGTnearShannon}, proposed a novel method of error correcting codes
and found its excellent performance.
This algorithm was later found to be a special case of LBP by McEliece and Cheng \cite{MCturbo}.
This discovery made the LBP algorithm popular.
Soon after that, the LBP algorithm is successfully applied to other problems including 
computer vision problems and medical diagnosis \cite{FPClowlevelvision,MWJempiricalstudy}. %
Since then, scope of the application of the LBP algorithm is expanding.  
For example, LBP has many application in image processing
such as super-resolution \cite{BSsuper}, estimation of human pose \cite{HYWhumanpose} and image reconstruction \cite{Tstatisticalimage}.
Gaussian \lbp is also used for 
solving linear equations \cite{SBgaussianlinear} and linear programming \cite{BTlinearprogramming}.

\subsection{Past researches and our approaches}
In this subsection, we review the past theoretical researches on LBP, motivating further analysis.
A large number of researches have been performed by many researchers to make better understanding of the LBP algorithm.

Behavior of LBP is complicated in general in accordance with non-convexity of the \Bfe.
LBP has possibly many fixed points, and furthermore, may not converge.
For discrete variable models, because of the lower-boundedness of \Bfe,
at least one fixed point is guaranteed to exist \cite{YFWconstructing}. %
Fixed points are not necessarily unique in general, but for trees and one-cycle graphs,
the fixed point is guaranteed to be unique \cite{W1loop}.
This fact motivates analysis on classes of graphs that have a unique fixed point.
Each LBP fixed point is a solution of a nonlinear equation associated with the graph.
Therefore, the problem of the uniqueness of LBP is the uniqueness of the solution of this equation.
In the next section we discuss the history of this kind of problems in mathematics to show
an alternative origin of our research.

As mentioned above, the algorithm does not necessarily converge and often shows oscillatory behaviors.
Concerning the discrete variable model, Mooij \cite{MKsufficient} gives a sufficient condition for convergence in terms of the spectral radius of 
a certain matrix related to the graph geometry.
This matrix is the same as the matrix that appears in the (multivariate) Ihara-graph zeta function \cite{STzeta1}.
The graph zeta function is a popular characteristic of a graph; it is originally introduced by Ihara \cite{Idiscrete}.
Mooij's result has not been considered from the view of the graph zeta function nor graph geometry.
In this thesis, developing a new formula, we show a partial answer why this matrix appears in the sufficient condition of convergence.

The approximation performance has been also a central issue for understanding empirical success of LBP.
Since the approximation of marginals for a discrete model is also an NP-hard problem \cite{DLapproximate}, it seems difficult
to obtain high quality error bounds. 
Therefore, rather than rigorous bounds, we need to develop intuitive understanding of errors.
For binary models,
Chertkov and Chernyak \cite{CCloopPRE,CCloop} derived an expansion called \ls 
that expresses the ratio of $Z$ and its Bethe approximation  
in a finite sum labeled by a set of subgraphs called generalized loops.
We also derive an expansion of marginals in a similar manner.
An interesting point of the \ls is that the graph geometry explicitly appears in the error expression
and non-existence of generalized loop in a tree immediately implies the exactness of the Bethe approximation and LBP.
Concerning the error of marginals, Ikeda et al \cite{ITAsto,ITAinfo} have derived perturbative analysis of marginals 
based on the information geometric methods \cite{ANmethods}.
For Gaussian models, though the problem is not NP-hard, 
Weiss and Freeman have shown that the approximated means by LBP are exact but not for covariances \cite{WFcorrectness}.

For understanding of LBP errors, we follow the \ls approach initiated by Chertkov and Chernyak.
One reason is that the full series is easy to handle because it is a finite sum.
Though the expansion is limited to binary models, it covers important applications such as error correcting codes.

As discussed in the previous subsection,
LBP is interpreted as a minimization problem, where the objective function is the \Bfe. 
Empirically, Yedidia et al \cite{YFWGBP} found that locally stable fixed points of LBP
are local minima of the \Bfe function.
For discrete models, Heskes \cite{Hstable} has shown that stable fixed points of LBP are local minima of the \Bfe function.
This fact suggests that LBP finds a locally good stationary point.
From a theoretical point of view, this relation suggests that there is a covered relation between the LBP 
update and the local structure of the \Bfe.

Analysis of the \Bfe itself is also an important issue for understanding of LBP.
Pakzad et al \cite{PAstat} have shown that the \Bfe is convex if the underlying graph is a tree or one-cycle graph.
But for general graphs, (non) convexity of the \Bfe has not been comprehensively investigated.
As observed from Eq.~(\ref{eq:introBfe}), the Hessian (the matrix of second derivatives) of the \Bfe does not
depend on the given compatibility function, i.e., only determined by the graph geometry.

The variational formulation naturally derives an extension of the LBP algorithm called Generalized \BP (GBP)
that is equivalent to an extension of the Bethe approximation: Kikuchi approximation \cite{YFWGBP,Kikuchi}.
Inspired by this result, many modified variational problems have been proposed.
For example, Wiegerinck and Heskes \cite{WHfractional} %
have proposed a generalization of the \Bfe by introducing tuning parameter in coefficients.
This free energy yields fractional \bp algorithm. 
Since these extended variational problems include the variational problem of the \Bfe function,
it is still important to understand the \Bfe as a starting point.

Finally, we summarize our approach to analysis of LBP motivated by past researches.
For tree structured graphs, LBP has desirable properties such as the uniqueness of solution, exactness and convergence at finite step.
However, as observed in past researches, existence of cycles breaks down such properties.
Organizing these fragmented observations,
our analysis tries to make comprehensive understanding on the relation between the \lbp and graph geometry.
Indeed past researches of LBP have not treated ``graph geometry'' in a satisfactory manner;
few analysis has derived clear relations going beyond tree/not tree classification.
Malioutov et al \cite{MJWwalk} have shown that errors of Gaussian \bp are related to walks of the graph and the universal covering tree, 
but it is limited to the Gaussian case.

\section{Discrete geometric analysis}
In this thesis, we emphasize the discrete geometric viewpoint, which utilizes graph characteristics
such as graph zeta function and graph polynomials.
First, we introduce another mathematical background of this thesis: the interplay between geometry and equation.
This viewpoint puts our analysis of LBP in a big stream of mathematics.
Then, we discuss what kind of discrete geometry we should consider.

\subsection{Geometry and Equations}
The fixed point equation of the LBP algorithm Eq.~(\ref{eq:bplbpupdate}) involves messages.
The messages are labeled by the directed edges of the graph and satisfy local relations.
Therefore the structure of the equation is much related to the graph.
Since it is an equation, it is natural to ask whether there is a solution.
And if so, how many are there and what kind of structure do they have?
As mentioned in the previous section, if the underlying graph is a tree or one-cycle graph,
the uniqueness of the LBP solution is easily shown \cite{Pearl,W1loop}.

Equations that have variables labeled by points in a geometric object often have appeared in mathematics and formed a big stream \cite{USSgift}.
There are many examples that involve deep relations between the topology of a geometric object and 
the properties of equations on it such as solvability. %
In this thesis, we emphasize this aspect of the LBP equation and add a new example of this story.

Here we explain such an interplay by elementary examples.
We can start with the following easy observation.
\begin{equation}
(A) 
\begin{cases}
x= 2x+ 3y +z+3 \\
y= -y+z+3 \\
z= 2z-1 
\end{cases}
\qquad
(B)
\begin{cases}
x= 2x+ 3y +z+1 \\
y= x-y+z+2+1 \\
z= -x+y+2z+3
\end{cases}
\end{equation}
One may immediately find the solution of Eq.~(A),
by computing $z, y$ and $x$ successively. 
But one may not find a solution of Eq.~(B) without paper and a pencil.
The difference is easily realized by a graphical representation of these equations
(Fig.~\ref{LinearEqGraph}).
The first one does not include a directed cycle, i.e. a sequence of directed edges that ends the starting point,
but the latter has.

\begin{figure}
\begin{center}
\includegraphics[scale=0.35]{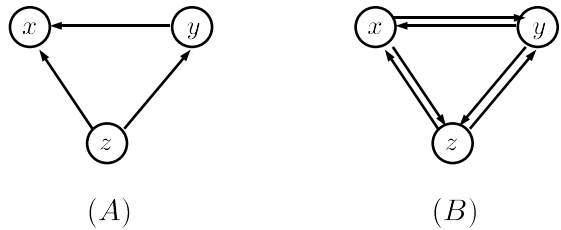} 
\caption{Graph representation of the equations} \label{LinearEqGraph}
\end{center}
\end{figure}

This type of difference is understood by the following setting.
Consider a linear equation $x=Ax+c$.
If $A$ is an upper diagonal matrix, this equation is solved in order of $O(N^2)$ computations by solving one by one.
However, for general matrix $A$, the required computational cost is $O(N^3)$ \cite{GVmatrix}.
Existence of directed cycles makes the equation difficult.

Another easy example of interrelation is the {\it Laplace equation};
\begin{equation}
 \nabla^{2} \phi(x) =0 \quad  x \in \Omega; \quad \phi(x)=0 \quad x \in \partial \Omega. 
\end{equation}
Here, the ``geometric object'' is the region $\Omega \subset \mathbb{R}^{n}$.
This differential equation is characterized by the variational problem of the energy functional:
\begin{equation}
 E[\phi]:=  \frac{1}{2} \int_{\Omega}\norm{\nabla \phi}^{2} dx.
\end{equation}
This characterization reminds us the relation between the LBP fixed point equation and
the variational problem of the \Bfe functional.
In this analogy, the region $\Omega$ corresponds to the graph $G=(V,E)$ and
the function $\phi$ corresponds to messages, or equivalently pseudomarginals, on $G$.
The variation of the energy functional and the \Bfe gives equation
$\nabla^{2} \phi(x) =0$ and the LBP fixed point equation, respectively.
The equation $\nabla^{2} \phi(x) =0$ is local because it is a differential equation.
Similarly, the LBP fixed point equation is also local in a sense that it only involves neighboring messages.
An important difference is that the geometric object $G$ is discrete.

The space of solution is much related to the geometry of $\Omega$.
For example let $n=2$ and $D_1=\{(x,y)|x^2+y^2 < 1\}$ and
$D_2=\{(x,y)|1 < x^2+y^2 \}$.
If $\Omega=D_1$, the only solution is $\phi=0$ from the maximum principle \cite{Acomplex}.
But if $\Omega=D_2$, there is a nonzero solution 
\begin{equation*}
 \phi(x,y)= \ln(x^2+y^2).
\end{equation*}
The Laplace equation can be generalized to be defined on Riemannian manifolds.
The spectrum of the Laplace operator is intimately related to the zeta function of the manifold
which is defined by a product of prime cycles \cite{Sriemannian}.
Furthermore, there is an analogy between Riemannian manifolds and finite graphs,
and the graph zeta function is known as a discrete analogue of the zeta function of Riemannian manifold \cite{AStwisted}.
The spectrum of a discrete analogue of the Laplacian is investigated by \cite{Cspectral}. 

It is noteworthy that the LBP fixed point equation is a non-linear equation
though aforementioned examples are linear equations.
Analysis of LBP does not reduce to finite dimensional linear algebra, e.g., eigenvalues and eigenvectors.
This fact potentially produces a new aspect of analysis on graphs compared to 
linear algebraic analysis on graphs \cite{Godsil,EKKanalysis}.

\subsection{What is the geometry of graphs?}
\label{sec:whatis}
Let us go back to the question:
what kind of discrete geometry should we employ to understand LBP and the Bethe approximation?
We have to think of graph quantities that are consistent with properties of
LBP and the partition function.
In other words, if there is some theory that relates the graph geometry and properties of 
the partition function and LBP, they must share some common properties.
Such requirements give hints to our question.

One may ask for a hint of topologist. 
A graph $G=(V,E)$ is indeed a topological space
when each edge is regarded as an interval $[0,1]$ and they are glued together at vertices. 
However, the basic topology theory can not treat rich properties of the graph,
because it can not distinguish homotopy equivalent spaces and
the homotopy class of a graph is only determined by the number of {\it connected components} $k(G)$
and the {\it nullity} $n(G)=|E|-|V|-k(G)$.
In this sense, graph theory is not in a field of topology but rather a combinatorics \cite{Mconcise}.

For the computation of the partition function, the nullity is much related to its difficulty.
Let $K$ be the number of states of each variable.
We can compute the partition function by $K^{n(G)}$ sums,
because if we cut $n(G)$ edges of $G$, we obtain a tree.
But the partition function on a bouquet graph, i.e. a graph that has one vertex with multiple edges,
is easily computed in $K$ steps.
Therefore, for the understanding of the computation and the behavior of LBP,
we need more detailed information of graph geometry
that distinguishes graphs with the same nullity.

Therefore, we should ask for graph theorists. 
Graph theory has been investigating graph geometry in many senses.
There are many graph characteristics, which are invariant with respect to graph isomorphisms \cite{EMinter1,EMinter2}.
The most famous example is the Tutte polynomial \cite{Tring},
which plays an important role in graph theory, a  broad field of mathematics and theoretical computer science. 
However, this thesis does not discuss the Tutte polynomial because it does not meet criteria discussed below.

In the LBP equation,
one observes that vertices of degree two can be eliminated
without changing the structure of the equation,
because this is just a variable elimination.
On the other hand, for the problem of computing the true partition function,
one also observes that vertices of degree two can be eliminated
with low computational cost keeping the partition function.

The operation of eliminating edges with a vertex of degree one
also keeps the problem essentially invariant, i.e.,
LBP solutions are invariant under the operation and
the true partition function does not change up to a trivial factor.

In this thesis, we consider two objects associated with the graph:
graph zeta function and $\Theta$ polynomial.
We use them in a multivariate form;
in other words, we define them on (directed) edge-weighted graphs.
These quantities are desired properties consistent with the above observations. That is,
\begin{enumerate}
\item{} ``invariant'' to removal of a vertex of degree one and the connecting edge.
\item{} ``invariant'' to erasure of a vertex of degree two.
\end{enumerate}
For the second property, we need to explain more.
Assume that there is a vertex $j$ of degree two and its neighbors are $i$ and $k$.
If we have directed edge weights $u_{\edij}$ and $u_{\edjk}$,
we can erase the vertex $i$ taking $u_{\edik}=u_{\edji}u_{\edij}$ 
keeping the graph zeta function invariant.
A similar result holds for the $\Theta$ polynomial though its weights are
associated with undirected edges.

The first property immediately implies that these quantities are in some sense
trivial if the graph is a tree.
This property reminds us that LBP gives the exact result if the graph is a tree.

Indeed, these properties do not uniquely determine the quantities associated to graphs.
However, it gives a clue to answer our question: what kind of graph geometry is related to LBP?
\section{Overview of this thesis}
\label{sec:overview}

The remainder of the thesis is  organized in the following manner.

\subsubsection{Chapter 2: Preliminaries}
This chapter sets up the problem formally,
introducing hypergraphs, graphical models and exponential families.
The \LBP (LBP) algorithm is also introduced
utilizing the language of the exponential families.
Our characterizations of LBP fixed points gives an understanding of the \Bzf,
as discussed in the first section of Chapter \ref{chap:Bzf}.

\subsubsection{Part I: Graph zeta in \Bfe and \lbp}
The central result of this part is the relation between the Hessian of the \Bfe
and the multivariate graph zeta function.
The multivariate graph zeta function is a computable characteristic of an edge weighted graph
because it is represented by the determinant of a matrix indexed by edges.

The focus of this part is mainly an intrinsic nature of the LBP algorithm and the \Bfe.
Namely, we do not treat the true partition function and the Gibbs free energy.
Interrelation of such exact quantities and their Bethe approximations is discussed in the next part.

The contents of this part is an extension of the result in \cite{WFzeta} 
where only pairwise and binary models are discussed.

\subsubsection{Chapter 3: Graph zeta function}
This chapter develops the graph zeta function and related formulas.
First, we introduce our graph zeta function unifying known types of graph zeta functions. 
Secondly, we show the \IB type determinant formula which plays an essential role in the next chapter.
Some basic properties of the univariate zeta function, such as places of the poles, are also discussed.

\subsubsection{Chapter 4: Bethe-zeta formula}
This chapter presents a new formula, called \Bzf,
which establishes the relation between the Hessian of the Bethe free energy function and the graph zeta function.  
This formula is the central result in Part I.
The proof of the formula is based on the \IB type determinant formula and Schur complements of 
the (inverse) covariance matrices.  
Demonstrating the utility of this formula,
we discuss two applications of this formula.
The first one is the analysis of the positive definiteness and convexity of the \Bfe function;
the second one is the analysis of the stability of the LBP algorithm.

\subsubsection{Chapter 5: Uniqueness of LBP fixed point}
This chapter develops a new approach to the uniqueness problem of the LBP fixed point.
We first establish an index sum formula and combine it with the \Bzf.
Our main contribution of this chapter is the uniqueness theorem for unattractive (frustrated) models on graphs with nullity two.
Though these are toy problems,
the analysis exploits the graph zeta function and is theoretically interesting.
This chapter only discusses the binary pairwise models but our approach can be basically generalized to
multinomial models.

\subsubsection{Part II: Loop Series}
In this part, focusing on binary models, 
we analyze the relation between the exact quantity, such as the partition function and marginal distributions,
and their Bethe approximations using the \ls technique.
The expansion provides graph geometric intuitions of LBP errors.

\subsubsection{Chapter 6: Loop series}
Loop Series (LS), which is developed by Chertkov and Chernyak \cite{CCloopPRE,CCloop}, 
is an expansion that expresses the approximation error in a finite sum in terms of a certain class of subgraphs.
The contribution of each term is the product of local contributions, which are easily calculated by the LBP outputs.
First we explain the derivation of the LS in our notation, which is suitable for the graph polynomial treatment
in the next chapter.
In a special case of the perfect matching problems,
we observe that the \ls has a special form and is related to the graph zeta function.
We also review some applications of the \ls.

\subsubsection{Chapter 7: Graph polynomials from Loop Series}
This chapter treats the \ls as a weighted graph characteristics called theta polynomial, $\Theta_{G}(\bs{\beta},\bs{\gamma})$.
Our motivation for this treatment is to ``divide the problem in two parts.''
The \ls is evaluated in two steps: 
1.~the computation of $\bs{\beta}=(\beta_{ij})_{ij \in E}$ and $\bs{\gamma}=(\gamma_{i})_{i \in V}$ by an LBP solution;
2.~the summation of all subgraph contributions. 
Since the first step seems difficult, we focus on the second step.
If there is an interesting property in the form of the sum, or the $\Theta$-polynomial,
the property should be related to the behavior of the error of the partition function approximation.

Though we have not been successful in deriving properties of $\Theta$-polynomial
that can be used to derive properties of the Bethe approximation,
we show that the graph polynomials $\theta_{G}(\beta,\gamma)$ and $\omega_{G}(\beta)$, 
which are obtained by specializing $\Theta_G$, have interesting properties: \dcr.
We also discuss partial connections to the Tutte polynomial and the monomer-dimer partition function.
We believe that these results give hints for future investigations of $\Theta$-polynomial.

\subsubsection{Chapter 8: Conclusion}
This chapter concludes this thesis and suggests some future researches.

\subsubsection{Appendix}
In Appendix A, we summarize useful mathematical formulas, which are used
in proofs of this thesis.
In Appendix B, we put topics on LBP which are not necessary for the logical thread of this thesis,
but helpful for further understandings.

\chapter{Preliminaries}

In this chapter, we introduce objects and methods studied in this thesis.
Probability distributions that have ``local'' factorization structures 
appear in many fields including physics, statistics and engineering.
Such distributions are called graphical models. 
\LBP (LBP) is an efficient approximation method applicable to inference problems on graphical models.
The focus of this thesis is an analysis of this algorithm applied to any graph-structured distributions.
We begin in Section~\ref{sec:prob} with elements of hypergraphs as well as graphical models
because the associated structures with these graphical models are, precisely speaking, hypergraphs.
Section~\ref{sec:LBP} introduces the LBP algorithm
on the basis of the theory of exponential families.
A collection of exponential families, called \ifa, is utilized to formulate the algorithm.
The Bethe free energy, which gives alternative language for formulating the approximation by the LBP algorithm,
is discussed in Section~\ref{sec:BFE}, providing characterizations of LBP fixed points.

\section{Probability distributions with graph structure}\label{sec:prob}
Probability distributions that are products of ``local'' functions
appears in a variety of fields, including
statistical physics \cite{Pcluster,Ggibbs}, %
statistics \cite{Wgraphical},
artificial intelligence \cite{Pearl},
coding theory \cite{MCturbo,Mgood,GLDPC},
machine learning \cite{Jlearning},
and combinatorial optimizations \cite{MPZanalytic}.
Typically, such distributions come from
system modeling of random variables that only have ``local'' interactions/constraints.  
These factorization structures are well visualized by graph representations, called factor graphs.
Furthermore, the structures are cleverly exploited in the algorithm of LBP.

We start in Subsection \ref{sec:basicsgraph} with an introduction of hypergraphs
because factor graphs are indeed hypergraphs.
Further theory of hypergraphs is found in \cite{Bhypergraphs}.
Subsection \ref{sec:fgrep} formally introduces the factor graph of graphical models with some examples.

\subsection{Basic definitions of graphs and hypergraphs}\label{sec:basicsgraph}
We begin with the definition of (ordinary) graphs.
A {\it graph} $G=(V,E)$ consists of the vertex set $V$ joined by edges of $E$. 
Generalizing the definition of graphs, we define hypergraphs. 
A {\it hypergraph} $H=(V,F)$ consists of a set of {\it vertices} $V$ and a set of {\it hyperedges} $F$.
A hyperedge is a non-empty subset of $V$.
Fig.~\ref{fig:sethypergraph} illustrates a hypergraph $H=(\{1,2,3\},\{\alpha_1,\alpha_2,\alpha_3\})$,
where $\alpha_1=\{1,2\}$, $\alpha_2=\{1,2,3,4\}$ and $\alpha_3=\{4\}$.
In order to describe the message passing algorithm in Section \ref{sec:basicLBP},
it is convenient to identify a relation $i \in \alpha$ with a directed edge $\edai$.
The left of Fig.~\ref{fig:DBhypergraph} illustrates this representation of the above example, where squares represent hyperedges.
Therefore, explicitly writing the set of directed edges $\vec{E}$,
a hypergraph $H$ is also denoted by $\fgdefn$.

It is also convenient to represent a hypergraph as a bipartite graph.
A graph $G=(V,E)$ is {\it bipartite} if 
the vertices are partitioned into two set, say $V_1$ and $V_2$, and
all edges join the vertices of $V_1$ and $V_2$.
A hypergraph $H=(V \cup F, \vec{E})$ is identified with a bipartite graph
$B_H=(V \cup F, E_{B_H} )$, where $E_{B_H}$ is obtained by forgetting the directions of $\vec{E}$.
(See the right of Fig.~\ref{fig:DBhypergraph}.)

\begin{figure}
\begin{minipage}{.33\linewidth}
\begin{center}
\includegraphics[scale=0.27]{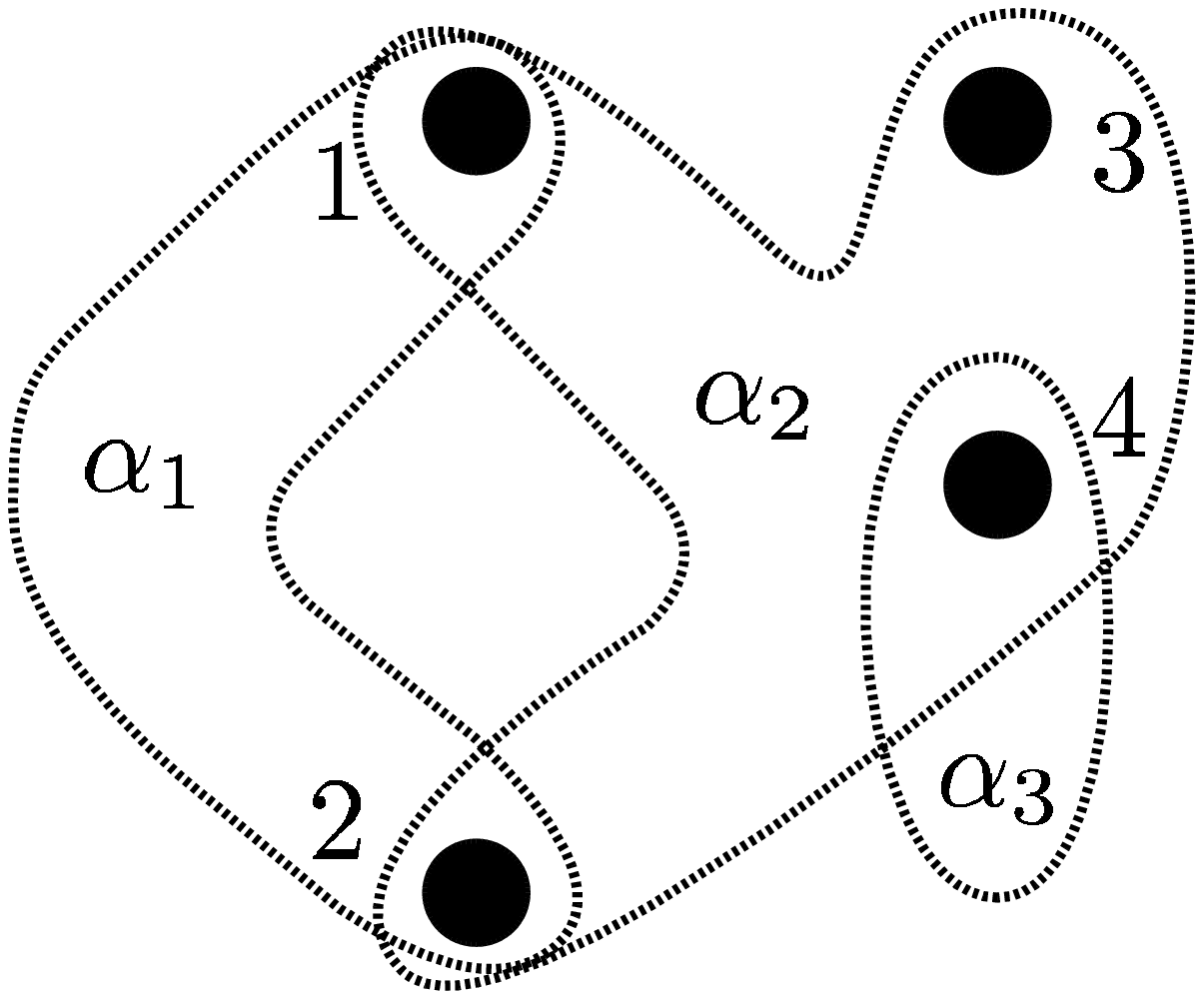}
\vspace{-2mm}
\caption{Hypergraph $H$. \label{fig:sethypergraph}}
\end{center}
\end{minipage}
\begin{minipage}{.66\linewidth}
\begin{center}
\includegraphics[scale=0.27]{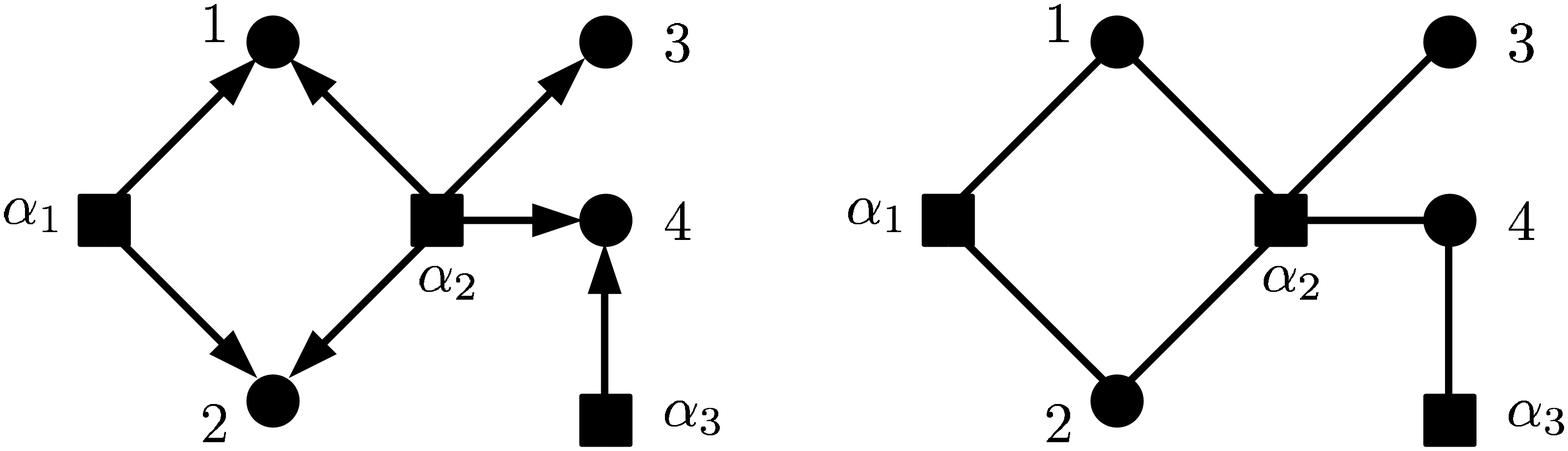}
\vspace{-2mm}
\caption{Two representations. \label{fig:DBhypergraph}}
\end{center}
\end{minipage}
\end{figure}

For any vertex $i \in V$, the {\it neighbors} of $i$ is defined by
$N_i:=\{\alpha \in F | i \in \alpha \}$.
Similarly, for any hyperedge $\alpha \in F$, the {\it neighbors} of $\alpha$ is defined by
$N_{\alpha}:=\{ i \in V | i \in \alpha \}=\alpha$.
The {\it degrees} of $i$ and $\alpha$ are given by
$d_i:=|N_i|$ and $d_{\alpha}:=|N_{\alpha}|=|\alpha|$, respectively.
A hypergraph $H=(V,F)$ is called {\it $(a,b)$-regular} if $d_i=a$ and $d_{\alpha}=b$ for all $i \in V$
and $\alpha \in F$. 
If all the degrees of hyperedges are equal to two,
a hypergraph is naturally identified with a graph.

A {\it walk} $W=(i_1,\alpha_1,i_2,\ldots,\alpha_{n},i_{n+1})$ of a hypergraph
is an alternating sequence of vertices and hyperedges that satisfies
$\alpha_{k} \supset \{i_k,i_{k+1}\}$, $i_k \neq i_{k+1}$ for $k=1,\ldots,n$.
We say that $W$ is a walk from $i_1$ to $i_{n+1}$ and has length $n$.
A walk $W$ is said to be {\it closed} if $i_1 =i_{n+1}$.
A {\it cycle} is a closed walk of distinct hyperedges.
A hypergraph $H$ is {\it connected} if for every pair of distinct vertices $i,j$
there is a walk from $i$ to $j$.
Obviously, a hypergraph is a disjoint union of connected components.
The number of connected components of $H$ is denoted by $k(H)$.
The {\it nullity} of a hypergraph $H$ is defined by
$n(H):=|V|+|F|-|\vec{E}|$.

\begin{defn}
A hypergraph $H$ is a {\it tree} if it is connected and has no cycle.
\end{defn}
This condition is equivalent to $n(H)=0$ and $k(H)=1$.
Other characterization will be given in Propositions \ref{prop:treecore} and \ref{prop:treeprime}.
Note that this definition of tree is different from {\it hypertree} known in graph theory and computer science \cite{SSaxioms,GLShypertree}.
For example, the hypergraph in Fig~\ref{fig:sethypergraph} is a hypertree though it is not a tree in our definition.

\subsubsection{Core of hypergraphs}
Here, we discuss the core of hypergraphs,
which gives another characterization of trees.

The {\it core}\footnote{This term is taken from \cite{Stopology} where the core of graphs is defined.
Note that this notion is different from the core in \cite{Godsil}.} 
of a hypergraph $H=(V,F)$, denoted by $\core(H)$, 
is a hypergraph that is obtained by the union of the cycles of $H$. 
In other words, $\core(H)=(V',F')$ is given by
$F'=\{\alpha \in F| \alpha \text{ is in some cycles of }H \}$ and
$V'=\{i \in V| i  \text{ is in some cycles of }H\}$.
A hypergraph $H$ is said to be a {\it coregraph}
if $H=\core(H)$.
See Fig.~\ref{fig:corehypergraph} for an example.

\begin{figure}
\begin{center}
\includegraphics[scale=0.25]{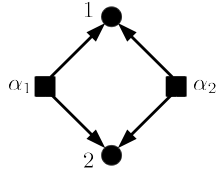}
\vspace{0mm}
\caption{Core of hypergraph in Fig.~\ref{fig:DBhypergraph}. \label{fig:corehypergraph}}
\end{center}
\end{figure}

Intuitively, the core of a hypergraph is obtained by removing vertices and hyperedges
of degree one until there is neither such vertices nor hyperedges.
More precisely, the operation for obtaining the core is as follows.
First, for $\fgdefn$, find a directed edge $(\edai) \in \vec{E}$ that satisfies $d_{\alpha}=1$ or $d_{i}=1$.
If both of the condition is satisfied, remove $\alpha, i \text{ and } (\edai)$.
If either of them is satisfied, remove $(\edai)$ and the degree one vertex/hyperedge.

The following characterization of tree is trivial from the above definitions.
\begin{prop}
\label{prop:treecore}
 A connected hypergraph $H$ is a tree if and only if $\core(H)$ is the empty hypergraph.
\end{prop}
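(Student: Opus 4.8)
The plan is to prove both directions of the equivalence ``$H$ is a tree $\iff$ $\core(H)$ is empty'' by relating the core-reduction operation to the defining properties of a tree: connectedness and absence of cycles. First I would observe that the core-reduction operation (repeatedly deleting a directed edge $\edai$ together with whichever of $\alpha$ or $i$ has degree one) preserves connectedness, since removing a pendant vertex or pendant hyperedge from a connected hypergraph leaves it connected. It also neither creates nor destroys cycles: a vertex or hyperedge of degree one cannot lie on any cycle (a cycle must enter and leave each of its vertices and hyperedges via distinct incidences), so deleting it removes no cycle, and deletion cannot create a cycle either. Hence $H$ and $\core(H)$ have the same cycles, which justifies the fact that $\core(H)$ is well-defined as the union of the cycles of $H$ regardless of the order of reductions.

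For the forward direction, suppose $H$ is a tree, so $H$ is connected and has no cycle. Since $\core(H)$ is by definition the union of the cycles of $H$ and there are none, $\core(H)$ is the empty hypergraph. For the converse, suppose $H$ is connected and $\core(H)$ is empty. By the previous paragraph, $H$ has the same cycles as $\core(H)$, and the empty hypergraph has no cycles; therefore $H$ has no cycle. Being connected and acyclic, $H$ is a tree by definition. That closes the equivalence.

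An alternative, cleaner bookkeeping is available via the nullity: each reduction step removes exactly one directed edge and exactly one vertex-or-hyperedge, so $n(H) = |V| + |F| - |\vec{E}|$ is invariant under core-reduction, and connectedness (hence $k = 1$) is invariant too. Since a tree is characterized by $n(H) = 0$, $k(H) = 1$, one could instead argue: $H$ is a tree iff $n(H) = 0$ and $k(H)=1$ iff $n(\core(H)) = 0$ and $k(\core(H)) = 1$; and a coregraph with nullity zero and one component must be empty, because a nonempty coregraph contains a cycle and hence has nullity at least one. I would likely present the direct cycle-based argument as the main proof since it is the most transparent, and mention the nullity invariance as a remark.

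The statement is called ``trivial'' in the text, and indeed the only point requiring genuine (if brief) care is the claim that degree-one vertices and hyperedges lie on no cycle and that their removal neither creates nor destroys cycles — this is what makes $\core(H)$ well-defined and makes ``union of cycles'' agree with ``iterated pendant removal.'' The hard part, such as it is, is simply being careful about the hypergraph definition of a cycle (a closed walk with distinct hyperedges, traversing distinct consecutive vertices) when checking that a degree-one incidence cannot participate; everything else is immediate from the definitions.
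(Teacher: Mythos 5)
Your proof is correct and follows the same route the paper intends: since the paper defines $\core(H)$ as the union of the cycles of $H$, the equivalence is immediate from the definition of a tree as a connected hypergraph with no cycle, which is exactly why the paper dismisses it as ``trivial from the above definitions.'' Your additional care about pendant removal preserving cycles is a reasonable (and correct) justification that the iterative description of the core agrees with the ``union of cycles'' definition, but it is not needed for the proposition itself.
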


\subsection{Factor graph representation}\label{sec:fgrep}
Our primary interest is probability distributions
that have factorization structures represented by hypergraphs.

\begin{defn}
Let $H=(V,F)$ be a hypergraph.
For each $i \in V$,
let $x_i$ be a variable that takes values in $\mathcal{X}_i$.
A probability density function $p$ on $x=(x_i)_{i \in V}$ is said to be {\it graphically factorized}
with respect to $H$ if it has the following factorized form
\begin{equation}
p(x)=\frac{1}{Z}
\prod_{\alpha \in F} \Psi_{\alpha}(x_{\alpha}), \label{defp}
\end{equation}
where $x_{\alpha}=(x_i)_{i \in \alpha}$, $Z$ is the normalization constant
and $\Psi_{\alpha}$ is non-negative valued function called {\it compatibility function}.
A set of compatibility functions, giving a graphically factorized distribution,
is called a {\it graphical model}.
The associated hypergraph $H$ is called the {\it factor graph} of the graphical model.
\end{defn}
We often refer to a hypergraph as a factor graph,
implicitly assuming that it is associated with some graphical model.
For a factor graph, a hyperedge is usually called a {\it factor}.
Factor graph is explicitly introduced in \cite{KFLfactor}.

Any probability distribution on $\mathcal{X}= \prod_{i} \mathcal{X}_i$ is trivially 
graphically factorized with respect to the ``one-factor hypergraph,'' where the unique factor includes all vertices.  
It is more informative if the factorization involves factors of small sizes.
Our implicit assumption in the subsequent chapters is that
for all factors $\alpha$, $\mathcal{X}_{\alpha}=\prod_i \mathcal{X}_i$ is small, in sense of cardinality or dimension, 
enough to be handled by computers.

A Markov Random Field (MRF) is an example that have such a factorization structure.
Let $G=(V,E)$ be a graph and $\mathcal{X}=\prod_{i \in V} \mathcal{X}_i$ be a discrete set.
A positive probability distribution $p$ of $\mathcal{X}$ is said to be a {\it Markov random field} on $G$
if it satisfies
\begin{equation}
 p(x_i | x_{ V \smallsetminus i}) = p(x_i| x_{N_i}) \text{ for all } i \in V.
\end{equation} 
See, e.g., \cite{KSmarkov} for further materials.
A {\it clique} is a subset of vertices every two of which are connected by an edge.
Hammersley-Clifford theorem says that 
\begin{equation}
 p(x) \propto \prod_{ C \in \mathcal{C}} \Psi_{C}(x_C),
\end{equation}
where $\mathcal{C}$ is a set of cliques.
A proof of this theorem, using the M\"obius inversion technique, is found in \cite{Grandom}.

Bayesian networks provide another class of examples of factorized distributions.
The scope of applications of Bayesian networks includes
expert system \cite{CDLSprobabilistic}, speech recognition \cite{Jspeech} %
and bioinformatics \cite{DEKMbiological}. 
Consider a Directed Acyclic Graph (DAG), i.e., a directed graph without directed cycles. 
A {\it Bayesian network} is specified by local conditional probabilities associated with the DAG \cite{Pearl,CDLSprobabilistic}.
Namely, it is given by the following product 
\begin{equation}
 p(x) = \prod_{i \in V} p(x_i | x_{\pi (i)}),
\end{equation}
where $\pi(i)$ is the set of {\it parents} of $i$, i.e., the set of vertices
from which an edge is incident on $i$.
If $\pi(i)= \emptyset$, we take $p(x_i|\emptyset)=p(x_i)$.
The factor graph, associated with this distribution,
consists of factors $\alpha=\{i \} \cup \pi(i)$.

We often encounters a situation that
the ``global constraint'' of variables is given as a logical conjunction
of ``local constraints.''
A product of local functions can naturally represent such a situation.
For example, in {\it linear codes}, a sequence of binary (0 or 1) variables $x$ has
constraints of the following form, called {\it parity check}:
\begin{equation}
 x_{i_1} \oplus x_{i_2} \oplus \ldots \oplus x_{i_k}=0, \label{eq:pritycheck}
\end{equation}
where $\oplus$ denotes the sum in $\mathbb{F}_2$.
For a given set of parity checks, a sequence of binary variables $x$ is called a {\it codeword}
if it satisfies all the conditions.
A parity check can be implemented by a local function $\Psi_{\alpha}$
that is equal to zero if $x_{\alpha}$ violates Eq.~(\ref{eq:pritycheck})
$(\alpha= \{ i_1, \ldots, i_k\})$.
Furthermore, the product of the local functions implies the condition for the linear code.
Satisfiability problem (SAT), coloring problem and matching problem, etc,
have the same structure.
\section{Loopy Belief Propagation algorithm}\label{sec:LBP}
\BP (BP) is an efficient method that calculates exact marginals of the given 
distribution factorized according to a tree-structured factor graph \cite{Pearl}. 
\LBP (LBP) is a heuristic application of the algorithm for factor graphs with cycles,
showing successful performance
in various problems.
We mentioned examples of applications in Subsection \ref{sec:applicationsLBP}.

In this thesis, we refer to a {\it family} as a collection of probability distributions.
We carefully distinguish a family and a model, which gives a single probability distribution.
First, in Subsection \ref{sec:expfamily}, we introduce a class of families, called exponential families,
because an {\it inference family}, which is needed for the LBP algorithm and introduced in Subsection \ref{sec:infmodel}, 
is a set of exponential families.
The detail of the LBP algorithm and its exactness on trees
are described in Subsections \ref{sec:basicLBP} and \ref{sec:LBPtree}.
Subsection \ref{sec:LBPstability} derives the differentiation of the LBP update at LBP fixed points,
which determines the stability of the algorithm.

\subsection{Introduction to exponential families}\label{sec:expfamily}
Exponential families are the simplest and the most famous class of probability distributions.
Many important stochastic models such as multinomial, Gaussian, Poisson and gamma distributions are all included in this class.
Here, we provide a minimal theory on exponential families.
The core of the theory is the Legendre transform of the log partition function
and the bijective transform between dualistic parameters, called \npara and \epara.
These techniques are exploited especially in the derivation of the
\Bzf in Section \ref{sec:detIhara}.
More details of the theory about the exponential families is found in books \cite{BNinformation,Bfundamentals}
and a composition from the information geometrical viewpoint is found in \cite{ANmethods}.

The following definition of the exponential families is not completely rigorous,
but it would be enough for the purpose of this thesis.
\begin{defn}
Let $\mathcal{X}$ be a set and $\nu$ be a base measure on it.
For given $n$ real valued functions $\bs{\phi}(x)=(\phi_1(x),\ldots,\phi_n(x))$ %
on $\mathcal{X}$, 
a parametric family of probability distributions on $\mathcal{X}$ is given by
\begin{equation*}
 p(x;\bs{\theta})=
\exp \left( 
\sum_{i=1}^{N} \theta_i \phi_i(x)  - \psi(\bs{\theta})
\right), \qquad \quad
\psi(\bs{\theta}):= \log \int \exp \left( \sum_{i=1}^{N} \theta_i \phi_i(x) \right) {\rm d} \nu(x)
\end{equation*}
and is called an {\it exponential family}.
The parameter $\bs{\theta}$, called {\it natural parameter},
ranges over the set $\Theta:= {\rm int} \{\bs{\theta} \in \mathbb{R}^{N}; 
\int 
\exp ( \sum_{i=1}^{N} \theta_i \phi_i(x) ) \di \nu < \infty
\}$, where int denotes the interior of the set.
The function $\bsphi (x)$ is called the {\it sufficient statistic} and
$\psi(\bs{\theta})$ is called the {\it log partition function}.
\end{defn}

An affine transform of the \nparas gives another exponential family; we identify it with the original family.

It is known that one can differentiate the log partition function at any number of times
by interchanging differential and integral operations \cite{Bfundamentals}.
One easily observes that
$\Theta$ is a convex set and $\psi(\bs{\theta})$ is a convex function on it.
Actually, the convexity of $\Theta$ is derived from the convexity of the exponential function.
The Hessian of $\psi$ 
\begin{equation}
 \pds{\psi}{\theta_i}{\theta_j}=\cov{p_{\bs{\theta}}}{\phi_i}{\phi_j} \quad i,j=1,\ldots,N \label{covasdiff}
\end{equation}
is obviously positive semidefinite and thus $\psi$ is convex.

In this thesis, we require the following regularity condition for exponential families.
\begin{asm}
\label{asm:expregular}
 The $N$ by $N$ matrix Eq.~(\ref{covasdiff}) is positive definite.
\end{asm}

\subsubsection{Legendre transform}
The heart of the theory of exponential family is the duality coming from the Legendre transform, 
which is applicable to any convex function and derives the dual parameter set.
A comprehensive treatment of the theory of the Legendre transform is found in \cite{BLconvex}.

First, we introduce a transform of the natural parameter to the dual parameter.
For the sufficient statistics $\bsphi$, let ${\rm supp} \bsphi$ be the minimal closed set $S \subset \mathbb{R}^N$
for which $\nu(\bsphi^{-1}( \mathbb{R}^N \smallsetminus S))=0$.
The dual parameter set, called the {\it expectation parameters},
is defined by $Y:= {\rm int}( {\rm conv}( {\rm supp} \bsphi) )$.
Obviously, $Y$ is an open convex set.
If $\mathcal{X}$ is a finite set, $Y$ is explicitly expressed as follows:
\begin{equation*}
 Y = \{ \sum_{x \in \mathcal{X}'} \alpha_{x} \phi (x)| \sum_{x \in \mathcal{X'}} \alpha_{x}=1, \alpha_x > 0\},
\end{equation*}
where $\mathcal{X'}=\{ x \in \mathcal{X}| \nu(\{x\}) > 0 \}$.

The following theorem is the fundamental result establishing the transform to this dual parameter set.
\begin{thm}
[\cite{Bfundamentals}] %
A map 
\begin{equation*}
 \Lambda : \Theta \ni \bstheta \longmapsto 
\pd{\psi}{\bstheta}(\bs{\theta})  \in Y
\end{equation*}
is a bijection.
\end{thm}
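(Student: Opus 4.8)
The plan is to prove the claimed bijectivity of $\Lambda$ in three stages: injectivity, an image characterization, and surjectivity onto $Y$. For injectivity, I would use the strict convexity of $\psi$ on $\Theta$, which follows from Assumption~\ref{asm:expregular}: the Hessian of $\psi$ is the covariance matrix Eq.~(\ref{covasdiff}), which we assume positive definite everywhere on $\Theta$, so $\psi$ is strictly convex, and the gradient of a strictly convex function is injective. Concretely, if $\Lambda(\bstheta_1)=\Lambda(\bstheta_2)$ then $\inp{\nabla\psi(\bstheta_1)-\nabla\psi(\bstheta_2)}{\bstheta_1-\bstheta_2}=0$, and integrating the Hessian along the segment joining the two points forces $\bstheta_1=\bstheta_2$ by positive definiteness. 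Since $\Theta$ is convex (hence the segment stays in $\Theta$), this is clean.

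For the image, I would first show that $\Lambda(\bstheta)=\E{p_{\bstheta}}{\bsphi}$ by differentiating under the integral sign (permissible by the remark preceding Assumption~\ref{asm:expregular}), so $\Lambda(\bstheta)=\int \bsphi(x)\,p(x;\bstheta)\,\di\nu(x)$. Then $\Lambda(\bstheta)$ is a barycenter of the probability measure $p_{\bstheta}\cdot\nu$ supported (up to $\nu$-null sets) on $\operatorname{supp}\bsphi$, so $\Lambda(\bstheta)\in\operatorname{conv}(\operatorname{supp}\bsphi)$; a short argument using Assumption~\ref{asm:expregular} (nondegeneracy of the covariance) rules out the relative boundary, giving $\Lambda(\bstheta)\in Y=\operatorname{int}(\operatorname{conv}(\operatorname{supp}\bsphi))$. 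Thus $\Lambda$ maps $\Theta$ into $Y$ as asserted.

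For surjectivity I would argue variationally: fix $\bseta\in Y$ and consider $g(\bstheta):=\psi(\bstheta)-\inp{\bseta}{\bstheta}$ on $\Theta$. This is strictly convex, and its stationary points are exactly the preimages $\Lambda^{-1}(\bseta)$. It suffices to show $g$ attains a minimum in the open set $\Theta$, which follows once we show $g$ is coercive, i.e. $g(\bstheta)\to\infty$ as $\bstheta$ approaches the boundary of $\Theta$ (including infinity). Here is where the hypothesis $\bseta\in Y=\operatorname{int}\operatorname{conv}(\operatorname{supp}\bsphi)$ is essential: because $\bseta$ is an interior point of the convex hull of the support, for every direction $\bsv$ there is mass of $\bsphi$ strictly on both sides of the hyperplane through $\bseta$ normal to $\bsv$, which forces $\inp{\bsphi(x)-\bseta}{\bsv}$ to be positive on a set of positive $\nu$-measure, and a Laplace-type estimate then shows $g$ grows along every ray. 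I expect this coercivity/boundary-behavior step — carefully handling rays that stay inside $\Theta$ versus those that exit it, and the finite-versus-general-$\mathcal{X}$ distinction — to be the main obstacle; the injectivity and image steps are comparatively routine. In the finite-$\mathcal{X}$ case one may shortcut surjectivity using the explicit description of $Y$ given in the excerpt, exhibiting for each $\bseta\in Y$ a distribution in the exponential family with that mean via the closure of the family and a limiting argument.
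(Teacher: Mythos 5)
Your injectivity argument is essentially identical to the paper's: the paper also restricts the gradient inequality to the segment joining $\bstheta$ and $\bstheta'$, uses positive definiteness of the covariance (Assumption~\ref{asm:expregular}) to conclude that $t \mapsto \inp{\bstheta'-\bstheta}{\Lambda(\bstheta+t(\bstheta'-\bstheta))}$ is strictly increasing, and deduces $\Lambda(\bstheta')\neq\Lambda(\bstheta)$. Where you differ is in scope: the paper proves \emph{only} injectivity and cites Theorem~3.6 of the reference for surjectivity, whereas you additionally sketch the image characterization and the surjectivity via minimization of $g(\bstheta)=\psi(\bstheta)-\inp{\bseta}{\bstheta}$. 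That variational route is the standard one (and is essentially what the cited reference does), and your identification of the crux is accurate: coercivity along rays that remain in $\Theta$ follows from $\bseta$ being an interior point of the convex support, but the case where a minimizing sequence approaches a finite boundary point of $\Theta$ requires steepness of $\psi$ there (divergence of $\norm{\nabla\psi}$), which you flag but do not establish. Since $\Theta$ is defined in the paper as an interior, the family is regular and steepness does hold, but a complete proof would need that step spelled out; as written your surjectivity argument is a correct outline with one acknowledged gap rather than a finished proof. This costs you nothing relative to the paper, which does not prove surjectivity at all.
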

\begin{proof}
We only prove the injectivity of the map $\Lambda$.
Take distinct points $\bstheta$ and $\bstheta'$ in $\Theta$.
Define
\begin{equation}
 f(t):= \inp{\bstheta' -\bstheta}{ \Lambda ( \bstheta+t(\bstheta'-\bstheta))} \quad t \in [0,1],
\end{equation}
where $\inp{\cdot}{\cdot}$ is the standard inner product.
Since the covariance matrix is positive definite from Assumption \ref{asm:expregular},
$f(t)$ is strictly increasing.
Therefore, 
\begin{equation*}
 f(1)-f(0)=\inp{\bstheta' -\bstheta}{ \Lambda (\bstheta') -\Lambda( \bstheta)} > 0.
\end{equation*} 
This yields $\Lambda (\bstheta') \neq \Lambda( \bstheta)$.

The proof of the surjectivity is found in Theorem 3.6. of \cite{Bfundamentals}.
\end{proof}
The map $\Lambda$, which is referred to as a {\it moment map}, is also written as the expectation of the sufficient statistic
$\Lambda (\bstheta)= \E{p_{\bstheta}}{\bsphi}$.

The Legendre transform of $\psi(\bs{\theta})$ on $\Theta$, which gives a convex function on the dual parameter set,
is defined by
\begin{equation}
 \varphi(\bseta)= \sup_{\bstheta \in \Theta} ( \inp{\bstheta}{\bseta} - \psi(\bstheta)),
\qquad \bs{\eta} \in Y, \label{LT}
\end{equation}
where $\inp{\bstheta}{\bseta}= \sum_i \theta_i \eta_i$ is the inner product.
This function is convex with respect to $\bseta$, because it is a supremum of linear functions. 
Since the expression in the supremum in Eq.~(\ref{LT}) is concave with respect to $\bstheta$,
the supremum is uniquely attained at $\hat{\bstheta}(\bseta)$ that satisfies
$\bseta =  \Lambda ( \hat{\bstheta}(\bseta) )$.
This equation implies that a map $\bseta \mapsto \hat{\bstheta}(\bseta)$ is the inverse of $\Lambda$.
Note that $\varphi$ is actually a negative entropy
\begin{equation}
\varphi (\bseta) =\E{p_{\hat{\bstheta} (\bseta)}}{\log p_{\hat{\bstheta} (\bseta)}}. \label{eq:negativeentropy}
\end{equation}
Note also that derivative of $\varphi$ gives the inverse of the map $\Lambda$, i.e.
$\pd{\varphi}{\bseta}(\bseta)=\Lambda^{-1}( \bseta )$,
which is easily checked by the differentiation of the equation
$\varphi(\bseta)=  \inp{ \hat{\bstheta}(\bseta )}{ \bseta} -   \psi(   \hat{\bstheta}(\bseta) )$.
Therefore, the Hessian of $\varphi$ is the inverse of the covariance matrix and thus
$\varphi$ is a strictly convex function.

The inverse transform of Eq.~(\ref{LT}) is obtained by an identical equation
\begin{equation}
 \psi(\bstheta)= \sup_{\bseta \in Y} ( \inp{\bstheta}{\bseta} -  \varphi(\bseta) ),
\qquad \bs{\theta} \in \Theta, \label{ILT}
\end{equation}
because the supremum in Eq.~(\ref{ILT}) is attained at
$\hat{\bseta}(\bstheta)$ that satisfies
$\bstheta = \Lambda^{-1}( \hat{\bseta} (\bstheta))$.

In summary, strictly convex functions $\psi$ and $\varphi$ are the Legendre transform of each other
and the \nparas and the \eparas are transformed by $\Lambda$ and $\Lambda^{-1}$, which are given by the derivatives of the functions.

\subsubsection{Examples of exponential families}
\begin{example}[Multinomial distributions]
\label{example:multinomial}
 Let $\mathcal{X}=\{1,\ldots,N\}$ be a finite set
with the uniform base measure.
We define sufficient statistics as
\begin{equation}
 \phi_k(x)=
\begin{cases}
 1 \text{ \quad  if } x=k \\
 0 \text{ \quad  otherwise}
\end{cases}
\end{equation}
for $k=1,\ldots,N-1$.
Then the given exponential family is called {\it multinomial distributions} and 
coincides with the all probability distributions on $\mathcal{X}$
that have positive probabilities for all elements of $\mathcal{X}$.

By definition, the region of \nparas is $\Theta=\mathbb{R}^{N-1}$.
The region of \eparas is the interior of probability simplex.
That is,
\begin{equation}
 Y=\{ (y_1,\ldots,y_N); \sum_{k=1}^{N}y_k=1, y_k > 0 \}.
\end{equation}

\end{example}

\begin{example}[Gaussian distributions]
Let $\mathcal{X}=\mathbb{R}^{n}$ with the Lebesgue measure
and
let $\phi_i(x_i)=x_i$ and $\phi_{ij}(x_i ,x_j)=x_i x_j$.
The exponential family given by the sufficient statistics $\bsphi(x)=(\phi_i(x_i),\phi_{jk}(x_j,x_k))_{1\leq i \leq n, 1 \leq j \leq k \leq n}$,
is called {\it Gaussian distributions}, consists of probability distributions of the form
\begin{equation}
 p(\bsx;\bstheta)= \exp \big(
\sum_{i \leq j}\theta_{ij}x_i x_j + \sum_{i}\theta_i x_i
-\psi (\bstheta)
\big).
\end{equation}
If we set $J_{ij}=J_{ji}= - \theta_{ij} ~~ (i \neq j)$, $J_{ii}=-2 \theta_{ii}$ and $h_i=\theta_i$,
it comes to
\begin{align}
& p(\bsx;\bstheta)= \exp \big(
-\frac{1}{2}\bsx^T J \bsx + \bsh^T \bsx - \psi (\bstheta)
\big), \\
& \psi(\bstheta)=
\frac{n}{2}\log 2 \pi - \frac{1}{2}\log \det J
+\frac{1}{2} \bsh^T J^{-1} \bsh.
\end{align}
Obviously, the set of \nparas is $\Theta= \{ \bstheta \in \mathbb{R}^{N} | J \text{ is positive definite.}\}$,
where $N= n + \frac{n(n+1)}{2} $.
As is well known, the mean and covariance of $\bsx$ are given by
$\bsmu= J^{-1}\bsh$ and $\Sigma= J^{-1}$, respectively.
The transform to the dual parameter is given by the expectation of sufficient statistic:
$\Lambda (\bstheta) = ( \mu_i, \Sigma_{ij}+ \mu_i \mu_j)$.
Therefore, the set of \eparas is 
$Y = \{ \bseta \in \mathbb{R}^N | \Sigma(\bseta):=(\eta_{ij} - \eta_i \eta_j)_{ 1 \leq i , j \leq n  } \text{ is positive definite.}  \}$.
The dual convex function $\varphi$ is 
\begin{equation}
 \varphi (\bseta)= - \frac{n}{2}(1+\log 2 \pi ) - \frac{1}{2} \log \det \Sigma (\bseta).
\end{equation}

For a given mean vector $\bsmu=(\mu_i)$,
the {\it fixed-mean Gaussian distributions} is the exponential family obtained by the sufficient
statistics $\bsphi(x)=\{(x_i-\mu_i)(x_j-\mu_j)\}_{1 \leq i \leq j \leq n}$.
Moreover, if $\bsmu=\bs{0}$, the family is called the {\it zero-mean Gaussian distributions}. 

\end{example}

\subsection{\Ifa for LBP}\label{sec:infmodel}
In this Subsection, we construct a set of exponential families used in the LBP algorithm.
In order to perform inferences using LBP for
a given graphical model, we have to fix a ``family'' that includes the probability distribution.

Let $H=(V,F)$ be a hypergraph.
In succession, we follow the notations in Subsection \ref{sec:fgrep}.
For each vertex $i$, we consider an exponential family $\mathcal{E}_i$ with a sufficient statistic $\phi_i$
\footnote{In the previous subsection, we used bold symbols to represent vectors,
but from here we simplify the notation.} 
and a base measure $\nu_i$ on $\mathcal{X}_i$. 
A \npara, \epara, the log partition function and its Legendre transform are denoted by $\theta_i$, $\eta_i$, 
$\psi_i$ and $\varphi_i$ respectively.
Furthermore, for each factor $\alpha=\fai$,
we give an exponential family $\mathcal{E}_{\alpha}$ on $\mathcal{X}_{\alpha}= \prod_{i \in \alpha} \mathcal{X}_i$
with the base measure $\nu_{\alpha}=\prod_{i \in \alpha} \nu_i$ and a sufficient statistic $\fa{\phi}$ of the form
\begin{equation}
 \fa{\phi}(x_{\alpha})=(\pa{\phi}(x_{\alpha}),\phi_{i_1}(x_{i_1}),\ldots,\phi_{i_{d_{\alpha}}}(x_{i_{d_{\alpha}}}) ).
\end{equation}
An important point is that $\fa{\phi}$ includes the sufficient statistics of $i \in \alpha$ as components.
The \npara, \epara, log partition function and its Legendre transform of this model are denoted by
\begin{equation}
 \fa{\theta}=(\pa{\theta},\va{\theta}{i_1},\ldots,\va{\theta}{i_{d_{\alpha}}} ) \in \Theta_{\alpha}, \quad
 \fa{\eta}=(\pa{\eta},\va{\eta}{i_1},\ldots,\va{\eta}{i_{d_{\alpha}}}) \in Y_{\alpha}, \quad
 \psi_{\alpha} \text{ and } \varphi_{\alpha}.
\end{equation}

In order to use these exponential families $\mathcal{E}_{\alpha}$ and $\mathcal{E}_i$ for LBP, we need an assumption.
\begin{asm}
[Marginally closed assumption] \label{asm:marginallyclosed}
For all pair of $i \in \alpha$,
\begin{equation}
 \int  p(x_{\alpha}) {\rm d} \nu_{\alpha \smallsetminus i} ( x_{\alpha \smallsetminus i})  
\in \mathcal{E}_i \quad \text{ for all } p \in \mathcal{E}_{\alpha}.
\end{equation}
\end{asm}

\begin{defn}
\label{defn:ifa}
A collection of the exponential families $\mathcal{I}:=\{\mathcal{E}_{\alpha}, \mathcal{E}_i \}$
given by sufficient statistics $( \pa{\phi}(x_{\alpha}), \phi_i (x_i) )_{ \alpha \in F,i \in V}$ as above,
satisfying Assumptions \ref{asm:expregular} and \ref{asm:marginallyclosed}
is called an {\it \ifa} associated with a hypergraph $H$.
A \ifa is called {\it pairwise} if the associated hypergraph is a graph.
\end{defn}
Inference model has a parameter set $\Theta=\prod_{\alpha} \Theta_{\alpha} \times \prod_i \Theta_i$, which is bijectively mapped
to the dual parameter set $Y=\prod_{\alpha} Y_{\alpha} \times \prod_i Y_i$ by the maps of respective components.

An inference model naturally defines an exponential family on $\mathcal{X}=\prod_i \mathcal{X}_i$
of the sufficient statistic $( \pa{\phi}(x_{\alpha}), \phi_i (x_i) )_{ \alpha \in F,i \in V}$.
This exponential family is called the {\it global exponential family} and denoted by $\mathcal{E}(\mathcal{I})$.

\begin{example}[Multinomial]
Let $\mathcal{E}_i$ be an exponential family of multinomial distributions.
Choosing functions  $\fa{\phi}(x_{\alpha})$ suitably,
we can make the $\mathcal{E}_{\alpha}$
being multinomial distributions on $\mathcal{X}_{\alpha}$.
Then the \ifa is called a {\it multinomial inference family}.
\end{example}

\begin{example}[Gaussian]
Let $\mathcal{X}_i = \mathbb{R}$.\footnote{Extension to high dimensional case, i.e. $\mathcal{X}_i=\mathbb{R}^{r_i}$, is straight forward.}
For Gaussian case,
the sufficient statistics are given by
\begin{equation}
 \phi_{i}(x_i)=(x_i,x_i^2), \qquad
 \pa{\phi}(x_{\alpha})=(x_i x_j)_{i , j \in \alpha, i \neq j}
\end{equation}
Then the \ifa $\mathcal{I}$ is called {\it Gaussian \ifa}.
Assumption \ref{asm:marginallyclosed} is satisfied because a marginal of a Gaussian distribution is a Gaussian distribution.
Fixed-mean cases are completely analogous.
Usually, $H$ is a graph rather than hypergraphs.
In this thesis, we only consider Gaussian \ifas on graphs, but extensions of our results to hypergraphs are straightforward.
\end{example}

\subsection{Basics of the LBP algorithm}\label{sec:basicLBP}
The LBP algorithm calculates approximate marginals of the given graphical model $\Psi = \{ \Psi_{\alpha}\}$
using a fixed \ifa $\mathcal{I}$. 
We always assume that the \ifa includes the given probability density function.
\begin{asm}
\label{asm:modelindludes}
For all factors $\alpha \in F$, there exists $\bar{\theta}_{\alpha}$ s.t.
\begin{equation}
 \Psi_{\alpha}(x_{\alpha}) = \exp \left( \inp{\bar{\theta}_{\alpha} }{ \phi_{\alpha} (x_{\alpha})}  \right) \label{eq:asm:modelindludes}
\end{equation}
\end{asm}
This assumption is equivalent to the assumption
\begin{equation}
p(x)=  \frac{1}{Z} \prod_{\alpha} \Psi_{\alpha}(x_{\alpha}) \in \mathcal{E}(\mathcal{I})
\end{equation}
up to trivial constant re-scalings of $\Psi_{\alpha}$, which do not affect the LBP algorithm.

The procedures of the LBP algorithm is as follows \cite{KFLfactor}.
For each pair of a vertex $i \in V$ and a factor $\alpha \in F$ satisfying $i \in \alpha$, 
an initialized message is given in a form of
\begin{equation}
 m_{\edai }^{0}(x_i) = \exp ( \inp{\mu_{\edai}^{0}}{\phi_i(x_i)} ), \label{messageform}
\end{equation}
where the choice of $\mu_{\edai}^{0}$ is arbitrary.
The set $\{  m_{\edai }^{0} \}$ or $\{ \mu_{\edai}^{0}\}$ is called an {\it initialization} of the LBP algorithm.
At each time $t$, the messages are updated by the following rule:
\begin{equation}
m^{t+1}_{\edai}(x_i)
=\omega 
\int
\Psi_{\alpha}(x_{\alpha})
\hspace{-1mm}
\prod_{j \in \alpha, j \neq i}
\prod_{\beta \ni j, \beta \neq \alpha}
\hspace{-1mm} m^{t}_{\edbj}(x_j)
\hspace{1mm} {\rm d}\nu_{\alpha \smallsetminus i}({x_{\alpha \smallsetminus i}})
\qquad (t \geq 0), \label{LBPupdate}
\end{equation}
where $\omega$ is a certain scaling constant.\footnote{
Here and below, we do not care about the integrability problem.
For multinomial and Gaussian cases, there are no problems.}
See Fig~\ref{fig:LBPupdate} for the illustration of this message update scheme.
From Assumptions \ref{asm:marginallyclosed} and \ref{asm:modelindludes},
messages can keep the form of Eq.~(\ref{messageform}).
\begin{figure}
\begin{center}
\includegraphics[scale=0.3]{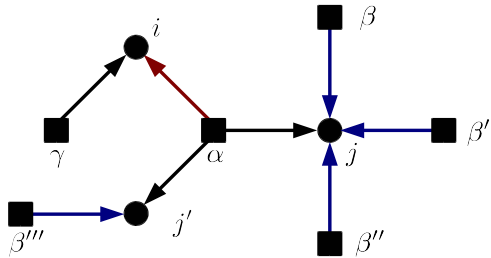}
\vspace{-2mm}
\caption{The blue messages contribute to the red message at the next time step. \label{fig:LBPupdate}}
\end{center}
\end{figure}

One may realize that Eq.~(\ref{LBPupdate}) looks slightly different from Eq.~(\ref{eq:bplbpupdate}),
which involves compatibility functions associated with vertices.
However, the compatibility functions associated with vertices can be included in that of factors
and such operations do not affect the LBP algorithm essentially.
Therefore, our treatment is general.

Since this update rule simultaneously generates all messages of time $t+1$ by that of 
time $t$, it is called a {\it parallel update}.
Another possibility of the update is a {\it sequential update}, where, at each time step, 
one message is chosen according to some prescribed or random order of directed edges. 
In this paper, we mainly discuss the parallel update.

We repeat the update Eq.~(\ref{LBPupdate}) until the messages converge to a fixed point,
though this procedure is not guaranteed to converge.
Indeed, it sometimes exhibits oscillatory behaviors \cite{MWJempiricalstudy}. 
The set of LBP fixed points does not depend on the choices of the update rule,
but converging behaviors, or {\it dynamics}, does depend on the choices.

If the algorithm converges, we obtain the fixed point messages $\{m^{*}_{\edai}\}$
and {\it beliefs}\footnote{
Belies are often defined for middle time messages $\{m^{t}_{\edai}\}$ by
Eqs.~(\ref{eq:defbelief1}) and (\ref{eq:defbelief2}).
However, in this thesis, beliefs are only defined by fixed point messages.
} that are defined by
\begin{align}
&b_{i}(x_i):= \omega
\prod_{\alpha \ni i} m_{\edai}^{*}(x_i) \label{eq:defbelief1}\\
&b_{\alpha}(x_{\alpha})
:=\omega 
\Psi_{\alpha}(x_{\alpha})
\prod_{j \in \alpha } 
\prod_{\beta \ni j, \beta \neq \alpha} m^{*}_{\edbj}(x_j),  \label{eq:defbelief2}
\end{align}
where $\omega$ denotes normalization constants requiring
\begin{equation}
 \int b_i(x_i) {\rm d} \nu_i =1 \quad \text{ and } \quad \int b_{\alpha}(x_{\alpha}) {\rm d} \nu_{\alpha}=1.
\end{equation} 
Note that beliefs automatically satisfy conditions $b_{\alpha}(x_{\alpha}) > 0$,
and 
\begin{equation}
\int b_{\alpha}(x_{\alpha}) {\rm d} \nu_{\alpha \smallsetminus i}({x_{\alpha \smallsetminus i}}) =b_i(x_i). \label{eq:localconsistency}
\end{equation}
Beliefs are used as approximation of the true marginal distributions $p_{\alpha}(x_{\alpha})$ and $p_i(x_i)$.
We will give the approximation of the partition function by LBP, called the Bethe approximation,
in the next section.

\subsection{BP on trees}\label{sec:LBPtree}
For the understanding of the LBP algorithm,
tree is always a good starting point.
Historically, the message update scheme of the algorithm is designed to calculate the exact marginals of tree-structured distributions
and called {\it Belief Propagation} (BP).
Here, we review the fact.
\begin{prop}
If $H$ is a tree, 
the LBP algorithm stops at most $|\vec{E}|$ updates and
the calculated beliefs are equal to the exact marginal distributions of $p$.
\end{prop}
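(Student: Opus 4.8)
The plan is to prove both claims — finite termination in at most $|\vec{E}|$ steps, and exactness of the resulting beliefs — by induction on the structure of the tree, exploiting the fact that a tree has leaves. First I would observe that the parallel update rule Eq.~(\ref{LBPupdate}) for a message $m_{\edai}$ depends only on the incoming messages $m_{\edbj}$ with $\beta \ni j$, $\beta \neq \alpha$, $j \in \alpha \setminus i$; in a tree, these are precisely the messages ``flowing toward'' the edge $(\edai)$ from the subtree on the far side of $i$ from $\alpha$. The key combinatorial fact is that in a tree one can orient the directed edges by ``depth from the leaves'': define the \emph{rank} of a directed edge $(\edai)$ (or $(\edia)$) to be the number of edges in the largest walk in the subtree feeding into it. A message of rank $0$ is one emitted by a leaf factor or leaf vertex, and its update does not depend on any other message, so after one parallel update all rank-$0$ messages attain their final value. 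Inductively, after $r+1$ parallel updates all messages of rank $\le r$ are stabilized, since a rank-$(r+1)$ message depends only on messages of rank $\le r$, which were correct after $r$ updates. Since the maximal rank is bounded by the number of edges along any path, and certainly by $|\vec{E}|$, all messages are fixed after at most $|\vec{E}|$ steps. (A cleaner bound is the diameter, but $|\vec{E}|$ suffices and is what is claimed.)

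Next I would establish exactness of the beliefs at this fixed point. The cleanest route is to prove, by induction on subtree size, the identity that for a directed edge $(\edai)$, the stabilized message satisfies
\begin{equation*}
 m_{\edai}^{*}(x_i) \;\propto\; \sum_{x_{T_{\edai} \setminus i}} \;\prod_{\beta \in T_{\edai}} \Psi_{\beta}(x_{\beta}),
\end{equation*}
where $T_{\edai}$ denotes the subtree consisting of $\alpha$ together with everything separated from the rest of the graph by the edge $(\edai)$, and the sum is over all variables in that subtree except $x_i$. The base case is a leaf factor, where the message update is exactly this partial sum by definition. The inductive step uses the distributive law: the update Eq.~(\ref{LBPupdate}) multiplies $\Psi_\alpha$ by the incoming messages and sums out $x_{\alpha \setminus i}$; substituting the inductive hypothesis for each incoming message and regrouping the product-of-sums into a sum-of-products (valid because the subtrees hanging off distinct variables $j \in \alpha \setminus i$ share no variables) reproduces the claimed partial sum over $T_{\edai}$. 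Once this is in hand, plugging into the belief definitions Eqs.~(\ref{eq:defbelief1}) and~(\ref{eq:defbelief2}) and noting that the subtrees $T_{\edai}$ over all $\alpha \ni i$ (for $b_i$), respectively $T_{\edbj}$ over all $j\in\alpha$, $\beta\ni j$, $\beta \neq \alpha$ together with $\alpha$ itself (for $b_\alpha$), partition the full factor graph, shows $b_i(x_i) \propto p_i(x_i)$ and $b_\alpha(x_\alpha) \propto p_\alpha(x_\alpha)$; normalization (which the beliefs enforce by construction) upgrades $\propto$ to $=$.

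The main obstacle I anticipate is bookkeeping rather than conceptual depth: making the ``subtree $T_{\edai}$'' decomposition precise for hypergraphs (as opposed to ordinary graphs), and checking that removing the edge $(\edai)$ really does disconnect the hypergraph into the piece containing $\alpha$ and the piece containing $i$, so that the variable sets are disjoint and the distributive-law regrouping is legitimate. This rests on Proposition~\ref{prop:treecore} and the definition of tree ($n(H)=0$, $k(H)=1$): one needs that every directed edge of a tree is a ``bridge,'' which follows because a tree has no cycle and hence no directed edge lies on a closed walk. I would also need to handle the scaling constants $\omega$ carefully so that all the ``$\propto$'' statements are consistent, but since beliefs are defined only up to the explicit normalization in Eq.~(\ref{eq:localconsistency}), this is harmless. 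Finally, I should note the degenerate cases — a single-vertex tree with no factors, or a single-factor tree — to confirm the bound $|\vec{E}|$ is not vacuous there.
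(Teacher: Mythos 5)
Your proposal is correct and follows exactly the route the paper intends: the paper omits a detailed proof and simply says the assertion "is checked by extending observations in Subsection \ref{sec:introLBPalgorithm}," i.e.\ the distributive-law computation of messages on a tree, which is precisely what your rank-based termination argument and subtree-partial-sum induction formalize. Your write-up in fact supplies the detail the paper leaves out, and the points you flag (every directed edge of a tree is a bridge, so the subtree variable sets are disjoint and the product-of-sums regrouping is valid) are the right ones to check.
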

\begin{proof}
We omit a detailed proof.
Basically, the assertion is checked by extending observations in Subsection \ref{sec:introLBPalgorithm}. 
\end{proof}

\subsection{LBP as a dynamical system}\label{sec:LBPstability}
At each time $t$, the state of the algorithm is specified by the set of 
messages $\{ m_{\edai}^t \}$, which is identified with 
its \nparas $\bsmu^t=\{ \mu_{\edai}^t \} \in \mathbb{R}^{\vec{E}}$.
In terms of the parameters, the update rule Eq.~(\ref{LBPupdate}) is written as follows.
\begin{equation}
\label{eq:eparaLBPupdate}
 \mu_{\edai}^{{t+1}}=
\Lambda_{i}^{-1}
\Big(
   \Lambda_{\alpha} ( \pa{\bar{\theta}},
                              \va{\bar{\theta}}{i_1}  + \hspace{-2mm}  \sum_{\beta \in N_{i_1} \smallsetminus \alpha}\hspace{-2mm} \mu^t_{\edbione},     
                      \ldots, \va{\bar{\theta}}{i_k} + \hspace{-2mm}  \sum_{\beta \in N_{i_k} \smallsetminus \alpha}\hspace{-2mm} \mu^t_{\edbik}
                    )_i
\Big)
- \sum_{\gamma \in N_i \smallsetminus \alpha} \hspace{-2mm} \mu^t_{\edgi},
\end{equation}
where $\alpha = \fai$, $d_{\alpha}=k$ and
$\Lambda_{\alpha}(\cdots)_i$ is the $i$-th component ($i \in \alpha$).
To obtain this equation, multiply Eq.~(\ref{LBPupdate}) by
\begin{equation*}
 \prod_{\gamma \in N_i \smallsetminus \alpha} m^t_{\edgi}(x_i)
\end{equation*}
and normalize it to be a probability distribution.
Then take the expectation of $\phi_i$.

The update rule can be viewed as a transform $T$ on the set of \nparas of messages $M$.
Formally,
\begin{equation*}
 T:  M \longrightarrow  M, \qquad
 \bsmu^t = T( \bsmu^{t-1} ).
\end{equation*}
In this formulation, the fixed points of LBP are $\{ \bsmu^* \in M |  \bsmu^*= T( \bsmu^* )\}$.

In order to get familiar with the computation techniques, here we compute the 
differentiation of the update map $T$ around an LBP fixed point.
This expression derived in \cite{ITAinfo,ITAinfo} for the cases of turbo and LDPC codes. %
\begin{thm}
[Differentiation of the LBP update]
\label{thm:diffofLBP}
At an LBP fixed point, the differentiation (linearization) of the LBP update is
\begin{equation}
 \pd{T(\bsmu)_{\edai}}{\mu_{\edbj}}= 
\begin{cases}
\var{b_i}{\phi_i}^{-1} \cov{b_{\alpha}}{\phi_i}{\phi_j}  
&\text{ if } j \in N_{\alpha} \smallsetminus i \text{ and } \beta \in N_j \smallsetminus \alpha, \\
0 &\text{ otherwise.}
\end{cases}
\end{equation}
\end{thm}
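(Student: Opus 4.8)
The plan is to differentiate the explicit parametric form Eq.~(\ref{eq:eparaLBPupdate}) of the update by the chain rule, using only the calculus of the Legendre maps from Subsection~\ref{sec:expfamily}. Write the argument of $\Lambda_\alpha$ in Eq.~(\ref{eq:eparaLBPupdate}) as $\theta_\alpha(\bsmu) = (\pa{\bar\theta}, \va{\theta}{i_1},\dots,\va{\theta}{i_k})$, where $\va{\theta}{i_l} := \va{\bar\theta}{i_l} + \sum_{\delta\in N_{i_l}\setminus\alpha}\mu_{\ed{\delta}{i_l}}$, so that
\begin{equation*}
 T(\bsmu)_{\edai} \;=\; \Lambda_i^{-1}\big(\Lambda_\alpha(\theta_\alpha(\bsmu))_i\big) \;-\; \sum_{\gamma\in N_i\setminus\alpha}\mu_{\edgi}.
\end{equation*}
First I would record the standard exponential-family facts: $\Lambda_\alpha = \nabla\psi_\alpha$, so its derivative at $\theta_\alpha$ is $\Hesse\psi_\alpha$, the covariance matrix of $\fa{\phi}$ under $p_{\theta_\alpha}$ (Eq.~(\ref{covasdiff})), whose ``$(i,j)$'' block is $\cov{p_{\theta_\alpha}}{\phi_i}{\phi_j}$; and, by the inverse function theorem together with Assumption~\ref{asm:expregular}, the derivative of $\Lambda_i^{-1}$ at $\Lambda_i(\theta_i)$ is $(\Hesse\psi_i)^{-1} = \var{p_{\theta_i}}{\phi_i}^{-1}$. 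Also, the affine map $\bsmu\mapsto\theta_\alpha(\bsmu)$ has $\partial\va{\theta}{j}/\partial\mu_{\edbj} = \mathrm{Id}$ precisely when $j\in N_\alpha$ and $\beta\in N_j\setminus\alpha$, and $0$ otherwise; the $\pa{\bar\theta}$ slot is constant.

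Next I would specialize to a fixed point $\bsmu^*$. Comparing Eq.~(\ref{eq:eparaLBPupdate}) (or Eq.~(\ref{eq:defbelief2})) with Assumption~\ref{asm:modelindludes} shows $p_{\theta_\alpha(\bsmu^*)} = b_\alpha$, and then the local consistency Eq.~(\ref{eq:localconsistency}) makes $\Lambda_\alpha(\theta_\alpha(\bsmu^*))_i$ the expectation parameter of $b_i$; thus the covariance blocks are under $b_\alpha$ and the variance under $b_i$. For $j\in N_\alpha\setminus i$ and $\beta\in N_j\setminus\alpha$, the sole dependence of $T(\bsmu)_{\edai}$ on $\mu_{\edbj}$ is through the $\va{\theta}{j}$ slot of $\theta_\alpha$, so the chain rule immediately gives $\var{b_i}{\phi_i}^{-1}\cov{b_\alpha}{\phi_i}{\phi_j}$. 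If $j\notin N_\alpha$ (so $j\neq i$), or if $\beta=\alpha$, then $\mu_{\edbj}$ enters neither $\theta_\alpha$ nor the subtracted sum, and the derivative vanishes.

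The one delicate case --- the main, if mild, obstacle --- is $j=i$ with $\beta\in N_i\setminus\alpha$, where $\mu_{\edbi}$ appears both in the $\va{\theta}{i}$ slot of $\theta_\alpha$ and in $\sum_{\gamma\in N_i\setminus\alpha}\mu_{\edgi}$. The first occurrence contributes $\var{b_i}{\phi_i}^{-1}\cov{b_\alpha}{\phi_i}{\phi_i} = \var{b_i}{\phi_i}^{-1}\var{b_\alpha}{\phi_i}$, which is $\mathrm{Id}$ since $b_i$ is the $i$-marginal of $b_\alpha$ by Eq.~(\ref{eq:localconsistency}); the second contributes $-\mathrm{Id}$. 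They cancel, so the derivative is $0$, consistent with the ``otherwise'' branch. Collecting the three cases gives the stated formula.
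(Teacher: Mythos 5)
Your proposal is correct and follows essentially the same route as the paper: differentiate the parametric update $\Lambda_i^{-1}\circ(\Lambda_\alpha)_i$ by the chain rule, identify the resulting Jacobians with $\var{b_i}{\phi_i}^{-1}$ and $\cov{b_{\alpha}}{\phi_i}{\phi_j}$ at the fixed point, and dispose of the $j=i$ case by the cancellation $\var{b_i}{\phi_i}^{-1}\var{b_{\alpha}}{\phi_i}-I=0$ coming from local consistency. Your write-up merely makes explicit the fixed-point identification $p_{\theta_\alpha(\bsmu^*)}=b_\alpha$ that the paper's terser proof leaves implicit.
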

\begin{proof}
First, consider the case that $j \in N_{\alpha} \smallsetminus i \text{ and } \beta \in N_j \smallsetminus \alpha$.
The derivative is equal to
\begin{equation}
 \pd{\Lambda^{-1}_i}{\eta_i}   \pd{(\Lambda^{}_{\alpha})_i}{\theta^{\alpha}_j}    
=
\var{b_i}{\phi_i}^{-1} \cov{b_{\alpha}}{\phi_i}{\phi_j} .
\end{equation}
Another case is $i=j$ and $ \alpha , \beta \in N_i ~(\alpha \neq \beta)$.
Then, the derivative is 
\begin{equation}
 \pd{\Lambda^{-1}_i}{\eta_i}   \pd{(\Lambda^{}_{\alpha})_i}{\theta^{\alpha}_i}-I    
=
0
\end{equation}
because $\var{b_i}{\phi_i}= \var{b_{\alpha}}{\phi_i}$ from Eq.~(\ref{eq:localconsistency}).
In other cases, the derivative is trivially zero.
\end{proof}

The relation $j \in N_{\alpha} \smallsetminus i \text{ and } \beta \in N_j \smallsetminus \alpha$
will be written as $(\edbj) \rightharpoonup (\edai)$ in Subsection \ref{sec:defgraphzeta}.
We will discuss the relations between the differentiation $T$
and stability properties of the LBP algorithm in Section \ref{sec:stability}.

It is noteworthy that the elements of the linearization matrix is 
explicitly expressed by the fixed point beliefs.

\section{Bethe free energy}\label{sec:BFE}
The Bethe approximation was initiated in the paper of Bethe \cite{Bethe}
to analyze physical phases of two atom alloy.
Roughly speaking, the Bethe approximation captures short range fluctuations
computing states in small clusters in a consistent manner.
The Bethe approximation is known to be exact for distributions on tree-structured graphs.
The modern formulation for presenting the approximation is a variational problem
of the {\it Bethe free energy} \cite{Anote}.
In the end of this section, we see that this approximation
is equivalent to the LBP algorithm.
This relation was first clearly formulated by Yedidia et al \cite{YFWGBP}.

In this section, we introduce two types of Bethe free energy functions,
both of them yield variational characterization of the Bethe approximation. %
These two functions are basically similar and have the same values on the points corresponding to the LBP fixed points.
The first type is essentially utilized to show the equivalence
of the Bethe approximation and LBP by Yedidia et al \cite{YFWGBP}. 
In \cite{ITAsto}, Ikeda et al discusses relations between these two types of \Bfe functions
on a constrained set.%

\subsection{Gibbs free energy function}
First, we should introduce the Gibbs free energy function
because the Bethe free energy function is a computationally tractable approximation of the Gibbs free energy function.
For given graphical model $\Psi=\{ \Psi_{\alpha}\}$, the {\it Gibbs free energy} $F_{Gibbs}$ is 
a function over the set of probability distributions $\hat{p}$ on $x=(x_i)_{i \in V}$ defined by 
\begin{equation}
 F_{Gibbs}(\hat{p})= \int \hat{p}(x) \log \left( 
\frac{\hat{p}(x)}{\prod_{\alpha} \Psi_{\alpha}(x_{\alpha})}
\right) {\rm d}\nu(x), \label{def:GibbsFE}
\end{equation}
where $\nu = \prod_{i \in V} \nu_i$ is the base measure on $\mathcal{X}= \prod_{i \in V} \mathcal{X}_i$.
Since $y \log y$ is a convex function of $y$,
$F_{Gibbs}$ is a convex function with respect to $\hat{p}$.
Using Kullback-Leibler divergence $D(q||p)=\int\hat{p}\log(q/p)$,
Eq.~(\ref{def:GibbsFE}) comes to
\begin{equation}
 F_{Gibbs}(\hat{p})= D(\hat{p}||p) - \log Z.
\end{equation}
Therefore, the exact distribution Eq.~(\ref{defp})
is characterized by a variational problem
\begin{align}
 p(x)= \argmin_{\hat{p}} F_{Gibbs}( \hat{p} ),
\end{align}
where the minimum is taken over
all probability distributions on $x$.
As suggested from the name of ``free energy,''
the minimum value of this function is equal to $- \log Z$.

From the Assumption \ref{asm:modelindludes},
$p$ is in the global exponential family $\mathcal{E}(\mathcal{I})$.
Therefore, it is possible to restrict the range of the minimization within $\mathcal{E}(\mathcal{I})$
without changing the outcome of the minimization.

\subsection{\Bfe function}
\label{sec:twoBfes}
At least for discrete variable case, computing values of the Gibbs free energy function is intractable in general
because the integral in Eq.~(\ref{def:GibbsFE}) is indeed a sum of $|\mathcal{X}|= \prod_i |\mathcal{X}_i|$ states.
We introduce functions called \Bfe that does not include such exponential number of state sum.

There are two types of \Bfe functions;
the type 1 is defined on an affine subspace of \eparas
whereas the second type is defined on an affine subspace of \nparas.
In information geometry, such subspaces are called
{\it m-affine space} and {\it e-affine space}, respectively \cite{ANmethods}.

\subsubsection{Type 1}
\begin{defn}
The type 1 \Bfe function is a function of \eparas.
For a given \ifa $\mathcal{I}$,
a set $L(\mathcal{I})$\footnote{For multinomial cases, the closure of this set is called {\it local polytope} \cite{WJgraphical,WJvariational}.} 
is defined by
$L(\mathcal{I}) = \{ \bs{\eta}=\{\fa{\eta},\eta_{i}\}; \va{\eta}{i}=\eta_i \} $.
On this set, the \Bfe function is defined by
\begin{equation}
 F(\bs{\eta}):=
-\sum_{\alpha \in F} \inp{ \fa{\bar{\theta}} }{ \fa{\eta} } +
\sum_{\alpha \in F}\varphi_{\alpha}(\fa{\eta}) +
\sum_{i \in V} (1-d_i)\varphi_i(\eta_i), \label{defn:Bfe}
\end{equation}
where $\fa{\bar{\theta}}$ is the \npara of $\Psi_{\alpha}$.
\end{defn}
Since $Y_i$ and $Y_{\alpha}$ are open convex, $L$ is a relatively open convex set.
This function is computationally tractable
because it is a sum of the order $O(|F|+|V|)$ terms, assuming
the functions $\varphi_{\alpha}$ and $\varphi_i$ are tractable. 

An element of $L$ is called a set of {\it pseudomarginals}. %
The pseudomarginals can be identified with a set of functions $\beliefs$
that satisfies
\begin{enumerate}
 \item $b_{\alpha}(x_{\alpha}) \in \mathcal{E}_{\alpha}$, \label{pseudo1}
 \item $\int b_{\alpha}(x_{\alpha}) {\rm d} \nu_{\alpha \smallsetminus i} = b_i(x_i)$. \label{pseudo2} 
\end{enumerate}
The second condition is called {\it local consistency}.
Under this identification, the \Bfe function is
\begin{align}
 F(\beliefsw)= -\sum_{\alpha \in {F}} \int 
 b_{\alpha}(x_{\alpha})\log\Psi_{\alpha}(x_{\alpha})   {\rm d}\nu_{\alpha}
& + \sum_{\alpha \in {F}} \int  b_{\alpha}(x_{\alpha})\log b_{\alpha}(x_{\alpha}) {\rm d} \nu_{\alpha} \nonumber \\
&+ \sum_{i \in V}(1-d_i) \int b_i(x_i)\log b_i(x_i) {\rm d}\nu_i.
\end{align}

\begin{example}[Multinomial \ifa]
Let $\mathcal{I}$ be a multinomial \ifa.
The local polytope is given by
\begin{small}
\begin{equation*}
L=\{ \{b_{\alpha}, b_i\}_{\alpha \in F, i \in V}|
\quad
b_{\alpha}(x_{\alpha}) > 0,
\quad
\sum_{x_{\alpha}}b_{\alpha}(x_{\alpha})=1,
\quad
\sum_{x_{\alpha \smallsetminus i }}b_{\alpha}(x_{\alpha})=b_i(x_i) \quad {}^{\forall} i \in \alpha
\}.
\end{equation*}
\end{small}
The Bethe free energy function is 
\begin{small}
\begin{equation}
 F= -\sum_{\alpha \in {F}}\sum_{x_{\alpha}}
 b_{\alpha}(x_{\alpha})\log\Psi_{\alpha}(x_{\alpha})   
 + \sum_{\alpha \in {F}}\sum_{x_{\alpha}} b_{\alpha}(x_{\alpha})\log
 b_{\alpha}(x_{\alpha})  
+ \sum_{i \in V}(1-d_i)\sum_{x_i}b_i(x_i)\log b_i(x_i). \label{eq:BFEmultinomial}
\end{equation}
\end{small}
\end{example}

In order to see the relation between the Bethe free energy function and the Gibbs free energy function
we construct a map from the domain of the \Bfe as follows:
\begin{equation} 
 \Pi ( \beliefs ):=  \prod_{\alpha}b_{\alpha}(x_{\alpha}) \prod_{i} b_i(x_i)^{1-d_i}.  
\end{equation}
The following fact gives an insight that
the Bethe free energy function approximate the Gibbs free energy function.
\begin{prop}
 If $H$ is a tree, $\Pi$ is a bijective map from $L$ to $\mathcal{E}(\mathcal{I})$.
The inverse map is obtained by the marginals of $p \in \mathcal{E}(\mathcal{I})$.
Under this map, the Gibbs free energy function coincide with the Bethe free energy function: $F=F_{Gibbs} \circ \Pi$.
\end{prop}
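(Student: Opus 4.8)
The plan is to verify each of the three claims in turn, exploiting heavily the tree structure of $H$ and the exactness of belief propagation on trees established in the previous proposition. First I would show that $\Pi$ maps $L(\mathcal{I})$ into $\mathcal{E}(\mathcal{I})$. Given a set of pseudomarginals $\beliefsw \in L$, condition \ref{pseudo1} says each $b_\alpha$ is of the exponential form $\exp(\inp{\fa{\theta}}{\fa{\phi}} - \psi_\alpha)$, and each $b_i$ is $\exp(\inp{\theta_i}{\phi_i} - \psi_i)$. Taking the logarithm of $\Pi(\beliefsw) = \prod_\alpha b_\alpha(x_\alpha) \prod_i b_i(x_i)^{1-d_i}$ gives a sum of terms linear in the sufficient statistics $\pa{\phi}(x_\alpha)$ and $\phi_i(x_i)$; collecting the coefficients of $\phi_i$, the contribution $\va{\theta}{i}$ from each $\alpha \ni i$ combines with the $(1-d_i)\theta_i$ term, so the result is again of the form $\exp(\inp{\theta}{\phi(x)} - \mathrm{const})$ for some global natural parameter $\theta \in \Theta$. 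It remains to check that $\Pi(\beliefsw)$ is actually a normalized probability distribution, not merely proportional to one; this is where the tree hypothesis enters, via a standard induction on the tree (peeling off a leaf vertex and the unique factor containing it, using the local consistency condition \ref{pseudo2} to perform the marginalization). So the normalization constant is exactly $1$.

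Next I would establish bijectivity. For injectivity: given $p = \Pi(\beliefsw) \in \mathcal{E}(\mathcal{I})$, I claim the $b_\alpha$ and $b_i$ are forced to be the marginals of $p$. On a tree this is the content of the exactness proposition for BP together with local consistency: an inductive argument shows $b_i(x_i) = \int p(x)\, d\nu_{V\setminus i}$ and $b_\alpha(x_\alpha) = \int p(x)\, d\nu_{V \setminus \alpha}$, so $\beliefsw$ is uniquely determined by $p$. For surjectivity: given any $p \in \mathcal{E}(\mathcal{I})$, define $b_\alpha, b_i$ to be its marginals; Assumption \ref{asm:marginallyclosed} guarantees $b_i \in \mathcal{E}_i$ and $b_\alpha \in \mathcal{E}_\alpha$, local consistency is automatic for genuine marginals, and one checks $\Pi$ of these marginals recovers $p$ — again this is the tree factorization identity $p(x) = \prod_\alpha p_\alpha(x_\alpha) \prod_i p_i(x_i)^{1-d_i}$, which holds precisely because $H$ is a tree. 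Thus $\Pi$ is a bijection $L \to \mathcal{E}(\mathcal{I})$ whose inverse is the marginalization map.

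Finally, for the identity $F = F_{Gibbs} \circ \Pi$, I would plug $\hat{p} = \Pi(\beliefsw) = \prod_\alpha b_\alpha \prod_i b_i^{1-d_i}$ into the definition \eqref{def:GibbsFE}. The energy term $-\int \hat{p}(x) \log \prod_\alpha \Psi_\alpha(x_\alpha)\, d\nu$ becomes $-\sum_\alpha \int b_\alpha(x_\alpha) \log \Psi_\alpha(x_\alpha)\, d\nu_\alpha$ after integrating out the irrelevant variables, matching the first term of the type 1 \Bfe. The entropy term $\int \hat{p} \log \hat{p}$ splits, using $\log \hat{p} = \sum_\alpha \log b_\alpha + \sum_i (1-d_i)\log b_i$, into $\sum_\alpha \int b_\alpha \log b_\alpha\, d\nu_\alpha + \sum_i (1-d_i)\int b_i \log b_i\, d\nu_i$, which is exactly the remaining part of the \Bfe; here one again uses that the marginal of $\hat p$ onto $x_\alpha$ is $b_\alpha$ and onto $x_i$ is $b_i$, which needs the tree structure. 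The main obstacle, and the one point deserving genuine care, is establishing that tree factorization identity $p(x) = \prod_\alpha p_\alpha(x_\alpha)\prod_i p_i(x_i)^{1-d_i}$ for $p \in \mathcal{E}(\mathcal{I})$ (equivalently that $\Pi$ of the marginals returns $p$ and is normalized): everything else is bookkeeping, but this requires the leaf-stripping induction on the hypergraph tree and a clean use of Proposition~\ref{prop:treecore} to justify the inductive step. I expect the induction to be routine once set up correctly, but it is the load-bearing step.
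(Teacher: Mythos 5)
Your proposal is correct and follows the same overall plan as the paper's proof: both arguments reduce everything to the two tree facts that $\Pi(\bsb)$ is normalized and that its marginals recover $\bsb$, and then obtain injectivity and the identity $F=F_{Gibbs}\circ\Pi$ by the same bookkeeping with the energy and entropy terms. The one genuine divergence is in surjectivity. You propose to prove the factorization identity $p=\prod_{\alpha}p_{\alpha}\prod_{i}p_i^{1-d_i}$ directly by leaf-stripping induction; the paper instead notes that $\Pi(\bs{p})$ and $p$ both lie in the global exponential family $\mathcal{E}(\mathcal{I})$ and have the same expectation parameters (since the marginals of $\Pi(\bs{p})$ are $\bs{p}$), hence coincide by the injectivity of the moment map $\Lambda$. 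The paper's route is slicker in that it recycles the marginals-recovery fact and needs no second induction, while yours is more elementary and makes explicit the combinatorial content that the paper dismisses as ``one easily observes'' and ``obvious''. One small caution: the fact that the marginals $p_{\alpha}$, $p_i$ of $p\in\mathcal{E}(\mathcal{I})$ actually lie in $\mathcal{E}_{\alpha}$, $\mathcal{E}_i$ (so that $\bs{p}\in L$ and $\Pi$ can be applied to it) does not follow from Assumption \ref{asm:marginallyclosed} alone, which only governs marginals of members of the local families $\mathcal{E}_{\alpha}$; on a tree it is most cleanly obtained from the exactness of BP, since the true marginals then equal the beliefs, which are in the local families by construction --- a point the paper's own proof also passes over silently.
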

\label{prop:Fexact1}
\begin{proof}
 Since $H$ is a tree, one easily observes that $\sum_{x }\prod_{\alpha}b_{\alpha} \prod_{i} b_i^{1-d_i}=1$. 
This implies $\Pi(\bsb) \in \mathcal{E}(\mathcal{I})$ ($\bsb=\beliefsw$).
The injectivity of $\Pi$ is obvious because the marginals of $\Pi(\bsb)$ are $\beliefsw$.
For given $p \in \mathcal{E}(\mathcal{I})$, let $\bs{p}= \{p_{\alpha}, p_i\}$ be the set of marginal distributions.
We see that $\Pi(\bs{p})=p$ because the \eparas $\{\pa{\eta}, \eta_i \}$ of the global exponential family are equal.
Thus the first part of the assertion is proved.

Next, we check that $F=F_{Gibbs} \circ \Pi$.
Since the marginals of $\Pi(\bsb)$ are $\beliefsw$, we obtain
\begin{align*}
 F_{Gibbs} \circ \Pi (\bsb)
&=
-\sum_{\alpha \in {F}} \int \Pi(\bsb) \log\Psi_{\alpha}(x_{\alpha})   {\rm d}\nu(x)
 + \sum_{\alpha \in {F}} \int \Pi(\bsb) \log b_{\alpha}(x_{\alpha})  {\rm d}\nu(x) \\
& \quad + \sum_{i \in V}(1-d_i) \int \Pi(\bsb) \log b_i(x_i)  {\rm d}\nu(x) \\
&=
F(\bsb).
\end{align*}
\end{proof}
For general factor graphs, $\Pi(\bsb)$ is not necessarily normalized.
This property is related to the approximation error of the partition function
(See Lemma \ref{lem:ZZB} for details).
Note also that, for general factor graphs,
marginal distributions of an element in $\mathcal{E}(\mathcal{I})$ are not necessarily
elements in $\mathcal{E}_{\alpha}$ or $\mathcal{E}_i$.
However, for multinomial and (fixed-mean) Gaussian \ifas,
it is the case even if $H$ is not a tree.

Though the Bethe free energy function $F$ approximates the convex function $F_{Gibbs}$,
it is not necessarily convex nor has unique minima.
Though functions $\varphi_{\alpha}$ and $\varphi_i$ are convex,
the negative coefficients $(1-d_i)$ makes the function complex.
In general, the convexity of $F$ is broken as the nullity of the underlying hypergraph grows.
The positive-definiteness of the Hessian of the Bethe free energy will be analyzed in
Section \ref{sec:PDC} using the \Bzf.

\subsubsection{Type 2}
The second type of \Bfe function is a function of natural parameters.

\begin{defn}
Define an affine space of \nparas by
\begin{equation*}
 A (\mathcal{I},\Psi):= 
\{\bstheta=\{\fa{\theta},\theta_i \} | \pa{\theta} = \pa{\bar{\theta}} ~{}^{\forall} \alpha \in F,
\ \sum_{\alpha \ni i} \va{\bar{\theta}}{i} =(1-d_i)\theta_i+\sum_{\alpha \ni i} \va{\theta}{i} 
~{}^{\forall} i \in \alpha \}.
\end{equation*}
The type 2 Bethe free energy function\footnote{
In this thesis, we mean ``\Bfe function'' by the type 1 unless otherwise stated. 
}
$\mathcal{F}$ is a function on $A (\mathcal{I},\Psi)$ defined by
\begin{equation}
 \mathcal{F}(\bstheta)= -\sum_{\alpha \in F} \psi_{\alpha}(\fa{\theta})-\sum_{i \in V}(1-d_i) \psi_{i}(\theta_i).  \label{eq:defn:type2BFE}
\end{equation}
\end{defn}
Note that $\mathcal{F}$ itself does not depend on the given distribution $\Psi$
in contrast to $F$.
Note also that $L$ and $A$ are subsets of the same set $Y \simeq \Theta$,
where the identification is given by the map $\prod_{\alpha} \Lambda_{\alpha} \times \prod_{i} \Lambda_i$.
As we see in the next subsection, the values of $F$ and $\mathcal{F}$ coincide
at intersections of $L$ and $A$.

\subsection{Characterizations of the LBP fixed points}
\label{sec:LBPcharacterizations}
We present several characterization of LBP fixed points.
As we will discuss in Section \ref{sec:Bzfintro},
this presentation gives intuitive understanding of the \Bzf. 
For the characterizations, we use a formal definition of {\it beliefs}.
We will see that it is the same thing given in Subsection \ref{sec:basicLBP}, after knowing the
result of the Theorem \ref{thm:LBPcharacterizations}.

\begin{defn}
For given \ifa $\mathcal{I}$ and graphical model $\Psi=\{\Psi_{\alpha}\}$,
A set of {\it beliefs} $\beliefs$ is a set of pseudomarginals that satisfies 
\begin{equation}
 \prod_{\alpha}b_{\alpha}(x_{\alpha}) \prod_{i} b_i(x_i)^{1-d_i} \propto \prod_{\alpha} \Psi_{\alpha}(x_{\alpha}).  \label{eq:productcondition}
\end{equation}
\end{defn}

\begin{thm}
 \label{thm:LBPcharacterizations}
Let $\mathcal{I}$ be a \ifa and $\Psi=\{\Psi_{\alpha}\}$ be a graphical model.
The following sets are naturally identified each other.
\begin{enumerate}
 \item The set of fixed points of \lbp.
 \item The set of the beliefs.
 \item The set of intersections of $L(\mathcal{I})$ and $A(\mathcal{I},\Psi)$.
 \item The set of stationary points of $F$ over $L(\mathcal{I})$.
 \item The set of stationary points of $\mathcal{F}$ over $A(\mathcal{I},\Psi)$.
\end{enumerate}
Furthermore, for an LBP fixed point $\thetasw$, the corresponding beliefs are given by
\begin{align}
 &b_i(x_i)= \exp ( \inp{\theta_i}{\phi_i (x_i )} - \psi_{i}(\theta_i)),  \label{eq:beliefbyepara1} \\
 &b_{\alpha}(x_{\alpha})=  \exp ( \inp{\fa{\theta}}{\phi_{\alpha} (x_{\alpha})} - \psi_{\alpha}(\fa{\theta})). \label{eq:beliefbyepara2}
\end{align}
\end{thm}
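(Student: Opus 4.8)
The plan is to take set~(3), the intersection $L(\mathcal{I})\cap A(\mathcal{I},\Psi)$, as the hub and establish the other four equivalences against it, having first fixed the dictionary under which a point $\thetasw\in\Theta=\prod_{\alpha}\Theta_{\alpha}\times\prod_{i}\Theta_i$ --- equivalently, via the componentwise Legendre maps $\Lambda_{\alpha},\Lambda_i$, a point $\etasw\in Y$ --- corresponds to the functions $b_i,b_{\alpha}$ of Eqs.~(\ref{eq:beliefbyepara1}),~(\ref{eq:beliefbyepara2}), i.e.\ to the exponential-family densities carried by $\thetasw$; since that dictionary is a bijection onto tuples of such densities, the ``Furthermore'' assertion is then immediate. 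Start with (2) $\Leftrightarrow$ (3). By definition a set of beliefs is a set of pseudomarginals obeying~(\ref{eq:productcondition}), and a pseudomarginal already lies in $L(\mathcal{I})$ because the local-consistency relation $\int b_{\alpha}\di\nu_{\alpha\smallsetminus i}=b_i$ reads $\va{\eta}{i}=\eta_i$ under the dictionary. For such a point both sides of~(\ref{eq:productcondition}) are, up to a positive scalar, densities of the global exponential family $\mathcal{E}(\mathcal{I})$ --- the left because the $b$'s lie in the component families, the right by Assumption~\ref{asm:modelindludes} --- so proportionality forces equality of natural parameters; decomposing $\fa{\phi}=(\pa{\phi},\phi_{i_1},\ldots,\phi_{i_{d_{\alpha}}})$ and comparing the coefficients of $\pa{\phi}(x_{\alpha})$ and of each $\phi_i(x_i)$ gives $\pa{\theta}=\pa{\bar{\theta}}$ and $\sum_{\alpha\ni i}\va{\bar{\theta}}{i}=(1-d_i)\theta_i+\sum_{\alpha\ni i}\va{\theta}{i}$, which is exactly membership in $A(\mathcal{I},\Psi)$.

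The equivalences (3) $\Leftrightarrow$ (4) and (3) $\Leftrightarrow$ (5) are Lagrange-multiplier computations made routine by the Legendre duality recalled in Section~\ref{sec:expfamily}: $\partial\varphi_{\alpha}/\partial\fa{\eta}=\Lambda_{\alpha}^{-1}(\fa{\eta})=\fa{\theta}$, $\partial\varphi_i/\partial\eta_i=\theta_i$, and dually $\partial\psi_{\alpha}/\partial\fa{\theta}=\Lambda_{\alpha}(\fa{\theta})=\fa{\eta}$, $\partial\psi_i/\partial\theta_i=\eta_i$. For $F$ on $L$ I would attach multipliers $\lambda_{\edai}$ to the constraints $\va{\eta}{i}=\eta_i$; the stationarity equations of~(\ref{defn:Bfe}) then read $\pa{\theta}=\pa{\bar{\theta}}$, $\va{\theta}{i}=\va{\bar{\theta}}{i}+\lambda_{\edai}$, $(1-d_i)\theta_i+\sum_{\alpha\ni i}\lambda_{\edai}=0$, which after eliminating $\lambda$ become exactly the defining relations of $A$ (the point being already in $L$), while conversely a point of $L\cap A$ supplies $\lambda_{\edai}:=\va{\theta}{i}-\va{\bar{\theta}}{i}$. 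For $\mathcal{F}$ on $A$, parametrizing $A$ by free coordinates and differentiating~(\ref{eq:defn:type2BFE}) turns stationarity into $\va{\eta}{i}=\eta_i$, i.e.\ membership in $L$ --- with the caveat that a degree-one vertex contributes a flat direction of $\mathcal{F}$ (it is constant in $\theta_i$), so that~(5) is meant modulo such trivialities; for the type~1 function $F$ no such caveat is needed, since there $\eta_i$ stays slaved to $\va{\eta}{i}$ and the $\varphi_\alpha$ term keeps $F$ strictly convex.

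Finally, (1) $\Leftrightarrow$ (3), which I expect to be the main obstacle, since it ties the message fixed-point equation to local consistency and then demands inverting the message-to-belief map. Given an LBP fixed point $\{\mu_{\edai}\}$, define $b_i,b_{\alpha}$ by~(\ref{eq:defbelief1}),~(\ref{eq:defbelief2}), which in natural parameters means $\theta_i=\sum_{\alpha\ni i}\mu_{\edai}$, $\va{\theta}{j}=\va{\bar{\theta}}{j}+\sum_{\beta\ni j,\,\beta\neq\alpha}\mu_{\edbj}$, $\pa{\theta}=\pa{\bar{\theta}}$; a short count (each $\mu_{\edai}$ occurs $d_i-1$ times with weight $+1$ in $\sum_{\alpha\ni i}\va{\theta}{i}$ and once with weight $1-d_i$ in $(1-d_i)\theta_i$) shows the message contributions cancel, so $\prod_{\alpha}b_{\alpha}\prod_i b_i^{1-d_i}\propto\prod_{\alpha}\Psi_{\alpha}$ automatically, while the fixed-point form of~(\ref{eq:eparaLBPupdate}) says $\sum_{\gamma\ni i}\mu_{\edgi}=\Lambda_i^{-1}\big(\Lambda_{\alpha}(\fa{\theta})_i\big)$, i.e.\ $\Lambda_i(\theta_i)=\Lambda_{\alpha}(\fa{\theta})_i$, i.e.\ (using Assumption~\ref{asm:marginallyclosed} and injectivity of $\Lambda_i$) $\int b_{\alpha}\di\nu_{\alpha\smallsetminus i}=b_i$; hence the LBP fixed point lands in $L\cap A$. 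Conversely, from $\thetasw\in L\cap A$ set $\mu_{\edai}:=\theta_i-\va{\theta}{i}+\va{\bar{\theta}}{i}$; the $A$-relation then gives $\sum_{\alpha\ni i}\mu_{\edai}=\theta_i$ and $\va{\theta}{i}=\va{\bar{\theta}}{i}+\sum_{\beta\ni i,\,\beta\neq\alpha}\mu_{\edbi}$ (with $d_i=1$ immediate, the $A$-relation there being $\va{\theta}{i}=\va{\bar{\theta}}{i}$), so these messages reproduce $b_i,b_{\alpha}$ through~(\ref{eq:defbelief1}),~(\ref{eq:defbelief2}) and, by the computation just made, satisfy~(\ref{eq:eparaLBPupdate}) precisely because the point lies in $L$. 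Throughout, the belief formulas~(\ref{eq:beliefbyepara1}),~(\ref{eq:beliefbyepara2}) are just the dictionary evaluated at the common point of all five sets.
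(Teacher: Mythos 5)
Your proof is correct and takes essentially the same route as the paper's: it hubs all equivalences at the intersection $L(\mathcal{I})\cap A(\mathcal{I},\Psi)$, uses the identical message--belief dictionary $\mu_{\edai}=\theta_i-\va{\theta}{i}+\va{\bar{\theta}}{i}$ for $1\Leftrightarrow 3$, and obtains $3\Leftrightarrow 4,5$ by the same Legendre-duality differentiation (merely made explicit via multipliers). Your one substantive addition --- the observation that a degree-one vertex leaves $\theta_i$ as a flat direction of $\mathcal{F}$ on $A(\mathcal{I},\Psi)$, so that item~5 holds only modulo that triviality --- is a legitimate edge case the paper's proof silently passes over.
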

\begin{proof}
{\bf 2 $\Leftrightarrow$ 3:}
Since a set of beliefs is a set of pseudomarginals,
the beliefs is identified with $\thetasw$ that satisfies the local consistency conditions and Eq.~(\ref{eq:productcondition}). 
These conditions are obviously equivalent to
the constraints of $L(\mathcal{I})$ and $A(\mathcal{I},\Psi)$ respectively. \\
{\bf 1 $\Leftrightarrow$ 2,3:}
For a given LBP fixed point messages, a belief is given by Eqs.~(\ref{eq:defbelief1}) and (\ref{eq:defbelief2}).
By definition, it is easy to check that Eq.~(\ref{eq:productcondition}) holds.
For the converse direction, we define the messages by the beliefs by
\begin{equation}
 m_{\edai}(x_i) = \exp ( \inp{\theta_i + \va{\bar{\theta}}{i} - \va{\theta}{i}}{\phi_i} ).
\end{equation} 
From the constraints of $L(\mathcal{I})$ and $A(\mathcal{I},\Psi)$, one observes that
\begin{align*}
& \prod_{\beta \in N_i} m_{\edbi}(x_i) = \exp ( \inp{\theta_i}{\phi_i (x_i )} ) \propto b_i(x_i),\\
& \Psi_{\alpha} (x_{\alpha}) \prod_{i \in \alpha} \prod_{\beta \in N_i \smallsetminus \alpha} \hspace{-2mm} m_{\edbi}(x_i)
= \exp ( \inp{ \pa{\theta} }{\pa{\phi}(x_{\alpha}) } +  \sum_{i \in N_{\alpha}} \inp{\va{\theta}{i} }{\phi_i } )
\propto b_{\alpha}(x_{\alpha}).
\end{align*}
Therefore, the local consistency condition implies that
\begin{equation}
  \prod_{\beta \in N_i} m_{\edbi}(x_i) \propto \int \Psi_{\alpha} 
\prod_{j \in \alpha} \prod_{\beta \in N_j \smallsetminus \alpha} m_{\edbj}(x_j) {\rm d} \nu_{\alpha \smallsetminus i}.
\end{equation}
This is obviously equivalent to the LBP fixed point equation.\\
{\bf 3 $\Leftrightarrow$ 4:}
A point in $L(\mathcal{E})$ is identified with $\{ \pa{\eta}, \eta_i \}_{\alpha \in F, i \in V}$.
Taking derivatives of $F$, we see that stationary point conditions are $\pa{\theta}= \pa{\bar{\theta}}$ 
and $\sum_{\alpha \ni i} \va{\bar{\theta}}{i} = \sum_{\alpha \ni i} \va{\theta}{i} + (1-d_i) \theta_i$\\
{\bf 3 $\Leftrightarrow$ 5:}
This equivalence is also checked by taking derivatives of $\mathcal{F}$ in $A(\mathcal{E},\Psi)$.
\end{proof}

The condition 3 is an alternative exposition of the characterization of the LBP fixed points given 
by Ikeda et al \cite{ITAsto}.
In the paper, the fixed points are characterized by ``e-condition'' and ``m-condition,''
which partly correspond to the constraint of $A$ and $L$ respectively.
Wainwright et al \cite{Wrepara} derives another characterization of the LBP fixed points utilizing spanning trees of the graph.
Their redundant representation of exponential families is related to our \nparas of exponential families in the \ifa. 

Finally, we check that the values of $F$ and $\mathcal{F}$ are equal.
Actually, using the third characterization,
\begin{align*}
 F(\bseta)-\mathcal{F}(\bstheta)
&=
-\sum_{\alpha \in F} \inp{ \fa{\bar{\theta}} }{ \fa{\eta} } +
\sum_{\alpha \in F}(\varphi_{\alpha}(\fa{\eta})+ \psi_{\alpha}(\fa{\theta} ) )+
\sum_{i \in V} (1-d_i)(\varphi_i(\eta_i)+\psi_i(\theta_i) ) \\
&=
-\sum_{\alpha \in F}  \inp{\fa{\bar{\theta}} }{ \fa{\eta}} +
\sum_{\alpha \in F}  \inp{ \fa{\theta} }{ \fa{\eta} }
+\sum_{i \in V} (1-d_i) \inp{ {\theta}_i}{ \eta_i}  \\
&=0.
\end{align*}

\begin{defn}
\label{defn:BFE}
For given LBP fixed point, the {\it Bethe approximation} $Z_B$ of the partition function $Z$ is defined by
\begin{equation}
 - \log Z_{B}= F(\bstheta)= \mathcal{F}(\bseta).
\end{equation}
\end{defn}

\subsection{Additional remarks}
\label{sec:PreAdd}

\subsubsection{Extensions and variants of LBP}
Generalizing the \Bfe function,
the approximation method can be further extended to Cluster Variational Method (CVM) \cite{Kikuchi},
which leads a message passing algorithm called generalized belief propagation \cite{YFWGBP,YFWconstructing}.
Another generalization of the \Bfe function is proposed in \cite{WHfractional}.
The derived message passing algorithm is called fractional belief propagation.
Expectation propagation, introduced in \cite{Mexpectationpropagation},
is derived by easing the local consistency condition to consistency of expectations 
called weak consistency \cite{HOWWZapproximate}.
All these message passing algorithm include the LBP algorithm as a special case.

The fourth condition in Theorem \ref{thm:LBPcharacterizations} says that
the LBP algorithm finds a stationary point of the \Bfe function $F$.
This viewpoint motivates direct optimization approaches to the \Bfe function.
Welling and Teh \cite{WTbelief} have derived an iterative algorithm that decrease the \Bfe function
at each step.
Yuille \cite{Ycccp} also developed CCCP algorithm for the optimization.
One advantage of these algorithm is that they are guaranteed to converge to an LBP fixed point.

\subsubsection{Related algorithms}
Max-product algorithm is a similar algorithm to LBP algorithm.
It is obtained by replacing the sum operator in the LBP update Eq.~(\ref{LBPupdate})
with the max operator. From arithmetic laws satisfied by max and product,
the max-product algorithm is defined parallel to LBP \cite{AMdistributive}.
It is also obtained as a ``zero temperature limit'' of LBP algorithm.

\part{Graph zeta in \Bfe and \lbp}
\chapter{Graph zeta function}\label{chap:zeta}

\section{Introduction}
Zeta functions, such as Riemann, Weil and Selberg types, appear in many fields of mathematics.
In 1966, Y.~Ihara introduced an analogue of Selberg zeta function
and proved its rationality establishing a determinant formula \cite{Idiscrete}.
Though his zeta function was associated to a certain algebraic object,
it was abstracted and extended to be defined on arbitrary finite graphs
by works of J.~P.~Serre \cite{Strees}, Sunada \cite{SL-functions} and Bass \cite{Bass}.
This zeta function is referred to as the {\it Ihara zeta function}. 
There are some generalization of the Ihara zeta function.
{\it The edge zeta function} is a multi-variable generalization of the Ihara zeta function,
allowing arbitrary scalar weight for each directed edge \cite{STzeta1}.
{\it L-function} is also an extension using a finite dimensional unitary representation
of the fundamental group of the graph.
Another direction of an extension is the zeta function of hypergraphs \cite{Shypergraph}.

In this chapter, unifying these generalizations,
we introduce a graph zeta function defined on hypergraphs with matrix weights. 
We show an \IB type determinant formula %
based on a simple determinant operations (Proposition \ref{app:detdet}).
This formula plays an important role in establishing the relations between this zeta and LBP in the next chapter.

The remainder of this chapter is organized as follows.
In Section \ref{sec:basiczeta}, we provide the definition of our graph zeta function
as well as necessary definitions of hypergraphs such as prime cycles.
In Section \ref{sec:detIhara}, we show the \IB type determinant formula,
requiring additional structure on the matrix weights.
Miscellaneous properties of one-variable hypergraph zeta is discussed in Section \ref{sec:miszeta}.
We conclude in Section \ref{sec:discussion:zeta} with a summary
and discussion of the role of these results in the remainder of the thesis.

\section{Basics of the graph zeta function}\label{sec:basiczeta}
\subsection{Definition of the graph zeta function}
\label{sec:defgraphzeta}
In the first part of this subsection, in addition to Subsection \ref{sec:basicsgraph}, we introduce basic definitions and notations
of hypergraphs required for the definition of our graph zeta function.

Let $H=(V,F)$ be a hypergraph.
As commented in Subsection \ref{sec:basicsgraph},
it is also denoted by $\fgdefn$.
For each edge $e=(\edai) \in \vec{E}$,
$s(e)=\alpha \in F$ is the 
{\it starting factor} of $e$ and
$t(e)=i \in V$ is the {\it terminus vertex} of $e$. 
If two edges $e,e'\in \vec{E}$ satisfy conditions
$t(e) \in s(e')$ and $t(e) \neq t(e')$, this pair is denoted by $\ete{e}{e'}$.
(See Figure \ref{fig:edgerelation}.)
A sequence of edges $(e_1,\ldots,e_k)$ is said to be a 
{\it closed geodesic} if $\ete{e_l}{e_{l+1}}$
for $l \in \mathbb{Z}/k\mathbb{Z}$. 
For a closed geodesic $c$, we may form the {\it m-multiple} $c^{m}$ by repeating $c$ $m$-times. 
If $c$ is not a multiple of strictly shorter closed geodesic, $c$ is said to be {\it prime}. 
Two closed geodesics are said to be {\it equivalent} if one is obtained by cyclic
permutation of the other.
An equivalence class of a prime closed geodesic is called a {\it prime cycle}.
The set of prime cycles of $H$ is denoted by $\mathfrak{P}_H$.

\begin{figure}
\begin{center}
\includegraphics[scale=0.25]{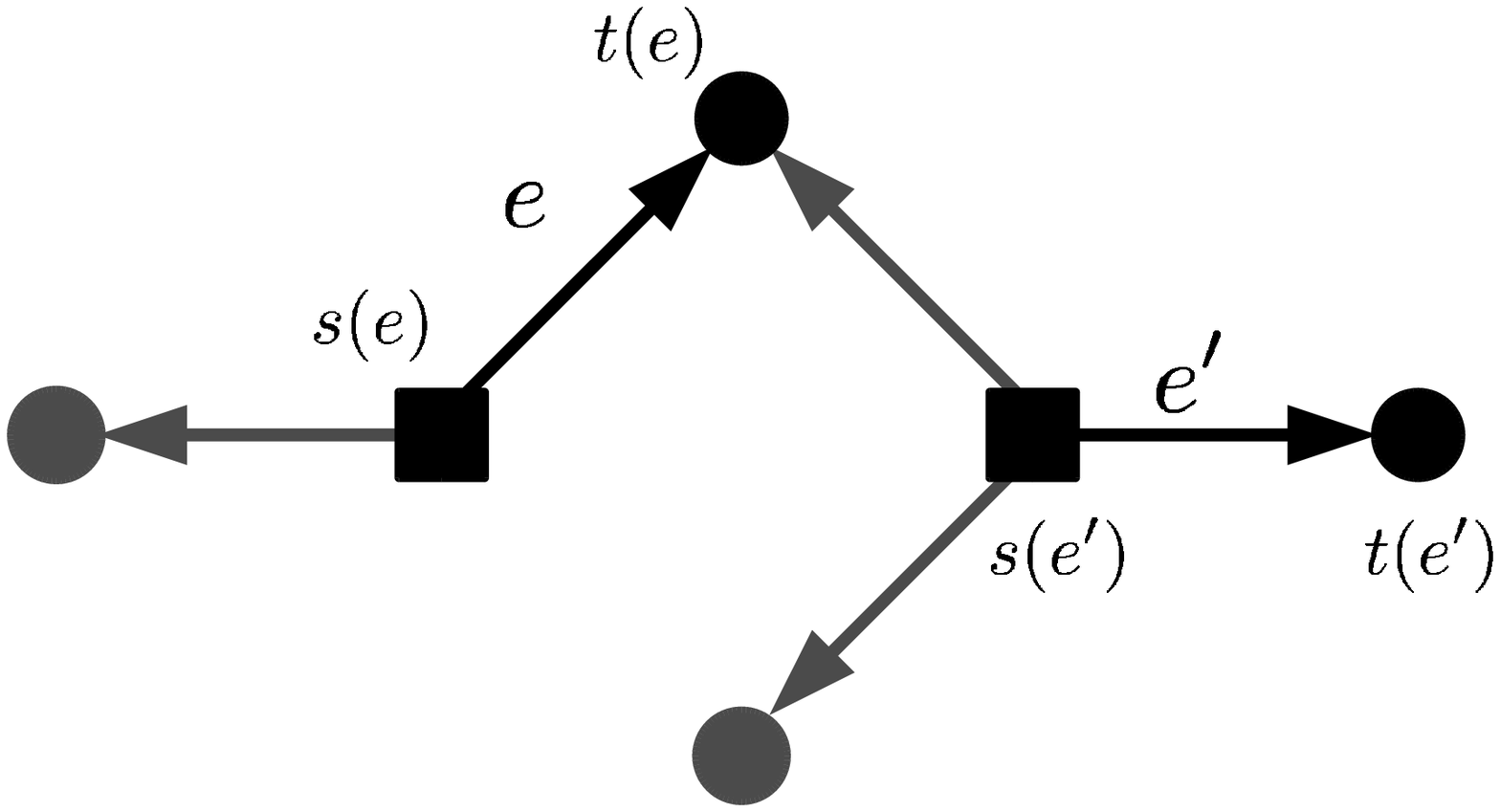}
\vspace{-1mm}
\caption{The relation $\ete{e}{e'}$. \label{fig:edgerelation}}
\end{center}
\end{figure}

If $H$ is a graph (i.e. $d_{\alpha}=2$ for all $\alpha \in F$),
these definitions reduce to standard definitions \cite{KSzeta}.
(We will explicitly give them in Subsection \ref{sec:specialIB}.)
In this case, a factor $\alpha=\{i,j\}$ is identified with an undirected edge $ij$, and
$(\edaj), (\edai)$ are identified with $(\edij), (\edji)$ respectively.

Usually, in graph theory, Ihara's graph zeta function is a univariate function and associated with a graph.
Our graph zeta, which is needed for the subsequent development of this thesis, is much more complicated.
It is defined on a hypergraph having weights of matrices. 
To define matrix weights, we have to prescribe its sizes;
we associate a positive integer $r_e$ with each edge $e \in \vec{E}$.
Note that the set of functions on $\vec{E}$ that take values on $\mathbb{C}^{r_e}$
for each $e \in \vec{E}$ is denoted by $\vfe$.
Note also that the set of $n_1 \times n_2$ complex matrices is denoted by $\mat{n_1}{n_2}$. 

\begin{defn}
For each pair of $\etea$, a matrix weight $u_{\etea} \in \mat{r_e}{r_{e'}}$ is associated.
For this given matrix weights $\bs{u}=\{u_{\etea}\}$,
the graph zeta function of $H$ is defined by
\begin{equation}
 \zeta_{H}(\bs{u}):=
\prod_{\mathfrak{p} \in \mathfrak{P}_H}
\frac{1}{\det \big( I- \pi(\mathfrak{p}) \big) },
\end{equation}
where $\pi(\mathfrak{p}):=$
$u_{e_k \rightharpoonup e_1}\ldots u_{e_2 \rightharpoonup e_3}  u_{e_1 \rightharpoonup e_2}$
for $\mathfrak{p}=(e_1,\ldots,e_k)$.
\end{defn}
Since $\det(I_n-AB)=\det(I_m-BA)$ for $n \times m $ and $m \times n$ matrices $A$ and $B$,
$\det( I- \pi(\mathfrak{p}))$ is well defined for the equivalence class $\mathfrak{p}$.
Rigorously speaking, we have to care about the convergence;
we should restrict the definition for sufficiently small matrix weights $\bsu$.
However, as we will discuss in the next subsection, the zeta function have analytical continuation
to the whole space of matrix weights.

If $H$ is a graph and $r_{e}=1$ for all $e \in \vec{E}$,
this zeta function reduces to the edge zeta function \cite{STzeta1}.
Furthermore, if all these scalar weights are set to be equal, i.e. $u_{\etea}=u$, 
the zeta function reduces to the Ihara zeta function.
On the other hand, for general hypergraphs,
we obtain the one-variable hypergraph zeta function 
by setting all matrix weights to be the same scalar $u$ \cite{Shypergraph}.
These reductions will be discussed in Subsection \ref{sec:specialIB}.

\begin{example} \label{example1}
$\zeta_{H}(\boldsymbol{u})^{}=1$ if and only if $H$ is a tree.
(See Proposition \ref{prop:treeprime} is Subsection \ref{sec:primecycle}.)
For 1-cycle graph $C_N$ of length
$N$, the prime cycles are $(e_1,e_2,\ldots,e_N)$ and
$(\bar{e}_N,\bar{e}_{N-1},\ldots,\bar{e}_1)$. (See Figure \ref{fig:prime1cycle}.)
The zeta function is
\begin{small}
\begin{equation*}
 \zeta_{C_N}(\boldsymbol{u})=
\det(I_{r_{e_1}}- u_{e_N \rightharpoonup e_1}\ldots u_{e_2 \rightharpoonup e_3}  u_{e_1 \rightharpoonup e_2}        )^{-1}
\det(I_{r_{\bar{e}_N}}- u_{\bar{e}_1 \rightharpoonup \bar{e}_N}\ldots u_{\bar{e}_{N-1} \rightharpoonup \bar{e}_{N-2}}  u_{\bar{e}_N \rightharpoonup \bar{e}_{N-1}}  )^{-1}.
\end{equation*} 
\end{small}
\end{example}

Except for the above two types of hypergraphs, 
the number of prime cycles is infinite.

\begin{figure}
\begin{center}
\includegraphics[scale=0.25]{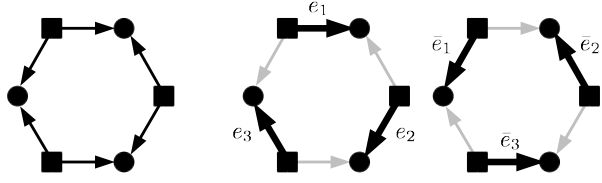}
\vspace{-1mm}
\caption{$C_3$ and its prime cycles. \label{fig:prime1cycle}}
\end{center}
\end{figure}

\subsection{The first determinant formula}
The following determinant formula gives analytical continuation to the whole strength of matrix weights. 

\begin{thm}
[The first determinant formula of zeta function]
We define a linear operator $\matmu : \vfe \rightarrow \vfe$ by
\begin{equation}
 \matmu f (e) =  \sum_{e':\etea} u_{\etea}f(e') \qquad f \in \vfe.
\end{equation}
Then, the following formula holds
\begin{equation}
  \zeta_{H}(\bs{u})^{-1}=
\det(I- \matmu).
\end{equation}  
\end{thm}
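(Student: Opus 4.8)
The plan is to prove the identity $\zeta_H(\bs{u})^{-1} = \det(I - \matmu)$ by the classical "exponential trace" technique relating a product over prime cycles to the determinant of an operator on the edge space $\vfe$. First I would take $\bs{u}$ small enough that all the series below converge absolutely; the claimed analytic continuation then follows because both sides are polynomials (the right side obviously, the left side because $\det(I-\matmu)$ is a polynomial and the identity holds on an open set). The key algebraic fact I would use is that, for a closed geodesic $c = (e_1,\ldots,e_k)$, the scalar $\tr \pi(c)$ equals $\tr$ of the corresponding diagonal block of $\matmu^k$ acting on $\vfe$: more precisely, iterating the definition of $\matmu$ gives
\begin{equation*}
\matmu^k f(e) = \sum_{(e = e_0, e_1, \ldots, e_k): \, e_{l}\rightharpoonup e_{l+1}} u_{e_{k-1}\rightharpoonup e_k}\cdots u_{e_0 \rightharpoonup e_1} f(e_k),
\end{equation*}
so that $\tr(\matmu^k) = \sum_{c} \tr\pi(c)$, the sum running over all closed geodesics of length $k$ (tuples, not equivalence classes).

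Next I would compute $-\log\det(I - \matmu) = \tr\bigl(-\log(I-\matmu)\bigr) = \sum_{k\ge 1} \frac{1}{k}\tr(\matmu^k)$, valid for small $\bs{u}$. Substituting the trace formula above,
\begin{equation*}
-\log\det(I-\matmu) = \sum_{k \ge 1} \frac{1}{k} \sum_{\substack{c \text{ closed geodesic} \\ \text{length } k}} \tr\pi(c).
\end{equation*}
Now I reorganize this sum over closed geodesics by grouping according to the underlying prime cycle: every closed geodesic $c$ is uniquely an $m$-multiple $\mathfrak{p}^m$ of a prime closed geodesic $\mathfrak{p}$, and each prime cycle (equivalence class) of length $\ell$ has exactly $\ell$ representatives as a closed geodesic among the tuples of length $\ell$. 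Since $\tr\pi(\mathfrak{p}^m) = \tr(\pi(\mathfrak{p})^m)$ and the length of $\mathfrak{p}^m$ is $m\ell$, the double sum becomes
\begin{equation*}
\sum_{\mathfrak{p}\in\mathfrak{P}_H}\ \sum_{m\ge 1} \frac{\ell(\mathfrak{p})}{m\,\ell(\mathfrak{p})} \tr\bigl(\pi(\mathfrak{p})^m\bigr) = \sum_{\mathfrak{p}\in\mathfrak{P}_H}\sum_{m\ge 1}\frac{1}{m}\tr\bigl(\pi(\mathfrak{p})^m\bigr) = -\sum_{\mathfrak{p}\in\mathfrak{P}_H}\log\det\bigl(I - \pi(\mathfrak{p})\bigr),
\end{equation*}
using once more $\tr(-\log(I-A)) = -\log\det(I-A)$ for the finite matrix $\pi(\mathfrak{p})$. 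Exponentiating gives $\det(I-\matmu)^{-1} = \prod_{\mathfrak{p}}\det(I-\pi(\mathfrak{p}))^{-1} = \zeta_H(\bs{u})$, which is the claim.

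The main obstacle is the combinatorial bookkeeping in the regrouping step: one must argue carefully that the map "closed geodesic $\mapsto$ (its primitive prime cycle, multiplicity, cyclic offset)" is a bijection onto $\{(\mathfrak{p}, m, j) : \mathfrak{p}\in\mathfrak{P}_H,\ m\ge 1,\ 0\le j < \ell(\mathfrak{p})\}$, and that the cyclicity count produces exactly the factor $\ell(\mathfrak{p})$ that cancels the $1/k$. This is the standard Euler-product argument, but it requires being precise about what "equivalent" means and verifying that a prime closed geodesic has trivial cyclic stabilizer (which is essentially the definition of "prime" — not a proper power). The convergence justification for small $\bs{u}$ and the promotion to a polynomial identity via Zariski density (or just: both sides are entire in $\bs{u}$ and agree on an open set) are routine and I would dispatch them quickly. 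I should also double check the easy edge case where $\mathfrak{P}_H = \emptyset$ (tree): then $\matmu$ is nilpotent, $\det(I-\matmu) = 1$, and the empty product is $1$, consistent with Example~\ref{example1}.
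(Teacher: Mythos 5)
Your proposal is correct and takes essentially the same route as the paper: both proofs reduce the identity to showing $\sum_{k\ge1}\frac{1}{k}\tr(\matmu^k)=\sum_{\mathfrak{p}\in\mathfrak{P}_H}\sum_{m\ge1}\frac{1}{m}\tr(\pi(\mathfrak{p})^m)$ via the unique decomposition of a closed geodesic as a power of a primitive one. The only cosmetic difference is that the paper cancels the $1/k$ factors by applying the Euler derivation $\mathcal{H}=\sum u\,\partial/\partial u$ to both logarithms, whereas you cancel them by directly counting the $|\mathfrak{p}|$ cyclic representatives of each prime cycle; the combinatorial content is identical.
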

Note that matrix representation of the operator $\matmu$ is 
\begin{equation}
\matmu_{e,e'}=
\begin{cases}
u_{\etea}  \qquad \text{if } \etea \\
0          \qquad \qquad \text{otherwise.} \label{def:matmu}
\end{cases}
\end{equation}
The simplification of this matrix (i.e. on a graph, $r_e=1$, $u=1$)
is called {\it directed edge matrix} in \cite{STzeta1}
or  {\it Perron-Frobenius operator} in \cite{KSzeta}.
A noteworthy difference, in our and their definitions, is that directions of edges are opposite,
because we choose directions to be consistent with illustrations of the LBP algorithm.

The following proof proceeds in an analogous manner with Theorem 3 in \cite{STzeta1}. 
It is also possible to use Amitsur's theorem \cite{Acharacteristic} as in \cite{Bcounting}. 
\begin{proof}
First define a differential operator 
\begin{equation}
 \mathcal{H}:=
\sum_{\etea}\sum_{a_{e}, a_{e'}} (u_{\etea})_{a_{e}, a_{e'}}\pd{}{ (u_{\etea})_{a_{e}, a_{e'}} }
\end{equation}
where $(u_{\etea})_{a_{e}, a_{e'}}$ denotes the $(a_{e}, a_{e'})$ element of the matrix $u_{\etea}$.
If we apply this operator to a $k$ product of $u$ terms, it is multiplied by $k$.
Since  $\log \zeta_{H}(\bs{0})=0$ and $\log \det (I -\mathcal{M}(\bs{0}))^{-1}=0$,
it is enough to prove that $\mathcal{H}\log \zeta_{H}(\bsu)=\mathcal{H} \log \det (I -\matmu)^{-1}$.
Using equations $\log \det X = \tr \log X$ and 
$- \log (1-x)=\sum_{k \geq 1} \frac{1}{k}x^k$, we have
\begin{align}
 \mathcal{H}\log \zeta_{H}(\bsu)
&=\mathcal{H} \sum_{ \mathfrak{p} \in \mathfrak{P}_H }
- \log \det ( I - \pi(\mathfrak{p}) )  \nonumber \\ 
&=\mathcal{H} \sum_{ \mathfrak{p} \in \mathfrak{P}_H }
\sum_{k \geq 1} \frac{1}{k} \tr ( \pi(\mathfrak{p})^{k}) \label{eq:thm:det3}\\
&=\sum_{ \mathfrak{p} \in \mathfrak{P}_H } 
\sum_{k \geq 1}  |\mathfrak{p}| \tr ( \pi(\mathfrak{p})^{k}) \label{eq:thm:det4}\\
&=\sum_{C: \text{closed geodesic} } \tr ( \pi(C)) 
\quad = \sum_{k \geq 1}  \tr ( \matmu^{k}). \nonumber
\end{align}
From Eq.~(\ref{eq:thm:det3}) to Eq.~(\ref{eq:thm:det4}), notice that 
$\mathcal{H}$ acts as a multiplication of $k|\mathfrak{p}|$ for each summand. 
This is because the summand is a sum of degree  $k|\mathfrak{p}|$ terms counting each $(u_{\etea})_{a_{e}, a_{e'}}$ degree one.
From Eq.~(\ref{eq:thm:det4}) to the next equation, we used a property of closed geodesic:
it is uniquely represented as a repeat of the minimal period.

On the other hand, one easily observes that
\begin{align*}
 \mathcal{H} \log \det (I -\matmu)^{-1}
&= \mathcal{H} \sum_{k \geq 1} \frac{1}{k} \tr ( \matmu^{k}) \\
&= \sum_{k \geq 1}  \tr ( \matmu^{k}).
\end{align*}
Then, the proof is completed.
\end{proof}

\section{Determinant formula of \IB type}\label{sec:detIhara}
In the previous section, we showed that the zeta function is expressed as a determinant
of a size $\sum_{e \in \vec{E} } r_{e}$ matrix.
In this section, we show another determinant expression, requiring an additional assumptions on matrix weights.
The formula is called \IB type determinant formula and indispensably used in the derivation of
the \Bzf in the next chapter.

\subsection{The formula}
In the rest of this subsection, we fix a set of positive integers $\{r_i\}_{i \in V}$ 
associated with vertices. 
Let $\{u^{\alpha}_{\edij}\}_{\alpha \in F, i,j \in \alpha}$ be a set of matrices of size
$u^{\alpha}_{\edij} \in \mat{r_j,r_i}$.
Our additional assumption on the set of matrix weights, which is an argument of zeta function, is that
\begin{equation}
 r_e:=r_{t(e)} \text{~ and  ~} u_{\etea}:=u^{s(e)}_{t(e') \rightarrow t(e)}.
\end{equation}
Then the graph zeta function can be seen as a function of $\bs{u}=\{u^{\alpha}_{\edij}\}$.
With slight abuse of notation, it is also denoted by $\zeta_{H}(\bsu)$.
Later in Chapter \ref{chap:Bzf},
we see that $r_i$ corresponds to the dimension of the sufficient statistic $\phi_{i}$ and
$u^{\alpha}_{\edij}$ comes to a matrix $\var{b_j}{\phi_j}^{-1} \cov{b_{\alpha}}{\phi_j}{\phi_i}$.

To state the \IB type determinant formula,
we introduce a linear operator $\bs{\iota}(\bs{u}): \vfe \rightarrow \vfe$
defined by
\begin{equation}
 (\bs{\iota}(\bs{u})f)(e):=
\sum_{e': {s(e')=s(e) \atop t(e')\neq t(e) }} u^{s(e)}_{\ed{t(e')}{t(e)}}f(e')  \qquad f \in \vfe. \label{def:iota}
\end{equation}
\begin{figure}
\begin{center}
\includegraphics[scale=0.25]{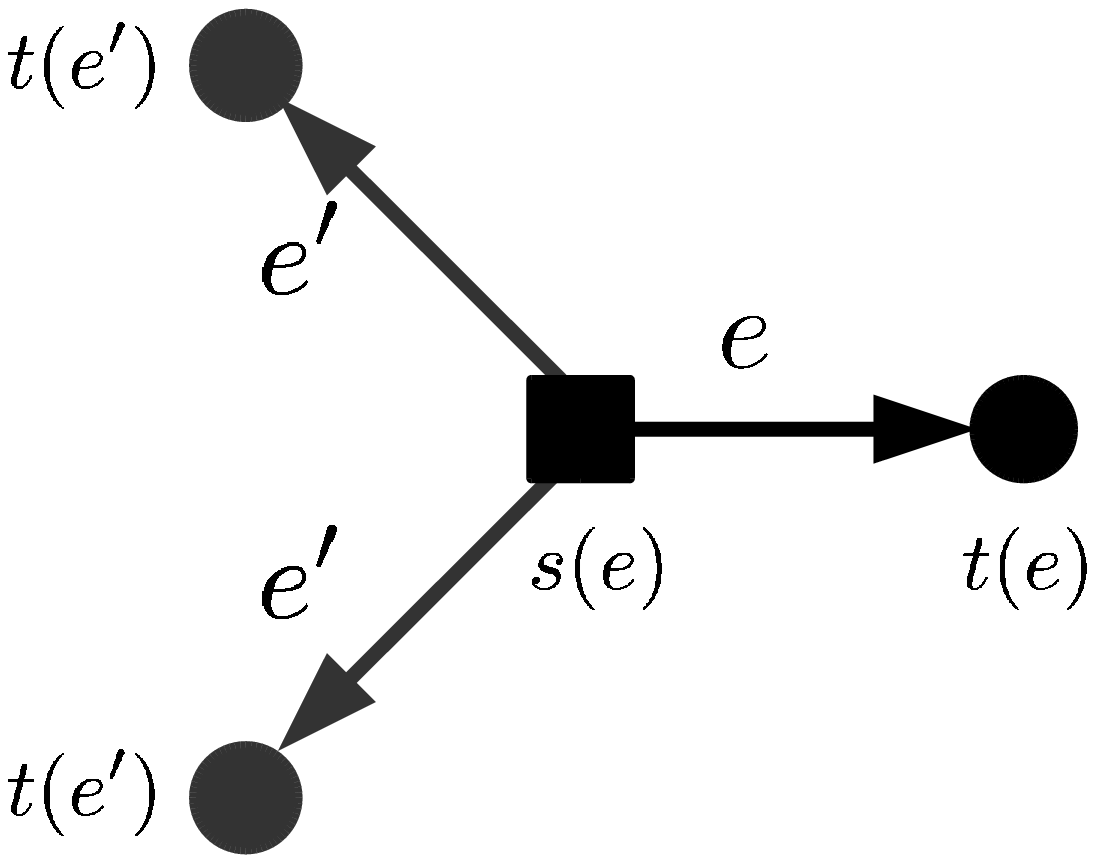}
\vspace{0mm}
\caption{Illustration for the definition of $\bs{\iota}(\bsu)$. \label{fig:iotaillustration}}
\end{center}
\vspace{0mm}
\end{figure}
The matrix representation of $\bs{\iota}(\bs{u})$ is a block diagonal matrix
because it acts for each factor separately.
Therefore $I+\bs{\iota}(\bs{u})$ is also a block diagonal matrix.
Each block is indexed by $\alpha \in F$ and denoted by $U_{\alpha}$.
Thus, for $\alpha=\{i_1,\ldots,i_{d_{\alpha}}\}$, 
\begin{small}
\begin{equation}
 U_{\alpha}= 
\begin{bmatrix}
 I_{r_{i_1}}    & u^{\alpha}_{i_2 \rightarrow i_1} & \cdots     &  u^{\alpha}_{i_{d_{\alpha}} \rightarrow i_1} \\
 u^{\alpha}_{i_1 \rightarrow i_2}   &  I_{r_{i_2}}         &  \cdots   & u^{\alpha}_{i_{d_{\alpha}} \rightarrow i_2} \\
   \vdots              &    \vdots      &  \ddots          & \vdots \\
u^{\alpha}_{i_{d_1} \rightarrow i_{d_{\alpha}}} &  u^{\alpha}_{i_{d_2 } \rightarrow i_{d_{\alpha}}} &  \cdots & I_{r_{i_{d_{\alpha}}}}\\
\end{bmatrix}. \label{eq:defU}
\end{equation}
\end{small}
We also define $w^{\alpha}_{\ed{i}{j}}$ by the elements of $W_{\alpha}=U_{\alpha}^{-1}$.
\begin{small}
\begin{equation}
 W_{\alpha}= 
\begin{bmatrix}
w^{\alpha}_{i_1 \rightarrow i_1}  & w^{\alpha}_{i_2 \rightarrow i_1} & \cdots     & w^{\alpha}_{i_{d_{\alpha}} \rightarrow i_1} \\
w^{\alpha}_{i_1 \rightarrow i_2}  & w^{\alpha}_{i_2 \rightarrow i_2} &  \cdots    & w^{\alpha}_{i_{d_{\alpha}} \rightarrow i_2} \\
   \vdots              &    \vdots      &  \ddots          & \vdots \\
w^{\alpha}_{i_{d_1 }\rightarrow i_{d_{\alpha}}} &  w^{\alpha}_{i_{d_2} \rightarrow i_{d_{\alpha}}} &  \cdots 
& w^{\alpha}_{i_{d_{\alpha}} \rightarrow i_{d_{\alpha}}    }  \\
\end{bmatrix}. \label{eq:defW}
\end{equation}
\end{small}

Similar to the definition of $\vfe$ in Subsection \ref{sec:defgraphzeta},
we define $\vfv$ as a set of functions on $V$ that takes value on $\mathbb{C}^{r_i}$ for each $i \in V$.

\begin{thm}[Determinant formula of Ihara-Bass type]
\label{thm:Ihara}
Let $\mathcal{D}$ are $\mathcal{W}$ are linear transforms on $\vfv$
defined by 
\begin{equation}
 (\mathcal{D}g)(i):= d_i g(i), \qquad
 (\mathcal{W}g)(i):= \sum_{{e,e' \in \vec{E} \atop {t(e)=i, s(e)=s(e')}  }} 
w^{s(e)}_{\ed{t(e')}{i}} g(t(e')).  \label{eq:defDW}
\end{equation}
Then, we have the following formula
\begin{equation}
  \zeta_{H}(\bs{u})^{-1}=
\det \big(
I_{r_V}- \mathcal{D} + \mathcal{W}
\big)
\prodf \det U_{\alpha},
\end{equation}
where $r_V:=\sum_{i \in V}r_i$ 
\end{thm}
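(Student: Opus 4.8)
The plan is to read the formula straight off the first determinant formula $\zeta_{H}(\bs{u})^{-1}=\det(I-\matmu)$, by routing $\matmu$ through the vertex space $\vfv$ and then evaluating one bordered determinant on $\vfv\oplus\vfe$ in two ways via Schur complements (Proposition~\ref{app:detdet}), exactly in the spirit of Bass' proof of the classical Ihara formula.

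First I would set up two incidence maps. Let $\pi^{*}\colon\vfv\to\vfe$ be the pullback $(\pi^{*}g)(e)=g(t(e))$, and let $\tau\colon\vfe\to\vfv$ be its adjoint, $(\tau f)(i)=\sum_{t(e)=i}f(e)$ (the sum over the edges terminating at $i$). Then $\tau\pi^{*}=\mathcal{D}$ is immediate, and the key identity is
\[
\matmu=\bs{\iota}(\bs{u})\bigl(\pi^{*}\tau-I\bigr),
\]
the $-I$ recording that a step $\ete{e'}{e}$ must leave the factor $s(e')$: unwinding the definitions, $\bigl(\bs{\iota}(\bs{u})(\pi^{*}\tau-I)f\bigr)(\ed{\alpha}{i})=\sum_{j\in\alpha,\,j\neq i}u^{\alpha}_{\edji}\sum_{\beta\ni j,\,\beta\neq\alpha}f(\ed{\beta}{j})=\bigl(\matmu f\bigr)(\ed{\alpha}{i})$. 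Consequently
\[
I-\matmu=\bigl(I+\bs{\iota}(\bs{u})\bigr)-\bs{\iota}(\bs{u})\pi^{*}\tau=\Bigl(\textstyle\bigoplus_{\alpha}U_{\alpha}\Bigr)-\bs{\iota}(\bs{u})\pi^{*}\tau .
\]
I would also note that $\bigl(I+\bs{\iota}(\bs{u})\bigr)^{-1}=\bigoplus_{\alpha}W_{\alpha}$ (block diagonal), and, reading off $(W_{\alpha})_{ij}=w^{\alpha}_{\edji}$, that $\tau\bigl(\bigoplus_{\alpha}W_{\alpha}\bigr)\pi^{*}=\mathcal{W}$; hence $\tau\bigl(I+\bs{\iota}(\bs{u})\bigr)^{-1}\bs{\iota}(\bs{u})\pi^{*}=\tau\pi^{*}-\tau\bigl(\bigoplus_{\alpha}W_{\alpha}\bigr)\pi^{*}=\mathcal{D}-\mathcal{W}$.

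Then I would apply the Schur-complement identity to
\[
M:=\begin{pmatrix} I_{r_{V}} & \tau \\[1mm] \bs{\iota}(\bs{u})\pi^{*} & \bigoplus_{\alpha}U_{\alpha} \end{pmatrix}
\qquad\text{on }\vfv\oplus\vfe .
\]
Eliminating the block $I_{r_{V}}$ gives $\det M=\det\bigl(\bigoplus_{\alpha}U_{\alpha}-\bs{\iota}(\bs{u})\pi^{*}\tau\bigr)=\det(I-\matmu)=\zeta_{H}(\bs{u})^{-1}$. Eliminating instead $\bigoplus_{\alpha}U_{\alpha}$ (invertible, since $W_{\alpha}=U_{\alpha}^{-1}$ is what enters the statement) gives $\det M=\prod_{\alpha}\det U_{\alpha}\cdot\det\bigl(I_{r_{V}}-\tau\bigl(\bigoplus_{\alpha}U_{\alpha}\bigr)^{-1}\bs{\iota}(\bs{u})\pi^{*}\bigr)=\prod_{\alpha}\det U_{\alpha}\cdot\det(I_{r_{V}}-\mathcal{D}+\mathcal{W})$ by the previous paragraph. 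Equating the two evaluations is the theorem.

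The proof becomes short once the operator identity $\matmu=\bs{\iota}(\bs{u})(\pi^{*}\tau-I)$ is in hand, so that is the step I would treat most carefully — in particular pinning down the $-I$ (the non-backtracking/``different factor'' condition) and the order of composition; everything after it is Schur-complement bookkeeping together with $\det(I-AB)=\det(I-BA)$. If one prefers to avoid the bordered matrix, the identical computation can be arranged as: multiply $I-\matmu=\bigl(\bigoplus_{\alpha}U_{\alpha}\bigr)-\bs{\iota}(\bs{u})\pi^{*}\tau$ on the left by $\bigoplus_{\alpha}W_{\alpha}$, take determinants, and collapse $\det\bigl(I-(\bigoplus_{\alpha}W_{\alpha})\bs{\iota}(\bs{u})\pi^{*}\tau\bigr)=\det\bigl(I_{r_{V}}-\tau(\bigoplus_{\alpha}W_{\alpha})\bs{\iota}(\bs{u})\pi^{*}\bigr)$ with $\det(I-AB)=\det(I-BA)$, again landing on $\det(I_{r_{V}}-\mathcal{D}+\mathcal{W})$.
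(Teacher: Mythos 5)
Your proposal is correct and follows essentially the same route as the paper: your operator identity $\matmu=\bs{\iota}(\bs{u})(\pi^{*}\tau-I)$ is exactly the paper's Lemma~\ref{lem:decM} (with $\pi^{*}=\mathcal{T}$, $\tau=\mathcal{T}^{*}$), and your second, "bordered-matrix-free" variant is literally the paper's computation via $\det(I-AB)=\det(I-BA)$. The only quibble is a citation slip: the Schur-complement identity is Proposition~\ref{app:schur}, while Proposition~\ref{app:detdet} is the $\det(I-AB)=\det(I-BA)$ identity you invoke at the end.
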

The rest of this subsection is devoted to the proof of this formula.
The proof is based on the decomposition in the following Lemma \ref{lem:decM} and
the formula of Proposition \ref{app:detdet}.
We define a linear operator by
\begin{align*}
\mathcal{T}: \vfv \rightarrow \vfe,  \qquad
&(\mathcal{T}g)(e):=g(t(e)) \\
\end{align*}
The vector spaces $\vfe$ and $\vfv$ have inner products naturally.
We can think of the adjoint of $\mathcal{T}$ which is given by
\begin{equation*}
 \mathcal{T}^{*}: \vfe \rightarrow \vfv,  \qquad
(\mathcal{T}^{*}f)(i):= \sum_{e: t(e)=i} f(e).
\end{equation*}

The linear operators have a following relation.
\begin{lem}
\label{lem:decM}
\begin{equation}
 \matmu= \bs{\iota}(\bs{u})\mathcal{T}\mathcal{T}^* - \bs{\iota} (\bs{u})
\end{equation}
\end{lem}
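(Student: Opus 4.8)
The plan is to verify the operator identity pointwise. I would fix an arbitrary $f \in \vfe$ and an arbitrary directed edge $e=(\edai) \in \vec{E}$ (so $i \in \alpha$), and compare the values in $\mathbb{C}^{r_i}$ of the two sides at $e$. The only structural facts needed are that for a factor $\alpha$ and a vertex $j \in \alpha$ there is a unique directed edge $(\edaj)$, and that the directed edges terminating at a vertex $j$ are exactly $\{(\ed{\beta}{j}) : \beta \in N_j\}$. I also recall that the geodesic relation $\etea$ holds precisely when $t(e') \in s(e)$, $t(e') \neq t(e)$ and $s(e') \neq s(e)$; the last clause says a geodesic never uses the same factor in two consecutive steps (in the graph case this is exactly non-backtracking), and it is responsible for the correction term $-\bs{\iota}(\bs{u})$.

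For the right-hand side: from the definitions of $\mathcal{T}$ and $\mathcal{T}^{*}$ one has $(\mathcal{T}\mathcal{T}^{*}f)(e) = (\mathcal{T}^{*}f)(t(e)) = \sum_{e'':\, t(e'')=t(e)} f(e'')$, so
\begin{equation*}
 \big( (\mathcal{T}\mathcal{T}^{*}-I)f \big)(e) = \sum_{e'':\, t(e'')=t(e),\; e'' \neq e} f(e'').
\end{equation*}
Since $\bs{\iota}(\bs{u})$ is block diagonal over factors, for $e=(\edai)$ we have $\big(\bs{\iota}(\bs{u})g\big)(e) = \sum_{j \in \alpha \setminus \{i\}} u^{\alpha}_{\ed{j}{i}}\, g(\edaj)$. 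Substituting $g=(\mathcal{T}\mathcal{T}^{*}-I)f$, and noting that the unique directed edge terminating at $j$ whose starting factor equals $\alpha$ is $(\edaj)$ itself (hence the one removed by the $-I$), we obtain
\begin{equation*}
 \big( \bs{\iota}(\bs{u})\mathcal{T}\mathcal{T}^{*}f - \bs{\iota}(\bs{u})f \big)(e) = \sum_{j \in \alpha \setminus \{i\}} u^{\alpha}_{\ed{j}{i}} \sum_{\beta \in N_j,\; \beta \neq \alpha} f(\ed{\beta}{j}).
\end{equation*}

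For the left-hand side: directly from the definition, $(\matmu f)(e) = \sum_{e':\, \etea} u_{\etea}\, f(e')$. For $e=(\edai)$, the conditions $t(e') \in s(e)$ and $t(e') \neq t(e)$ force $e'=(\ed{\beta}{j})$ with $j \in \alpha\setminus\{i\}$ and $\beta \in N_j$, while $s(e')\neq s(e)$ adds $\beta \neq \alpha$; and $u_{\etea} = u^{s(e)}_{\ed{t(e')}{t(e)}} = u^{\alpha}_{\ed{j}{i}}$. Grouping the sum over $e'$ by the vertex $j=t(e')$ gives exactly the expression found above for the right-hand side. As $f$ and $e$ were arbitrary, the identity $\matmu = \bs{\iota}(\bs{u})\mathcal{T}\mathcal{T}^{*} - \bs{\iota}(\bs{u})$ follows.

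The calculation is routine bookkeeping; the one point that deserves care is the handling of the factor $\beta \neq \alpha$, i.e. the role of the correction term $-\bs{\iota}(\bs{u})$ (equivalently the $-I$ in $\mathcal{T}\mathcal{T}^{*}-I$). It subtracts precisely the contribution of the edge $(\edaj)$ --- the continuation that would keep the walk inside $\alpha$ --- which is exactly what enforces the ``distinct consecutive factors'' clause of the geodesic relation. Without that clause one would instead get $\matmu = \bs{\iota}(\bs{u})\mathcal{T}\mathcal{T}^{*}$ with no correction term, so this is where that part of the definition enters.
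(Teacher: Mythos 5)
Your proof is correct and follows essentially the same route as the paper's: a pointwise expansion of $\bs{\iota}(\bs{u})(\mathcal{T}\mathcal{T}^{*}-I)f$ at an edge $e=(\edai)$, matched against $(\matmu f)(e)$ with the sum grouped by the terminal vertex of $e'$. You were also right to read the relation $\etea$ as including the clause $s(e')\neq s(e)$ — the formal definition in Subsection \ref{sec:defgraphzeta} omits that clause, but it is stated explicitly in the remark following Theorem \ref{thm:diffofLBP} and is exactly what the correction term $-\bs{\iota}(\bs{u})$ accounts for, as your final paragraph observes.
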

\begin{proof}
Let $f \in \vfv$.
\begin{align*}
\Big( \bs{\iota}(\bs{u})\mathcal{T}\mathcal{T}^* - \bs{\iota} (\bs{u}) \Big) f(e) 
&= \sum_{e': {s(e')=s(e) \atop t(e') \neq t(e)}} u^{s(e)}_{\ed{t(e')}{t(e)}}
\sum_{e'': t(e'')=t(e')}f(e'')
- \sum_{e'': {s(e'')=s(e) \atop t(e'') \neq t(e)}} u^{s(e)}_{\ed{t(e'')}{t(e)}} f(e'') \\
&= \sum_{e': {s(e')=s(e) \atop t(e') \neq t(e)}} u^{s(e)}_{\ed{t(e')}{t(e)}}
\sum_{e'': {t(e'')=t(e') \atop e''\neq e'}}f(e'') \\
&=(\matmu f)(e).
\end{align*}
\end{proof}

\begin{proof}[Proof of Theorem \ref{thm:Ihara}]
 \begin{align*}
  \zeta_{H}(\bs{u})^{-1}&= \det(I- \matmu)\\
  &=\det(I-   \bs{\iota}(\bs{u})\mathcal{T}\mathcal{T}^* + \bs{\iota} (\bs{u})) \\
  &=\det(I-   \bs{\iota}(\bs{u})\mathcal{T}\mathcal{T}^* (I+\bs{\iota} (\bs{u}))^{-1})
   \det( (I+\bs{\iota} (\bs{u})) ) \\
  &=\det(I_{r_V}- \mathcal{T}^* (I+\bs{\iota} (\bs{u}))^{-1} \bs{\iota}(\bs{u})\mathcal{T} )
  \prod_{\alpha \in F} \det(U_{\alpha})
 \end{align*}
It is easy to see that
$I_{r_V}- \mathcal{T}^* (I+\bs{\iota} (\bs{u}))^{-1} \bs{\iota}(\bs{u})\mathcal{T}
=I_{r_V}- \mathcal{T}^*\mathcal{T}+ \mathcal{T}^* (I+\bs{\iota} (\bs{u}))^{-1} \mathcal{T}$.
We observe that
\begin{equation*}
 (\mathcal{T}^*\mathcal{T}g)(i)= \sum_{e: t(e)=i} g(t(e))=d_i g(i)
\end{equation*}
and
\begin{equation*}
 (\mathcal{T}^*(I+\bs{\iota} (\bs{u}))^{-1} \mathcal{T}g)(i)
=\sum_{e: t(e)=i} ((I+\bs{\iota} (\bs{u}))^{-1} \mathcal{T}g)(e)
=(\mathcal{W}g)(i).
\end{equation*}
\end{proof}

\subsection{Special cases of \IB type determinant formula}
\label{sec:specialIB}
In this subsection, we rewrite the above formula for two special cases.
The first case is the Storm's hypergraph zeta function \cite{Shypergraph},
where all matrix weights are set to be the same scalar value $u$.
In the second case, the zeta function is associated to a graph, not a hypergraph.
This case corresponds to the pairwise \ifa when we discuss the relations to the LBP algorithm in the next chapter.
In both of the cases, the matrix $W_{\alpha}$, which was defined in Eq.~(\ref{eq:defW})
as the inverse of $U_{\alpha}$, has explicit expressions.

\subsubsection{One variable hypergraph zeta}
Let $r_i=1$ and $u^{\alpha}_{\edij}=u$.
The set of functions of $\vec{E}$ and $V$ are denoted by 
$\sfe$ and $\sfv$ instead of $\vfe$ and $\vfv$.
We define the directed matrix by $\mathcal{M}= \mathcal{M}(1)$, i.e.,
\begin{equation}
\mathcal{M}_{e,e'}=
\begin{cases}
1          \qquad \text{if } \etea \\
0          \qquad \text{otherwise.}
\end{cases}  
\end{equation}
Then, $\matmu = u \mathcal{M}$.
Theorem \ref{thm:Ihara} is reduced to the following form.
\begin{cor}
\label{cor:hyperIharaBass}
\begin{equation*}
 \zeta_{H}(u)^{-1}=
\det\Big(
(1-u)I+u^2 \tilde{D}(u)-u\tilde{A}(u)
\Big)
(1-u)^{|\vec{E}|-|V|-|F|}
\prod_{\alpha \in F} 
\big(1+(d_{\alpha}-1)u \big),
\end{equation*}
where 
\begin{small}
 \begin{equation*}
 (\tilde{\mathcal{D}}(u)g)(i):= \Big( \sum_{\alpha \ni i}\frac{(d_{\alpha}-1)}{1+(d_{\alpha}-1)u} \Big) g(i), \quad
 (\tilde{\mathcal{A}}(u)g)(i):=\sum_{\alpha \supset \{i,j\} \atop j \neq i }\frac{1}{1+(d_{\alpha}-1)u} g(j), \quad
g \in \sfv.
 \end{equation*}
\end{small}
\end{cor}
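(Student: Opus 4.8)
The plan is to specialize Theorem \ref{thm:Ihara} to the case $r_i = 1$, $u^\alpha_{\edij} = u$, and then carry out the resulting finite-dimensional linear algebra explicitly. With all $r_i = 1$ we have $r_V = |V|$, and each block $U_\alpha$ from Eq.~(\ref{eq:defU}) becomes the $d_\alpha \times d_\alpha$ matrix $(1-u)I_{d_\alpha} + u J_{d_\alpha}$, where $J$ denotes the all-ones matrix. Since $J_{d_\alpha}$ has eigenvalue $d_\alpha$ (simple) and $0$ (with multiplicity $d_\alpha - 1$), I would immediately read off $\det U_\alpha = (1-u)^{d_\alpha - 1}\big(1 + (d_\alpha - 1)u\big)$. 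Taking the product over $\alpha$ and using $\sum_{\alpha \in F} d_\alpha = |\vec{E}|$ (directed edges are the incidences $i \in \alpha$) and $\sum_{\alpha \in F} 1 = |F|$ gives $\prodf \det U_\alpha = (1-u)^{|\vec{E}| - |F|}\prodf\big(1 + (d_\alpha-1)u\big)$, which already accounts for the scalar prefactors in the claimed formula apart from a compensating power of $(1-u)$ that will come from the vertex determinant.

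Next I would invert $U_\alpha$ by the Sherman--Morrison formula to get $W_\alpha = U_\alpha^{-1} = \tfrac{1}{1-u}\big(I - \tfrac{u}{1+(d_\alpha-1)u}J\big)$, so that $w^\alpha_{i\to i} = \tfrac{1}{1-u}\cdot\tfrac{1+(d_\alpha-2)u}{1+(d_\alpha-1)u}$ and $w^\alpha_{j\to i} = \tfrac{-u}{(1-u)(1+(d_\alpha-1)u)}$ for $j \neq i$. Substituting these into the definition Eq.~(\ref{eq:defDW}) of $\mathcal{W}$ and splitting the sum over edge-pairs $(e,e')$ with $t(e)=i$, $s(e)=s(e')$ into a sum over factors $\alpha \ni i$ and over $j \in \alpha$ (with the case $j=i$ contributing the diagonal weight $w^\alpha_{i\to i}$), I would compute the entries of $(1-u)(I_{r_V} - \mathcal{D} + \mathcal{W})$. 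The key algebraic identity to check is
\begin{equation*}
(1-u)\Big(I_{r_V} - \mathcal{D} + \mathcal{W}\Big) = (1-u)I + u^2\tilde{\mathcal{D}}(u) - u\tilde{\mathcal{A}}(u).
\end{equation*}
For the diagonal this reduces to the identity $\tfrac{1+(d_\alpha-2)u}{1+(d_\alpha-1)u} - (1-u) = \tfrac{(d_\alpha-1)u^2}{1+(d_\alpha-1)u}$, summed over $\alpha \ni i$, after using $d_i = \sum_{\alpha \ni i} 1$ to cancel the $\mathcal{D}$ term against the $I$ term factor by factor; for the off-diagonal $(i,j)$ entry it is the immediate computation $(1-u)\sum_{\alpha \supset \{i,j\}} w^\alpha_{j\to i} = -u\sum_{\alpha\supset \{i,j\}}\tfrac{1}{1+(d_\alpha-1)u}$.

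Finally I would assemble the pieces: taking determinants of the displayed identity gives $\det\big((1-u)I + u^2\tilde{\mathcal{D}}(u) - u\tilde{\mathcal{A}}(u)\big) = (1-u)^{|V|}\det(I_{r_V} - \mathcal{D} + \mathcal{W})$, hence $\det(I_{r_V}-\mathcal{D}+\mathcal{W}) = (1-u)^{-|V|}\det\big((1-u)I + u^2\tilde{\mathcal{D}}(u)-u\tilde{\mathcal{A}}(u)\big)$. Multiplying by $\prodf \det U_\alpha$ computed above, the powers of $(1-u)$ combine to $(1-u)^{|\vec{E}|-|V|-|F|}$, and Theorem \ref{thm:Ihara} then yields the stated formula. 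I expect the only real bookkeeping hazard to be the translation between the edge-pair sum defining $\mathcal{W}$ and the factor-wise sum --- in particular remembering that the diagonal weight $w^\alpha_{i\to i}$ is not omitted --- together with correctly tracking the exponents of $(1-u)$ across the two determinant factors; the rest is routine.
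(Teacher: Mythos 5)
Your proposal is correct and follows essentially the same route as the paper's proof: specialize Theorem \ref{thm:Ihara} to scalar uniform weights, compute $\det U_{\alpha}=(1-u)^{d_{\alpha}-1}(1+(d_{\alpha}-1)u)$ and the entries of $W_{\alpha}=U_{\alpha}^{-1}$, verify that $(1-u)(I-\mathcal{D}+\mathcal{W})=(1-u)I+u^2\tilde{\mathcal{D}}(u)-u\tilde{\mathcal{A}}(u)$, and track the powers of $(1-u)$ using $\sum_{\alpha}d_{\alpha}=|\vec{E}|$. Your explicit Sherman--Morrison inversion and the eigenvalue computation of $\det U_{\alpha}$ are just cleaner renderings of steps the paper states directly (via Proposition \ref{app:detuniform}), and all your intermediate identities check out.
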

\begin{proof}
 $(U_{\alpha})_{i,i}=1$ and $(U_{\alpha})_{i,j}=u$ implies that
$\det U_{\alpha}= (1-u)^{d_{\alpha}-1}(1+(d_{\alpha}-1))$ and
$(W_{\alpha})_{i,i}=(1+(d_{\alpha}-2)u)(1-u)^{-1}(1+(d_{\alpha}-1)u)$ and 
$(W_{\alpha})_{i,j}=-u(1-u)^{-1}(1+(d_{\alpha}-1)u)$ 
Therefore,
\begin{align*}
(I-\mathcal{D}+\mathcal{W})g(i)
&= g(i)-d_i g(i)+ 
\sum_{\alpha \supset \{i,j\} \atop j \neq i } \sfrac{-u}{(1-u)(1+(d_{\alpha}-1)u)}g(j)  
 + \sum_{\alpha \ni i}\sfrac{1+(d_{\alpha}-2)u}{(1-u)(1+(d_{\alpha}-1)u)}g(i) \\
&= (I- \frac{u}{1-u} \tilde{\mathcal{A}}(u) + \frac{u^2}{1-u}\tilde{\mathcal{D}}(u) )g(i).
\end{align*}
\end{proof}

Corollary \ref{cor:hyperIharaBass} extends Theorem 16 of \cite{Shypergraph},
where this type of formula is only discussed for $(d,r)$-regular hypergraphs.

\subsubsection{Non-hyper graph zeta}
Here and in the below, we consider the case that $H=(V,F)$ is a graph, i.e. all degrees of hyperedges are equal to two.
Then it is identified with an (undirected) graph. 
First, we define the zeta function $Z_G$ of a general graph $G=(V,E)$.
For each undirected edge of $G$, we make a pair of oppositely
directed edges, which form a set of {\it directed edges} $\vec{E}$.
Thus $|\vec{E}|=2|E|$. For each directed edge $e \in \vec{E}$, $o(e)
\in V$ is the {\it origin} of $e$ and $t(e) \in V$ is the {\it
terminus} of $e$.   For $e \in \vec{E}$, the {\it inverse edge} is
denoted by $\bar{e}$, and the corresponding undirected edge by
$[e]=[\bar{e}] \in E$.

A {\it closed geodesic} in $G$ is a sequence $(e_1,\ldots,e_k)$ of
directed edges such that 
$t(e_i)=o(e_{i+1}), e_i \neq \bar{e}_{i+1}$ for $i \in \mathbb{Z}/k\mathbb{Z}$.
Prime cycles are defined in a similar manner to that of hypergraphs.
The set of prime cycles is denoted by $\mathfrak{P}_{G}$.

\begin{defn}
Let $G=(V,E)$ a graph.
For given positive integers $\{r_i\}_{i \in V}$ and 
matrix weights $\bsu=\{ u_{e} \}_{e \in \vec{E}}$ of sizes
$u_e \in \mat{r_{t(e)}}{r_{o(e)}}$,
 \begin{equation}
Z_{G}(\boldsymbol{u}):=\prod_{\mathfrak{p} \in \mathfrak{P}_G }
\det (1-\pi(\mathfrak{p}))^{-1},
\quad
\pi(\mathfrak{p}):=u_{e_1} \cdots u_{e_k}
\quad
\text{ for }
\mathfrak{p}=(e_1,\ldots,e_k), 
\end{equation}
\end{defn}
This zeta function is the matrix weight extension of the edge zeta function in \cite{STzeta1}
where the edge weights are scalar values.
Since $\mathfrak{P}_{\ug{H}}$ is naturally identified with $\mathfrak{P}_{H}$,
$Z_{\ug{H}}=\zeta_{H}$.

\begin{cor}
\label{cor:IBfornonhyper}
For a graph $G=(V,E)$,
 \begin{equation}
  Z_{G}(\bs{u})^{-1}=
  \det ( I + \hat{\mathcal{D}}(\bsu) - \hat{\mathcal{A}}(\bsu) )
  \prod_{[e] \in {E}} \det(I - u_e u_{\bar{e}}). \label{eq:IBfornonhyper}
 \end{equation}
where $\hat{\mathcal{D}}$ and $\hat{\mathcal{A}}$ are defined by 
\begin{align}
 &(\hat{\mathcal{D}}(\bsu)g)(i):= \Big( \sum_{e: t(e)=i}(I_{r_i}-u_eu_{\bar{e}})^{-1}u_eu_{\bar{e}} \Big)g(i), \label{eq:modDop} \\
 &(\hat{\mathcal{A}}(\bsu)g)(i):= \sum_{e: t(e)=i}(I_{r_i}-u_eu_{\bar{e}})^{-1}u_e g(o(e)). \label{eq:modAop}
\end{align}
\end{cor}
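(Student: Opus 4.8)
The plan is to derive this as a specialization of the Ihara--Bass type formula in Theorem \ref{thm:Ihara} to the case where $H=G$ is a graph, i.e.\ every hyperedge $\alpha=[e]$ has degree $d_\alpha=2$. The main point is that for a degree-two factor the matrix $U_\alpha$ of Eq.~(\ref{eq:defU}) is a $2\times 2$ block matrix whose inverse $W_\alpha$ can be written out explicitly, and substituting these explicit blocks into the operators $\mathcal{D}$ and $\mathcal{W}$ of Eq.~(\ref{eq:defDW}) produces exactly $\hat{\mathcal{D}}(\bsu)$ and $\hat{\mathcal{A}}(\bsu)$ of Eqs.~(\ref{eq:modDop})--(\ref{eq:modAop}).

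First I would fix an edge $[e]\in E$, identified with a factor $\alpha=\{i,j\}$ where $i=t(e)$, $j=o(e)=t(\bar e)$. Under the hypotheses of Section~\ref{sec:detIhara} specialized to a graph, one has $u^{\alpha}_{j\to i}=u_{e}$ and $u^{\alpha}_{i\to j}=u_{\bar e}$, so
\begin{equation}
U_\alpha=\begin{bmatrix} I_{r_i} & u_e \\ u_{\bar e} & I_{r_j}\end{bmatrix},
\qquad
\det U_\alpha=\det(I_{r_i}-u_e u_{\bar e})=\det(I_{r_j}-u_{\bar e}u_e).
\end{equation}
Using the standard block-inversion formula (Schur complement), the inverse $W_\alpha=U_\alpha^{-1}$ has diagonal blocks $w^{\alpha}_{i\to i}=(I_{r_i}-u_eu_{\bar e})^{-1}$, $w^{\alpha}_{j\to j}=(I_{r_j}-u_{\bar e}u_e)^{-1}$ and off-diagonal blocks $w^{\alpha}_{j\to i}=-(I_{r_i}-u_eu_{\bar e})^{-1}u_e$, $w^{\alpha}_{i\to j}=-(I_{r_j}-u_{\bar e}u_e)^{-1}u_{\bar e}$. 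This takes care of the product $\prod_{\alpha\in F}\det U_\alpha = \prod_{[e]\in E}\det(I-u_eu_{\bar e})$ in Theorem~\ref{thm:Ihara}.

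Next I would compute $(I_{r_V}-\mathcal{D}+\mathcal{W})$ using these blocks. In $(\mathcal{D}g)(i)=d_i g(i)$, write $d_i=\sum_{e:t(e)=i}1$ and split it as $\sum_{e:t(e)=i}\bigl(I_{r_i}-(I_{r_i}-u_eu_{\bar e})^{-1}u_eu_{\bar e}\bigr)+\sum_{e:t(e)=i}(I_{r_i}-u_eu_{\bar e})^{-1}u_eu_{\bar e}$; wait—more directly, use the identity $I-(I-X)^{-1}X=(I-X)^{-1}(I-X)-(I-X)^{-1}X=(I-X)^{-1}(I-2X)$; but the cleanest route is to note $w^{s(e)}_{t(e)\to t(e)}=(I_{r_i}-u_eu_{\bar e})^{-1}$ appears in $\mathcal{W}$ for the term $e'=e$, while the term $e'=\bar e$ (the other directed edge over the same factor, with $t(\bar e)=o(e)$) contributes $w^{\alpha}_{o(e)\to i}g(o(e))=-(I_{r_i}-u_eu_{\bar e})^{-1}u_e g(o(e))$. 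Hence
\begin{equation}
(\mathcal{W}g)(i)=\sum_{e:t(e)=i}(I_{r_i}-u_eu_{\bar e})^{-1}g(i)-\sum_{e:t(e)=i}(I_{r_i}-u_eu_{\bar e})^{-1}u_e g(o(e)),
\end{equation}
and combining with $I_{r_i}g(i)-d_ig(i)=-\sum_{e:t(e)=i}\bigl(I_{r_i}-(I_{r_i}-u_eu_{\bar e})^{-1}\bigr)g(i)$ after the algebraic identity $(I-X)^{-1}-I=(I-X)^{-1}X$, the diagonal part collapses to $\sum_{e:t(e)=i}(I_{r_i}-u_eu_{\bar e})^{-1}u_eu_{\bar e}\,g(i)=(\hat{\mathcal{D}}(\bsu)g)(i)$ and the off-diagonal part is exactly $-(\hat{\mathcal{A}}(\bsu)g)(i)$. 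Therefore $I_{r_V}-\mathcal{D}+\mathcal{W}=I+\hat{\mathcal{D}}(\bsu)-\hat{\mathcal{A}}(\bsu)$, and plugging into Theorem~\ref{thm:Ihara} gives Eq.~(\ref{eq:IBfornonhyper}).

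The only genuinely delicate points are bookkeeping ones: making sure the index $e'=\bar e$ is correctly included in the sum defining $\mathcal{W}$ (it is the unique directed edge $e'\ne e$ with $s(e')=s(e)=\alpha$ since $d_\alpha=2$), and being careful that $w^{\alpha}_{t(e')\to i}$ is evaluated with the correct orientation of blocks so that the $u_e$ (not $u_{\bar e}$) survives in $\hat{\mathcal{A}}$. I expect the main obstacle to be nothing deep — just verifying the algebraic identity that lets the $d_i$ term in $\mathcal{D}$ merge cleanly with the diagonal $\mathcal{W}$ term to yield the compact form $(I-X)^{-1}X$ with $X=u_eu_{\bar e}$, summed over $e$ with $t(e)=i$.
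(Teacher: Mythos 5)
Your proposal is correct and follows essentially the same route as the paper: specialize Theorem \ref{thm:Ihara} to degree-two factors, write out the $2\times 2$ block $U_{[e]}$, its determinant $\det(I_{r_i}-u_eu_{\bar e})$ and its explicit inverse $W_{[e]}$, and substitute into $I-\mathcal{D}+\mathcal{W}$. Your extra step collapsing the diagonal via $(I-X)^{-1}-I=(I-X)^{-1}X$ is exactly the bookkeeping the paper leaves implicit in "plugging these equations into Theorem \ref{thm:Ihara}."
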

\begin{proof}
For $e=(\edij)$, the $U_{[e]}$ block is given by 
\begin{equation}
 U_{[e]}=
\begin{bmatrix}
 I_{r_i} & u_e \\
u_{\bar{e}} & I_{r_j} \\
\end{bmatrix}
\end{equation}
Therefore $\det  U_{[e]}= \det(I_{r_i}-u_eu_{\bar{e}})$ and
the inverse $W_{[e]}$ is
\begin{equation}
 W_{[e]}=
\begin{bmatrix}
(I_{r_i}-u_eu_{\bar{e}})^{-1} & 0 \\
0 & (I_{r_j}-u_{\bar{e}}u_{e})^{-1} \\
\end{bmatrix}
\begin{bmatrix}
 I_{r_i} & -u_e \\
-u_{\bar{e}} & I_{r_j} \\
\end{bmatrix}.
\end{equation}
Plugging these equations into Theorem \ref{thm:Ihara}, we obtain the assertion. 
\end{proof}

In \cite{MSweighted} and \cite{HSTweighted}, a weighted graph version of \IB type determinant formula is derived
assuming scalar weights $\{u_e\}_{e \in \vec{E}}$ satisfy conditions of $u_e u_{\bar{e}} = u^2$.
In this case, the factors $(1-u_e u_{\bar{e}})^{-1}$ in Eqs.~(\ref{eq:modDop}) and (\ref{eq:modAop}) do not depend on $e$ and 
Eq.~(\ref{eq:IBfornonhyper}) is simplified.
Corollary \ref{cor:IBfornonhyper} gives the extension of the result to arbitrary weighted graph.
A direct proof, without discussing hypergraph case, of Corollary \ref{cor:IBfornonhyper} is found in Theorem 2 of \cite{WFzeta}.

\subsubsection{Ihara-Bass formula}
Reduced from these two special cases,
we obtain the following formula
which is known as Ihara-Bass formula:
\begin{equation*}
 Z_{G}(u)^{-1}
=(1-u^2)^{|E|-|V|}
\det(I-u\mathcal{A}+ u^2 (\mathcal{D}-I) ),
 \end{equation*}
where
$\mathcal{D}$ is the {\it degree matrix}
and
$\mathcal{A}$ is the {\it adjacency matrix}
defined by
\begin{equation}
(\mathcal{D}f)(i):=
d_i f(i),
\quad
(\mathcal{A}f)(i):=
\sum_{e \in \vec{E} ,  t(e)=i}
f(o(e)),
\quad
\text{    }
f \in C(V). \nonumber
\end{equation}
Many authors have been discussed the proof of the \IB formula. 
The first proof was given by Bass \cite{Bass} and others are found in \cite{KSzeta,STzeta1}.
A combinatorial proof by Foata and Zeilberger is found in \cite{FZcombinatorial}.

\section{Miscellaneous properties}\label{sec:miszeta}
This section provides miscellaneous topics.
In the first subsection, the prime cycles are discussed relating hypergraph properties. 
In the second subsection, we present additional properties of 
the directed edge matrix and the one-variable hypergraph zeta function.
These properties are utilized in the subsequent developments.

\subsection{Prime cycles}
\label{sec:primecycle}
\begin{prop}
\label{prop:invprime}
 $\mathfrak{P}_H$ = $\mathfrak{P}_{\core (H)}$
\end{prop}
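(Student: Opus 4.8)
The plan is to reduce the statement to the one elementary operation out of which $\core(H)$ is built, and to show that this operation changes neither the set of closed geodesics nor the equivalence relation on them, hence leaves $\mathfrak{P}_H$ untouched.

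Concretely, I would first recall that $\core(H)$ is obtained from $H$ by repeatedly choosing a directed edge $e^{*}=(\ed{\alpha^{*}}{i^{*}})$ with $d_{\alpha^{*}}=1$ or $d_{i^{*}}=1$ and deleting $e^{*}$ together with whichever of $\alpha^{*},i^{*}$ has degree one. So it suffices to prove $\mathfrak{P}_H=\mathfrak{P}_{H'}$ for $H'$ the result of one such step; iterating over the finitely many steps then gives the proposition. Here I use that, besides $t(e)\in s(e')$ and $t(e)\neq t(e')$, the relation $\ete{e}{e'}$ also carries the non-backtracking clause $s(e)\neq s(e')$ — this is forced by the description $(\edbj)\rightharpoonup(\edai)\iff j\in N_\alpha\setminus i,\ \beta\in N_j\setminus\alpha$ used for the LBP update, and without it the claim is already false for a single hyperedge on three vertices.

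The key step is then: $e^{*}$ lies on no closed geodesic of $H$. Suppose $e_{l}=e^{*}$ for some $l$ in a closed geodesic $(e_1,\dots,e_k)$ (indices mod $k$). If $d_{\alpha^{*}}=1$, then $\alpha^{*}=\{i^{*}\}$, so $\ete{e_{l-1}}{e_{l}}$ forces $t(e_{l-1})\in s(e_{l})=\{i^{*}\}$, i.e. $t(e_{l-1})=i^{*}=t(e_{l})$, contradicting $t(e_{l-1})\neq t(e_{l})$. If instead $d_{i^{*}}=1$, then $\alpha^{*}$ is the only factor containing $i^{*}$, so $\ete{e_{l}}{e_{l+1}}$ forces $i^{*}\in s(e_{l+1})$, hence $s(e_{l+1})=\alpha^{*}=s(e_{l})$, contradicting $s(e_{l})\neq s(e_{l+1})$; the degenerate case $k=1$ is excluded in the same way. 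Consequently every closed geodesic of $H$ uses only edges of $\vec{E}(H')=\vec{E}(H)\setminus\{e^{*}\}$. One then checks that, for surviving edges, the relation $\rightharpoonup$ computed in $H$ and in $H'$ agree: the only vertex that can be deleted is $i^{*}$, which, having degree one, lies in $\alpha^{*}$ alone, so no surviving factor loses a vertex and $s(e')$ is unchanged for every $e'\in\vec{E}(H')$. Hence the closed geodesics of $H$ and of $H'$ are literally the same sequences of directed edges, so the notions ``prime'' and ``equivalent'' agree on them, giving $\mathfrak{P}_H=\mathfrak{P}_{H'}$. The reverse inclusion $\mathfrak{P}_{H'}\subseteq\mathfrak{P}_H$ is in any case immediate, since $\vec{E}(H')\subseteq\vec{E}(H)$ and $t(e)\in s_{H'}(e')$ implies $t(e)\in s_{H}(e')$.

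The part needing the most care is the key step above — showing $e^{*}$ can never sit on a closed geodesic — because it is exactly there that all three clauses of the definition of $\ete{\cdot}{\cdot}$ are used, and one must not overlook the short/degenerate geodesics. Everything else is routine bookkeeping about which vertices and factors survive a single pruning step.
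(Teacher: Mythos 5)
Your proof is correct and takes essentially the same route as the paper: remove one degree-one directed edge at a time and observe that no closed geodesic can pass through it (a step the paper dismisses with ``obviously,'' which you rightly make explicit, together with the bookkeeping that $\rightharpoonup$ is unchanged on the surviving edges). Your observation that the exit argument in the case $d_{i^{*}}=1$ genuinely needs the non-backtracking clause $s(e)\neq s(e')$ --- omitted from the paper's stated definition of $\ete{e}{e'}$ but clearly intended, since without it the hypergraph definition would not reduce to the standard non-backtracking one for graphs --- is accurate and worth recording.
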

\begin{proof}
Let $\fgdefn$.
The proof is by induction on $|\vec{E}|$.
If $H$ is a coregraph, the statement is trivial;
if not, there is a directed edge $e \in \vec{E}$ that satisfies $d_{s(e)}=1$ or $d_{t(e)}=1$.
Obviously, there is no geodesic that goes through $e$.
Therefore, removal of $e$ from $H$ does not affect the set of prime cycles. 
\end{proof}
This Proposition immediately implies that $\zeta_{H}=\zeta_{\core(H)}$.

The following proposition claims that trees can be characterized in terms of prime cycles.
\begin{prop}
\label{prop:treeprime}
 Let $H$ be a connected hypergraph.
$H$ is a tree if and only if $\mathfrak{P}_{H}=\emptyset$.
\end{prop}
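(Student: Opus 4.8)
The plan is to prove both implications using the results already established just above. For the forward implication I would simply combine Proposition~\ref{prop:treecore} with Proposition~\ref{prop:invprime}; for the converse I would argue by contrapositive, converting a cycle of $H$ into a closed geodesic and then passing to its minimal period to obtain an honest prime cycle.

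First suppose $H$ is a tree. By Proposition~\ref{prop:treecore}, $\core(H)$ is the empty hypergraph; it has no directed edges, hence no closed geodesics, so $\mathfrak{P}_{\core(H)}=\emptyset$. By Proposition~\ref{prop:invprime}, $\mathfrak{P}_H=\mathfrak{P}_{\core(H)}=\emptyset$. Conversely, suppose $H$ is connected but not a tree; then $H$ contains a cycle, i.e.\ a closed walk $C=(i_1,\alpha_1,i_2,\ldots,\alpha_n,i_{n+1})$ with $i_{n+1}=i_1$, $\alpha_k\supset\{i_k,i_{k+1}\}$ and $i_k\neq i_{k+1}$ for all $k$ (necessarily $n\geq 2$). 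Reading indices modulo $n$, put $e_k:=(\ed{\alpha_k}{i_{k+1}})$, which is a genuine directed edge of $H$ since $i_{k+1}\in\alpha_k$. I would then check $\ete{e_k}{e_{k+1}}$ for every $k$: we have $t(e_k)=i_{k+1}\in\alpha_{k+1}=s(e_{k+1})$ because $\alpha_{k+1}\supset\{i_{k+1},i_{k+2}\}$, and $t(e_k)=i_{k+1}\neq i_{k+2}=t(e_{k+1})$ by the walk condition. Hence $(e_1,\ldots,e_n)$ is a closed geodesic. Since every closed geodesic of positive length is the $m$-multiple of a prime closed geodesic for some $m\geq 1$ (the minimal period is again a closed geodesic; induct on length, which cannot drop below $2$ since $\ete{e}{e}$ never holds), we obtain a prime closed geodesic whose equivalence class lies in $\mathfrak{P}_H$. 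Thus $\mathfrak{P}_H\neq\emptyset$, the contrapositive of the claim.

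The whole argument is bookkeeping once the definitions are unwound; the only place that needs a moment of care is the verification of $\ete{e_k}{e_{k+1}}$, and in particular the inequality $t(e_k)\neq t(e_{k+1})$, which is precisely where the walk condition $i_k\neq i_{k+1}$ is used, together with the elementary observation (already invoked in the proof of the first determinant formula) that a closed geodesic is always a repetition of a prime one. I do not expect any genuine obstacle here.
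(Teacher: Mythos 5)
Your proof is correct and follows the same route as the paper: the forward direction by combining Propositions~\ref{prop:treecore} and~\ref{prop:invprime}, and the converse by turning a cycle into a closed geodesic and extracting its prime period. You merely spell out the verification of $\ete{e_k}{e_{k+1}}$ and the reduction to a prime cycle, which the paper leaves implicit.
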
 
\begin{proof}
The ``only if'' part is trivial from Propositions \ref{prop:invprime} and \ref{prop:treecore}. 
For ``if'' part, assume that $H$ is not a tree.
Then there is a cycle and the cycle gives a closed geodesic.
Therefore, $\mathfrak{P}_{H} \neq \emptyset$.
\end{proof}

\subsection{Directed edge matrix}
\label{sec:directededgematrix}
First, we derive a simple formula for the determinant of the directed edge matrix $\mathcal{M}$.
This type of expression appears in the \ls expansion of the perfect matching problem in Section \ref{sec:perfectmatching}.
\begin{thm}
\label{thm:detofM}
\begin{equation}
  \det \mathcal{M} =
\prodv (1-d_i)
\prodf (1-d_{\alpha})
\end{equation}
\end{thm}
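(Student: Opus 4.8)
The plan is to obtain $\mathcal{M}$ as a product of two simpler operators by specializing Lemma~\ref{lem:decM} to scalar unit weights, and then to take determinants of the two factors separately. Concretely, set $r_i=1$ for all $i\in V$ and $u^{\alpha}_{\edij}=1$ for all $\alpha$ and all $i,j\in\alpha$, and write $\iota:=\bs{\iota}(\bs{1})$ for the resulting operator on $\vfe$. Under this specialization $\matmu$ becomes $\mathcal{M}$, and Lemma~\ref{lem:decM} reads $\mathcal{M}=\iota\mathcal{T}\mathcal{T}^{*}-\iota=\iota\,(\mathcal{T}\mathcal{T}^{*}-I_{|\vec{E}|})$, hence
\[
\det\mathcal{M}=\det\iota\cdot\det\!\big(\mathcal{T}\mathcal{T}^{*}-I_{|\vec{E}|}\big).
\]

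For the second factor I would use $\mathcal{T}\mathcal{T}^{*}-I_{|\vec{E}|}=-\big(I_{|\vec{E}|}-\mathcal{T}\mathcal{T}^{*}\big)$, the identity $\det(I_{n}-AB)=\det(I_{m}-BA)$ with $A=\mathcal{T}$ and $B=\mathcal{T}^{*}$, and the computation $(\mathcal{T}^{*}\mathcal{T}g)(i)=d_{i}g(i)$ already carried out inside the proof of Theorem~\ref{thm:Ihara}; this yields
\[
\det\!\big(\mathcal{T}\mathcal{T}^{*}-I_{|\vec{E}|}\big)=(-1)^{|\vec{E}|}\det\!\big(I_{|V|}-\mathcal{D}\big)=(-1)^{|\vec{E}|}\prodv(1-d_{i}).
\]
For the first factor, recall that $I+\iota$ is block diagonal with blocks $U_{\alpha}$ as in Eq.~(\ref{eq:defU}); with unit weights and $r_i=1$ each $U_{\alpha}$ is simply the $d_{\alpha}\times d_{\alpha}$ all-ones matrix, so the block $\iota|_{\alpha}=U_{\alpha}-I_{d_{\alpha}}$ has eigenvalue $d_{\alpha}-1$ with multiplicity one and eigenvalue $-1$ with multiplicity $d_{\alpha}-1$. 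Therefore $\det(\iota|_{\alpha})=(-1)^{d_{\alpha}-1}(d_{\alpha}-1)$ and $\det\iota=\prodf(-1)^{d_{\alpha}-1}(d_{\alpha}-1)$. The degenerate case $d_{\alpha}=1$ is automatically consistent, since then $\iota|_{\alpha}$ is the $1\times1$ zero matrix and $1-d_{\alpha}=0$.

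It then remains to combine the two pieces and tidy up the signs, which is the only delicate point. Using $|\vec{E}|=\sum_{\alpha\in F}d_{\alpha}$ one has $\prodf(-1)^{d_{\alpha}-1}=(-1)^{|\vec{E}|-|F|}$ and $\prodf(d_{\alpha}-1)=(-1)^{|F|}\prodf(1-d_{\alpha})$, so $\det\iota=(-1)^{|\vec{E}|}\prodf(1-d_{\alpha})$; multiplying by the second factor the two factors $(-1)^{|\vec{E}|}$ cancel and one is left with $\det\mathcal{M}=\prodv(1-d_{i})\prodf(1-d_{\alpha})$, as claimed. As an independent sanity check the same value can also be read off from Corollary~\ref{cor:hyperIharaBass} by extracting the coefficient of $u^{|\vec{E}|}$ in the polynomial $\zeta_{H}(u)^{-1}=\det(I-u\mathcal{M})$, at the cost of a somewhat messier large-$u$ asymptotic analysis of the three factors appearing there.
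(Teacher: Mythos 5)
Your proposal is correct and follows essentially the same route as the paper: specialize Lemma~\ref{lem:decM} to $\bs{u}=\bs{1}$ to get $\mathcal{M}=\iota(\mathcal{T}\mathcal{T}^{*}-I)$, compute $\det\iota=\prodf(-1)^{d_{\alpha}-1}(d_{\alpha}-1)$ blockwise (the paper cites Proposition~\ref{app:detuniform} for this, you diagonalize the all-ones block directly, which is the same computation), and handle the other factor via $\det(I_n-AB)=\det(I_m-BA)$ together with $\mathcal{T}^{*}\mathcal{T}=\mathcal{D}$. You have simply written out the sign bookkeeping that the paper's terse proof leaves implicit.
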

\begin{proof}
From Lemma \ref{lem:decM}, we have $\mathcal{M}= \iota \mathcal{T}\mathcal{T}^{*}-\iota$,
where $\iota= \iota(\bs{1})$. 
From Eq.~(\ref{def:iota}) and Proposition \ref{app:detuniform}, we see that $\det(\iota)=\prod_{\alpha \in F}(-1)^{d_{\alpha -1}}(d_{\alpha}-1)$. 
Using Proposition \ref{app:detdet}, the assertion follows.
\end{proof}
This formula implies that the matrix is invertible if and only if the hypergraph has a nonempty coregraph.
Since $\zeta_{H}(u)=\zeta_{\core (H)}(u)$, the spectrum (i.e. the set of eigenvalues) 
of $H$ and $\core (H)$ only differs by zero eigenvalues.

Next, we consider the irreducibility of the non-negative matrix $\mathcal{M}$.
\begin{prop}
For a connected hypergraph $H$,
$\mathcal{M}$ is irreducible if and only if $H$ is a coregraph and $n(H) \geq 2$. 
\end{prop}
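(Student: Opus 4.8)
The plan is to translate irreducibility of the nonnegative matrix $\mathcal{M}$ into strong connectivity of a digraph and then read off the two conditions. Let $\Gamma$ be the digraph with vertex set $\vec{E}$ and an arc $e'\to e$ whenever $\mathcal{M}_{e,e'}\neq 0$, i.e.\ whenever $\ete{e'}{e}$. A directed walk $e'\to e_1\to\cdots\to e$ in $\Gamma$ is precisely a geodesic from $e'$ to $e$, so $(\mathcal{M}^{k})_{e,e'}\neq 0$ iff there is a geodesic of length $k$ from $e'$ to $e$, and by Perron--Frobenius theory $\mathcal{M}$ is irreducible iff $\Gamma$ is strongly connected. (When $|\vec{E}|\le 1$ neither side holds, so we may assume $|\vec{E}|\ge 2$.) Thus the claim becomes: $\Gamma$ is strongly connected iff $H=\core(H)$ and $n(H)\ge 2$.

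For the ``only if'' part I argue contrapositively. If $H\neq\core(H)$, choose a directed edge $e\notin\core(H)$. Edges removed while forming the core lie on no closed geodesic (as in the proof of Proposition~\ref{prop:invprime}), hence $e$ lies on no directed cycle of $\Gamma$; but in a strongly connected digraph with at least two vertices every vertex lies on a directed cycle, so $\Gamma$ is not strongly connected. If $H=\core(H)$ but $n(H)\le 1$, then $H$ is a single cycle: a coregraph has no vertex or hyperedge of degree $\le 1$, and if moreover no vertex or hyperedge has degree $\ge 3$ then $H$ is $2$-regular and connected, hence a cyclic alternating sequence of vertices and (degree-two) hyperedges, which is exactly the case $n(H)=1$; the case $n(H)=0$ would make $H$ a tree and hence empty by Proposition~\ref{prop:treecore}. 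For such a cycle $\Gamma$ is the disjoint union of the two oppositely oriented directed cycles, so it is not strongly connected.

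For the ``if'' part, assume $H=\core(H)$ and $n(H)\ge 2$. The degree dichotomy just used shows there is a branch point, i.e.\ a vertex or hyperedge $b$ with $d_{b}\ge 3$. Next, in a coregraph every directed edge lies on a closed geodesic --- this is the standard argument for non-backtracking closed walks in a graph of minimum degree $\ge 2$, applied to the bipartite graph $B_{H}$ --- so every vertex of $\Gamma$ lies on a directed cycle and the strongly connected components of $\Gamma$ partition $\vec{E}$ into unions of closed geodesics. Finally I would use $b$ to merge these components: a geodesic arriving at $b$ has at least two admissible non-backtracking continuations, which allows passage from one closed geodesic through $b$ to another, and, $H$ being connected, any two directed edges are joined by a chain of closed geodesics meeting successively at branch points with forced geodesic segments in between; hence $\vec{E}$ is a single strongly connected component.

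The main obstacle is this last merging step: turning ``connected, with a high-degree point'' into an explicit construction of a geodesic between two prescribed directed edges while respecting non-backtracking at every junction, and doing the bookkeeping uniformly for branch points that are hyperedges as well as vertices. A cleaner alternative, which I would also pursue, is to reduce the whole statement to the classical fact that the non-backtracking edge operator of a connected graph of minimum degree $\ge 2$ is irreducible exactly when the graph is not a single cycle: identifying $\vec{E}$ with the directed half-edges of $B_{H}$ pointing into $V$ exhibits $\mathcal{M}$ as one diagonal block of the square of the non-backtracking operator of $B_{H}$, with irreducibility preserved; and $H=\core(H)$ corresponds to $B_{H}$ having minimum degree $\ge 2$, while $n(H)\ge 2$ corresponds to $B_{H}$ not being a single cycle.
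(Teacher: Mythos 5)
Your opening move --- translating irreducibility of $\mathcal{M}$ into strong connectivity of the digraph on $\vec{E}$ with arcs given by $\rightharpoonup$ --- is exactly the paper's first (and essentially only explicit) step; after that the paper simply asserts that the connecting sequence ``can be constructed'' and writes ``Detail is omitted.'' Your ``only if'' direction is complete and in fact more careful than the paper's: the reduction to ``every vertex of a strongly connected digraph lies on a directed cycle,'' the observation that edges removed in forming $\core(H)$ lie on no closed geodesic, and the degree-counting argument showing that a connected coregraph with $n(H)\le 1$ is a single cycle (whose two orientations give two disjoint components) are all correct. In the ``if'' direction you correctly isolate the merging step as the real content; your primary sketch leaves it open, but your proposed reduction --- exhibiting $\mathcal{M}$ as the $F\to V$ diagonal block of the square of the non-backtracking edge operator of the bipartite graph $B_H$, noting that bipartiteness forces every non-backtracking closed walk of $B_H$ to have even length so that irreducibility of the block is equivalent to irreducibility of the full operator, and then invoking the classical criterion (connected, minimum degree $\ge 2$, not a single cycle) for graphs --- is a clean and legitimate way to close the gap, and supplies precisely the detail the paper omits.

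Two cautions. First, your intermediate claim that in a coregraph every directed edge lies on a closed geodesic is true but does not follow from the naive ``extend forward until a directed edge repeats'' argument when the corresponding edge of $B_H$ is a bridge; one needs the explicit construction that attaches a cycle on each side of the bridge and glues the two lollipops, or else the claim is only available a posteriori from irreducibility, which would make your first route circular. Second, your argument (like the proposition itself) tacitly reads ``$H$ is a coregraph'' as ``all degrees in $H$ are at least two'' and reads $\ete{e}{e'}$ as additionally requiring $s(e)\neq s(e')$; both readings are the ones forced by the rest of the paper (the definition of $\mathcal{M}$ via Lemma 3.1 and the linearization of LBP), but neither matches the literal definitions stated, and the $n(H)=1$ case genuinely depends on the non-backtracking reading --- without it the two orientations of the cycle would be connected to each other.
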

\begin{proof}
By definition, $\mathcal{M}$ is irreducible iff, for arbitrary $e$ and $e' \in \vec{E}$,
there is a sequence of directed edges $(e_1,e_2,\ldots,e_k)$ s.t. 
$e_1=e$, $e_l  \rightharpoonup \e_{l+1}~(l=1,\ldots,k-1)$ and $e_k=e'$.
If $H$ is a connected coregraph with $n(H) \geq 2$, we can construct such a sequence if $H$ is a connected coregraph and has more than one cycles.
If not, we can not do that.
(Detail is omitted.)
\end{proof}

Another important question regarding the directed matrix $\mathcal{M}$ is the spectral radius,
or the Perron-Frobenius eigenvalue.
\begin{prop}
\label{prop:PFboundM}
For $e \in \vec{E}$, let 
$k_e :=| \{ e'\in \vec{E}; \etea    \} |$,
$k_m= \min k_e$ and $k_{M}=\max k_e$. Then
\begin{equation}
 k_m \leq \specr{\mathcal{M}} \leq k_M. \label{eq:luboundM}
\end{equation}
Therefore, if $\core (H) \neq \emptyset$, then $\specr{\mathcal{M}} \geq 1 $.
If $H$ is $(a,b)$-regular, $\specr{{\mathcal{M}}}=(a-1)(b-1)$.
If $H$ is a graph,
\begin{equation}
 \min_{i \in V} d_i -1 \leq \specr{{\mathcal{M}}} \leq   \max_{i \in V} d_i -1.  \label{eq:PFboundMgraph}
\end{equation}
\end{prop}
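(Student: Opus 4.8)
The plan is to deduce Eq.~(\ref{eq:luboundM}) from the elementary fact that the spectral radius of a non-negative matrix lies between its smallest and its largest row sum, and then to specialise this to the three listed cases. First I would note that $k_e$ is nothing but the $e$-th row sum of the directed edge matrix: by definition $\mathcal{M}_{e,e'}=1$ exactly when $\etea$, so $\sum_{e'}\mathcal{M}_{e,e'}=|\{e'\in\vec{E};\etea\}|=k_e$. Now for any non-negative matrix $B$, the smallest row sum is at most $\specr{B}$, which is at most the largest row sum: the upper bound is $\specr{B}\le\norm{B}_\infty$, and the lower bound follows by applying the Collatz--Wielandt inequality $\specr{B}\ge\min_i (Bx)_i/x_i$ (valid for any $x>0$) to the all-ones vector. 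Taking $B=\mathcal{M}$ yields $k_m\le\specr{\mathcal{M}}\le k_M$.

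For the remaining statements I would compute $k_e$ explicitly. Unwinding the definition of $\rightharpoonup$, for $e=(\edai)$ one has $\etea$ with $e'=(\ed{\beta}{j})$ precisely when $j\in N_\alpha\smallsetminus i$ and $\beta\in N_j\smallsetminus\alpha$, hence
\begin{equation*}
 k_e=\sum_{j\in N_\alpha\smallsetminus i}(d_j-1).
\end{equation*}
If $H$ is $(a,b)$-regular, every summand equals $a-1$ and there are $d_\alpha-1=b-1$ of them, so $k_e=(a-1)(b-1)$ for all $e$; then $k_m=k_M=(a-1)(b-1)$ and Eq.~(\ref{eq:luboundM}) forces $\specr{\mathcal{M}}=(a-1)(b-1)$. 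If $H$ is a graph, then $\alpha=\{i,j\}$ and the sum reduces to the single term $d_j-1$, which equals $d_{o(e)}-1$ in the graph notation; minimising and maximising over $e\in\vec{E}$ then gives Eq.~(\ref{eq:PFboundMgraph}).

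For the claim on the core I would use $\zeta_H=\zeta_{\core(H)}$ (Proposition~\ref{prop:invprime}), so that, as already observed, $\mathcal{M}_H$ and $\mathcal{M}_{\core(H)}$ differ only by zero eigenvalues; when $\core(H)\ne\emptyset$ it contains a cycle, hence a closed geodesic, so $\mathcal{M}_{\core(H)}$ is not nilpotent and therefore $\specr{\mathcal{M}_H}=\specr{\mathcal{M}_{\core(H)}}$. Every vertex and every hyperedge of $\core(H)$ has degree at least two, so the displayed formula gives $k_e=\sum_{j\in N_\alpha\smallsetminus i}(d_j-1)\ge d_\alpha-1\ge 1$ for each edge of $\core(H)$; thus $k_m\ge 1$ for $\core(H)$, and the lower bound of Eq.~(\ref{eq:luboundM}) applied to $\core(H)$ gives $\specr{\mathcal{M}_{\core(H)}}\ge 1$, so $\specr{\mathcal{M}_H}\ge 1$. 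One could equally argue directly: a closed geodesic of length $\ell$ forces $\tr\mathcal{M}_H^{m\ell}\ge 1$ for all $m\ge 1$, and combining this with $|\tr\mathcal{M}_H^{m\ell}|\le|\vec{E}|\,\specr{\mathcal{M}_H}^{m\ell}$ and letting $m\to\infty$ again gives $\specr{\mathcal{M}_H}\ge 1$.

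The only point needing care is the lower bound $k_m\le\specr{\mathcal{M}}$: since $\mathcal{M}$ is in general reducible (by the preceding proposition it is irreducible only when $H$ is a coregraph with $n(H)\ge 2$), the strong Perron--Frobenius theorem is unavailable and one must rely on the Collatz--Wielandt estimate (or the trace argument above). Everything else is routine bookkeeping with the definition of $\rightharpoonup$ and the structure of $\core(H)$.
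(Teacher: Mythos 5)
Your proof is correct and follows essentially the same route as the paper: identify $k_e$ with the row sums of $\mathcal{M}$, invoke the min/max row-sum bounds on the spectral radius of a non-negative matrix (which the paper simply cites from Horn--Johnson as Theorem~\ref{app:luboudofspecr}), and obtain the core statement by passing to $\core(H)$, whose spectrum differs from that of $H$ only by zero eigenvalues. Your additional details --- the explicit formula $k_e=\sum_{j\in N_\alpha\smallsetminus i}(d_j-1)$, the Collatz--Wielandt justification of the lower bound for reducible matrices, and the alternative trace argument --- are all sound and merely fill in steps the paper leaves implicit.
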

\begin{proof}
Since $k_e = \sum_{e'} \mathcal{M}_{e,e'}$, the bound Eq.~(\ref{eq:luboundM}) is trivial from Theorem \ref{app:luboudofspecr}.
The second statement comes from $k_m \geq 1$ for non-empty coregraphs.
\end{proof}

Finally let us consider the pole of $\zeta_G(u)$.
Obviously the pole closest to the origin is $u= \specr{\mathcal{M}}^{-1} \geq k_M^{-1}$
and is simple if $\mathcal{M}$ is irreducible.
Furthermore, the following theorem implies that
$\zeta_G(u)$ has a pole at $u=1$ with multiplicity $n(H)$ if $H$ is connected and $n(H) \geq 2$.

\begin{thm}[Hypergraph Hashimoto's theorem]
\label{thm:Hashimoto}
Let $\chi(H):=|V|+|F|-|\vec{E}|$ be the {\it Euler number} of $H$.
\begin{equation}
 \lim_{u \rightarrow 0}
\zeta_H(u)^{-1}(1-u)^{- \chi(H)+1}= \chi(H) \kappa(B_H),
\end{equation}
where $\kappa(B_H)$ is the number of spanning trees of the bipartite graph $B_H$.
(See Subsection \ref{sec:basicsgraph} for the construction of $B_H$ from $H$.)
\end{thm}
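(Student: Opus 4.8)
The plan is to analyze $\zeta_{H}(u)^{-1}$ near $u=1$ by means of the Ihara--Bass type formula of Corollary~\ref{cor:hyperIharaBass}, extracting its leading coefficient through a Matrix--Tree computation on the bipartite incidence graph $B_H$. Two reductions come first: if $H$ is disconnected then $B_H$ is disconnected and $\kappa(B_H)=0$, while $\zeta_H^{-1}$ factors over the components and a short count of orders of vanishing at $u=1$ shows the normalized limit vanishes too; and if $H$ is a tree then $\zeta_H\equiv 1$, $\chi(H)=1$, $B_H$ is a tree with $\kappa(B_H)=1$, and both sides equal $1$. So assume $H$ connected and not a tree. Writing $M(u):=(1-u)I+u^{2}\tilde{\mathcal{D}}(u)-u\tilde{\mathcal{A}}(u)$ for the operator on $\sfv$ occurring in Corollary~\ref{cor:hyperIharaBass}, that corollary says $\zeta_{H}(u)^{-1}=\det\!\big(M(u)\big)\,(1-u)^{-\chi(H)}\prod_{\alpha\in F}\big(1+(d_\alpha-1)u\big)$, and $\prod_{\alpha}(1+(d_\alpha-1)u)$ equals $\prod_\alpha d_\alpha\neq 0$ at $u=1$.

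The key step -- and the one I expect to be the main obstacle, since it is where the combinatorics of $B_H$ must enter -- is to recognize $M(1)$ as a Schur complement of the graph Laplacian of $B_H$. With $L_{B_H}=\left(\begin{smallmatrix}D_V&-N\\-N^{T}&D_F\end{smallmatrix}\right)$, where $D_V=\diag(d_i)_{i\in V}$, $D_F=\diag(d_\alpha)_{\alpha\in F}$ and $N$ is the $V\times F$ incidence matrix of $H$, a direct computation of entries gives $M(1)=D_V-N D_F^{-1}N^{T}$, the Schur complement of the $D_F$-block; in particular $\det D_F\cdot\det M(1)=\det L_{B_H}=0$. Since $B_H$ is connected, the Matrix--Tree theorem gives $\mathrm{adj}(L_{B_H})=\kappa(B_H)\,\mathbf{1}\mathbf{1}^{T}$. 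Differentiating the identity $\det\!\left(\begin{smallmatrix}D_V+xI&-N\\-N^{T}&D_F\end{smallmatrix}\right)=\det D_F\cdot\det\big(M(1)+xI\big)$ at $x=0$ and comparing traces of the resulting adjugates -- together with the fact that $M(1)$ is symmetric with one-dimensional kernel $\mathbb{R}\mathbf{1}_V$, so that $\mathrm{adj}(M(1))$ is a scalar multiple of $\mathbf{1}_V\mathbf{1}_V^{T}$ -- yields
\begin{equation*}
\mathrm{adj}(M(1))=\frac{\kappa(B_H)}{\prod_{\alpha}d_\alpha}\,\mathbf{1}_V\mathbf{1}_V^{T}.
\end{equation*}

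It remains to run a first-order expansion around $u=1$. By Jacobi's formula and $\det M(1)=0$ we have $\det M(u)=-(1-u)\,\tr\!\big(\mathrm{adj}(M(1))\,M'(1)\big)+O\big((1-u)^{2}\big)$, whose leading coefficient is $-\tfrac{\kappa(B_H)}{\prod_\alpha d_\alpha}\,\mathbf{1}_V^{T}M'(1)\mathbf{1}_V$. The remaining scalar is evaluated with the identity $\tilde{\mathcal{D}}(u)\mathbf{1}=\tilde{\mathcal{A}}(u)\mathbf{1}$, valid for all $u$ (both sides send $i$ to $\sum_{\alpha\ni i}(d_\alpha-1)/(1+(d_\alpha-1)u)$), which kills the derivative terms and leaves $\mathbf{1}_V^{T}M'(1)\mathbf{1}_V=-|V|+\sum_{\alpha}(d_\alpha-1)=|\vec{E}|-|V|-|F|=-\chi(H)$. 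Hence $\det M(u)=(1-u)\,\chi(H)\,\kappa(B_H)\big/\prod_{\alpha}d_\alpha+O\big((1-u)^{2}\big)$, and therefore
\begin{equation*}
\zeta_{H}(u)^{-1}=\chi(H)\,\kappa(B_H)\,(1-u)^{1-\chi(H)}+O\big((1-u)^{2-\chi(H)}\big),
\end{equation*}
which gives $\zeta_{H}(u)^{-1}(1-u)^{\chi(H)-1}\to\chi(H)\kappa(B_H)$ as $u\to1$, i.e.\ the assertion. (When $\chi(H)=0$ this leading coefficient is $0$, matching the fact that $\zeta_H^{-1}$ then vanishes to order $2$ at $u=1$; both sides of the theorem are $0$ in that case.) Apart from the Schur-complement identification of $M(1)$, which is the crux, the computation is routine; the one other point needing care is the cancellation producing $-\chi(H)$, which depends on $\tilde{\mathcal{D}}(u)\mathbf{1}=\tilde{\mathcal{A}}(u)\mathbf{1}$.
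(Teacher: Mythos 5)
Your proof is correct, but it takes a genuinely different route from the thesis. The thesis proves this theorem in two lines by reduction: prime cycles of $H$ and of $B_H$ correspond bijectively with lengths doubling, so $\zeta_H(u)=Z_{B_H}(\sqrt{u})$, and the result then follows from the cited graph version of Hashimoto's theorem applied to $B_H$ (the powers of $2$ in that formula cancel against the factor $1-u=(1-\sqrt{u})(1+\sqrt{u})$). You instead give a self-contained argument from Corollary~\ref{cor:hyperIharaBass}: the identification $M(1)=D_V-ND_F^{-1}N^{T}$ as the Schur complement of the $F$-block of the Laplacian of $B_H$ is exactly right and is the step that imports the Matrix--Tree theorem; the determination of $\mathrm{adj}(M(1))$ by differentiating the Schur determinant identity is sound (the kernel of $M(1)$ is $\mathbb{R}\mathbf{1}_V$ precisely because $B_H$ is connected); and the cancellation via $\tilde{\mathcal{D}}(u)\mathbf{1}=\tilde{\mathcal{A}}(u)\mathbf{1}$ correctly produces $\mathbf{1}^{T}M'(1)\mathbf{1}=-\chi(H)$. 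What your approach buys is independence from the external citation to Hashimoto/Northshield --- in effect you re-derive the graph-level result in the hypergraph setting, in the spirit of Northshield's proof but run directly on the hypergraph Ihara--Bass formula --- and it makes transparent why $\kappa(B_H)$ rather than some other tree count appears. What the thesis's route buys is brevity and the reuse of a known theorem; it does, however, silently require the bookkeeping with $\sqrt{u}$ that you avoid.

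One remark for the record: the theorem as printed contains two typos --- the limit should be $u\to 1$, not $u\to 0$, and the normalizing exponent should be $\chi(H)-1$ (i.e.\ one divides by $(1-u)^{n(H)}$), not $-\chi(H)+1$. Your final displayed limit uses the corrected exponent, which is also what the graph version quoted inside the thesis's own proof uses, so you have proved the intended statement.
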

\begin{proof}
For a graph $G=(V,E)$, Hashimoto proved that \cite{Hpadic,Hzeta}
\begin{equation}
\lim_{u \rightarrow 1}
Z_{G}(u)^{-1}(1-u)^{-|E|+|V|-1}
=
-2^{|E|-|V|+1}(|E|-|V|)
\kappa(G), \nonumber
\end{equation}
where $\kappa (G)$ is the number of spanning tree of $G$.
A simple proof by Northshield is found in \cite{Nnote}.
Since there is a one-to-one correspondence between
$\mathfrak{P}_H$ and $\mathfrak{P}_{B_H}$,
we have $\zeta_{H}(u)=Z_{B_H}(\sqrt{u})$.
Then the assertion is proved by the above formula.
\end{proof}

In \cite{Shypergraph}, Storm showed that, if $H$ is an $(a,b)$-regular Ramanujan hypergraphs, 
all non trivial poles of $\zeta_G(u)$ lie on the circle of radius $[(a-1)(b-1)]^{-1/2}$.
This property is analogous to the Riemann hypothesis (RH) of the Riemann zeta function, 
which claims that all non trivial zeros of the Riemann zeta function have real part of $1/2$.
For the Ihara zeta function, a bound on the modulus of imaginary poles is found in \cite{KSzeta}.

\section{Discussion}\label{sec:discussion:zeta}
In this chapter, we introduced our graph zeta function, generalizing graph zeta functions known in graph theory.
Our main contribution of this chapter is the \IB type determinant formula, which extends
the \IB formula of one-variable graph zeta function.
The proof is based on a simple determinant formula in Proposition \ref{app:detdet},
changing the size of determinant from ${\rm dim}\vfe$ to ${\rm dim}\vfv$.

The \IB type determinant formula plays an important role in developments in the sequel,
especially in the proof of \Bzf in Chapter \ref{chap:Bzf}.
The formula is also used in the alternative derivation of the \ls in the perfect matching problem,
showing intimate relations between the graph zeta function and the Bethe approximation.

The definition of our zeta function can be extended to Bartholdi type zeta function
where closed geodesics are allowed to have backtracking \cite{Bcounting}.
The \IB type determinant formula in Theorem \ref{thm:Ihara} is also extended to this case without difficulty.
A related work is found in \cite{Ibartholdi}.

In this thesis, we discuss the graph zeta function only due to the connection to LBP algorithm and \Bfe function.
However, there are other contexts where Ihara zeta function appears.
We refer a paper \cite{Sdiscrete} for a review of the Ihara zeta function and related topics.

\chapter{\Bzf}\label{chap:Bzf}

\section{Introduction}
\label{sec:Bzfintro}
The aim of this chapter is to show the ``\Bzf''
and to demonstrate its applications.
This formula provides a relation between the Hessian of the \Bfe function
and the graph zeta function.

In Section \ref{sec:BZ}, we prove the main formula using the \IB type determinant formula
proved in the previous chapter.
In Section \ref{sec:PDC}, as an application of the main formula,
we analyze the region where the Hessian of \Bfe function $F$ is positive definite.
Section \ref{sec:stability} discusses the stability of the LBP fixed points,
extending results of Heskes \cite{Hstable} and Mooij et al \cite{MKproperty}.
The main formula is further applied to the uniqueness problem of the LBP fixed points
in the next chapter.

\subsection{Intuition for the Bethe zeta formula}
Beforehand, we describe the underlying mathematical structure
that let the \Bzf hold.
It is the ``duality'' between the two variational characterizations of the LBP fixed points
given in Theorem \ref{thm:LBPcharacterizations}.

Recall that the LBP fixed points are the intersections of
the submanifold $L(\mathcal{I})$ and $A(\mathcal{I},\Psi)$.
See Figure \ref{fig:intersection}.
The whole space is $Y \simeq \Theta$;
an element of this set is a vector of all expectation/natural parameters of local exponential families.
Each line stands for $L$ and $A$ respectively
and the intersection is an LBP fixed point.
In the first figure, the submainfolds intersect transversally, while
those intersect tangentially in the second figure. 
In the second case, both the Hessians of $F$ and $\mathcal{F}$
degenerate.
Therefore, one can expect that
\begin{equation}
 \det ( \nabla^2 F )=0  \iff \text{ The intersection is tangential } \iff   \det ( \nabla^2 \mathcal{F} )=0, 
\end{equation}
where $\nabla^2$ denotes the Hessian with respect to the coordinates of $L$ and $A$, respectively.
\begin{figure}
 \begin{center}
  \includegraphics[scale=0.3]{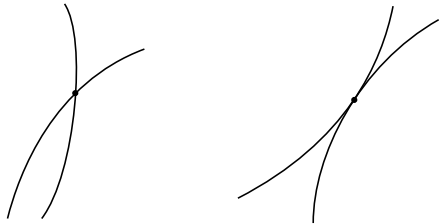}
\caption{Two types of intersections. \label{fig:intersection}}
 \end{center}
\end{figure}
After calculations (See Appendix \ref{sec:HesseF}),
one can see that 
\begin{equation}
 \nabla^{2} \mathcal{F} =X [I -\matmu ]Y \label{eq:nablaFtype2}
\end{equation}
holds at an LBP fixed point with certain matrices $X$ and $Y$.

These observations suggest that
there is a relation like $ \det(\nabla^2 F)=\det(I-\matmu)  \times \text{(factor)}$
at LBP fixed points.
Furthermore, since $A(\mathcal{I},\Psi)$ moves depending on $\Psi$,
one can expect that such relations hold at all points of $L(\mathcal{I})$.

Based on the techniques developed in the previous chapter,
we will formulate the relations as an identity on $L$ rather than statements on LBP fixed points, i.e.,
the \Bzf.
The first advantage of this approach is its powerful applicability.
In fact, the \Bzf will be utilized as a continuous function on $L$ in the proof of Theorem \ref{thm:positive}.
The second advantage is the simplicity of the proof.
This approach only involves linear algebraic calculations
and is much easier than just making the above observations rigorous.

\section{\Bzf}\label{sec:BZ}
In order to make the assertion clear, we first recall the definitions and notations.
Let $H=(V,F)$ be a hypergraph and
let $\mathcal{I}=\{ \mathcal{E}_{\alpha},\mathcal{E}_i\}$ be an \ifa on $H$.
Exponential families $\mathcal{E}_i$ and $\mathcal{E}_{\alpha}$ have sufficient statistics $\phi_i$
and $\fa{\phi}$ as discussed in Subsection \ref{sec:infmodel}. 
Furthermore, as discussed in Subsection \ref{sec:twoBfes},
a point $\bseta=\{\pa{\eta},\eta_i\} \in L$ is identified with
a set of pseudomarginals $\beliefs$.
\begin{thm}
[Bethe-zeta formula]
\label{thm:BZ}
At any point of $\bseta=\{\pa{\eta},\eta_i\} \in L$ the following equality holds. 
\begin{equation*}
\zeta_{H}(\bsu)^{-1}
\hspace{-1mm}
=
\det(\nabla^2 F)
\prod_{\alpha \in F}
\hspace{-0mm}
\det(\var{b_{\alpha}}{\fa{{\phi}} }) 
\prod_{i \in V}
\hspace{-0mm}
\det(\var{b_{i}}{\phi_{i}})^{1-d_i}, 
\end{equation*}
where
\begin{equation}
 u^{\alpha}_{\ed{i}{j}}:=
\var{b_j}{{\phi}_{j}}^{-1}
\cov{b_{\alpha}}{{\phi}_{j}}{{\phi}_{i}} \label{def:u}
\end{equation}
is an $r_j \times r_i$ matrix.
\end{thm}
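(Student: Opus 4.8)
The plan is to compute the Hessian $\nabla^2 F$ on $L(\mathcal{I})$ explicitly and then reduce its determinant to the Ihara--Bass form of Theorem~\ref{thm:Ihara}. Parametrize $L$ by the free coordinates $(\pa{\eta})_{\alpha\in F}$ and $(\eta_i)_{i\in V}$ (recall $\va{\eta}{i}=\eta_i$ on $L$). In $F(\bseta)=-\sum_\alpha\inp{\fa{\bar\theta}}{\fa{\eta}}+\sum_\alpha\varphi_\alpha(\fa{\eta})+\sum_i(1-d_i)\varphi_i(\eta_i)$ the linear term contributes nothing to the Hessian, and by the Legendre duality of exponential families the Hessians of $\varphi_\alpha$ and $\varphi_i$ are the inverse covariance matrices $\var{b_\alpha}{\fa{\phi}}^{-1}$ and $\var{b_i}{\phi_i}^{-1}$, where $\beliefs$ is the locally consistent set of pseudomarginals identified with $\bseta$. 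Hence $\nabla^2 F$ is block-structured: the $\pa{\eta}$-blocks are decoupled across factors; the $(\pa{\eta},\eta_i)$-block is non-zero only when $i\in\alpha$; and the $(\eta_i,\eta_j)$-block equals $\sum_{\alpha\supset\{i,j\}}(\nabla^2\varphi_\alpha)_{ij}$, with the extra summand $(1-d_i)\var{b_i}{\phi_i}^{-1}$ when $i=j$. All covariance matrices involved are positive definite on $L$ by Assumption~\ref{asm:expregular}, so every inverse below is legitimate.

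The next step is to eliminate the $\pa{\eta}$-coordinates by a Schur complement, one factor at a time since they do not interact. This produces the factor $\prod_{\alpha}\det A_\alpha$, where $A_\alpha$ is the $\pa{\phi}$-diagonal block of $\var{b_\alpha}{\fa{\phi}}^{-1}$, and replaces the $\eta$-part by a matrix $\tilde R$ on $V$ whose $(i,j)$-block is $\sum_{\alpha\supset\{i,j\}}[S_\alpha]_{ij}$ with an additional $(1-d_i)\var{b_i}{\phi_i}^{-1}$ on the diagonal, where $S_\alpha$ is the Schur complement of $A_\alpha$ in $\var{b_\alpha}{\fa{\phi}}^{-1}$. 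The crucial point is the ``Schur complement $=$ block of the inverse'' identity: $S_\alpha$ is the inverse of the $(\phi_i)_{i\in\alpha}$-block $\Sigma_\alpha:=(\cov{b_\alpha}{\phi_i}{\phi_j})_{i,j\in\alpha}$ of $\var{b_\alpha}{\fa{\phi}}$, whose diagonal blocks are $\var{b_i}{\phi_i}$ by local consistency (Eq.~(\ref{eq:localconsistency})). Comparing with the matrices $U_\alpha$ and $W_\alpha=U_\alpha^{-1}$ of Theorem~\ref{thm:Ihara}, the definition~(\ref{def:u}) of $\bsu$ gives exactly $U_\alpha=\diag(\var{b_j}{\phi_j}^{-1})_{j\in\alpha}\,\Sigma_\alpha$, hence $W_\alpha=S_\alpha\,\diag(\var{b_j}{\phi_j})_{j\in\alpha}$, i.e.\ $w^\alpha_{\edji}=[S_\alpha]_{ij}\var{b_j}{\phi_j}$. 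Substituting this into the blocks of $\tilde R$ and comparing with the definition~(\ref{eq:defDW}) of $\mathcal{D},\mathcal{W}$ yields the clean identity
\begin{equation*}
\tilde R=(I_{r_V}-\mathcal{D}+\mathcal{W})\,\diag\!\big(\var{b_i}{\phi_i}^{-1}\big)_{i\in V}.
\end{equation*}

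Finally one assembles the pieces. From the Schur reduction, $\det(\nabla^2 F)=\prod_\alpha\det A_\alpha\cdot\det(I_{r_V}-\mathcal{D}+\mathcal{W})\cdot\prod_i\det\var{b_i}{\phi_i}^{-1}$. Theorem~\ref{thm:Ihara} gives $\zeta_H(\bsu)^{-1}=\det(I_{r_V}-\mathcal{D}+\mathcal{W})\prod_\alpha\det U_\alpha$, while $\det U_\alpha=\prod_{j\in\alpha}\det\var{b_j}{\phi_j}^{-1}\cdot\det\Sigma_\alpha$ and $\det\Sigma_\alpha=\det\var{b_\alpha}{\fa{\phi}}\cdot\det A_\alpha$ (the latter from $\det\var{b_\alpha}{\fa{\phi}}^{-1}=\det A_\alpha\cdot\det S_\alpha$). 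Eliminating $\det(I_{r_V}-\mathcal{D}+\mathcal{W})$ and $\det A_\alpha$ between these relations, and rewriting $\prod_\alpha\prod_{j\in\alpha}(\cdot)=\prod_i(\cdot)^{d_i}$, collapses everything to the asserted formula $\zeta_H(\bsu)^{-1}=\det(\nabla^2 F)\prodf\det\var{b_\alpha}{\fa{\phi}}\prodv\det\var{b_i}{\phi_i}^{1-d_i}$. I expect the main obstacle to be the block bookkeeping: keeping the index and direction conventions consistent when passing from $\nabla^2\varphi_\alpha$ to $A_\alpha,S_\alpha$ and then to $U_\alpha,W_\alpha$, and correctly invoking the two ``Schur complement $=$ block of inverse'' identities; once these are pinned down the remainder is routine linear algebra, and (as remarked in Section~\ref{sec:Bzfintro}) this route is considerably easier than making rigorous the dual picture $\nabla^2\mathcal{F}=X[I-\matmu]Y$.
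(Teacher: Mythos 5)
Your proposal is correct and follows essentially the same route as the paper's proof: compute the Hessian blocks of $F$ via Legendre duality, Schur-eliminate the $\pa{\eta}$-coordinates factor by factor, identify the reduced $V$-block with $(I_{r_V}-\mathcal{D}+\mathcal{W})\,\diag(\var{b_i}{\phi_i}^{-1})$ using the ``Schur complement equals block of inverse'' identity together with local consistency $\var{b_\alpha}{\phi_i}=\var{b_i}{\phi_i}$, and then invoke the Ihara--Bass determinant formula. The determinant bookkeeping at the end also matches the paper's computation exactly.
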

Note that $\nabla^2 F$ is the Hessian matrix with respect to the coordinate $\{\pa{\eta},\eta_i\}$.
The Hessian does not depend on the given compatibility functions $\Psi_{\alpha}$ because
those only affect linear terms in $F$.
So, the formula is accompanied with the \ifa $\mathcal{I}$.

By the definition of the \ifa,
all local exponential families $\mathcal{E}_{\alpha}$ and $\mathcal{E}_i$
satisfy Assumption \ref{asm:expregular}. 
Therefore, the determinants of variances appear in the formula are always positive.

Note that the zeta function is given by the products of weights Eq.~(\ref{def:u}) along prime cycles.
This type of expression also appears in the covariance expression of distant vertices on a tree structured hypergraph.
(See Appendix \ref{app:sec:infontree}.)

\begin{proof}
[Proof of Theorem \ref{thm:BZ}]
From the definition of the \Bfe function Eq.~(\ref{defn:Bfe}), 
the (V,V)- block of $\Hesse F$ is given by
\begin{equation*}
 \pdseta{F}{i}{i}= \sum_{\alpha \ni i} \pdseta{\varphi_{\alpha}}{i}{i}+(1-d_i)\pdseta{\varphi_i}{i}{i},
\quad
 \pdseta{F}{i}{j}= \sum_{\alpha \supset \{i,j\}} \pdseta{\varphi_{\alpha}}{i}{j} \quad (i \neq j).
\end{equation*}
The (V,F)-block and (F,F)-block are given by 
\begin{equation*}
 \pds{F}{\eta_i}{\pa{\eta}}= \pds{\varphi_{\alpha}}{\eta_i}{\pa{\eta}},
\quad \quad
 \pds{F}{\pa{\eta}}{\pb{\eta}}= \pds{\varphi_{\alpha}}{\pa{\eta}}{\pb{\eta}}\delta_{\alpha,\beta}.
\end{equation*}
Using the diagonal blocks of (F,F)-block, we erase (V,F)-block and 
(F,V)-block of the Hessian.
In other words, we choose a square matrix $X$ such that $\det X =1$ and 
\begin{equation}
X^T (\nabla^2 F) X
=
\begin{bmatrix}
\quad Y & 0 \\
\quad 0 & 
\Big( \pds{F}{\pa{\eta}}{\pb{\eta}} \Big) 
\end{bmatrix}.   \label{eq:XHesseFX}
\end{equation}  
Then we obtain
\begin{align}
 Y_{i,i}
&=\sum_{\alpha \ni i} \left\{ \pdseta{\varphi_{\alpha}}{i}{i}
- \pds{\varphi_{\alpha}}{\eta_i}{\pa{\eta}} \left(\pds{\varphi_{\alpha}}{\pa{\eta}}{\pa{\eta}}\right)^{-1}
\pds{\varphi_{\alpha}}{\pa{\eta}}{\eta_i}  \right\}  + (1-d_i) \pdseta{\varphi_{i}}{i}{i}, \label{eq:Yii}\\
Y_{i,j}
&=\sum_{\alpha \supset \{i,j\} } \left\{ \pds{\varphi_{\alpha}}{\eta_i}{\eta_j}
-\pds{\varphi_{\alpha}}{\eta_i}{\pa{\eta}} \left(\pds{\varphi_{\alpha}}{\pa{\eta}}{\pa{\eta}}\right)^{-1}
\pds{\varphi_{\alpha}}{\pa{\eta}}{\eta_j}  \right\} .  \label{eq:Yij}
\end{align}
On the other hand,
since $u^{\alpha}_{\ed{i}{j}}:=$ $\var{b_j}{{\phi}_{j}}^{-1}$ $\cov{b_{\alpha}}{{\phi}_{j}}{{\phi}_{i}}$,
the matrix $U_{\alpha}$ defined in Eq.~(\ref{eq:defU}) is
\begin{equation}
 U_{\alpha}
=
\diag ( \var{}{\phi_i}^{-1} | i \in \alpha)
\hspace{1mm} \var{b_{\alpha}}{(\phi_i)_{i \in \alpha}}.
\end{equation}
Since the matrix $\var{b_{\alpha}}{(\phi_i)_{i \in \alpha}}$ is a submatrix of $\var{b_{\alpha}}{\fa{\phi}}$,
its inverse can be expressed by submatrices of 
$\var{b_{\alpha}}{\fa{\phi}}^{-1}= \pds{\varphi_{\alpha}}{\fa{\eta}}{\fa{\eta} }$
using Proposition \ref{app:schur}.
Therefore, the elements of $W_{\alpha}=U_{\alpha}^{-1}$ is given by 
\begin{equation}
\label{eq:wij}
 w^{\alpha}_{\edji}=
\left\{ \pdseta{\varphi_{\alpha}}{i}{j}
-\pds{\varphi_{\alpha}}{\eta_i}{\pa{\eta}} \left(\pds{\varphi_{\alpha}}{\pa{\eta}}{\pa{\eta}} \right)^{-1}
\pds{\varphi_{\alpha}}{\pa{\eta}}{\eta_j}  \right\} 
\var{}{\phi_j}.
\end{equation}
Combining Eq.~(\ref{eq:Yii}),(\ref{eq:Yij}) and (\ref{eq:wij}),
we obtain
\begin{equation}
 Y \hspace{1mm} \diag \left( \var{}{\phi_i} | i \in V \right)= I - \mathcal{D} + \mathcal{W},
\end{equation}
where $\mathcal{D}$ and $\mathcal{W}$ are defined in Eq.~(\ref{eq:defDW}).
Accordingly, we obtain
\begin{align*}
 \zeta_H(\bsu)^{-1}
&= \det (I-\mathcal{D}+\mathcal{W}) \prodf \det U_{\alpha}  \\
&= \det Y \prodv \det ( \var{}{\phi_i} ) \prodf       
\frac{ \det \left( \var{b_{\alpha}}{(\phi_i)_{i \in \alpha}} \right)}{ \prod_{j \in \alpha} \det \left( \var{}{\phi_j} \right) } \\
&= \det \left( \Hesse F \right) \prodv \det ( \var{}{\phi_i} )^{1-d_i} \prodf 
\frac{ \det \left( \var{b_{\alpha}}{(\phi_i)_{i \in \alpha}} \right)}{\det \left( \pds{\varphi_{\alpha}}{\pa{\eta}}{\pa{\eta}} \right)} \\
&= \det \left( \nabla^2 F \right) \prodf
\det(\var{b_{\alpha}}{ \fa{\phi} }) 
\prodv \det(\var{b_{i}}{{\phi}_{i}})^{1-d_i},
\end{align*}
where we used
$ \det \left( \var{b_{\alpha}}{(\phi_i)_{i \in \alpha}} \right) \det \left( \pds{\varphi_{\alpha}}{\pa{\eta}}{\pa{\eta}} \right)^{-1}$
$=\det \left( \var{}{\fa{\phi}}   \right)$,
which is proved by Proposition \ref{app:schur}.
\end{proof}

\subsection{Case 1: Multinomial \ifa}%
In the rest of this section, 
we rewrite the \Bzf for specific cases.
Especially, we give explicit expressions of the determinants of the variances.
First, we consider the multinomial case.

\begin{lem}
Let $\phi$ be the sufficient statistics of the multinomial distributions
on $\mathcal{X}=\{ 1,2,\ldots,N\}$ defined in Example \ref{example:multinomial}.
Then the determinant of the variance is given by
\begin{equation}
  \det \left( \var{p}{{\phi}} \right)= \prod_{k=1}^{N} p(k) .
\end{equation}
\end{lem}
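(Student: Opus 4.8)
The plan is to write out the covariance matrix explicitly and recognize it as ``diagonal minus rank one.'' Since $\phi = (\phi_1,\ldots,\phi_{N-1})$ with $\phi_k(x) = 1$ iff $x = k$, we have $\phi_k(x)^2 = \phi_k(x)$ and $\phi_k(x)\phi_l(x) = 0$ for $k \neq l$, hence $\E{p}{\phi_k} = p(k)$ and $\cov{p}{\phi_k}{\phi_l} = p(k)\delta_{kl} - p(k)p(l)$. Writing $D = \diag(p(1),\ldots,p(N-1))$ and $v = (p(1),\ldots,p(N-1))^{T}$, this says
\begin{equation*}
\var{p}{\phi} = D - v v^{T}.
\end{equation*}
By Assumption \ref{asm:expregular} the parameter lies in the interior, so $p(k) > 0$ for all $k$ and $D$ is invertible.

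Next I would apply the matrix determinant lemma (a special case of the Schur complement identity recalled in Appendix A, cf. Proposition \ref{app:schur}): $\det(D - v v^{T}) = \det(D)\,(1 - v^{T}D^{-1}v)$. Here $\det(D) = \prod_{k=1}^{N-1}p(k)$ and, using $\sum_{k=1}^{N}p(k) = 1$,
\begin{equation*}
v^{T}D^{-1}v = \sum_{k=1}^{N-1}\frac{p(k)^2}{p(k)} = \sum_{k=1}^{N-1}p(k) = 1 - p(N).
\end{equation*}
Therefore $\det(\var{p}{\phi}) = \prod_{k=1}^{N-1}p(k)\cdot\bigl(1 - (1 - p(N))\bigr) = \prod_{k=1}^{N}p(k)$, which is the claim.

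I do not expect a genuine obstacle here: the computation is routine. The only points requiring care are bookkeeping ones — keeping the asymmetry between the coordinates $1,\ldots,N-1$, which furnish the sufficient statistics, and the coordinate $N$, which enters only through $p(N) = 1 - \sum_{k<N}p(k)$; and making sure the matrix determinant lemma is invoked on the correct $(N-1)\times(N-1)$ matrix rather than on an $N\times N$ one. An alternative, if one prefers to avoid citing the lemma, is to perform elementary row operations subtracting a suitable multiple of one row from the others, but the rank-one-update route is shorter.
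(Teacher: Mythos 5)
Your proposal is correct and follows essentially the same route as the paper: the paper also writes $\var{p}{\phi}$ as the diagonal matrix $\diag(p(1),\ldots,p(N-1))$ minus the rank-one outer product $vv^{T}$ and reads off the determinant as $\bigl(1-\sum_{k=1}^{N-1}p(k)\bigr)\prod_{k=1}^{N-1}p(k)$, which is exactly your matrix-determinant-lemma computation. You merely make explicit the lemma that the paper applies silently, so there is nothing to add.
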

\begin{proof}
From the definition of the sufficient statistics, one easily observes that
 $\var{}{\phi_i}=p(i)-p(i)^2$ and $\cov{}{\phi_i}{\phi_j}=-p(i)p(j)$.
Therefore,
\begin{align*}
 \det(\var{}{{\phi}} )
&= \det \left(
\begin{small}
\begin{bmatrix}
p(1)&    & 0 \\
& \ddots &   \\
0&       & p(N-1)  
\end{bmatrix} -
\begin{bmatrix}
p(1) \\
\vdots \\
p(N-1)  
\end{bmatrix}
\begin{bmatrix}
p(1) & \cdots & p(N-1)  
\end{bmatrix}
\end{small}
\right) \\
&=
(1- \sum_{k=1}^{N-1} p(k) )
\prod_{k=1}^{N-1}p(k).
\end{align*}
\end{proof}

\begin{cor}
[Bethe-zeta formula for multinomial \ifa]
For any pseudomarginals $ \beliefsw \in L$ the following equality holds. 
 \begin{equation*}
\zeta_{H}(\bsu)^{-1}
\hspace{-1mm}
=
\det(\nabla^2 F)
\prod_{\alpha \in F}
\prod_{x_{\alpha}} b_{\alpha}(x_{\alpha})
\prod_{i \in V}
\prod_{x_i} b_i(x_i)^{1-d_i},
 \end{equation*}
where
$u^{\alpha}_{\ed{i}{j}}:=$
$\var{b_j}{{\phi}_{j}}^{-1}$
$\cov{b_{\alpha}}{{\phi}_{j}}{{\phi}_{i}}$
is an $r_j \times r_i$ matrix.
\end{cor}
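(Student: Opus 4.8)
The plan is to obtain the statement as a direct specialization of the Bethe-zeta formula (Theorem~\ref{thm:BZ}) once the determinants of the local variances are written out explicitly. Recall that for an arbitrary \ifa $\mathcal{I}$, Theorem~\ref{thm:BZ} asserts that at every $\bseta \in L$
\begin{equation*}
\zeta_{H}(\bsu)^{-1} = \det(\nabla^2 F)\,\prodf \det(\var{b_{\alpha}}{\fa{\phi}})\,\prodv \det(\var{b_{i}}{\phi_{i}})^{1-d_i},
\end{equation*}
with $u^{\alpha}_{\ed{i}{j}} = \var{b_j}{\phi_j}^{-1}\cov{b_{\alpha}}{\phi_j}{\phi_i}$. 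Since these weight matrices involve only the single-variable sufficient statistics $\phi_i,\phi_j$, they are unchanged in the multinomial setting, so it suffices to evaluate the two families of variance determinants appearing on the right-hand side.

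First I would apply the preceding lemma to the vertex factors. In a multinomial \ifa each $\mathcal{E}_i$ is the family of multinomial distributions on $\mathcal{X}_i$ equipped with the indicator sufficient statistic of Example~\ref{example:multinomial}, and $b_i$ is its pseudomarginal; hence $\det(\var{b_i}{\phi_i}) = \prod_{x_i \in \mathcal{X}_i} b_i(x_i)$. Next I would do the same for the hyperedge factors: by construction of the multinomial \ifa, $\mathcal{E}_{\alpha}$ is the full multinomial family on $\mathcal{X}_{\alpha} = \prod_{i\in\alpha}\mathcal{X}_i$, so writing its sufficient statistic $\fa{\phi}$ in the standard indicator form the lemma gives $\det(\var{b_{\alpha}}{\fa{\phi}}) = \prod_{x_{\alpha}\in\mathcal{X}_{\alpha}} b_{\alpha}(x_{\alpha})$. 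Substituting both evaluations into the displayed formula yields the claimed identity, and I would note that the index $k$ in the lemma runs over $|\mathcal{X}_i|$, resp.\ $|\mathcal{X}_{\alpha}|$, values, which is exactly why the two products over states appear.

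There is essentially no serious obstacle here: the corollary is a routine specialization of Theorem~\ref{thm:BZ}, and the only point requiring a moment's care is the choice of sufficient statistics. One must keep $\mathcal{E}_i$ and $\mathcal{E}_{\alpha}$ parametrized by the standard multinomial (indicator) statistics, since the lemma was proved in that normalization and the variance determinant is sensitive to affine reparametrization of $\phi$; this is precisely the ``suitable'' choice of $\fa{\phi}$ referred to in the definition of the multinomial \ifa. With that convention fixed, the weight matrices $u^{\alpha}_{\ed{i}{j}}$ retain their general form and the computation is immediate.
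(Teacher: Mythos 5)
Your proposal is correct and takes essentially the same route as the paper: the paper likewise establishes the lemma $\det\left(\var{p}{\phi}\right)=\prod_{k}p(k)$ for the multinomial indicator statistics and then obtains the corollary by substituting this evaluation, for both the factor variances $\var{b_\alpha}{\fa{\phi}}$ and the vertex variances $\var{b_i}{\phi_i}$, into Theorem \ref{thm:BZ}. Your added caveat that $\fa{\phi}$ must be a determinant-one linear transform of the indicator basis on $\mathcal{X}_{\alpha}$ (so that the lemma's value carries over to the factor determinants) is a point the paper leaves implicit, and you handle it correctly.
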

For binary and pairwise case, this formula is first shown in \cite{WFzeta}.

\subsection{Case 2: Fixed-mean Gaussian \ifa}%
Let $G=(V,E)$ be a graph.
We consider the fixed-mean Gaussian \ifa on $G$.
For a given vector $\bsmu=(\mu_i)_{i \in V}$, the \ifa
is constructed from sufficient statistics
\begin{equation}
 \phi_i(x_i)=(x_i-\mu_i)^2  \quad \text{ and } \quad \px{\phi}{ij}(x_i,x_j)=(x_i-\mu_i)(x_j-\mu_j).
\end{equation}
The expectation parameters of them are denoted by $\eta_{ii}$ and $\eta_{ij}$, respectively.
The variances and covariances are
\begin{equation}
\var{}{\phi_i}=2 \eta_{ii}^2 , \quad
\var{}{\fx{\phi}{ij}}=
\begin{small}
\begin{bmatrix}
 2 \eta_{ii}^2 & 2 \eta_{ij}^2 & 2 \eta_{ii} \eta_{ij} \\
 2 \eta_{ij}^2 & 2 \eta_{jj}^2 & 2 \eta_{jj} \eta_{ij} \\
 2 \eta_{ii} \eta_{ij} & 2 \eta_{jj}\eta_{ij} &  \eta_{ij}^2 +\eta_{ii} \eta_{jj}\\
\end{bmatrix},
\end{small}   \label{eq:covvarfGauss}
\end{equation}
where 
\begin{equation*}
 \fx{\phi}{ij}(x_i,x_j)= \left( (x_i-\mu_i)^2,(x_j-\mu_j)^2,(x_{i}-\mu_i)(x_j-\mu_j) \right).
\end{equation*}
Therefore, $\det(\var{}{\fx{\phi}{ij}})=4 (\eta_{ii} \eta_{jj} - \eta_{ij}^2)^3 $.
\begin{cor}
[\Bzf for fixed-mean Gaussian \ifa]
For any pseudomarginals $ \{\eta_{ii},\eta_{ij}\} \in L$ the following equality holds. 
 \begin{equation*}
Z_{G}(\bsu)^{-1}
\hspace{-1mm}
=
\det(\nabla^2 F)
\prodv \eta_{ii}^{2(1-d_i)}
\prod_{ij \in E} (\eta_{ii} \eta_{jj} - \eta_{ij}^2)^3
\hspace{2mm} 2^{|V|,}
 \end{equation*}
where
$u^{ij}_{\ed{i}{j}}:= \eta_{ij}^2 \eta_{jj}^{-2}$
is a scalar value.
\end{cor}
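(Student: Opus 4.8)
The plan is to specialize the general Bethe-zeta formula, Theorem~\ref{thm:BZ}, to the fixed-mean Gaussian \ifa on the graph $G=(V,E)$. Here every factor is an edge $ij\in E$ with $d_{ij}=2$, the vertex sufficient statistic $\phi_i$ is scalar, and the edge sufficient statistic $\fx{\phi}{ij}$ is three-dimensional. Consequently the only things to do are: (i) evaluate the matrix weights $u^{ij}_{\edij}=\var{b_j}{\phi_j}^{-1}\cov{b_{ij}}{\phi_j}{\phi_i}$ and check they are scalars; (ii) compute the determinants $\det(\var{b_i}{\phi_i})$ and $\det(\var{b_{ij}}{\fx{\phi}{ij}})$ appearing in Theorem~\ref{thm:BZ}; and (iii) collect the powers of $2$ that these determinants contribute.

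For step (i), I read off from Eq.~(\ref{eq:covvarfGauss}) that $\var{b_j}{\phi_j}=2\eta_{jj}^2$ and $\cov{b_{ij}}{\phi_j}{\phi_i}=2\eta_{ij}^2$, whence
\[
 u^{ij}_{\edij}=\frac{2\eta_{ij}^2}{2\eta_{jj}^2}=\eta_{ij}^2\eta_{jj}^{-2},
\]
a scalar, which is exactly the weight entering $Z_G(\bsu)$ as claimed. The vertex determinant is immediate: $\det(\var{b_i}{\phi_i})=2\eta_{ii}^2$, so $\det(\var{b_i}{\phi_i})^{1-d_i}=2^{1-d_i}\,\eta_{ii}^{2(1-d_i)}$.

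The substantive point of step (ii) is the $3\times3$ determinant of the covariance matrix displayed in Eq.~(\ref{eq:covvarfGauss}). I would obtain it either by a direct cofactor expansion, or, more cleanly, by using $\var{b_{ij}}{\fx{\phi}{ij}}=(\nabla^2\varphi_{ij})^{-1}$ with the fixed-mean bivariate negative entropy $\varphi_{ij}=-(1+\log2\pi)-\tfrac12\log(\eta_{ii}\eta_{jj}-\eta_{ij}^2)$ and differentiating twice; both routes give
\[
 \det(\var{b_{ij}}{\fx{\phi}{ij}})=4\,(\eta_{ii}\eta_{jj}-\eta_{ij}^2)^3 .
\]
This short calculation is the only genuine obstacle; everything else is substitution and bookkeeping.

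Plugging these into Theorem~\ref{thm:BZ} yields
\[
 Z_G(\bsu)^{-1}=\det(\nabla^2F)\prod_{ij\in E}4\,(\eta_{ii}\eta_{jj}-\eta_{ij}^2)^3\prod_{i\in V}2^{1-d_i}\,\eta_{ii}^{2(1-d_i)} .
\]
Finally, in step (iii) I invoke the handshake identity $\sum_{i\in V}d_i=2|E|$, which gives $\prod_{i\in V}2^{1-d_i}\cdot\prod_{ij\in E}4=2^{\,|V|-2|E|}\cdot2^{\,2|E|}=2^{|V|}$; pulling this constant out collapses the expression to the stated
\[
 Z_G(\bsu)^{-1}=\det(\nabla^2F)\prod_{i\in V}\eta_{ii}^{2(1-d_i)}\prod_{ij\in E}(\eta_{ii}\eta_{jj}-\eta_{ij}^2)^3\,2^{|V|} .
\]
Thus the main obstacle is merely the $3\times3$ determinant evaluation; the remainder is a direct application of the already-proved Bethe-zeta formula together with the degree-sum identity.
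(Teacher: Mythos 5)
Your proposal is correct and follows essentially the same route as the paper: the paper obtains this corollary by reading off $\var{}{\phi_i}=2\eta_{ii}^2$, $\cov{b_{ij}}{\phi_j}{\phi_i}=2\eta_{ij}^2$ and $\det(\var{}{\fx{\phi}{ij}})=4(\eta_{ii}\eta_{jj}-\eta_{ij}^2)^3$ from Eq.~(\ref{eq:covvarfGauss}) and substituting directly into Theorem~\ref{thm:BZ}, with the handshake identity collapsing the powers of two to $2^{|V|}$ exactly as you do. Your determinant evaluation $4(\eta_{ii}\eta_{jj}-\eta_{ij}^2)^3$ checks out, so there is nothing to add.
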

One interesting point of this case is that the edge weights $u_{\ed{i}{j}}$ are
always positive.

\section{Application to positive definiteness conditions} 
\label{sec:PDC}
The \Bfe function $F$ is not necessarily convex
though it is an approximation of the Gibbs free energy function, which is convex.
Non convexity of the Bethe free energy can lead to multiple fixed points. 
Pakzad et al \cite{PAstat} and Heskes \cite{Huniquness} have derived sufficient conditions of the convexity and
have shown that the Bethe free energy is convex for trees and graphs with one cycle. 
In this section, instead of such global structure, we
shall focus on the local structure of the Bethe free energy function, i.e. the Hessian. 

As an application of the \Bzf,
we derive a condition for positive definiteness of the Hessian of the \Bfe function.
This condition is utilized to analyze a region where the Hessian is positive definite.

We will use the following notations.
For a given square matrix $X$,
$\spec{X} \subset \mathbb{C}$ denotes the set of eigenvalues (spectra)
and $\rho(X)$ the spectral radius of a matrix $X$,
i.e., the maximum of the modulus of the eigenvalues.

\subsection{Positive definiteness conditions}
\begin{lem}
\label{lem:zerocorr}
Let $\bseta=\{\pa{\eta},\eta_i \} \in L$.
 If $\cov{b_{\alpha}}{\phi_i}{\phi_j}=0$ is for all $\alpha \in F$
and $i,j \in \alpha (i \neq j)$, then
$\Hesse F(\bseta)$ is a positive definite matrix.
\end{lem}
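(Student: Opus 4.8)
The plan is to read off the structure of $\Hesse F$ from the proof of Theorem~\ref{thm:BZ}, where $\Hesse F$ was shown to be congruent to a block‑diagonal matrix. Under the hypothesis $\cov{b_{\alpha}}{\phi_i}{\phi_j}=0$ for all $\alpha\in F$ and all distinct $i,j\in\alpha$, each weight $u^{\alpha}_{\edij}=\var{b_j}{\phi_j}^{-1}\cov{b_{\alpha}}{\phi_j}{\phi_i}$ from Eq.~(\ref{def:u}) vanishes, so the block $U_{\alpha}$ of Eq.~(\ref{eq:defU}) equals the identity matrix, and hence so does its inverse $W_{\alpha}$; in particular $w^{\alpha}_{\ed{i}{i}}=I_{r_i}$ and $w^{\alpha}_{\edij}=0$ whenever $i\neq j$.

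Recall that the proof of Theorem~\ref{thm:BZ} produces a matrix $X$ with $\det X=1$ such that $X^{T}(\Hesse F)X=\diag\bigl(Y,(\pds{\varphi_{\alpha}}{\pa{\eta}}{\pa{\eta}})_{\alpha\in F}\bigr)$, where the vertex block satisfies $Y\,\diag(\var{b_i}{\phi_i}\mid i\in V)=I-\mathcal{D}+\mathcal{W}$. Since congruence preserves signature (Sylvester's law of inertia), $\Hesse F$ is positive definite if and only if $Y$ is positive definite and every factor block $\pds{\varphi_{\alpha}}{\pa{\eta}}{\pa{\eta}}$ is positive definite. Each factor block is a principal submatrix of $\Hesse\varphi_{\alpha}=\var{b_{\alpha}}{\fa{\phi}}^{-1}$, which is positive definite by Assumption~\ref{asm:expregular}; a principal submatrix of a positive definite matrix is positive definite, so the factor blocks cause no difficulty.

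It therefore remains to check that $Y$ is positive definite. Substituting $w^{\alpha}_{\ed{i}{i}}=I_{r_i}$ and $w^{\alpha}_{\edij}=0$ ($i\neq j$) into the definition~(\ref{eq:defDW}) of $\mathcal{W}$, and noting that for a fixed vertex $i$ the only pair of directed edges $e,e'$ with $t(e)=t(e')=i$ and $s(e)=s(e')$ is $e=e'=(\edai)$ for each $\alpha\ni i$, one obtains $(\mathcal{W}g)(i)=\sum_{\alpha\ni i}g(i)=d_i g(i)=(\mathcal{D}g)(i)$, i.e. $\mathcal{W}=\mathcal{D}$. Hence $I-\mathcal{D}+\mathcal{W}=I$, so $Y=\diag(\var{b_i}{\phi_i}^{-1}\mid i\in V)$, which is positive definite, again by Assumption~\ref{asm:expregular}. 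This yields positive definiteness of $\Hesse F(\bseta)$.

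The only mildly delicate point is the combinatorial bookkeeping in the last paragraph: once $W_{\alpha}=I$, one must correctly identify which ordered pairs of directed edges survive in the sum defining $\mathcal{W}$ at a vertex, and this is a short check rather than a genuine obstacle. I note that one could alternatively observe directly that $\matmu=0$, hence $\zeta_{H}(\bsu)=1$, and plug this into the \Bzf to get $\det\Hesse F>0$; but non-vanishing of the determinant does not by itself give definiteness, which is precisely why the congruence argument above is needed.
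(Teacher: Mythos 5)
Your proof is correct and follows essentially the same route as the paper's: both deduce $U_{\alpha}=W_{\alpha}=I$ from the vanishing covariances, reduce $Y$ to $\diag(\var{b_i}{\phi_i}^{-1})$, and combine this with the positive definiteness of the blocks $\pds{\varphi_{\alpha}}{\pa{\eta}}{\pa{\eta}}$ via the congruence Eq.~(\ref{eq:XHesseFX}). Your closing remark that non-vanishing of $\det\Hesse F$ alone would not give definiteness is a correct and worthwhile observation, but the argument itself matches the paper's.
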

\begin{proof}
We use the notations following Theorem \ref{thm:BZ}.
The assumption of this lemma means $u^{\alpha}_{\edij}=0$.
Since $W_{\alpha}=U_{\alpha}^{-1}=I$, we have $w^{\alpha}_{\edij}=\delta_{i,j}$.
Therefore,
$Y_{ij}=\var{}{\phi_i}^{-1} \delta_{i,j}$ and $Y$ is a positive definite matrix.
Furthermore, $\pds{\varphi_{\alpha}}{\pa{\eta}}{\pa{\eta}}$ is a positive definite matrix
because it is a submatrix of the positive definite matrix $\pds{\varphi_{\alpha}}{\fa{\eta}}{\fa{\eta}}=\var{}{\fa{\phi}}^{-1}$.
Therefore, from Eq.~(\ref{eq:XHesseFX}), $\Hesse F$ is positive definite.
\end{proof}

\begin{thm}
\label{thm:positive}
Let $\mathcal{I}$ be a multinomial or fixed-mean Gaussian \ifa.
Let $\bsu$ be given by $\bseta \in L$ using Eq.~(\ref{def:u}).
Then, 
\begin{equation*}
\spec{ \matmu } \hspace{0.5mm} \subset
\hspace{0.5mm} \mathbb{C} \smallsetminus \mathbb{R}_{\geq 1}
\quad \Longrightarrow \quad
\nabla^2 F (\bseta)
\text{ is a positive definite matrix.}
\end{equation*}
\end{thm}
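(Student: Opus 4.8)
The plan is to deduce positive definiteness from the sign information contained in the \Bzf (Theorem~\ref{thm:BZ}) together with a connectedness/continuity argument. Observe first that $\Hesse F(\bseta)$ is symmetric and depends smoothly on $\bseta$ over the convex set $L$. Hence, if I can join $\bseta$ by a path $\{\bseta_t\}_{t\in[0,1]}\subset L$ to a base point $\bseta_0$ at which $\Hesse F$ is positive definite, in such a way that $\det\Hesse F(\bseta_t)\neq 0$ for all $t$, then the eigenvalues of $\Hesse F(\bseta_t)$, varying continuously and never crossing zero, stay positive; in particular $\Hesse F(\bseta)\succ 0$. For the base point I would use a pseudomarginal with all pairwise covariances zero, where Lemma~\ref{lem:zerocorr} already gives $\Hesse F(\bseta_0)\succ 0$: in the multinomial case take $b^{(0)}_\alpha=\prod_{i\in\alpha}b_i$ (the singleton beliefs $b_i$ unchanged), and in the fixed-mean Gaussian case take $\eta_{ij}=0$ for $i\neq j$. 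Both belong to $L$, and this is exactly where the restriction to multinomial/fixed-mean Gaussian families enters, since then $\mathcal{E}_\alpha$ is large enough to contain these factorized distributions.

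Next I would take the straight segment $\bseta_t:=(1-t)\bseta_0+t\bseta$, which lies in $L$ by convexity, and compute the weights $u^\alpha_{\edij}$ of Eq.~(\ref{def:u}) along it. Since the singleton marginals are held fixed, the expectations ${\rm E}_{b^{(t)}_\alpha}[\phi_i]$, ${\rm E}_{b^{(t)}_\alpha}[\phi_j]$ and $\var{b_j}{\phi_j}$ are constant in $t$, while ${\rm E}_{b^{(t)}_\alpha}[\phi_j\phi_i]$ is affine in $t$, so $\cov{b^{(t)}_\alpha}{\phi_j}{\phi_i}=t\,\cov{b_\alpha}{\phi_j}{\phi_i}$ and therefore $u^\alpha_{\edij}$ is scaled by $t$ in the multinomial case (and by $t^2$ in the Gaussian case, because there $u^{ij}_{\edij}=\eta_{ij}^2\eta_{jj}^{-2}$). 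In either case $\mathcal{M}(\bsu_t)=s(t)\,\mathcal{M}(\bsu)$ with $s(t)\in[0,1]$, hence $\spec{\mathcal{M}(\bsu_t)}=s(t)\,\spec{\mathcal{M}(\bsu)}$.

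Now the elementary fact that makes it work: if $\lambda\in\mathbb{C}\smallsetminus\mathbb{R}_{\geq 1}$ then $s\lambda\in\mathbb{C}\smallsetminus\mathbb{R}_{\geq 1}$ for all $s\in[0,1]$ (if $\lambda$ is non-real then so is $s\lambda$ for $s>0$, and $0\notin\mathbb{R}_{\geq 1}$; if $\lambda$ is real with $\lambda<1$ then $s\lambda\leq\max(\lambda,0)<1$). Thus under the hypothesis $1\notin\spec{\mathcal{M}(\bsu_t)}$ for every $t$, so by the first determinant formula $\zeta_H(\bsu_t)^{-1}=\det(I-\mathcal{M}(\bsu_t))\neq 0$. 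Plugging $\bseta_t$ into the \Bzf,
\begin{equation*}
\det\Hesse F(\bseta_t)=\zeta_H(\bsu_t)^{-1}\prodf\det\big(\var{b^{(t)}_\alpha}{\fa{\phi}}\big)^{-1}\prodv\det\big(\var{b_i}{\phi_i}\big)^{d_i-1},
\end{equation*}
and every determinant of a variance matrix is strictly positive by Assumption~\ref{asm:expregular} (explicitly $\prod_k p(k)>0$ in the multinomial case and $4(\eta_{ii}\eta_{jj}-\eta_{ij}^2)^3>0$ in the Gaussian case). Hence $\det\Hesse F(\bseta_t)\neq 0$ for all $t\in[0,1]$, and together with $\Hesse F(\bseta_0)\succ 0$ the homotopy argument yields $\Hesse F(\bseta)\succ 0$.

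The only step with genuine content is the construction of a path inside $L$ along which $\mathcal{M}(\bsu)$ scales homogeneously: an arbitrary straight line in $L$ would not control $\spec{\mathcal{M}(\bsu)}$, and it is precisely the idea of contracting the joint beliefs toward the product of their own marginals, so that the singletons and hence the variances $\var{b_j}{\phi_j}$ are untouched, that both keeps the path inside $L$ for these two families and forces $\mathcal{M}$ to rescale. Everything else is bookkeeping with the \Bzf and the trivial spectral remark above.
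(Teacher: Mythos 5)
Your proposal is correct and follows essentially the same route as the paper: deform $\bseta$ along the segment that mixes each $b_\alpha$ toward the product of its own (fixed) marginals, observe that this rescales $\matmu$ by $t$ (multinomial) or $t^2$ (fixed-mean Gaussian), invoke the \Bzf{} to keep $\det\Hesse F$ nonzero along the path, and conclude by continuity of the (real) eigenvalues from the zero-covariance base point of Lemma~\ref{lem:zerocorr}. The only difference is cosmetic — you spell out the elementary spectral fact that $\mathbb{C}\smallsetminus\mathbb{R}_{\geq 1}$ is stable under scaling by $s\in[0,1]$, which the paper leaves implicit.
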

\begin{proof}We give proofs for each case.\\
{\bf Case 1: Multinomial}\\
The given $\bseta$ is identified with a set of pseudomarginals $\beliefsw$.
We define $\bseta(t) (t \in [0,1])$ by a set of pseudomarginals
$b_{\alpha}(t):= t b_{\alpha}+(1-t)\prod_{i \in \alpha}b_i$ and
$b_i(t):=b_i$.
Obviously, $\bseta(1)=\bseta$, and $\bseta(0)$ has zero covariances.
From Lemma \ref{lem:zerocorr}, it is enough to prove that 
$\nabla^2 F ( \bseta(t)) \neq 0$  on the interval $[0,1]$
because all eigenvalues of the Hessian are real numbers. 

The covariances and variances at $t$ are
\begin{equation}
 \cov{b_{\alpha}(t)}{\phi_i}{\phi_j}
=t \cov{b_{\alpha}}{\phi_i}{\phi_j}, \quad
\var{b_{\alpha}(t)}{\phi_i}=\var{b_{\alpha}}{\phi_i}.
\end{equation}
Therefore, $\mathcal{M}(\bsu(t)) = t \matmu$.
Our assumption of this lemma implies $\det(I-\mathcal{M}(\bsu(t)) ) \neq 0$
on the interval.
From Theorem \ref{thm:BZ}, we conclude that $\nabla^2 F ( \bseta(t)) \neq 0$.\\
{\bf Case 2: Fixed-mean Gaussian}\\
The proof is analogous to the above proof.
We define $\bseta(t) \in L$ by
\begin{equation}
 \bseta(t)_{ii}: = \eta_{ii}, \quad \bseta(t)_{ij}: = t \bseta_{ij}.
\end{equation}
From Eq.~(\ref{eq:covvarfGauss}), we have 
\begin{equation}
  \cov{b_{\alpha}(t)}{\phi_i}{\phi_j}
=t^2 \cov{b_{\alpha}}{\phi_i}{\phi_j}, \quad
\var{b_{\alpha}(t)}{\phi_i}=\var{b_{\alpha}}{\phi_i}.
\end{equation}
Therefore $\mathcal{M}(\bsu(t)) = t^2 \matmu$.
The remainder of the proof proceeds in the same manner.
\end{proof}

\subsection{Region of positive definite}
\label{sec:regionPD}
In this section, we analyze conditions for
the pseudomarginals that guarantees the positive definiteness of the Hessian.
Our result says that if the \ccms
are sufficiently small, then the Hessian is positive definite.
This ``smallness'' criteria depends on graph geometry.

First, we define \ccms.
\begin{defn}
Let $x, y$ be vector valued random variables following a probability distribution $p$.
The {\it \ccm} of $x$ and $y$ is defined by
\begin{equation}
 \corr{p}{y}{x}:= \var{p}{y}^{-1/2} \cov{p}{y}{x} \var{p}{x}^{-1/2}.  \label{eq:defccm}
\end{equation}
\end{defn}

Our approach for obtaining conditions for the positive definiteness is based on
Theorem \ref{thm:positive}.
Thus we would like to bound the eigenvalues of $\matmu$.
The following lemma implies that the spectrum of $\matmu$ is determined by
the \ccms.
\begin{lem}
\label{lem:specutoc}
Let $u^{\alpha}_{\edij}$ be given by Eq.~(\ref{def:u}) and
$c^{\alpha}_{\edij}:= \corr{b_{\alpha}}{\phi_j}{\phi_i}$.
Then
\begin{equation}
 \spec{\matmu }=\spec{\matmc}.
\end{equation}
\end{lem}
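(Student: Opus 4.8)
The plan is to show that $\matmu$ and $\matmc$ are conjugate by an explicit block-diagonal operator on $\vfe$, so that the equality of spectra follows immediately from similarity. Throughout we are at a point $\bseta=\{\pa{\eta},\eta_i\}\in L$, so that for every factor $\alpha$ and every $i\in\alpha$ the marginal of $b_\alpha$ onto $i$ is $b_i$; in particular $\var{b_\alpha}{\phi_i}=\var{b_i}{\phi_i}$. By Assumption~\ref{asm:expregular} each $\var{b_i}{\phi_i}$ is symmetric positive definite, hence has a symmetric positive-definite square root $\var{b_i}{\phi_i}^{1/2}$ and inverse square root.

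First I would introduce the block-diagonal operator $S\colon\vfe\to\vfe$ defined by $(Sf)(e):=\var{b_{t(e)}}{\phi_{t(e)}}^{1/2}f(e)$, whose block at $e$ depends only on the terminus vertex $t(e)$. Since each block is invertible, $S$ is invertible with $(S^{-1}f)(e)=\var{b_{t(e)}}{\phi_{t(e)}}^{-1/2}f(e)$. It then suffices to verify $S\matmu S^{-1}=\matmc$ at the level of matrix blocks, using the entries of $\matmu$ from Eq.~(\ref{def:matmu}) together with $u^{\alpha}_{\ed{i}{j}}=\var{b_j}{\phi_j}^{-1}\cov{b_\alpha}{\phi_j}{\phi_i}$ (Eq.~(\ref{def:u})).

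Carrying out the computation: for a pair with $\etea$ the $(e,e')$ block of $S\matmu S^{-1}$ is
\begin{equation*}
\var{b_{t(e)}}{\phi_{t(e)}}^{1/2}\,u^{s(e)}_{\ed{t(e')}{t(e)}}\,\var{b_{t(e')}}{\phi_{t(e')}}^{-1/2}
=\var{b_{t(e)}}{\phi_{t(e)}}^{-1/2}\cov{b_{s(e)}}{\phi_{t(e)}}{\phi_{t(e')}}\var{b_{t(e')}}{\phi_{t(e')}}^{-1/2}.
\end{equation*}
Since $\etea$ means $t(e')\in s(e)$ and $t(e)\in s(e)$, and $\bseta\in L$, we have $\var{b_{s(e)}}{\phi_{t(e)}}=\var{b_{t(e)}}{\phi_{t(e)}}$ and $\var{b_{s(e)}}{\phi_{t(e')}}=\var{b_{t(e')}}{\phi_{t(e')}}$, so the right-hand side is exactly $\corr{b_{s(e)}}{\phi_{t(e)}}{\phi_{t(e')}}=c^{s(e)}_{\ed{t(e')}{t(e)}}$, the defining block of $\matmc$; off the relation $\etea$ both sides vanish. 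Hence $S\matmu S^{-1}=\matmc$ and therefore $\spec{\matmu}=\spec{\matmc}$.

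The argument is essentially this single block identity, so there is no real obstacle. The only point that needs care is the well-definedness and invertibility of the conjugating operator $S$, and the matching of the variance factors in $u$ and $c$: invertibility relies on Assumption~\ref{asm:expregular}, and the identification of $\var{b_{s(e)}}{\phi_{t(e')}}$ with $\var{b_{t(e')}}{\phi_{t(e')}}$ uses precisely that the summation condition $\etea$ forces $t(e')\in s(e)$ together with $\bseta\in L$; this is what makes the $(1/2)$-power split legitimate.
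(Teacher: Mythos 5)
Your proposal is correct and is essentially the paper's own argument: the paper conjugates $\matmu$ by the block-diagonal operator $(\mathcal{Z})_{e,e'}=\delta_{e,e'}\var{}{\phi_{t(e)}}^{1/2}$ and checks entrywise that the result is $\matmc$, exactly as you do with $S$. Your additional remarks on invertibility via Assumption~\ref{asm:expregular} and on the identification $\var{b_{s(e)}}{\phi_{t(e')}}=\var{b_{t(e')}}{\phi_{t(e')}}$ from local consistency are the (implicit) justifications the paper leaves unstated.
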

\begin{proof}
Define $\mathcal{Z}$ by $(\mathcal{Z})_{e,e'}:= \delta_{e,e'} \var{}{\phi_{t(e)}}^{1/2}$.
Then
\begin{equation}
(\mathcal{Z}\matmu \mathcal{Z}^{-1})_{e,e'}=
\var{}{\phi_{t(e)}}^{1/2}  \matmu_{e,e'} \var{}{\phi_{t(e')}}^{-1/2} 
= \matmc_{e,e'}.
\end{equation}
\end{proof}

Next, we define the operator norm of matrices
because we need to measure ``smallness'' of a \ccm.
\begin{defn}
Let $V_1$ and $V_2$ be finite dimensional normed vector spaces and let $X$ be a linear operator 
from $V_1$ to $V_2$.
The {\it operator norm} of $X$ is defined by
\begin{equation}
 \norm{X} := \max_{\substack{x \in V_1 \\ \norm{x}=1 }} \norm{Xx}.
\end{equation}
\end{defn}
Since $V_1$ is finite dimensional, the maximum exists.
By definition, $\specr{X} \leq \norm{X}$ and $\norm{XY} \leq \norm{X}~\norm{Y}$ holds.

The operator norm depends on the choice of the norms of $V_1$ and $V_2$.
If the norms in $V_1$ and $V_2$ are given by its inner products,
the induced operator norm is denoted by $\norm{\cdot}_2$.
Then $\norm{X}_2$ is equal to the maximum singular value of $X$.
In this case, the norm of a \ccm is smaller than 1.
(See Proposition \ref{app:prop:normccm} in Appendix \ref{app:sec:probability}.)

\begin{lem}
\label{lem:specbound}
Let 
\begin{equation}
\kappa:= \max_{\substack{\alpha \in F \\ i,j \in \alpha}} \norm{\corr{b_{\alpha}}{\phi_i}{\phi_j}} 
\end{equation}
and
let $\alpha$ be the Perron-Frobenius eigenvalue of $\mathcal{M}$.
Then
\begin{equation}
\specr{ \matmu } \leq \kappa \alpha.
\end{equation}
\end{lem}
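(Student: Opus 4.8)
The plan is to reduce the claim to a Perron--Frobenius comparison between the block operator $\matmc$ and the scalar nonnegative matrix $\mathcal{M}$, using the already established Lemma~\ref{lem:specutoc} to pass from $\bsu$ to the correlation coefficients.

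First, if $\kappa=0$ then every $\cov{b_\alpha}{\phi_i}{\phi_j}$ vanishes, so $\matmu=0$ and the bound is trivial; assume $\kappa>0$. By Lemma~\ref{lem:specutoc}, $\specr{\matmu}=\specr{\matmc}$, where $(\matmc f)(e)=\sum_{e':\etea} c^{s(e)}_{\ed{t(e')}{t(e)}} f(e')$ and $c^{\alpha}_{\edij}=\corr{b_\alpha}{\phi_j}{\phi_i}$. Since $c^{s(e)}_{\ed{t(e')}{t(e)}}=\corr{b_{s(e)}}{\phi_{t(e)}}{\phi_{t(e')}}$ is one of the correlation coefficient matrices appearing in the maximum defining $\kappa$, we have $\norm{c^{s(e)}_{\ed{t(e')}{t(e)}}}\le\kappa$ for every pair $\etea$. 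Thus it suffices to prove $\specr{\matmc}\le\kappa\,\specr{\mathcal{M}}$, as $\specr{\mathcal{M}}$ is the Perron--Frobenius eigenvalue $\alpha$.

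Next I would choose an eigenvalue $\lambda$ of $\matmc$ with $|\lambda|=\specr{\matmc}$ and a nonzero eigenvector $f\in\vfe$, and set $g\in\sfe$, $g(e):=\norm{f(e)}\ge 0$; note $g\neq 0$. Applying the triangle inequality to $\lambda f(e)=\sum_{e':\etea} c^{s(e)}_{\ed{t(e')}{t(e)}} f(e')$ and the block-norm bound gives
\begin{equation*}
 |\lambda|\, g(e) \;\le\; \sum_{e':\etea}\norm{c^{s(e)}_{\ed{t(e')}{t(e)}}}\,\norm{f(e')} \;\le\; \kappa\sum_{e':\etea} g(e') \;=\; \kappa\,(\mathcal{M} g)(e)
\end{equation*}
for every $e\in\vec{E}$; that is, $\mathcal{M} g\ge (|\lambda|/\kappa)\,g$ entrywise with $g\ge 0$, $g\neq 0$. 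Then I would invoke the standard lower bound for nonnegative matrices: if $B\ge 0$ and $Bg\ge tg$ entrywise for some nonzero $g\ge 0$, then $\specr{B}\ge t$ (e.g. $B^n g\ge t^n g\ge 0$, so $\norm{B^n}\,\norm{g}\ge t^n\norm{g}$ and $\specr{B}=\lim_n\norm{B^n}^{1/n}\ge t$; this also follows from Theorem~\ref{app:luboudofspecr}). With $B=\mathcal{M}$ and $t=|\lambda|/\kappa$ this yields $\specr{\mathcal{M}}\ge|\lambda|/\kappa$, hence $\specr{\matmu}=\specr{\matmc}=|\lambda|\le\kappa\,\specr{\mathcal{M}}=\kappa\alpha$.

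The main obstacle is precisely this last step: making the block-to-scalar Perron--Frobenius comparison rigorous, i.e. the implication from the eigenvector inequality $\mathcal{M} g\ge(|\lambda|/\kappa)g$ to $\specr{\mathcal{M}}\ge|\lambda|/\kappa$; once that is in place the rest is bookkeeping. A minor point to check is that the inequalities survive after taking norms, which uses that norms of nonnegative real vectors are monotone, and that the operator norm used in the statement satisfies the submultiplicativity/triangle properties invoked above (both hold, in particular, for the $2$-norm $\norm{\cdot}_2$).
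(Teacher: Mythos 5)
Your proof is correct, but it follows a genuinely different route from the one in the thesis. You argue by a direct block-to-scalar Perron--Frobenius comparison: after passing from $\bsu$ to $\bs{c}$ via Lemma \ref{lem:specutoc}, you take a peripheral eigenvector $f$ of $\matmc$, collapse it to the nonnegative scalar function $g(e)=\norm{f(e)}$, obtain $\mathcal{M}g \geq (|\lambda|/\kappa)\,g$ entrywise, and conclude $\specr{\mathcal{M}} \geq |\lambda|/\kappa$ by the standard iteration $\mathcal{M}^n g \geq (|\lambda|/\kappa)^n g$ together with Gelfand's formula (the Collatz--Wielandt characterization; note this is \emph{not} quite Theorem \ref{app:luboudofspecr}, which only gives row/column-sum bounds, so your self-contained $B^n g$ argument is the one to keep). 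The thesis instead works with the zeta function itself: it bounds the eigenvalues of $\pi(\mathfrak{p};\bs{c})$ along each prime cycle by $\kappa^{|\mathfrak{p}|}$ via submultiplicativity of the operator norm, and then shows that the infinite product defining $\zeta_H(z\bs{c})$ converges (dominated by the scalar $\zeta_H(|z\kappa|)$, which is finite for $|z\kappa|<\alpha^{-1}$), so that $\det(I-z\matmc)$ has no zero in the disc $|z|<\kappa^{-1}\alpha^{-1}$. Your argument is more elementary and avoids the convergence issues of the infinite product and the implicit appeal to analytic continuation; the paper's argument has the merit of staying inside the zeta-function formalism that the chapter is built around. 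Both hinge on the same two ingredients, namely $\specr{\matmu}=\specr{\matmc}$ and submultiplicativity of the operator norm, and both implicitly require that one fixed norm per vertex space $\mathbb{C}^{r_i}$ is used consistently, which you correctly flag at the end.
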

\begin{proof}
From Lemma \ref{lem:specutoc}, we consider the spectral radius of $\matmc$.
It is enough to prove $\det(I-z \mathcal{M}(\bs{c}))$ does not have any root in 
$ \{ \lambda \in \mathbb{C} |\hspace{2mm} |\lambda| < \kappa^{-1} \alpha^{-1} \}$.
Accordingly, we show that $\zeta_{H}(z \bs{c})$ does not have any pole in the set.
If $H$ is a tree, the statement is trivial.
Thus, from Proposition \ref{prop:PFboundM}, we assume $\alpha \geq 1$ in the following.
Let $\mathfrak{p}$ be a prime cycle and
let $\lambda_1,\ldots,\lambda_{r}$ be the eigenvalues of $\pi(\mathfrak{p}; \bs{c})$.
Then we obtain $\max \lambda_l \leq \kappa^{|\mathfrak{p}|}$ because of the properties of operator norms.
From this inequality, if $|z \kappa|<  \alpha^{-1} \leq 1$, we obtain
\begin{equation}
\left|  \det(I- z^{|\mathfrak{p}|} \pi(\mathfrak{p}; \bs{c})) \right|
= \left| \prod_l (1-z^{|\mathfrak{p}|} \lambda_l) \right|
\geq (1-|z \kappa|^{|\mathfrak{p}|})^{r}.
\end{equation}

Therefore, if $|z |< \kappa^{-1} \alpha^{-1}$, 
\begin{align*}
 \left| \zeta_{H}(z \bs{c}) \right|
= \left| \prod_{ \mathfrak{p} \in P } \det(I- z^{|\mathfrak{p}|} \pi(\mathfrak{p}; \bs{c}))^{-1}  \right|
\leq   \prod_{ \mathfrak{p} \in P } (1- |z \kappa|^{|\mathfrak{p}|})^{-r} 
=\zeta_{H}(|z \kappa|)^{r} < \infty.
\end{align*}
\end{proof}

The following theorem gives an explicit condition of the
region where the Hessian is positive definite in terms of the
\ccms of the pseudomarginals.

\begin{thm} 
\label{cor:positivedefiniteregion}
Let $\mathcal{I}$ be a multinomial or a fixed-mean Gaussian \ifa.
Let $\alpha$ be the Perron-Frobenius eigenvalue of $\mathcal{M}$ and
define 
\begin{equation*}
 L_{\alpha^{-1}}(\mathcal{I}):=\left\{ \beliefsw \in L(\mathcal{I})~|~
\forall \alpha \in F, ~\forall i,j \in \alpha, \hspace{2mm}
\norm{\corr{b_{\alpha}}{\phi_i}{\phi_j}} < \alpha^{-1} \right\}.
\end{equation*}
Then, the
Hessian $\nabla^2 F$ is positive definite on $L_{\alpha^{-1}}(\mathcal{I})$.
\end{thm}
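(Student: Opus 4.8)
The plan is to derive this statement as a direct corollary of Theorem~\ref{thm:positive} together with the spectral bound of Lemma~\ref{lem:specbound}. Fix a point $\beliefsw \in L_{\alpha^{-1}}(\mathcal{I})$ and let $\bsu$ be the associated matrix weights given by Eq.~(\ref{def:u}). First I would observe that the defining condition of $L_{\alpha^{-1}}(\mathcal{I})$ says precisely that $\norm{\corr{b_{\alpha}}{\phi_i}{\phi_j}} < \alpha^{-1}$ for every factor $\alpha$ and every pair $i,j \in \alpha$; since there are only finitely many such pairs, the quantity
\begin{equation*}
 \kappa := \max_{\substack{\alpha \in F \\ i,j \in \alpha}} \norm{\corr{b_{\alpha}}{\phi_i}{\phi_j}}
\end{equation*}
satisfies the strict inequality $\kappa < \alpha^{-1}$.

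Next I would invoke Lemma~\ref{lem:specbound}, which gives $\specr{\matmu} \leq \kappa \alpha$. Combining this with $\kappa < \alpha^{-1}$ yields $\specr{\matmu} \leq \kappa\alpha < 1$. In particular every eigenvalue of $\matmu$ has modulus strictly less than $1$, so no eigenvalue can lie in $\mathbb{R}_{\geq 1}$; that is, $\spec{\matmu} \subset \mathbb{C} \smallsetminus \mathbb{R}_{\geq 1}$.

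Finally, applying Theorem~\ref{thm:positive} (valid since $\mathcal{I}$ is a multinomial or fixed-mean Gaussian \ifa), this spectral condition implies that $\nabla^2 F(\bseta)$ is positive definite. As $\beliefsw$ was an arbitrary point of $L_{\alpha^{-1}}(\mathcal{I})$, the Hessian is positive definite on all of $L_{\alpha^{-1}}(\mathcal{I})$, which is the claim. I do not expect any real obstacle here: the proof is essentially a bookkeeping chain linking the operator-norm hypothesis to the spectral hypothesis of Theorem~\ref{thm:positive}, and the only point requiring a moment's care is noting that the maximum defining $\kappa$ is over a finite set so that the strict inequality is preserved.
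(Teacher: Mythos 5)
Your proposal is correct and follows exactly the same route as the paper's own proof: note that $\kappa < \alpha^{-1}$ on $L_{\alpha^{-1}}(\mathcal{I})$, apply Lemma~\ref{lem:specbound} to get $\specr{\matmu} \leq \kappa\alpha < 1$, and conclude via the spectral condition of Theorem~\ref{thm:positive}. The only difference is that you spell out the (harmless) bookkeeping about the finite maximum, which the paper dismisses with ``obviously.''
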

\begin{proof}
Obviously, $\kappa < \alpha^{-1}$ holds in $L_{\alpha^{-1}}$. 
From Lemma \ref{lem:specbound}, 
$\spec{ \matmu } \subset \{ \lambda \in \mathbb{C} |\hspace{2mm} |\lambda| < 1 \}.$
From Theorem \ref{thm:positive}, the Hessian is positive definite.
\end{proof}

Properties of the Perron-Frobenius eigenvalue of $\mathcal{M}$, such as bound, is 
discussed in Subsection \ref{sec:directededgematrix}.
Roughly speaking, as degrees of factors and vertices increase,
the $\alpha$ also increases and thus $L_{\alpha^{-1}}$ shrinks.
The region $L_{\alpha^{-1}}$ depends on the choice of the operator norms.
If we chose the norm $\norm{\cdot}_2$,
this theorem immediately implies the convexity of 
the \Bfe function of tree and one-cycle hypergraphs as we will discuss in the next subsection.

\subsection{Convexity condition}
\label{sec:convexitycondition}
Since the Hessian $\Hesse F$ does not depend on the given compatibility functions $\Psi=\{ \Psi_{\alpha}\}$,
the convexity of $F$ solely depends on the given \ifa and the factor graph.
For multinomial case, Pakzad and Anantharam have shown that the \Bfe function is convex if the factor graph has
at most one cycle \cite{PAstat}.
The following theorem extends the result.
\begin{thm}
 \label{thm:convexcondition}
Let $\mathcal{I}$ be a multinomial or fixed-mean Gaussian \ifa associated with 
a connected factor graph $H$. Then
\begin{equation}
 F \text{ is convex on } L \iff n(H)=0 \text{ or } 1.
\end{equation}
\end{thm}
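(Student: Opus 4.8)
The plan is to prove both directions by exploiting the Bethe--zeta formula (Theorem \ref{thm:BZ}) together with the positive-definiteness results of Section \ref{sec:PDC}. The key observation is that convexity of $F$ on the (relatively open, convex) set $L$ is equivalent to the Hessian $\nabla^2 F$ being positive semidefinite at every point of $L$, and since the set of pseudomarginals with zero covariances lies in $L$ and makes $\nabla^2 F$ positive definite (Lemma \ref{lem:zerocorr}), a connectedness/continuity argument shows that convexity is in fact equivalent to $\nabla^2 F$ being \emph{positive definite} everywhere on $L$, which by the Bethe--zeta formula is equivalent to $\zeta_H(\bsu)^{-1} \neq 0$, i.e. $\det(I - \matmu) \neq 0$, for all $\bsu$ arising from points of $L$ via Eq.~(\ref{def:u}).

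\textbf{The ``if'' direction ($n(H) = 0$ or $1$).} First I would handle $n(H)=0$: then $H$ is a tree, so by Proposition \ref{prop:treeprime} there are no prime cycles, hence $\zeta_H(\bsu) = 1$ identically, and the Bethe--zeta formula forces $\det(\nabla^2 F) > 0$ at every point of $L$ (the variance determinants being strictly positive under Assumption \ref{asm:expregular}); combined with Lemma \ref{lem:zerocorr} and a connectedness argument on $L$, this gives positive definiteness throughout, hence convexity. For $n(H)=1$, the core of $H$ is a single cycle, so $\zeta_H = \zeta_{\core(H)}$ is the zeta function of a one-cycle graph $C_N$, whose explicit form is given in Example \ref{example1}; here $\matmu$ restricted to the core has spectral radius controlled by $\kappa$, the maximum operator norm of a correlation coefficient matrix, which by Proposition \ref{app:prop:normccm} is $< 1$, while the Perron--Frobenius eigenvalue $\alpha$ of $\mathcal{M}$ equals $1$ for a single cycle (Proposition \ref{prop:PFboundM}, the $(2,2)$-regular case gives $(2-1)(2-1)=1$). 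Hence by Lemma \ref{lem:specbound} $\specr{\matmu} \le \kappa \cdot 1 < 1$, so $\spec{\matmu} \subset \mathbb{C} \smallsetminus \mathbb{R}_{\ge 1}$, and Theorem \ref{thm:positive} gives positive definiteness of $\nabla^2 F$ on all of $L$, hence convexity. (One must also deal with the vertices/edges outside the core, but for those the relevant block of the Hessian is positive definite for the same tree-like reason.)

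\textbf{The ``only if'' direction ($n(H) \ge 2 \Rightarrow F$ not convex).} This is the main obstacle. The idea is to exhibit a point $\bseta \in L$ where $\nabla^2 F$ has a negative eigenvalue, equivalently (by Bethe--zeta) a point where $\zeta_H(\bsu)^{-1}$ is negative, i.e. $\det(I - \matmu) < 0$. Since $\det(\nabla^2 F)$ and $\det(I-\matmu)$ differ by a strictly positive factor, it suffices to find $\bsu$ (realizable from some $\bseta \in L$) with $\det(I - \matmu) < 0$. For a connected coregraph with $n(H) \ge 2$, the operator $\mathcal{M}$ is irreducible (the Proposition in Subsection \ref{sec:directededgematrix}) with Perron--Frobenius eigenvalue $\alpha \ge 1$, and in fact $\alpha > 1$ when $n(H) \ge 2$ (a connected coregraph with nullity $\ge 2$ cannot be a single cycle, so some vertex has degree $\ge 3$, forcing $\alpha \ge \min_i d_i - 1$ to exceed $1$ in the non-regular case, and a separate check in the regular case). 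One would then scale a correlation structure: take pseudomarginals realizing $\matmc$ with $\mathcal{M}(t\bs{c}) = $ (monomial in $t$)$\,\cdot\,\mathcal{M}$ as in the proof of Theorem \ref{thm:positive}, so that $\det(I - s\mathcal{M})$ for real $s$ ranging over $(0, \alpha^{-1}]$ stays positive but as $s$ crosses $\alpha^{-1}$ the Perron root of $s\mathcal{M}$ passes through $1$, making $\det(I - s\mathcal{M})$ change sign (it is a polynomial in $s$ with a simple root at $s = \alpha^{-1}$ by irreducibility; since $\alpha^{-1} < 1$ this root lies in the achievable range of $s$, which for suitable realizable correlations extends up to $1$ or beyond). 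Concretely, for multinomial and fixed-mean Gaussian families one can realize correlation coefficients of any modulus $< 1$ on each pair, so $s$ can be pushed past $\alpha^{-1}$; at such a point $\det(\nabla^2 F) < 0$, so the Hessian is indefinite and $F$ is not convex. The care needed here is (i) reducing to the core via $\zeta_H = \zeta_{\core H}$ and checking the non-core part never obstructs, and (ii) verifying that the required correlation values are genuinely attainable within $L$ for the two families in question — this last realizability check is the step I expect to require the most attention.
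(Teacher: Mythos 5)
Your ``if'' direction is essentially the paper's own argument: for $n(H)\le 1$ the Perron--Frobenius eigenvalue $\alpha$ of $\mathcal{M}$ is at most $1$, every correlation coefficient matrix has $\norm{\cdot}_2$-norm strictly below $1$ (Proposition \ref{app:prop:normccm}), so $\spec{\mathcal{M}(\bs{u})}$ avoids $\mathbb{R}_{\ge 1}$ and Theorem \ref{thm:positive} gives positive definiteness on all of $L$. For the ``only if'' direction you take a genuinely different route. The paper pushes the pseudomarginals to the boundary ($u\to 1$) and reads the sign of $\det(\nabla^2 F)$ off Hashimoto's theorem (Theorem \ref{thm:Hashimoto}), whose limit $-2^{|E|-2|V|+1}(|E|-|V|)\kappa(G)$ is negative precisely when $n>1$; you stay in the interior and argue that $\det(I-s\mathcal{M})$ changes sign at $s=\alpha^{-1}<1$ because the Perron root of the irreducible core matrix is simple and exceeds $1$ when $n\ge 2$. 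For binary variables (pairwise or hypergraph factors) and for the fixed-mean Gaussian family this is correct and arguably cleaner, since it needs only Perron--Frobenius theory rather than spanning-tree asymptotics. One local repair: your justification ``$\alpha\ge \min_i d_i-1$ exceeds $1$ in the non-regular case'' fails because that minimum can equal $1$ (a coregraph with nullity $\ge 2$ may retain degree-$2$ vertices); the correct statement is that an irreducible nonnegative matrix with non-constant row sums has spectral radius strictly above its minimum row sum, with the regular case ($d\ge 3$, so $\alpha=d-1\ge 2$) handled separately.

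The genuine gap is the general multinomial case. When some $|\mathcal{X}_i|>2$, the weight $u^{\alpha}_{\ed{i}{j}}$ is an $r_j\times r_i$ matrix and $\mathcal{M}(\bs{u})$ is a block matrix of size $\sum_e r_e$, so the identity $\mathcal{M}(\bs{u})=s\mathcal{M}$ on which your sign-change argument rests is not even well-typed, and ``realizing a correlation coefficient of any modulus $<1$ on each pair'' does not produce a scalar multiple of the $0$--$1$ directed-edge matrix. You flag realizability as the delicate step, but the obstruction here is structural rather than quantitative, and your outline does not contain the idea needed to remove it. The paper's device is to restrict $F$ to the face of $\overline{L}$ on which only two designated states of each variable carry mass: since $0\log 0=0$, the restriction of $F$ to that face is exactly the Bethe free energy of the binary \ifa on the same factor graph, and since $F$ extends continuously to that face, convexity of $F$ on $L$ would force convexity of the binary restriction, contradicting the binary case already established. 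You would need this reduction, or a block-matrix substitute for it, to close the argument for general multinomial families.
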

\begin{proof}
$(\Leftarrow)$ 
Here, we give a proof based on Theorem \ref{cor:positivedefiniteregion}, which assumes the \ifa is multinomial or fixed-mean Gaussian.
However, this direction of the statement is valid for any \ifa. 
(See appendix \ref{app:sec:convexity}.)

From Proposition \ref{prop:PFboundM},
the Perron-Frobenius eigenvalue $\alpha$ is equal to $1$ if $n(H)=0$
and $0$ if $n(H)=1$.
Using Theorem \ref{cor:positivedefiniteregion} with norm $\norm{\cdot}_2$ and Proposition \ref{app:prop:normccm},
we obtain $L_{\alpha^{-1}}=L$.
Therefore, the \Bfe function is convex over the domain $L$.\\
$(\Rightarrow)$ We prove for each \ifa.\\
{\bf Fixed-mean Gaussian case}\\
Let $G=(V,E)$ be a graph.
For $t \in [0,1]$, let us define $\eta_{ii}(t):=1$ and $\eta_{ij}(t):=t$.
Accordingly, $u^{ij}_{\edij}=t^2$ and $\bseta(t) \in L$.
As $t \nearrow 1$, $\bseta(t)$ approaches to a boundary point of $L$.
From Theorem \ref{thm:Hashimoto},
\begin{align*}
 \det(\nabla^2 F(t)) (1-t^2)^{2|E|+|V|-1}
&= 2^{-|V|}
Z_{G}(t^2)^{-1}
(1-t^2)^{-|E|+|V|-1} \\
&
\longrightarrow
-2^{|E|-2|V|+1}(|E|-|V|) \kappa(G) \quad (t \rightarrow 1).
\end{align*}
If $n(G)=|E|-|V|+1 > 1$, the limit value is negative.
Therefore, at least in a neighborhood of the limit point, $\Hesse F$ is not positive definite.\\
{\bf Multinomial case}\\
First, we consider binary case, i.e. $\phi_{i}(x_i)=x_i \in \{ \pm 1\}$.
For $t \in [0,1]$, let us define $\eta_{ij}(t):=t$ and $\eta_{i}(t):=0$.
Accordingly, $u^{\alpha}_{\edij}=t$ and $\bseta(t) \in L$.
As $t \nearrow 1$, $\bseta(t)$ approaches to a boundary point of $L$.
Using Theorem \ref{thm:Hashimoto}, analogous to the fixed-mean Gaussian case,
we see that $\det(\nabla^2 F(t))$ becomes negative as $t \rightarrow 1$ if  $n(H)=1-\chi (H) > 1$.
Therefore, $F$ is not convex on $L$.

For general multinomial \ifas, the non convexity of $F$ is deduced from the binary case.
There is a face of (the closure of) $L(\mathcal{I})$ 
that is identified with the set of pseudomarginals of the binary \ifa on the same factor graph.
From Eq.~(\ref{eq:BFEmultinomial}) and $0 \log 0 =0$, 
we see that the restriction of $F$ on the face is the \Bfe function of the binary \ifa.
Since this restriction is not convex, $F$ is not convex.
\end{proof}

\section{Stability of LBP}
\label{sec:stability}
In this section we discuss the local stability of LBP and the local
structure of the Bethe free energy around an LBP fixed point.
In the celebrated paper \cite{YFWGBP}, which introduced the equivalence between LBP and the Bethe approximation,
Yedidia et al empirically found that locally stable LBP fixed points are local minima of the \Bfe function.
Heskes have shown  that a locally stable fixed point of arbitrary damped LBP is a local minima of the \Bfe function 
for the multinomial models \cite{Hstable}.  
In this section, we extend the property to the fixed-mean Gaussian cases, 
applying our spectral conditions of stability and positive definiteness.
Since the converse of the property is not necessarily true in general, we also elucidate
the gap.

First, we regard the LBP update as a dynamical system.
As discussed in Section \ref{sec:LBPstability},
the (parallel) LBP algorithm can be formulated as 
repeated applications of a map $T$ on the \nparas of the messages.
Explicit expression of this map is given in Eq.~(\ref{eq:eparaLBPupdate}).
The differentiation $T'$ at the fixed point, which is computed in Theorem \ref{thm:diffofLBP},
is rewritten as follows.
\begin{prop}
\label{prop:diffTisM}
At an LBP fixed point $\bseta \in L$,
\begin{equation}
 T' = \matmu,
\end{equation}
where $\bsu=\{  u^{\alpha}_{\edij} \}$ is given by Eq.~(\ref{def:u}).
\end{prop}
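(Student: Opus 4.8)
The plan is to read this off as a pure bookkeeping identity: Theorem~\ref{thm:diffofLBP} has already computed the Jacobian of $T$ at a fixed point entry-by-entry, and the definition Eq.~(\ref{def:matmu}) gives the entries of $\matmu$ entry-by-entry, so I would simply verify that the two matrices coincide once the two indexings of directed edges are matched up. There is no analytic content here; the fixed-point hypothesis enters only through Theorem~\ref{thm:diffofLBP} (which itself rests on the local-consistency identity $\var{b_i}{\phi_i}=\var{b_{\alpha}}{\phi_i}$ of Eq.~(\ref{eq:localconsistency})).

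First I would fix the correspondence between the message-passing notation and the zeta notation for directed edges. Writing $e=(\edai)$ so that $s(e)=\alpha$ and $t(e)=i$, the Jacobian $T'$ has $(T')_{e,e'}=\pd{T(\bsmu)_{\edai}}{\mu_{\edbj}}$ for $e'=(\edbj)$. By Theorem~\ref{thm:diffofLBP} this equals $\var{b_i}{\phi_i}^{-1}\cov{b_{\alpha}}{\phi_i}{\phi_j}$ when $j\in N_{\alpha}\smallsetminus i$ and $\beta\in N_j\smallsetminus\alpha$, and vanishes otherwise. As recorded in the remark following Theorem~\ref{thm:diffofLBP}, the condition ``$j\in N_{\alpha}\smallsetminus i$ and $\beta\in N_j\smallsetminus\alpha$'' is precisely the relation $(\edbj)\rightharpoonup(\edai)$, i.e.\ $\etea$ with this $e,e'$. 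Hence $T'$ is supported exactly on the pairs $(e,e')$ with $\etea$ — the same support as $\matmu$ by Eq.~(\ref{def:matmu}).

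Next I would compare the nonzero entries. On such a pair $\matmu_{e,e'}=u_{\etea}$, which under the reduction of matrix weights used throughout Section~\ref{sec:detIhara} equals $u^{s(e)}_{\ed{t(e')}{t(e)}}=u^{\alpha}_{\ed{j}{i}}$. Substituting $i\mapsto j$ and $j\mapsto i$ into the defining Eq.~(\ref{def:u}) gives $u^{\alpha}_{\ed{j}{i}}=\var{b_i}{\phi_i}^{-1}\cov{b_{\alpha}}{\phi_i}{\phi_j}$, which is exactly the value of $(T')_{e,e'}$ found above. Therefore $(T')_{e,e'}=\matmu_{e,e'}$ for every pair of directed edges, and so $T'=\matmu$ at the fixed point.

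The only point needing a moment of care — the \emph{main obstacle}, such as it is — is the support match, specifically the degenerate pair $\beta=\alpha$ with $j\in\alpha\smallsetminus i$: one must confirm that it contributes $0$ to $T'$, which it does because $T(\bsmu)_{\edai}$ does not depend on $\mu_{\edaj}$ (inspect the update rule Eq.~(\ref{LBPupdate})), and that it is likewise excluded from $\etea$ by the clause $\beta\neq\alpha$ in the description of that relation. Once this is checked, the proposition follows immediately from Theorem~\ref{thm:diffofLBP}, Eq.~(\ref{def:matmu}) and Eq.~(\ref{def:u}).
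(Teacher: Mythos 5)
Your proposal is correct and follows exactly the route the paper takes: the proposition is just Theorem~\ref{thm:diffofLBP} re-indexed by directed edges, with the support condition translated into the relation $\rightharpoonup$ and the nonzero entry read off as $u^{\alpha}_{\ed{j}{i}}$ from Eq.~(\ref{def:u}). Your check of the degenerate pair $\beta=\alpha$ is the one point of substance, and you resolve it correctly: the update Eq.~(\ref{LBPupdate}) excludes $\beta=\alpha$ from the product, and the relation $(\edbj)\rightharpoonup(\edai)$, as used in Lemma~\ref{lem:decM} and in the remark after Theorem~\ref{thm:diffofLBP}, requires $\beta\in N_j\smallsetminus\alpha$.
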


\subsection{Spectral conditions}
Let $T$ be the LBP update map.
A fixed point $\bsmu^{*}$ is called {\it locally stable}\footnote{ This property is often referred to as {\it asymptotically stable} \cite{GHnonlinear}.}
if LBP starting with a point sufficiently close to $\bsmu^{*}$ converges to $\bsmu^{*}$.
To suppress oscillatory behaviors of LBP,
{\it damping} of update $T_{\epsilon }:=(1- \epsilon ) T+ \epsilon I$
is sometimes useful, where $0 \leq \epsilon < 1$ is a damping strength
and $I$ is the identity matrix.

As we will summarize in the following theorem,
the local stability is determined by the linearization $T'$ at the fixed point.
Since $T'$ is nothing but $\matmu$ at an LBP fixed point,
the \Bzf naturally derives relations between the local stability and the Hessian of the \Bfe function.

\begin{thm}
Let $\mathcal{I}$ be a multinomial or a fixed-mean Gaussian model.
Let $\bsmu^{*}$ be an LBP fixed point and assume that $T'(\bsmu^{*})$ has no eigenvalues of unit modulus for simplicity.
Then the following statements hold.
\begin{enumerate}
 \item $\spec{T'(\bsmu^{*})}  \subset \{ \lambda \in \mathbb{C}| |\lambda| < 1\}$ $\iff$ LBP is locally stable at $\bsmu^{*}$.
 \item $\spec{T'(\bsmu^{*})}  \subset \{ \lambda \in \mathbb{C}| {\rm  Re}\lambda < 1\}$ \hspace{-2mm} $\iff$ \hspace{-2mm}
       LBP is locally stable at $\bsmu^{*}$ with some damping.
 \item $\spec{T'(\bsmu^{*})}  \subset \mathbb{C} \smallsetminus \mathbb{R}_{\geq 1}$ $\Rightarrow$
       $\bsmu^{*}$ is a local minimum of BFE. 
\end{enumerate}
\end{thm}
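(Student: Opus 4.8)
The plan is to prove the three statements by combining standard dynamical-systems facts with the key structural identity $T'(\bsmu^*) = \mathcal{M}(\bsu)$ from Proposition \ref{prop:diffTisM}, together with the Bethe-zeta formula (Theorem \ref{thm:BZ}) and the positive-definiteness criterion (Theorem \ref{thm:positive}). Statements 1 and 2 are essentially the Hartman--Grobman / linear stability theory applied to the map $T$ (resp. the damped map $T_\epsilon$), so the only work there is to translate between spectra of $T'$ and $T_\epsilon' = (1-\epsilon)T' + \epsilon I$. Statement 3 is the genuinely new assertion and is where the Bethe-zeta formula enters; I would spend most of the proof on it.

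For statement 1: since $T$ is a smooth self-map on the (open) parameter space of messages and $\bsmu^*$ is a fixed point with $T'(\bsmu^*)$ having no eigenvalue of unit modulus, the fixed point is locally asymptotically stable if and only if $\specr{T'(\bsmu^*)} < 1$, i.e. $\spec{T'(\bsmu^*)} \subset \{|\lambda|<1\}$. This is the standard linearization criterion for discrete dynamical systems (see e.g. \cite{GHnonlinear}); I would just cite it. For statement 2: the damped update has derivative $T_\epsilon'(\bsmu^*) = (1-\epsilon)T'(\bsmu^*) + \epsilon I$, whose eigenvalues are $(1-\epsilon)\lambda + \epsilon$ for $\lambda \in \spec{T'(\bsmu^*)}$. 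As $\epsilon \to 1^-$ these converge radially toward $1$, and an elementary computation shows that $|(1-\epsilon)\lambda+\epsilon| < 1$ for some $\epsilon \in [0,1)$ if and only if $\mathrm{Re}\,\lambda < 1$ (the condition that the segment from $\lambda$ to $1$ enters the open unit disk). Combining with statement 1 applied to $T_\epsilon$ gives the equivalence. The main subtlety is the phrase ``with some damping'': one has to note that a single $\epsilon$ works for all eigenvalues simultaneously, which follows because there are finitely many eigenvalues and each individual threshold $\epsilon_\lambda < 1$.

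For statement 3, the idea is: the hypothesis $\spec{T'(\bsmu^*)} \subset \mathbb{C} \setminus \mathbb{R}_{\geq 1}$ is, by Proposition \ref{prop:diffTisM}, exactly $\spec{\mathcal{M}(\bsu)} \subset \mathbb{C} \setminus \mathbb{R}_{\geq 1}$ where $\bsu$ is built from the fixed-point beliefs via Eq.~(\ref{def:u}). By Theorem \ref{thm:positive} (valid for multinomial and fixed-mean Gaussian inference families), this implies $\nabla^2 F(\bseta) $ is positive definite at the corresponding point $\bseta \in L$. Since an LBP fixed point is a stationary point of $F$ on $L$ (characterization 4 in Theorem \ref{thm:LBPcharacterizations}), a stationary point with positive definite Hessian is a strict local minimum. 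That gives the conclusion. I would write this as a short chain: hypothesis $\Rightarrow$ spectral condition on $\mathcal{M}(\bsu)$ $\Rightarrow$ (Theorem \ref{thm:positive}) $\nabla^2 F \succ 0$ $\Rightarrow$ (stationarity) local minimum.

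The main obstacle is a bookkeeping one rather than a deep one: making sure the identification between the message-parameter coordinates (in which $T$ and $T'$ live) and the pseudomarginal coordinates $\{\pa{\eta},\eta_i\}$ on $L$ (in which $\nabla^2 F$ is computed) is consistent, so that ``$\bsmu^*$ is a local minimum of BFE'' genuinely follows from ``$\nabla^2 F(\bseta) \succ 0$ at the corresponding $\bseta$.'' One must invoke Theorem \ref{thm:LBPcharacterizations} to pass from the fixed point $\bsmu^*$ to the stationary point $\bseta \in L$ of $F$, and note that near an LBP fixed point the local minimum structure of the BFE is intrinsic (independent of which of the equivalent parametrizations one uses). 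Also worth a remark: the converse of statement 3 fails in general --- there can be eigenvalues $\lambda \in \mathbb{R}_{\geq 1}$ producing a saddle of $F$ that LBP may still approach along a stable manifold --- and I would note this gap explicitly as the text promises, but it requires no proof.
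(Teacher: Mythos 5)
Your proposal is correct and follows essentially the same route as the paper's own (very terse) proof: statement 1 by citing standard linearization theory, statement 2 by the eigenvalue computation for $T_{\epsilon}'=(1-\epsilon)T'+\epsilon I$, and statement 3 as a direct consequence of Proposition \ref{prop:diffTisM} and Theorem \ref{thm:positive}. Your version simply fills in details the paper leaves implicit, such as the explicit verification that a single damping parameter works for all eigenvalues and the coordinate bookkeeping via Theorem \ref{thm:LBPcharacterizations}.
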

\begin{proof}
$1.:$ This is a standard result. (See \cite{GHnonlinear} for example.)
~$2.:$
There is an $\epsilon \in [0,1)$ that satisfy $\spec{T_{\epsilon}'(\bsmu^{*})}  \subset \{ \lambda \in \mathbb{C}| |\lambda| < 1 \}$
if and only if
$\spec{T'(\bsmu^{*})}  \subset \{ \lambda \in \mathbb{C}| {\rm  Re}\lambda < 1\}$.
~$3.:$
This assertion is a direct consequence of Theorem \ref{thm:positive} and Proposition \ref{prop:diffTisM}.
\end{proof}

This theorem immediately resolve the conjecture of Yedidia et al \cite{YFWGBP}:
locally stable LBP fixed points are local minima of the \Bfe.
Since they only discusses multinomial models, their experiments seems to have been performed for the models. 
Extending the statement, 
the theorem implies that, for both multinomial and fixed-mean Gaussian cases,
locally stable fixed points of arbitrary damped LBP are local minima of the \Bfe function.

Heskes \cite{Hstable} have proved that locally stable fixed points are ``local minima of the \Bfe,''
where the \Bfe function is defined on a restricted set. 
We give the following remarks 
to see that their result actually resolves the Yedidia's conjecture.
He considers the \Bfe function on a subset of $L$;
\begin{equation*}
 S := \left\{ 
\{\pa{\eta} , \eta_i\} \in L ~| 
 ~{}^{\forall} \alpha = \{i_1, \ldots, i_k\} \in F,
\ \pa{\eta} = \Lambda_{\alpha} \pa{ \left( \pa{\bar{\theta}}, \Lambda_{i_1}(\eta_{i_1}), \ldots ,  \Lambda_{i_1}(\eta_{i_k}) \right)  }
~\right\},
\end{equation*}
where $\pa{\bar{\theta}}$ is a constant given by the compatibility function from Eq.~(\ref{eq:asm:modelindludes}).
In other words, $S$ is a set of pseudomarginals $\beliefsw$ that satisfies
\begin{equation*}
 ~{}^{\forall} \alpha \in F, ~ {}^{\exists} \{ \va{\theta'}{i} \}_{i \in \alpha},
\quad
 b_{\alpha}(x_{\alpha}) \propto  \exp 
( 
\inp{ \pa{\bar{\theta}} }{\phi_{\alpha}(x_{\alpha})} + \sum_{i \in \alpha} \inp{  \va{\theta'}{i} }{\phi_i(x_i)}
).
\end{equation*}
Obviously, all LBP fixed points are in $S$.
We can take a coordinate $\{\eta_i\}_{i \in V}$ of $S$ because $\eta_{\alpha}$ is a function of $\{\eta_i\}_{i \in \alpha}$.
The restriction of the \Bfe function $F$ to this set is denoted by $\hat{F}$.
It is straightforward to check that the stationary points of $\hat{F}$ correspond to the LBP fixed points, that is,
\begin{equation*}
 \pd{\hat{F}(\{\eta_i\})}{\eta_j}=0  \quad {}^{\forall} j \in V \iff 
 \{ \eta_i \} \in S   \text{ is an LBP fixed point.} 
\end{equation*}
In \cite{Hstable}, for multinomial models, Heskes have shown that if $\{\eta_i\} \in S$ is a locally stable fixed point of (arbitrary damped) LBP,
it is a local minimum of $\hat{F}$.
This statement is equivalent to the statement replaced by ``local minima of $F$.''
In fact, we can easily check that the positive definiteness of the Hessian of $\hat{F}$ is equivalent to
that of $F$.

It is interesting to ask under which condition a local minimum of the
\Bfe function is a locally stable fixed point of (damped) LBP. 
An implicit reason for the empirical success of the LBP algorithm is that
LBP finds a ``good'' local minimum rather than a local minimum nearby the initial point. 
The theorem gives a partial answer to the question, i.e., the difference between
stable local minima and unstable local minima, in terms of the spectrum of $T'(\bsmu^{*})$.
It is noteworthy that we do not know whether the converse of the third statement holds.

\subsection{Pairwise binary case}
Here we focus on binary pairwise attractive models.
In this special case, the stable fixed points of LBP and the local minima of \Bfe function
are less different.

The given graphical model $\Psi=\{\Psi_{ij},\Psi_i\}$ is called attractive if $J_{ij}\geq 0$,
where  $\Psi_{i}(t)=\exp( h_i x_i )$ and $\Psi_{ij}(x_i,x_j)= \exp( J_{ij} x_i x_j)$ $~(x_i,x_j \in \{ \pm 1\})$.
The following theorem implies that if a stable fixed
point becomes unstable by changing $J_{ij}$ and $h_i$, the
corresponding local minimum also disappears.

\begin{thm}
\label{thmattractive}
Let us consider
continuously parameterized attractive models
$\{\Psi_{ij}(t),\Psi_{i}(t)\}$,
e.g. $t$ is a temperature: $\Psi_{ij}(t)=\exp(t^{-1} J_{ij}x_i x_j)$
and
$\Psi_{i}(t)=\exp(t^{-1} h_i x_i )$.
For given $t$, run LBP algorithm and find a (stable) fixed point.
If we continuously change $t$ and
see the LBP fixed point becomes unstable across $t=t_0$,
then the corresponding local minimum of the Bethe free energy becomes
a saddle point across $t=t_0$.
\end{thm}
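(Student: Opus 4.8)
The plan is to combine the sign information contained in the \Bzf with the Perron--Frobenius structure that attractiveness forces on the linearization matrix $\matmu$.

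First I would observe that, at any LBP fixed point of an attractive binary pairwise model, the matrix $\matmu = T'$ (Proposition \ref{prop:diffTisM}) is entrywise nonnegative. Indeed, by Theorem \ref{thm:LBPcharacterizations} the pair belief has the form $b_{ij}(x_i,x_j) \propto \exp(J_{ij}x_ix_j + \theta_i' x_i + \theta_j' x_j)$ with the same coupling $J_{ij}\geq 0$ as the model; such a measure is log-supermodular, so by the FKG inequality $\cov{b_{ij}}{\phi_i}{\phi_j}\geq 0$, and hence every weight $u^{ij}_{\edij} = \var{b_j}{\phi_j}^{-1}\cov{b_{ij}}{\phi_j}{\phi_i}$ is nonnegative. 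Consequently $\rho(\matmu)$ is itself an eigenvalue of $\matmu$, so it is real and nonnegative, and the loss of local stability at $t=t_0$ must occur through $\rho(\matmu)$ crossing the value $1$: $\rho(\matmu)<1$ for $t<t_0$ near $t_0$, and $\rho(\matmu)>1$ for $t>t_0$ near $t_0$.

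Next I would track the sign of $\det(\nabla^2 F)$ along the curve of fixed points. Combining the \Bzf (Theorem \ref{thm:BZ}) with the first determinant formula $\zeta_H(\bsu)^{-1}=\det(I-\matmu)$, and using that all the variance determinants appearing there are strictly positive, one gets $\sgn\det(\nabla^2 F) = \sgn\det(I-\matmu) = \sgn\prod_{\lambda\in\spec{\matmu}}(1-\lambda)$. Pairing complex-conjugate eigenvalues, each such pair contributes a factor $|1-\lambda|^2>0$, so the sign is governed entirely by the real eigenvalues exceeding $1$. For $t<t_0$ all eigenvalues lie in $\{|\lambda|<1\}$, so $\spec{\matmu}\subset\mathbb{C}\smallsetminus\mathbb{R}_{\geq 1}$ and Theorem \ref{thm:positive} (via Proposition \ref{prop:diffTisM}) already gives that $\nabla^2 F$ is positive definite, hence $\det(\nabla^2 F)>0$, a genuine local minimum. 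For $t>t_0$ near $t_0$, the Perron eigenvalue $\rho(\matmu)>1$ is the unique real eigenvalue above $1$, so exactly one factor $1-\rho(\matmu)$ is negative and $\det(\nabla^2 F)<0$.

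Finally I would conclude by continuity: the curve of fixed points, and hence $\nabla^2 F$ evaluated along it, depends continuously on $t$ (by the implicit function theorem, valid for $t\neq t_0$ where $\nabla^2 F$ is nonsingular); it is positive definite just below $t_0$ and has negative determinant just above $t_0$, so exactly one eigenvalue changes sign across $t_0$ while the others stay positive. Thus for $t$ slightly above $t_0$ the Hessian of the \Bfe has exactly one negative and the rest positive eigenvalues, i.e.\ the critical point is a saddle, as asserted. The main obstacle I anticipate is the bookkeeping at the transition: I must rule out that some other real eigenvalue of $\matmu$ sits at $1$, or that a conjugate pair hits the unit circle, simultaneously at $t=t_0$. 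The nonnegativity of $\matmu$, forcing $\rho(\matmu)$ to dominate the spectrum and be simple in the generic situation, is precisely what makes this clean, but a fully rigorous statement must read the hypothesis ``stability is lost at $t_0$'' as a simple crossing of $\rho(\matmu)$ through $1$.
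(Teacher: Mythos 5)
Your proposal is correct and follows essentially the same route as the paper's own proof: attractiveness forces $u_{\edij}\geq 0$ via the sign of $\cov{b_{ij}}{x_i}{x_j}$, Perron--Frobenius locates the instability as $\rho(\matmu)$ crossing $1$, and the \Bzf transfers the resulting sign change of $\det(I-\matmu)$ to $\det(\nabla^2 F)$. Your extra bookkeeping (pairing conjugate eigenvalues, the continuity argument showing exactly one Hessian eigenvalue turns negative, and the explicit caveat that the hypothesis must be read as a simple crossing of $\rho(\matmu)$ through $1$) only makes explicit what the paper leaves implicit.
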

\vspace{-4mm}
\begin{proof}
From Eq.~(\ref{eq:defbelief2}),
we see that 
$b_{ij}(x_i,x_j) \propto \exp (J_{ij}x_i x_j + \theta_i x_i+ \theta_j x_j)$
for some $\theta_i$ and $\theta_j$.
From $J_{ij} \geq 0$,
we have $\cov{b_{ij}}{x_i}{x_j} \geq 0$, and thus $u_{\edij} \geq 0$.
When the LBP fixed point becomes unstable,
the Perron-Frobenius eigenvalue of $\matmu$ goes over $1$, which means
$\det(I-\matmu)$ crosses $0$.
From Theorem \ref{thm:BZ}, we see that
$\det(\nabla^{2} F)$ becomes positive to negative at $t=t_0$.
\end{proof}
Theorem \ref{thmattractive} extends Theorem 2 of \cite{MKproperty}, which discusses only
the case of vanishing local fields $h_i=0$
and the trivial fixed point (i.e. $\E{b_i}{x_i}=0$).

\section{Discussion}
This chapter developed the Bethe-zeta formula
for general \ifas including multinomial and Gaussian families.
The formula says that the determinant of the Hessian of the \Bfe function is 
the reciprocal of the graph zeta function up to positive factors.
The underlying mathematical structure that makes the \Bzf hold was the two dualistic variational
characterizations of the LBP fixed points.
In the proof of the formula, we utilized the languages of exponential families and the graph zeta function,
including dual convex functions and the \IB type determinant formula.
The key condition satisfied on the set $L$ was $\var{b_{\alpha}}{\phi_i}=\var{b_i}{\phi_i}$.

In Section \ref{sec:PDC} and \ref{sec:stability}, we discussed applications of the formula,
demonstrating its utility.
First, we applied the formula to analyze the positive definiteness of the Bethe free energy function,
focusing on multinomial and fixed-mean Gaussian \ifa.
Our analysis showed that the region, where the Hessian of the \Bfe function is positive definite,
shrinks as the pole of the Ihara zeta function $\alpha^{-1}$ approaches to zero.
Secondly, we analyzed the stability of the LBP algorithm.
At an LBP fixed point, the matrix $\matmu$, appears in the first determinant formula of the graph zeta function,
is equal to the linearization of the LBP update.
This connection derives the relation between the local minimality and the local stability of the fixed point.   
Applying our spectral conditions, we gave a simple proof of the Yedidia's conjecture:
locally stable fixed points of LBP are local minima of the \Bfe function.

The \Bzf shows that the \Bfe function contains information on the graph geometry, especially 
on the prime cycles.
At the same time, the formula helps to extract graph information from the \Bfe function.
We observed that some values derived from the \Bfe function are related to
graph characteristics such as the number of the spanning trees.

For a tree structured hypergraph, LBP algorithm and the \Bfe function are in a sense trivial; the graph zeta function is also trivial for a tree.
The results of this chapter provide concrete mathematical relations between these trivialities.
For example, $\zeta_{H}(u)=1$ implies that there is no pole, and thus $\alpha^{-1}=\infty$.
Therefore, the \Bfe function is convex and the linearization $T'$ at the unique fixed point is a nilpotent matrix,
which is necessary for the finite step termination of the LBP algorithm.

\chapter{Uniqueness of LBP fixed point}\label{chap:unique}

\section{Introduction}
This chapter provides a new approach to analyze the uniqueness of the LBP fixed point.
Since the LBP fixed points are the solutions of the LBP update equation,
it is natural to ask whether there is a solution; if exist, is it unique?
For multinomial cases, at least one fixed point exists
because the \Bfe function is bounded from below and
a stationary point of the \Bfe function is an LBP fixed point \cite{YFWconstructing}.
Furthermore, if the underlying hypergraph is tree or one-cycle, the solution is unique \cite{W1loop};
this result is obvious as we have shown the convexity of the \Bfe functions for these hypergraphs in Section \ref{sec:convexitycondition}.

From the viewpoint of approximate inference, the uniqueness of LBP fixed point is a preferable property.
Since LBP algorithm is interpreted as the variational problem of the \Bfe function,
an LBP fixed point that correspond to the global minimum is believed to be the best one.
If we find the unique fixed point of the LBP algorithm, it is guaranteed to be the global minimum.

For multinomial models, there are several works that give sufficient conditions for the uniqueness property.
In \cite{Huniquness}, Heskes analyzed the uniqueness problem by
considering equivalent minimax problem.
Other authors analyzed the convergence property rather than the uniqueness.
The LBP algorithm is said to be {\it convergent} if the messages converge to the unique fixed point
irrespective of the initial messages.
By definition, this property is stronger than the uniqueness.
Tatikonda et al. \cite{TJgibbsmeasure} utilized the theory of Gibbs measure.
They have shown that the uniqueness of the Gibbs measure implies the convergence of 
LBP algorithm.
Therefore, known sufficient conditions of the uniqueness of the Gibbs measure are
that of the convergence of LBP algorithm.
Ihler et al. \cite{IFW} and Mooij et al. \cite{MKsufficient} derived sufficient conditions
for the convergence by investigating conditions that make the LBP update a contraction.

In this chapter, focusing on binary pairwise models,
we propose a novel differential topological approach to this problem.
Empirically, one of the strongest condition that is applicable to arbitrary graph is the spectral condition by Mooij et al. \cite{MKsufficient}.
Our approach is powerful enough to reproduce the condition.
For hypergraphs with nullity two, we prove the uniqueness property even though LBP is not necessarily convergent.
Although our discussions in this chapter are restricted to binary pairwise models,
the method can be basically extended to multinomial models.

\subsection{Idea of our approach}
In our approach, in combination with the Bethe-zeta formula,
the index sum theorem is the basic apparatus.
Conceptually, the theorem has the following form:
\begin{equation}
 \sum_{\bseta \text{: LBP fixed point}} \hspace{-2mm} {\rm Index}(\bseta)=1, \label{eq:conceptindex}
\end{equation}
where ${\rm Index}(\bseta)$ is $+1$ or $-1$ and determined by local properties of the \Bfe function at $\bseta$.
The sum is taken over all fixed points of LBP, that is,
all stationary points of the \Bfe function $F$.
If we can guarantee that the index
of any fixed point is $+1$
in advance of running LBP,
we conclude that the fixed point of LBP is unique.

The formula (\ref{eq:conceptindex}) might look surprising,
but such formulas, that connect the global and the local structure, 
are often seen in differential topology.
The simplest example that illustrates the idea of the theorem is sketched
in Figure \ref{figureExampleIndexFormula1}.
In this example, the sum is taken over all the stationary points of the function
and the indices are assigned depending on the sign of 
the second derivative at the points.
When we deform the objective function, the sum is still equal to one
as long as the outward gradients are positive at the boundaries
(see Figure \ref{figureExampleIndexFormula2}).
This example suggests that the important feature for the formula is
the behavior of the function near the boundary of the domain.

\begin{figure}
\begin{minipage}{.5\linewidth}
\begin{center}
\includegraphics[scale=0.3]{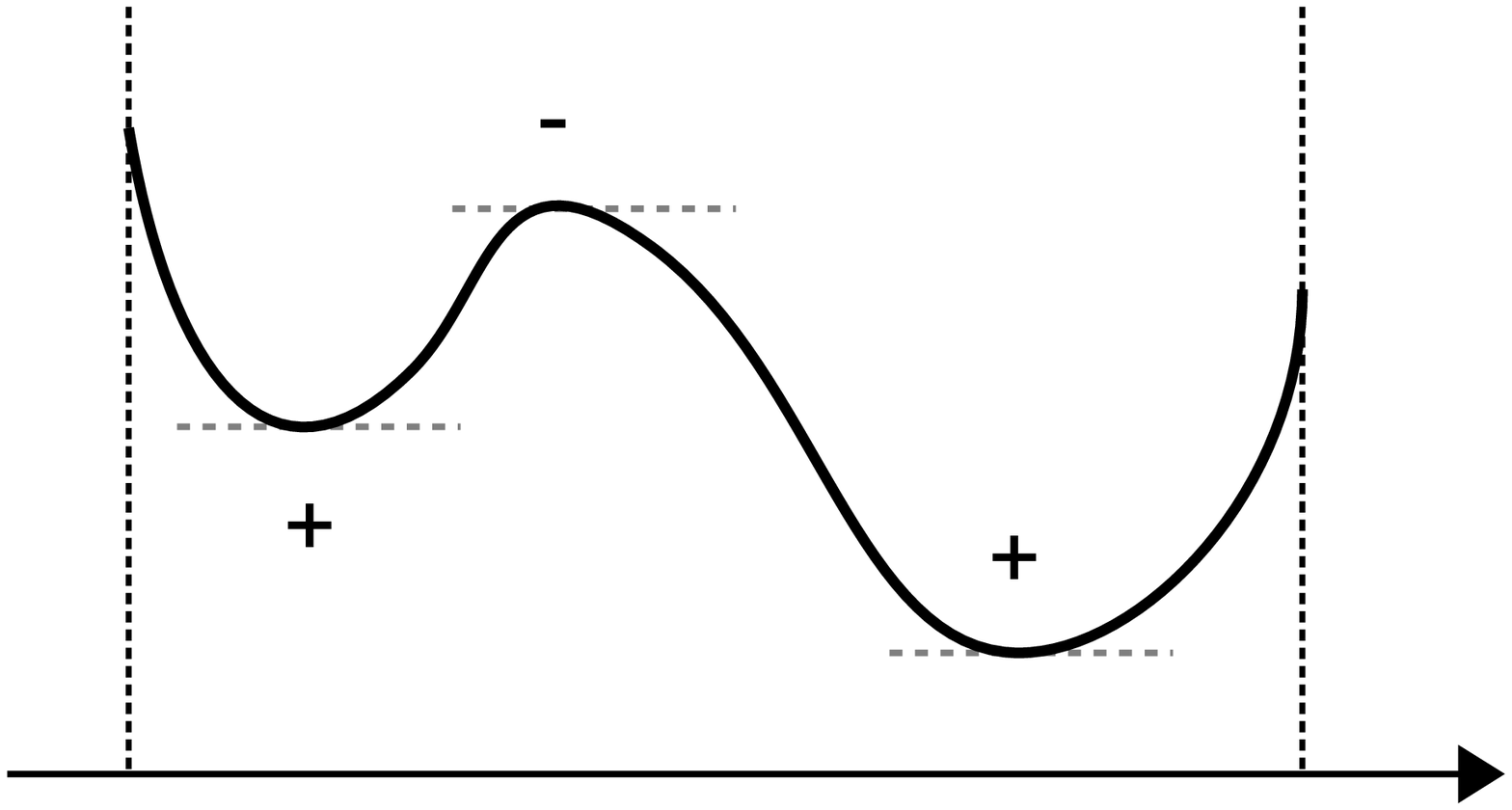} 
\caption{The sum of indices is one. \label{figureExampleIndexFormula1}}
\end{center}
\end{minipage}
\begin{minipage}{.5\linewidth}
\begin{center}
\includegraphics[scale=0.3]{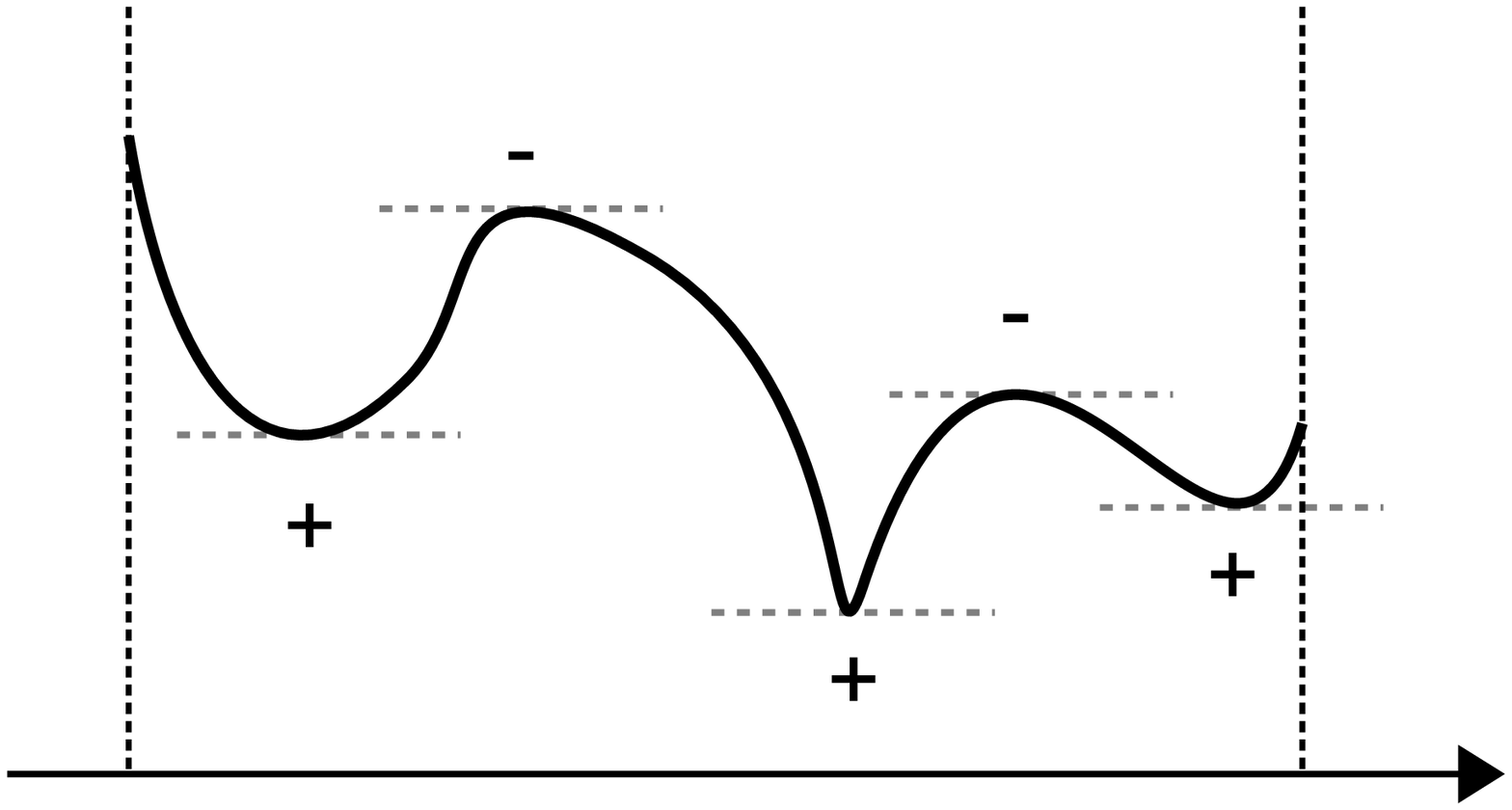} 
\caption{The sum of indices is still one. \label{figureExampleIndexFormula2}}
\end{center}
\end{minipage}
\end{figure}

Simsek et al. have shown a index sum formula, called generalized Poincare-Hopf theorem \cite{SOAgeneralizedPoincare-Hopf}.
In this formula, the indices of the stationary points in the (not necessarily smooth) boundary are summed as well as those in the interior.
The theorem is applied to show the uniqueness of stationary point in non-convex optimization problems such as 
Nash equilibrium \cite{SOAgeneralizedPoincare-Hopf} and
network equilibrium \cite{SOAbox,TWLCequilibrium}.
However, their theorem can not be applied to our \Bfe function because
the behavior of the function near the boundary is so complicated that it is difficult to handle.

We prove the index sum formula utilizing a property of the \Bfe function:
the gradient of the \Bfe function diverges as a point approaches to the boundary of the domain.

\subsection{Overview of this chapter}
This chapter is organized as follows.
In Section \ref{sec:indexsum},
we prove the index sum formula of the \Bfe function using two lemmas.
The first lemma describes an important property of the \Bfe function: 
the divergence of the norm of the gradient vector at the boundary of the domain.
The second lemma is a standard result in differential topology.
The index sum formula, combined with the \Bzf, provides a powerful method of proving the uniqueness;
we will prove the following two results utilizing this method.
Section \ref{sec:uniqueness} proves a uniqueness condition for general graphs, which is a reproduction of 
the condition by Mooij et al. \cite{MKsufficient}. 
In Section \ref{sec:uniqueness2loop}, we focus on graphs of nullity two and prove that the fixed point of LBP is unique
if it is not attractive.

\section{Index sum formula}\label{sec:indexsum}
The purpose of this section is to show the index sum formula presented in the following theorem.
Throughout this chapter, the \ifa is binary pairwise 
and the given graphical model is 
\begin{equation}
   p(x)=\frac{1}{Z} \exp(\sum_{ij \in E} J_{ij} x_i x_j  + \sum_{i \in V} h_i x_i), \label{defIsing2}
\end{equation}
where $G=(V,E)$ is a graph and $x_i = \pm 1$.

\begin{thm}
\label{thm:indexsum}
Let $F$ be the \Bfe function on the set of pseudomarginals $L$.
If $\det \Hesse F(\bseta)\neq0$ is satisfied for all $\bseta \in (\nabla F)^{-1}(0)$,
then
\begin{equation}
 \sum_{\bseta \in (\nabla F)^{-1}(0) } \sgn( \det \Hesse F(\bseta) )=1, \label{eq:indexsum}
\end{equation}
where
\begin{equation*}
 \sgn(x):=
\begin{cases}
1  &\text{ if } x>0, \\
-1 & \text{ if } x<0.
\end{cases}
\end{equation*}
We call each summand, which is $+1$ or $-1$, the {\it index} of $F$ at $\bseta$.
\end{thm}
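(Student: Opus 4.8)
The plan is to prove Theorem~\ref{thm:indexsum} by a Poincar\'e--Hopf / Brouwer-degree argument adapted to the domain $L$, which for the binary pairwise \ifa is the interior of a bounded convex polytope (the local polytope) and hence homeomorphic to an open ball. Two ingredients are needed. The first, and the substantial one, is a barrier property of the \Bfe function $F$: I would show that $\|\nabla F(\bseta)\|\to\infty$ as $\bseta$ approaches $\partial L$, and, more precisely, that $\nabla F$ eventually points strictly outward, e.g.\ $\langle\nabla F(\bseta),\bseta\rangle>0$ once $\bseta$ is close enough to $\partial L$ (here I place the basepoint at the origin $\bseta=0$, the uniform pseudomarginals); equivalently, $\nabla F$ has positive inner product with every outward normal of $L$ near the boundary. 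The second ingredient is the standard fact from differential topology that such a vector field has index sum equal to $1$.

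For the barrier property I would first discard the linear part $-\sum_{ij}J_{ij}\eta_{ij}-\sum_i h_i\eta_i$ of $F$, since it only translates $\nabla F$ by a fixed vector and is negligible near $\partial L$; thus the boundary behaviour is governed by the entropy terms $\sum_{ij}\varphi_{ij}+\sum_i(1-d_i)\varphi_i$. Writing $b_{ij},b_i$ for the pseudomarginals attached to $\bseta$, a direct computation with the $b\log b$ form of the $\varphi$'s gives Euler-type identities such as $\langle\nabla_{\eta}\varphi_{ij},\eta\rangle=\varphi_{ij}-\tfrac14\sum_{x_i,x_j}\log b_{ij}(x_i,x_j)$ and $\langle\nabla_{\eta_i}\varphi_i,\eta_i\rangle=\varphi_i-\tfrac12\sum_{x_i}\log b_i(x_i)$, so that, up to bounded terms, $\langle\nabla F(\bseta),\bseta\rangle=-\tfrac14\sum_{ij}\sum_{x_i,x_j}\log b_{ij}(x_i,x_j)+\tfrac12\sum_i(d_i-1)\sum_{x_i}\log b_i(x_i)$. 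The first sum is a nonnegative barrier term that diverges whenever some $b_{ij}(x_i,x_j)\to0$; the second carries the unfavourable sign coming from the coefficients $1-d_i$. The hard part is to show the first sum dominates the second: this would use the marginalisation inequality $b_i(x_i)\ge b_{ij}(x_i,x_j)$, hence $-\log b_i(x_i)\le-\log b_{ij}(x_i,x_j)$, together with the bookkeeping that an active vertex constraint $b_i(x_i^0)\to0$ forces all $d_i$ incident edges to have vanishing entries, so the edge ``credit'' it generates outweighs its own ``debit''. Carrying this counting out uniformly over all configurations of active faces of the local polytope is the main obstacle, since $1-d_i$ can be arbitrarily negative and a term-by-term estimate is not enough; one genuinely has to use the consistency constraints.

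Granting the barrier property, the remainder is routine. The second lemma is the classical statement that if $X$ is a vector field, continuous on $\overline{L}$ and smooth on $L$, with only nondegenerate zeros, all of them contained in a compact subset of $L$, and pointing strictly outward near $\partial L$, then $\sum_{X(p)=0}\sgn\det DX(p)=1$; this follows from the Poincar\'e--Hopf theorem for manifolds with boundary, or directly by taking a sphere $\partial B_r(0)$ close to $\partial L$, homotoping $X/\|X\|$ on it (through nonvanishing fields, which the outward condition permits) to the radial field $\bseta\mapsto\bseta$ of degree $1$, and invoking homotopy invariance of the Brouwer degree. Applying this with $X=\nabla F$: by the barrier property all stationary points of $F$ lie in some ball $B_r(0)\subset L$ and $\nabla F$ is outward-pointing and nonzero outside it; by the hypothesis $\det\Hesse F\neq0$ at every stationary point, these points are isolated, hence finite in number, and $\mathrm{ind}_{\bseta}(\nabla F)=\sgn\det\Hesse F(\bseta)$. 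Summing gives $\sum_{\bseta\in(\nabla F)^{-1}(0)}\sgn(\det\Hesse F(\bseta))=1$, as claimed.
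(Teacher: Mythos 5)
Your two-ingredient plan (a boundary estimate for $\nabla F$ plus a degree/Poincar\'e--Hopf argument) has the same skeleton as the paper's proof, and your Euler-identity computation of $\langle\nabla F(\bseta),\bseta\rangle$ is correct. The gap is that the specific boundary property you rely on --- that $\nabla F$ eventually points strictly outward, e.g.\ $\langle\nabla F(\bseta),\bseta\rangle>0$ near $\partial L$ --- is false, and the failure is already visible in the formula you derive. Take $G=K_n$ and approach the corner $m_i=-1$, $\chi_{ij}=1$ along the locally consistent path $b_i(+)=\epsilon$, $b_{ij}(+,+)=b_{ij}(+,-)=b_{ij}(-,+)=\epsilon/2$, $b_{ij}(-,-)=1-3\epsilon/2$. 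Then the ``credit'' is $-\tfrac14\sum_{ij}\sum_{x}\log b_{ij}\sim-\tfrac{3}{8}n(n-1)\log\epsilon$ while the ``debit'' is $\tfrac12\sum_i(d_i-1)\sum_{x_i}\log b_i\sim\tfrac12 n(n-2)\log\epsilon$, so their sum is $\tfrac{n(n-5)}{8}\log\epsilon$, which tends to $-\infty$ for every $n\ge 6$ (and is $O(1)$, hence of model-dependent sign, for $n=5$). Thus for $K_6$ your vector field points strictly \emph{inward} in your sense along this approach, regardless of $h$ and $J$, and the homotopy of $\nabla F/\|\nabla F\|$ to the radial field cannot be obtained from an outward-pointing condition. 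This is not a bookkeeping issue that sharper use of the consistency constraints will repair: when both endpoints of an edge degenerate, the credit that edge can supply is strictly less than the two half-debits it must cover, and on dense graphs the per-vertex slack coming from the $-1$ in $(1-d_i)$ cannot absorb the accumulated shortfall.

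What rescues the theorem --- and is what the paper does --- is that no directional information is needed. The paper proves only that $\|\nabla F(\bseta_n)\|\to\infty$ as $\bseta_n\to\partial L$ (Lemma~\ref{lem:gradBethe}, a pointwise case analysis of which entries of $b_{ij}$ can vanish, not an inner-product estimate), and then exploits the decomposition $\nabla F=\Phi-\binom{\boldsymbol{h}}{\boldsymbol{J}}$, where $\Phi$ is the gradient of the interaction-free ($h=J=0$) Bethe free energy. On a large compact exhaustion set $C_{n_0}\subset L$ one has $\|\Phi\|>K>\lVert\binom{\boldsymbol{h}}{\boldsymbol{J}}\rVert$ on $\partial C_{n_0}$, so the linear homotopy $\Phi-t\binom{\boldsymbol{h}}{\boldsymbol{J}}$ is nonvanishing there; by constancy of the Brouwer degree over regular values (after retracting the target onto a ball), the index sum of $\nabla F$ equals that of $\Phi$. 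Finally $\Phi^{-1}(0)$ is the single uniform point $m_i=0$, $\chi_{ij}=0$, where $\Hesse F$ is positive definite by Lemma~\ref{lem:zerocorr}, giving the value $1$. If you replace your outward-pointing lemma by this ``homotope to the zero-field gradient'' step, the rest of your argument goes through.
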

Note that the set $(\nabla F)^{-1}(0)$, which is the stationary points of the
Bethe free energy function, coincides with the set of fixed points of LBP.
The condition $\det \Hesse F(\bseta)\neq0$ in the statement is not a strong requirement.
In fact, the Hessian is invertible except for a measure-zero region of $L$ and 
an LBP fixed point does not happen to be in the region generally.

The above theorem asserts that the sum of indices
of all the fixed points must be one.
As a consequence, the number of the fixed points of LBP is always odd.
Not rigorously speaking, the formula implies the existence of LBP fixed points
because if there is no LBP fixed point, the L.H.S. of Eq.~(\ref{eq:indexsum}) is equal to zero.
A rigorous proof of the existence is given by bounding the \Bfe function from below \cite{YFWconstructing}.

This formula is generalized to multinomial models straightforwardly.
For Gaussian models, however, this kind of formula does not hold.
Indeed, even for one-cycle graphs, LBP fixed points do not exist 
if the compatibility functions are, in a sense, strong \cite{NWaccuracy}.

\subsection{Two lemmas}
For the proof, we need two lemmas.
The first lemma shows the divergent behavior of the gradient of the \Bfe function near the boundary
of the domain.

\begin{lem}
\label{lem:gradBethe}
If a sequence $\{\bseta_n \} \subset L$ converges to a point
$\bseta_{*} \in \partial L$, then
\begin{equation}
 \norm{\nabla F(\bseta_n)} \rightarrow \infty,
\end{equation}
where $\partial L$ is the boundary of $L \subset \mathbb{R}^{|V|+|E|}$.
\end{lem}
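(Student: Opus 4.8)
The idea is to make the divergence completely explicit for the binary pairwise \ifa, where the beliefs are elementary functions of the coordinate $\bseta=\{\eta_{ij},\eta_i\}$: one has $b_i(x_i)=\tfrac12(1+\eta_i x_i)$ and $b_{ij}(x_i,x_j)=\tfrac14(1+\eta_i x_i+\eta_j x_j+\eta_{ij}x_ix_j)$, and $L$ is exactly the region on which all these numbers are positive. Hence a point $\bseta_*\in\partial L$ is one at which at least one of the probabilities $b_i^*(x_i)$ or $b_{ij}^*(x_i,x_j)$ equals $0$; and since $b_i(x_i)=b_{ij}(x_i,1)+b_{ij}(x_i,-1)$ for any $j\in N_i$, the vanishing of a vertex probability already forces the vanishing of an edge probability, so in every case there is an edge $ij$ and a pair $(x_i^0,x_j^0)$ with $b^*_{ij}(x_i^0,x_j^0)=0$. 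Differentiating $F$ written in terms of beliefs — the energy part being affine in $\bseta$ and contributing only constants — one obtains
\begin{align*}
\pd{F}{\eta_{ij}}&= -J_{ij}+\tfrac14\log\frac{b_{ij}(1,1)\,b_{ij}(-1,-1)}{b_{ij}(1,-1)\,b_{ij}(-1,1)},\\
\pd{F}{\eta_{i}}&= -h_{i}+\tfrac{1-d_i}{2}\log\frac{1+\eta_i}{1-\eta_i}+\tfrac14\sum_{j\in N_i}\log\frac{b_{ij}(1,1)\,b_{ij}(1,-1)}{b_{ij}(-1,1)\,b_{ij}(-1,-1)}.
\end{align*}
After passing to a subsequence along which every belief entry converges, it then suffices to exhibit one coordinate of $\nabla F(\bseta_n)$ that tends to $\pm\infty$.

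First I would dispose of the case in which no vertex mean tends to $\pm1$. Then all $b_i^*(x_i)>0$, and for the edge $ij$ with $b^*_{ij}(x_i^0,x_j^0)=0$ the corresponding logarithm in $\pd{F}{\eta_{ij}}$ contains $\log b_{ij}(x_i^0,x_j^0)\to-\infty$. A second belief entry of this edge can vanish only if it lies in the same ``diagonal class'' as $(x_i^0,x_j^0)$ — i.e.\ has the same sign of $x_ix_j$ — since two vanishing entries in different classes would share a vertex value and drive $b_i^*$ or $b_j^*$ to $0$, contrary to assumption. Entries of a common class enter the log-ratio with the same sign, so the divergences reinforce and $\pd{F}{\eta_{ij}}(\bseta_n)\to\pm\infty$.

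The substantial case is when some vertex mean tends to $\pm1$; say $\eta_i\to 1$, so $\epsilon_i:=\tfrac12(1-\eta_i)\to0$ and $b_{ij}(-1,\pm1)\to0$ for all $j\in N_i$. Here $\tfrac12\log\frac{1+\eta_i}{1-\eta_i}\sim-\tfrac12\log\epsilon_i$ diverges but enters $\pd{F}{\eta_i}$ with the \emph{negative} weight $1-d_i$, so it must be played off against the $d_i$ incident edge terms; the arithmetic that makes this work is that vertex $i$ carries total weight $d_i+(1-d_i)=1$. For each $j\in N_i$ I would use the bound $b_{ij}(-1,1)\,b_{ij}(-1,-1)\le\epsilon_i^2/4$, which shows the edge term in $\pd{F}{\eta_i}$ is $\ge-\tfrac12\log\epsilon_i+O(1)$ as soon as $b_{ij}(1,1)\,b_{ij}(1,-1)$ stays bounded away from $0$; in the opposite situation exactly one of $b_{ij}(1,\pm1)$ vanishes, and a short manipulation — taking a suitable combination of that edge term with $\pd{F}{\eta_{ij}}$, or choosing $i$ so that $\epsilon_i$ is extremal among the degenerating vertices and comparing rates after the subsequence extraction above — again forces either $\pd{F}{\eta_{ij}}(\bseta_n)$ or the edge term itself to diverge to $+\infty$. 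Summing over $j\in N_i$ then yields $\pd{F}{\eta_i}(\bseta_n)\ge O(1)+\bigl(-\tfrac{d_i}{2}+\tfrac{d_i-1}{2}\bigr)\log\epsilon_i = O(1)-\tfrac12\log\epsilon_i\to+\infty$, the case $d_i=1$ being immediate since the $\varphi_i$-term then drops out.

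The hard part is exactly this last case: a single entropy derivative $\varphi_i'(\eta_i)$ diverges but is ``cushioned'' by the $d_i$ incident edge terms, so no single steepness estimate settles it, and when a whole connected cluster of vertices degenerates at once one must control the competing logarithmic rates. The weight-one identity together with the subsequence/extremal-vertex bookkeeping is designed precisely to resolve this; everything else — the explicit beliefs, the gradient formulas, and the no-degeneration case — is routine.
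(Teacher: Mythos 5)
Your overall strategy is the same as the paper's: reduce to the situation where some vertex marginal $b_i(x_i)$ degenerates, and then play the single $(1-d_i)$-weighted entropy term in $\pd{F}{\eta_i}$ off against the $d_i$ incident edge terms, the point being that the total logarithmic weight at the vertex is $d_i+(1-d_i)=1$. Your Case 1 and the ``numerator bounded away from $0$'' part of Case 2 are correct and complete. The gap is exactly where you say the hard part is, and it is not closed. When a neighbour $k$ of $i$ also degenerates, the edge term $\tfrac14\log\frac{b_{ik}(+,+)b_{ik}(+,-)}{b_{ik}(-,+)b_{ik}(-,-)}$ has a vanishing numerator and can tend to $-\infty$ fast enough to swallow the $-\tfrac12\log\epsilon_i$ you need; your two proposed repairs do not obviously work. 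Choosing $i$ with extremal $\epsilon_i$ does not help: with $\epsilon_k\asymp\epsilon_i$ one only gets $b_{ik}(+,-)=\epsilon_k-b_{ik}(-,-)\ge\epsilon_k-\epsilon_i$, which can be arbitrarily small relative to $\epsilon_i$, so the lower bound on the edge term collapses. The ``combine with $\pd{F}{\eta_{ik}}$'' repair does work, but only conditionally: it requires knowing that $\pd{F}{\eta_{ik}}$ stays bounded, and once you assume boundedness of some gradient coordinates in order to force divergence of another, your direct argument has silently become an argument by contradiction.

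That contradiction formulation is precisely how the paper closes the case, and it is the clean way to finish your proof: assume $\norm{\nabla F(\bseta_n)}\not\to\infty$, pass to a subsequence along which $\nabla F(\bseta_n)$ converges to a finite vector, use the finiteness of each \emph{edge} coordinate to show first that some $b_i^{(n)}(+)\to 0$ and then (this is the paper's Claim, with its three sub-cases according to the limits of $b_{ik}(-,\pm)$) that every normalized edge quantity $\sum_{x_i,x_k}x_i\log\bigl(b_{ik}(x_i,x_k)/b_i(x_i)\bigr)$ converges to a finite value. Rewriting $\pd{F}{\eta_i}$ as $\tfrac12\log b_i(+)-\tfrac12\log b_i(-)$ plus $\tfrac14$ times the sum of these finite quantities, the lone term $\tfrac12\log b_i(+)\to-\infty$ contradicts the assumed finiteness of the vertex coordinate. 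I recommend you restructure Case 2 this way rather than trying to exhibit a divergent coordinate directly; the forward bookkeeping over a cluster of simultaneously degenerating vertices with competing rates is exactly what the contradiction hypothesis lets you avoid.
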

\begin{proof}
Since our model is pairwise and binary,
we choose the sufficient statistics as
$\px{\phi}{ij}(x_i,x_j)=x_i x_j$ and $\phi_{i}(x_i)=x_i$ ($x_i \in \{ \pm 1\}$).
We use notations $m_i:=\eta_i = \E{b_i}{x_i}$ and $\chi_{ij}:=\px{\eta}{ij}=\E{b_{ij}}{x_i x_j}$.
Then the Bethe free energy Eq.~(\ref{defn:Bfe}) is rewritten as
\begin{align}
F(\{m_i,\chi_{ij}\})=
-
\sum_{ij \in E}J_{ij}\chi_{ij}
-
\sum_{i \in V}h_i m_i
+
\sum_{ij \in E} & \sum_{x_i x_j = \pm 1}
\eta \Big(\frac{1+ {m_i x_i} + { m_j x_j }+
 \chi_{ij} x_i x_j}{4} \Big)  \nonumber \\
&+
\sum_{i \in V} (1-d_i) \sum_{x_i = \pm 1}
 \eta \Big(\frac{1+ {m_i x_i}}{2} \Big),
 \label{eq:BetheFreeEnergybinary}
\end{align}
where $\eta (x):=x \log x$. 
The domain of $F$ is written as
\begin{equation}
L:=
\Big\{
\{m_i,\chi_{ij}\} \in \mathbb{R}^{|V|+|E|}|
1+m_i x_i + m_j x_j + \chi_{ij}x_i x_j  > 0
\text{   for all } ij \in E \text{ and } x_i,x_j= \pm 1
\Big\}.\nonumber
\end{equation}
The first derivatives of the \Bfe function are
\begin{align}
\pd{F}{m_i}& = -h_i +
(1-d_i) \frac{1}{2} \sum_{x_i = \pm 1} x_i \log b_i(x_i)
+ \frac{1}{4}\sum_{k \in N_i}\sum_{x_i,x_k=\pm 1}x_i \log
 b_{ik}(x_i,x_k), \label{Bethem} \\
\pd{F}{\chi_{ij}}& = -J_{ij} +
\frac{1}{4} \sum_{x_i,x_j=\pm 1}x_i x_j \log  b_{ij}(x_i,x_j). \label{Bethechi}
\end{align}

Note that it is enough to prove the assertion
when $h_i=0$ and $J_{ij}=0$.
We prove by contradiction.
Assume that $\lVert \nabla F(\bseta_n) \rVert \not \rightarrow \infty$.
Then, there exists $R > 0$ such that 
$\lVert \nabla F(\bseta_n) \rVert \leq R$
for infinitely many $n$.
Let 
$B_0(R)$ be the closed ball of radius $R$ centered at the origin.
Taking a subsequence, if necessary,
we can assume that
\begin{equation}
 \nabla F(\bseta_n) \rightarrow 
{}^{\exists}
\binom{\boldsymbol{\kappa}}{\boldsymbol{\lambda}}
\in 
B_0(R), \label{eq:gradFconverges}
\end{equation}  
because of the compactness of $B_0(R)$.
Let $b_{ij}^{(n)}(x_i,x_j)$ and  $b_{i}^{(n)}(x_i)$ be the pseudomarginals
 corresponding to $\bseta_n$.
Since $\bseta_n \rightarrow  \bseta_{*} \in \partial L$,
there exist $ij \in E$, $x_i$ and $x_j$ such that
\begin{equation*}
b_{ij}^{(n)}(x_i,x_j) \rightarrow 0.
\end{equation*}
Without loss of generality, we assume that
$x_i=+1$ and $x_j=+1$.
From Eq.~(\ref{eq:gradFconverges}), we have
\begin{equation}
\nabla F(\bseta_n)_{ij}
=
\frac{1}{4}
\log
\frac{b_{ij}^{(n)}(+,+)b_{ij}^{(n)}(-,-)}
{b_{ij}^{(n)}(+,-)b_{ij}^{(n)}(-,+)}
\longrightarrow
 \lambda_{ij}. \label{eqgradFconvergesij}
\end{equation}
Therefore
$b_{ij}^{(n)}(+,-) \rightarrow 0$ or 
$b_{ij}^{(n)}(-,+) \rightarrow 0$ holds;
we assume $b_{ij}^{(n)}(+,-) \rightarrow 0$
without loss of generality.
Now we have 
\begin{equation*}
b_{i}^{(n)}(+)=b_{ij}^{(n)}(+,-)+b_{ij}^{(n)}(+,+) \rightarrow 0.
\end{equation*}

In this situation, the following claim holds.
\begin{claim}
Let $k \in N_i$.
In the limit of $n \rightarrow \infty$,
\begin{equation}
\sum_{x_i,x_k=\pm 1}x_i \log \frac{b^{(n)}_{ik}(x_i,x_k)}{b^{(n)}_i(x_i)}
=
\log
\left[
\frac{b_{ik}^{(n)}(+,+)b_{ik}^{(n)}(+,-)b_{i}^{(n)}(-)^2}
{b_{ik}^{(n)}(-,+)b_{ik}^{(n)}(-,-)b_{i}^{(n)}(+)^2}
\right] \label{eqclaim1}
\end{equation}
converges to a finite value.
\end{claim}
\begin{proof}[proof of claim]
From $b_{i}^{(n)}(+)  \rightarrow 0$, we have 
\begin{equation*}
b_{ik}^{(n)}(+,-), b_{ik}^{(n)}(+,+) \longrightarrow 0
\quad \text{ and } \quad
b_{i}^{(n)}(-)  \rightarrow 1.
\end{equation*}
\textbf{Case 1:}
$b_{ik}^{(n)}(-,+) \longrightarrow b_{ik}^{*}(-,+) \neq 0$ and
$b_{ik}^{(n)}(-,-) \longrightarrow b_{ik}^{*}(-,-) \neq 0$. \\
In the same way as Eq.~(\ref{eqgradFconvergesij}),
\begin{equation*}
\nabla F(\bseta_n)_{ik}
=
\frac{1}{4}
\log
\frac{b_{ik}^{(n)}(+,+)b_{ik}^{(n)}(-,-)}
{b_{ik}^{(n)}(+,-)b_{ik}^{(n)}(-,+)}
\longrightarrow
 \lambda_{ik}. %
\end{equation*}
Therefore
\begin{equation*}
\frac{b_{ik}^{(n)}(+,+)}
{b_{ik}^{(n)}(+,-)}
\longrightarrow
{}^{\exists}r \neq 0.
\end{equation*}
Then we see that Eq.~(\ref{eqclaim1}) converges to a finite value. \\
\textbf{Case 2:}
$b_{ik}^{(n)}(-,+) \longrightarrow 1$ and
$b_{ik}^{(n)}(-,-) \longrightarrow 0$. \\
Similar to the case 1, we have
\begin{equation*}
\frac{b_{ik}^{(n)}(+,+)b_{ik}^{(n)}(-,-)}
{b_{ik}^{(n)}(+,-)}
\longrightarrow
{}^{\exists}r \neq 0.
\end{equation*}
Therefore
$\frac{b_{ik}^{(n)}(+,-)}{b_{ik}^{(n)}(+,+)} \rightarrow 0$.
This implies that
$\frac{b_{i}^{(n)}(+)}{b_{ik}^{(n)}(+,+)} \rightarrow 1$.
Then we see that Eq.~(\ref{eqclaim1}) converges to a finite value.\\
\textbf{Case 3:}
$b_{ik}^{(n)}(-,+) \longrightarrow 0$ and
$b_{ik}^{(n)}(-,-) \longrightarrow 1$. \\
Same as the case 2.
\end{proof}
Now let us get back to the proof of Lemma \ref{lem:gradBethe}.
We rewrite Eq.~(\ref{Bethem}) as
\begin{equation}
\nabla F(\bseta_n)_{i}
=
\frac{1}{2}\log b^{(n)}_i(+)
-
\frac{1}{2}\log b^{(n)}_i(-)
+ \frac{1}{4}\sum_{k \in N_i}\sum_{x_i,x_k=\pm 1}x_i \log
\frac{b^{(n)}_{ik}(x_i,x_k)}{b^{(n)}_{i}(x_i)}  \label{eq:lemma1finaleq}
\end{equation}
From Eq.~(\ref{eq:gradFconverges}), this value converges to $\kappa_i$.
The second and the third terms in Eq.~(\ref{eq:lemma1finaleq}) 
converges to a finite value, while the
 first value converges to infinity.
This is a contradiction.
Therefore Lemma \ref{lem:gradBethe} is proved.
\end{proof}

The following lemma is a standard result in differential
topology, and utilized in the proof of Theorem \ref{thm:indexsum}.
We refer Theorem 13.1.2 and comments in p.104 of \cite{DFN} for the proof of this lemma.

\begin{lem}
\label{lem:degreeofmap}
Let $M_1$ and $M_2$ be compact, connected and orientable manifolds
with boundaries.
Assume that the dimensions of $M_1$ and $M_2$ are the same.
Let $f:M_1 \rightarrow M_2 $ be a smooth map satisfying
$f(\partial M_1) \subset \partial M_2 $.
A point $p \in M_2$ is called a {\it regular value} if
$\det(\nabla f(q))\neq 0$ for all $q \in f^{-1}(p)$.
For a regular value $p \in M_2$, we define the degree of the map $f$ by
\begin{equation}
\deg f(p) := \sum_{q \in f^{-1}(p)} \sgn( \det \nabla f(q)).
\end{equation}
Then $\deg f(p)$ does not depend on the choice of a regular value
$p \in M_2$. 
\end{lem}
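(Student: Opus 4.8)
The plan is to prove this classical degree-theoretic fact directly, by studying the preimage of a path joining the two regular values. The argument splits into a reduction step, a cobordism construction, and an orientation count.

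First I would reduce to interior regular values. Since $M_1$ is compact and, at each $q \in f^{-1}(p)$ with $p$ a regular value, $\nabla f(q)$ is invertible, the inverse function theorem makes $f$ a local diffeomorphism near $q$; hence $f^{-1}(p)$ is discrete and, by compactness, finite, so $\deg f(p)$ is a well-defined finite integer. If moreover $p \in \mathrm{int}(M_2)$, the hypothesis $f(\partial M_1) \subset \partial M_2$ forces $f^{-1}(p) \subset \mathrm{int}(M_1)$. By Sard's theorem the regular values are dense, and the local-diffeomorphism picture shows $\deg f(\cdot)$ is locally constant on the open set of interior regular values. It therefore suffices to prove $\deg f(p_0) = \deg f(p_1)$ for any two interior regular values $p_0,p_1$ (which is exactly what is needed for the application below); since $M_2$ is connected its interior is connected, so such values can always be joined.

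Next I would set up the cobordism. Choose a smooth embedded path $\gamma:[0,1] \to \mathrm{int}(M_2)$ with $\gamma(0)=p_0$, $\gamma(1)=p_1$, and perturb it rel endpoints, by a standard transversality argument keeping $p_0,p_1$ regular, so that $f$ is transverse to the $1$-submanifold $\Gamma := \gamma([0,1])$. Then $W := f^{-1}(\Gamma)$ is a compact $1$-manifold with boundary $\partial W = f^{-1}(p_0) \sqcup f^{-1}(p_1)$. Because $\Gamma \subset \mathrm{int}(M_2)$ and $f(\partial M_1)\subset \partial M_2$, the manifold $W$ lies entirely in $\mathrm{int}(M_1)$, so no spurious boundary is created along $\partial M_1$. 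Then I would run the orientation bookkeeping: a compact $1$-manifold is a finite disjoint union of circles and closed arcs, so $\partial W$ splits into the endpoint pairs of the arcs. Orienting $M_1$, $M_2$ and $\Gamma$ (the latter by increasing $t$) equips $W$ with the preimage orientation, giving each $q \in \partial W$ an induced boundary sign $\varepsilon(q) \in \{\pm1\}$; along every arc the two endpoints carry opposite induced signs, whence $\sum_{q \in \partial W} \varepsilon(q) = 0$. One then checks, with a consistent convention, that for $q$ over the terminal point $p_1$ one has $\varepsilon(q) = \sgn(\det \nabla f(q))$, while for $q$ over the initial point $p_0$ one has $\varepsilon(q) = -\sgn(\det \nabla f(q))$. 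Substituting into $\sum \varepsilon(q)=0$ yields
\begin{equation*}
\sum_{q \in f^{-1}(p_1)} \sgn(\det \nabla f(q)) - \sum_{q \in f^{-1}(p_0)} \sgn(\det \nabla f(q)) = 0,
\end{equation*}
that is, $\deg f(p_1) = \deg f(p_0)$.

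I expect the orientation bookkeeping to be the main obstacle: defining the preimage orientation on $W$ and verifying that the induced boundary orientation encodes $\pm\sgn(\det\nabla f)$ with the stated signs at the two ends requires careful linear algebra of the isomorphism $\nabla f(q)\colon T_qM_1 \to T_{f(q)}M_2$ relative to the splitting of the target into $T\Gamma$ and a complement. The transversality perturbation of $\gamma$ (invoking Sard/Thom to arrange $f \pitchfork \Gamma$ while preserving regularity of the endpoints) and the classification of compact $1$-manifolds are routine but necessary ingredients, which I would cite from a standard reference such as \cite{DFN} rather than reprove.
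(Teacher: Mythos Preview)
Your sketch is the standard cobordism/preimage-of-a-path argument and is essentially correct, but note that the paper does not actually prove this lemma: it simply states it as ``a standard result in differential topology'' and refers to Theorem~13.1.2 and p.~104 of \cite{DFN}. So there is no paper proof to compare against; you have supplied what the paper chose to outsource.

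One small point: the lemma as stated allows $p$ to be any regular value, not only an interior one, whereas your reduction step establishes constancy only over interior regular values. You flag this yourself (``which is exactly what is needed for the application below''), and indeed in the paper's use of the lemma both $0$ and $\binom{\boldsymbol{h}}{\boldsymbol{J}}$ lie in the open ball $\mathrm{int}\,B_0(K)$, so the restriction is harmless there. If you want to match the lemma exactly as written you would need an extra word about boundary regular values, but for the thesis's purposes your version suffices.
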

The simplest example of this formula is $M_1=M_2=S^1$.
In this case, there is no boundary and $f$ is just a smooth map
from $S^1$ to itself. The degree of a map $f$ is the winding number.

\subsection{Proof of Theorem \ref{thm:indexsum}}
In this subsection, we prove Theorem \ref{thm:indexsum}.
For the choice of the sufficient statistics and notations, see the proof of Lemma \ref{lem:gradBethe}

Define a map $\Phi:L \rightarrow \mathbb{R}^{|V|+|E|}$ by
\begin{align}
\Phi(\bseta)_{i}& = 
(1-d_i) \frac{1}{2} \sum_{x_i = \pm 1} x_i \log b_i(x_i)
+ \frac{1}{4}\sum_{k \in N_i}\sum_{x_i,x_k=\pm 1}x_i \log
 b_{ik}(x_i,x_k), \label{Phii} \\
\Phi(\bseta)_{ij}& = 
\frac{1}{4} \sum_{x_i,x_j=\pm 1}x_i x_j \log  b_{ij}(x_i,x_j), \label{Phiij}
\end{align}
where $b_{ij}(x_i,x_j)$ and $b_i(x_i)$ 
are given by $\bseta=\{m_i,\chi_{ij}\} \in L$.
Therefore, 
we have $\nabla F = \Phi - \binom{\boldsymbol{h}}{\boldsymbol{J}}$
and
$ \nabla \Phi =  \nabla^{2} F$.
Then following claim holds. 
\begin{claim}
Under the assumption of Theorem \ref{thm:indexsum},
the sets 
$\Phi^{-1}(\binom{\boldsymbol{h}}{\boldsymbol{J}}),\Phi^{-1}(0) \subset L$ 
are finite and 
\begin{equation}
\sum_{\bseta \in \Phi^{-1}
({h \atop J} ) 
}
\sgn( \det \nabla \Phi(\bseta))
=
\sum_{\bseta \in \Phi^{-1}
(0) 
}
\sgn( \det \nabla \Phi(\bseta)), \label{eq1thmindexsum}
\end{equation}
holds.
\end{claim}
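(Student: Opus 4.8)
Since $\nabla F=\Phi-\binom{\boldsymbol h}{\boldsymbol J}$ with $\binom{\boldsymbol h}{\boldsymbol J}$ constant, we have $\nabla\Phi=\nabla^2F$, so $\Phi$ is the gradient of the Bethe free energy of the interaction-free model ($h_i=0$, $J_{ij}=0$). The identity (\ref{eq1thmindexsum}) is exactly the assertion that the local ``degree'' of the map $\Phi$ is the same at the two values $\binom{\boldsymbol h}{\boldsymbol J}$ and $0$ of $\mathbb{R}^{|V|+|E|}$. The plan is to compactify the open domain $L$ using the boundary behaviour of $\Phi$ supplied by Lemma~\ref{lem:gradBethe}, and then invoke the invariance of degree, Lemma~\ref{lem:degreeofmap}.

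\textbf{Compactification.} The domain $L$ is bounded, and by Lemma~\ref{lem:gradBethe}, $\norm{\nabla F(\bseta)}\to\infty$, hence also $\norm{\Phi(\bseta)}\geq\norm{\nabla F(\bseta)}-\norm{\binom{\boldsymbol h}{\boldsymbol J}}\to\infty$, as $\bseta$ approaches $\partial L$. Fix $R>\norm{\binom{\boldsymbol h}{\boldsymbol J}}$ so large that $R^2$ is a regular value of the smooth function $\bseta\mapsto\norm{\Phi(\bseta)}^2$ on $L$ (possible by Sard's theorem). Put
\[
 M_1:=\{\bseta\in L:\norm{\Phi(\bseta)}^2\leq R^2\},\qquad M_2:=\{v\in\mathbb{R}^{|V|+|E|}:\norm{v}\leq R\}.
\]
By the divergence just noted, $M_1$ is bounded away from $\partial L$; since $L$ is bounded, $M_1$ is compact and contained in $L$, and by the regular value theorem it is a smooth manifold with boundary $\partial M_1=\{\bseta\in L:\norm{\Phi(\bseta)}^2=R^2\}=\Phi^{-1}(\partial M_2)$. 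Both $M_1$ and $M_2$ are compact orientable manifolds with boundary of the same dimension (as codimension-zero submanifolds-with-boundary of Euclidean space), and $\Phi(\partial M_1)\subset\partial M_2$. Applying Lemma~\ref{lem:degreeofmap} to the restriction of $\Phi$ to each connected component of $M_1$ and summing, the quantity
\[
 \deg(\Phi,p):=\sum_{\bseta\in\Phi^{-1}(p)}\sgn\bigl(\det\nabla\Phi(\bseta)\bigr)
\]
is the same for every regular value $p$ in the interior of $M_2$.

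\textbf{Conclusion.} Both $0$ and $\binom{\boldsymbol h}{\boldsymbol J}$ lie in the interior of $M_2$. The point $\binom{\boldsymbol h}{\boldsymbol J}$ is a regular value of $\Phi$: its preimage $\Phi^{-1}(\binom{\boldsymbol h}{\boldsymbol J})=(\nabla F)^{-1}(0)$ consists of the LBP fixed points, at each of which $\det\nabla\Phi=\det\nabla^2F\neq0$ by the hypothesis of Theorem~\ref{thm:indexsum}; these points have $\norm{\Phi}=\norm{\binom{\boldsymbol h}{\boldsymbol J}}<R$, so they lie in the interior of $M_1$, and being nondegenerate they are isolated, hence finite in number. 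Similarly, $0$ is a regular value: a direct computation shows that $\Phi(\bseta)=0$ forces $\chi_{ij}=m_im_j$ for every $ij\in E$ and then $m_i=0$ for every $i\in V$, so $\Phi^{-1}(0)$ is the single uniform pseudomarginal, at which $\nabla^2F$ is positive definite by Lemma~\ref{lem:zerocorr}. Therefore both preimages are finite and $\deg(\Phi,\binom{\boldsymbol h}{\boldsymbol J})=\deg(\Phi,0)$, which is precisely (\ref{eq1thmindexsum}).

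\textbf{Main obstacle.} The delicate step is the compactification: one must verify that $M_1$ is genuinely compact and a manifold with boundary, which is exactly where Lemma~\ref{lem:gradBethe} is indispensable (it prevents $M_1$ from touching $\partial L$), and that $\partial M_1$ is mapped into $\partial M_2$ so that Lemma~\ref{lem:degreeofmap} applies; one must also accommodate the possible disconnectedness of $M_1$ by arguing component by component.
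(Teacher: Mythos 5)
Your proof is correct and rests on the same two pillars as the paper's: the boundary divergence from Lemma~\ref{lem:gradBethe} to confine everything to a compact piece of $L$, and the degree invariance of Lemma~\ref{lem:degreeofmap} to equate the two index sums. The only substantive difference is the compactification device. The paper exhausts $L$ by the convex sublevel sets $C_n=\{\bseta\in L:\sum_{ij\in E}\sum_{x_i,x_j}-\log b_{ij}\le n\}$, which are automatically compact, connected, smooth manifolds with boundary, and then composes $\Phi$ with a radial squashing map $\Pi_{\epsilon}$ so that $\partial C_{n_0}$ is sent into $\partial B_0(K)$. You instead take the sublevel set $M_1=\{\norm{\Phi}^2\le R^2\}$, which makes $\Phi(\partial M_1)\subset\partial M_2$ automatic and dispenses with the auxiliary map, at the cost of invoking Sard's theorem to make $M_1$ a manifold with boundary and of handling its possible disconnectedness component by component. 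That bookkeeping does go through: each component of the compact set $M_1$ necessarily has nonempty boundary, since a boundaryless component would be both open and closed in the connected, non-compact set $L$, and summing the componentwise degrees recovers the full index sums because every preimage of a value of norm less than $R$ lies in the interior of $M_1$. Both routes are sound; yours trades the paper's explicit convex exhaustion plus squashing map for a Sard argument plus a components argument, and your identification of $\Phi^{-1}(0)$ with the single zero-correlation point (which the paper defers to the proof of the theorem itself) is the same computation appearing there.
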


Before the proof of this claim, 
we prove Theorem \ref{thm:indexsum} under the claim.

\begin{proof}
[Proof of Theorem \ref{thm:indexsum}]
From Eq.~(\ref{Phii}) and (\ref{Phiij}),
it is easy to see that 
$\Phi(\bseta)=0 \Leftrightarrow \bseta=\{m_i=0,\chi_{ij}=0\}$.
Indeed, $\Phi (\bseta)_{ij}=0$ is equivalent to 
\begin{equation*}
 (1+m_i+m_j+\chi_{ij}) (1-m_i-m_j+\chi_{ij}) = (1-m_i+m_j-\chi_{ij}) (1+m_i-m_j-\chi_{ij})
\end{equation*}
and thus $\chi_{ij}=m_i m_j$.
Plugging into this relation into Eq.~(\ref{Phii}), one observes that $m_i=\chi_{ij}=0$.
Moreover, at this point $\bseta=\{m_i=0,\chi_{ij}=0\}$,
$\nabla \Phi =  \nabla^{2} F$ is a positive definite matrix because of Lemma \ref{lem:zerocorr}.
Therefore the RHS of Eq.~(\ref{eq1thmindexsum}) is
equal to one.
The LHS of Eq.~(\ref{eq1thmindexsum})
is equal to the LHS of Eq.~(\ref{eq:indexsum}),
because 
$\bseta \in \Phi^{-1}(\binom{\boldsymbol{h}}{\boldsymbol{J}} )  \Leftrightarrow \nabla F(\bseta)=0$.
Then the assertion of Theorem \ref{thm:indexsum} is proved.
\end{proof}

\begin{proof}[Proof of the claim]
First, we prove that 
$\Phi^{-1}( \binom{\boldsymbol{h}}{\boldsymbol{J}}  )=(\nabla F)^{-1}(0)$ is a finite set.
If not, we can choose a sequence $\{\bseta_n\}$ of distinct points 
from this set.
Let $\overline{L}$
be the closure of $L$.
Since $\overline{L}$ is compact, we can choose a subsequence that
 converges to some point $\bseta_{*} \in \overline{L}$. 
From Lemma \ref{lem:gradBethe}, $\bseta_{*} \in L$ and $\nabla F(q_{*})=0$ hold.
By the assumption in Theorem \ref{thm:indexsum},
we have $\det\nabla^{2} F(\bseta_{*})\neq 0$.
This implies that $\nabla F(\bseta)\neq 0$ in some neighborhood of $\bseta_{*}$.
This is a contradiction because $\bseta_n \rightarrow \bseta_{*}$.

Secondly, we prove the equality (\ref{eq1thmindexsum}) using Lemma \ref{lem:degreeofmap}.
Define a sequence of compact convex sets
$C_n:=\{\bseta \in L|\sum_{ij \in E}\sum_{x_i,x_j}-\log b_{ij} \leq n \}$,
which are smooth manifold with boundary and increasingly converge to $L$.
Since $\Phi^{-1}(0)$ and $\Phi^{-1} (\binom{\boldsymbol{h}}{\boldsymbol{J}})$
are finite,
they are included in $C_n$ for sufficiently large $n$.
Take $K>0$ and $\epsilon >0$ to satisfy 
$K -\epsilon > \lVert \binom{\boldsymbol{h}}{\boldsymbol{J}}  \rVert$.
From Lemma \ref{lem:gradBethe},
we see that $\Phi(\partial C_{n}) \cap B_0(K)=\phi$
for sufficiently large $n$.
Let $n_o$ be such a large number.
Let $\Pi_{\epsilon}:\mathbb{R}^{|V|+|E|} \rightarrow  B_0(K)$ be a smooth
 map that is identity on $B_0(K-\epsilon)$, 
monotonically increasing on $\lVert x \rVert$,
and
 $\Pi_{\epsilon}(x)=\frac{K}{\lVert x \rVert}x$
for $\lVert x \rVert \geq K$.
Then we obtain a composition map 
\begin{equation}
 \tilde{\Phi }:=\Pi_{\epsilon} \circ \Phi :C_{n_0} \rightarrow  B_0(K)
\end{equation}
that satisfy $\tilde{\Phi }(\partial C_{n_0}) \subset \partial B_0(K)$.
By definition, we have
$\Phi^{-1}(0)=\tilde{\Phi}^{-1}(0)$ and 
$\Phi^{-1} \binom{\boldsymbol{h}}{\boldsymbol{J}}
= \tilde{\Phi}^{-1} \binom{\boldsymbol{h}}{\boldsymbol{J}}$.
Therefore, both $0$ and $\binom{\boldsymbol{h}}{\boldsymbol{J}}$ are regular values
 of $\tilde{\Phi}$.
From Lemma \ref{lem:degreeofmap}, we have 
\begin{equation*}
\sum_{\bseta \in \tilde{\Phi}^{-1}
({h \atop J} ) 
}
\sgn( \det \nabla \tilde{\Phi}(q))
=
\sum_{\bseta \in \tilde{\Phi}^{-1}
(0) 
}
\sgn( \det \nabla \tilde{\Phi}(\bseta)).
\end{equation*}
Then, the assertion of the claim is proved.
\end{proof}

\section{Uniqueness of LBP fixed point}\label{sec:uniqueness}
This section gives a short derivation of a uniqueness condition of LBP on general graphs,
exploiting the index sum formula and the Bethe-zeta formula.
This condition is a reproduction of the Mooij's spectral condition,
though the stronger convergence property is proved under the same condition in \cite{MKsufficient}.
For binary pairwise case, numerical experiments in \cite{MKsufficient} suggests that
Mooij's condition often superior to conditions in \cite{Huniquness,TJgibbsmeasure,IFW}.

To assure the uniqueness in advance of running LBP,
we need a priori information on the LBP fixed points.
The following lemma gives such information on the correlation coefficients of the beliefs.
\begin{lem}
\label{betabound}
Let $\beta_{ij}$ be the correlation coefficient of a fixed point belief $b_{ij}$.
Then 
\begin{equation}
|\beta_{ij}| \leq \tanh(|J_{ij}|)  \quad \text{ and } \quad  \sgn(\beta_{ij})=\sgn(J_{ij}).
\end{equation}
\end{lem}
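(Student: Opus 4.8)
The plan is to reduce the claim to an elementary inequality among the four entries of the edge belief, using the explicit two–variable form that $b_{ij}$ takes at an LBP fixed point. By the characterization of LBP fixed points in Theorem~\ref{thm:LBPcharacterizations} (see Eq.~(\ref{eq:beliefbyepara2})) together with the constraint $\pa{\theta}=\pa{\bar{\theta}}$ built into $A(\mathcal{I},\Psi)$, the factor belief is $b_{ij}(x_i,x_j)\propto\exp(J_{ij}x_ix_j+\theta_ix_i+\theta_jx_j)$ for some real ``local fields'' $\theta_i,\theta_j$ determined by the fixed point. Writing $a=b_{ij}(+,+)$, $b=b_{ij}(+,-)$, $c=b_{ij}(-,+)$, $d=b_{ij}(-,-)$, the fields $\theta_i,\theta_j$ cancel in the ratio $ad/bc$, leaving the single invariant $ad=e^{4J_{ij}}bc$; this identity is what drives the whole argument.

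Next I would compute the relevant moments in terms of $a,b,c,d$, using $x_i,x_j\in\{\pm1\}$ and $a+b+c+d=1$. A short computation gives $\cov{b_{ij}}{x_i}{x_j}=4(ad-bc)$, $\var{b_{ij}}{x_i}=4(a+b)(c+d)$ and $\var{b_{ij}}{x_j}=4(a+c)(b+d)$, hence $\beta_{ij}=(ad-bc)\big/\sqrt{(a+b)(c+d)(a+c)(b+d)}$. For the sign statement I substitute the invariant to get $ad-bc=bc(e^{4J_{ij}}-1)$; since $b,c>0$ this shows $\sgn(\beta_{ij})=\sgn(e^{4J_{ij}}-1)=\sgn(J_{ij})$ (with $\beta_{ij}=0$ when $J_{ij}=0$). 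For the magnitude I take $J_{ij}>0$ without loss of generality ($J_{ij}<0$ is symmetric, $J_{ij}=0$ trivial) and use $\tanh J_{ij}=(e^{4J_{ij}}-1)/(e^{4J_{ij}}+1)$ together with $ad-bc=bc(e^{4J_{ij}}-1)$ and $bc(e^{4J_{ij}}+1)=ad+bc$; after clearing the common positive factor $e^{4J_{ij}}-1$, the inequality $|\beta_{ij}|\le\tanh J_{ij}$ becomes precisely $ad+bc\le\sqrt{(a+b)(c+d)(a+c)(b+d)}$.

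Finally I would close by expanding the product: setting $S=ad+bc$ one checks $(a+b)(c+d)(a+c)(b+d)=S^2+S(a+d)(b+c)+bc(a^2+d^2)+ad(b^2+c^2)$, and since $a,b,c,d\ge0$ every term after $S^2$ is nonnegative, so $\sqrt{(a+b)(c+d)(a+c)(b+d)}\ge S=ad+bc$, as needed (the bound is sharp, being approached at the symmetric point $\theta_i=\theta_j=0$, where indeed $\beta_{ij}=\tanh J_{ij}$). The only genuine computation is this last algebraic identity; everything else is bookkeeping, so I expect the product expansion — and the care needed to make the reduction to it clean — to be the main, though modest, obstacle.
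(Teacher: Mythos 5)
Your setup and the sign argument are fine, and the strategy is essentially the paper's (the paper also starts from $b_{ij}(x_i,x_j)\propto\exp(J_{ij}x_ix_j+\theta_ix_i+\theta_jx_j)$, but then computes the closed form $\beta_{ij}=\sinh(2J_{ij})\big/\sqrt{(\cosh 2\theta_i+\cosh 2J_{ij})(\cosh 2\theta_j+\cosh 2J_{ij})}$ and reads the bound off from $\cosh 2\theta\ge 1$). However, your magnitude argument contains a genuine error: the identity $\tanh J_{ij}=(e^{4J_{ij}}-1)/(e^{4J_{ij}}+1)$ is false; the right-hand side equals $\tanh(2J_{ij})$. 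Consequently the inequality you reduce to, $ad+bc\le\sqrt{D}$ with $D:=(a+b)(c+d)(a+c)(b+d)$ (which you do prove correctly), only yields
\[
|\beta_{ij}|=\frac{ad-bc}{\sqrt{D}}\le\frac{ad-bc}{ad+bc}=\frac{e^{4J_{ij}}-1}{e^{4J_{ij}}+1}=\tanh(2|J_{ij}|),
\]
a strictly weaker bound than the lemma's $\tanh(|J_{ij}|)$. Your own sharpness remark exposes the problem: at $\theta_i=\theta_j=0$ one has $a=d$ and $b=c$, so $\sqrt{D}=(a+b)^2=ad+bc+2ab>ad+bc$, i.e.\ the inequality you reduced to is \emph{strict} exactly where the lemma's bound is supposed to be attained.

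The repair is small. Using the correct identity $\tanh J_{ij}=(e^{2J_{ij}}-1)/(e^{2J_{ij}}+1)$ together with $ad-bc=bc(e^{2J_{ij}}-1)(e^{2J_{ij}}+1)$ and $bc\,e^{2J_{ij}}=\sqrt{(ad)(bc)}$ (from $ad=e^{4J_{ij}}bc$), the target $|\beta_{ij}|\le\tanh J_{ij}$ becomes
\[
bc\,(e^{2J_{ij}}+1)^2=\bigl(\sqrt{ad}+\sqrt{bc}\bigr)^2\;\le\;\sqrt{D}.
\]
This follows from two applications of Cauchy--Schwarz, $(a+b)(d+c)\ge(\sqrt{ad}+\sqrt{bc})^2$ and $(a+c)(d+b)\ge(\sqrt{ad}+\sqrt{bc})^2$, and for positive entries both are equalities precisely when $a=d$ and $b=c$, which recovers the sharpness at $\theta_i=\theta_j=0$. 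With that substitution your proof goes through and is a legitimate alternative to the paper's closed-form computation.
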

\begin{proof}
Since the belief is given by Eq.~(\ref{eq:defbelief1}),
we see that
\begin{equation}
 b_{ij}(x_i,x_j) \propto \exp (J_{ij}x_i x_j + \theta_i x_i+ \theta_j x_j)
\end{equation}
for some $\theta_i$ and $\theta_j$.
After a straightforward computation,
one observes that
\begin{equation}
\beta_{ij}= \frac{\sinh(2 J_{ij})}
{\sqrt{\cosh(2 \theta_i)+\cosh(2 J_{ij})}
\sqrt{\cosh(2 \theta_j)+\cosh(2 J_{ij})}}.
\end{equation}
The bound is attained when $\theta_{i}=0$ and $\theta_{j}=0$.
\end{proof}

\begin{cor}
[\cite{MKsufficient}]
\label{cor:uniqueradius}
Let $\bs{t}:=\{\tanh(|J_{[e]}|)\}_{e \in \vec{E}}$ be a set of directed edge weights of $G=(V,E)$.
If $\rho(\mathcal{M}(\bs{t})) < 1$,
then the fixed point of LBP is unique.
\end{cor}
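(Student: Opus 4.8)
The plan is to combine the index sum formula (Theorem~\ref{thm:indexsum}), the \Bzf\ (Theorem~\ref{thm:BZ}), and the a priori bound of Lemma~\ref{betabound}. Recall that the stationary points of $F$ are exactly the LBP fixed points. By Theorem~\ref{thm:indexsum}, once we know $\det\Hesse F(\bseta)\neq 0$ at every such point, the indices $\sgn(\det\Hesse F(\bseta))$ sum to $1$; consequently, if we can show \emph{a priori} that $\Hesse F$ is positive definite at \emph{every} LBP fixed point, then every index equals $+1$ and the sum $1$ can only be realised by a single fixed point. So the whole task reduces to proving positive definiteness of $\Hesse F$ at an arbitrary fixed point, using only information available before LBP is run.

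First I would fix an LBP fixed point, identify it with $\bseta\in L$, and apply Theorem~\ref{thm:positive}: since the \ifa\ here is multinomial, it suffices to check $\spec{\matmu}\subset\mathbb{C}\smallsetminus\mathbb{R}_{\geq 1}$, where $\bsu$ is the edge-weight system $u^{ij}_{\edij}=\var{b_j}{\phi_j}^{-1}\cov{b_{ij}}{\phi_j}{\phi_i}$ built from the fixed-point beliefs. I would in fact establish the stronger bound $\specr{\matmu}<1$. By Lemma~\ref{lem:specutoc}, $\spec{\matmu}=\spec{\matmc}$, where $c^{ij}_{\edij}=\corr{b_{ij}}{\phi_j}{\phi_i}=\beta_{ij}$ is the correlation coefficient of $b_{ij}$, which depends only on the underlying undirected edge. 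Lemma~\ref{betabound} then gives $|\beta_{ij}|\leq\tanh(|J_{ij}|)=t_{[e]}$, so the moduli of the entries of $\matmc$ are dominated entrywise by those of the nonnegative matrix $\mathcal{M}(\bs{t})$. By the standard comparison principle for nonnegative matrices (entrywise domination of the moduli of the entries implies domination of the spectral radii, via Gelfand's formula), $\specr{\matmc}\leq\specr{\mathcal{M}(\bs{t})}<1$ by hypothesis. Hence $\specr{\matmu}<1$, which in particular excludes real eigenvalues $\geq 1$, and Theorem~\ref{thm:positive} yields that $\Hesse F(\bseta)$ is positive definite.

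Feeding this back into Theorem~\ref{thm:indexsum} completes the argument: $\det\Hesse F\neq 0$ everywhere on $(\nabla F)^{-1}(0)$, each index is $+1$, the index sum equals $1$, and therefore $(\nabla F)^{-1}(0)$ is a singleton, i.e.\ the LBP fixed point is unique. I expect the only point requiring care is the bookkeeping that matches the directed-edge weights $c^{ij}_{\edij}$ to the symmetric coefficients $\beta_{ij}$ and confirms that the entrywise comparison applies verbatim to the directed-edge operator $\matmc$; an alternative that avoids Theorem~\ref{thm:positive} is to note that $\specr{\matmu}<1$ forces $\det(I-\matmu)=\prod(1-\lambda)>0$ (pairing complex conjugate eigenvalues) and then read off $\det\Hesse F(\bseta)>0$ from Theorem~\ref{thm:BZ}, the remaining factors there being positive by Assumption~\ref{asm:expregular}. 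It should be stressed, as a remark, that this reproduces Mooij's spectral condition but yields only uniqueness, whereas \cite{MKsufficient} obtains the stronger convergence of LBP under the same hypothesis.
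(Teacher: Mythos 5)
Your proof is correct and follows essentially the same route as the paper: Lemma~\ref{betabound} together with the spectral-radius comparison gives $\rho(\mathcal{M}(\bsbeta)) \le \rho(\mathcal{M}(\bs{t})) < 1$ at every fixed point, the \Bzf{} machinery then forces $\det\Hesse F(\bseta) > 0$, and Theorem~\ref{thm:indexsum} yields uniqueness. The paper's own proof is precisely your stated ``alternative'' --- it reads off $\sgn(\det \Hesse F)=+1$ directly from Theorem~\ref{thm:BZ} once $\det(I-\matmu)>0$ is known, rather than passing through the positive-definiteness statement of Theorem~\ref{thm:positive} --- but the two are interchangeable here, and your explicit handling of the moduli of the (possibly negative) entries before invoking the nonnegative comparison theorem is if anything slightly more careful than the paper's.
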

\begin{proof}
Let $\bsbeta=\{\beta_{[e]}\}_{e \in \vec{E}}$.
Since $|\beta_{ij}| \leq \tanh(|J_{ij}|)$,
we have
$\rho(\mathcal{M}(\bsbeta)) \leq \rho(\mathcal{M}(\bs{t})) <1$
(Theorem \ref{app:thm:specrcomparison} in Appendix A). 
From Lemma \ref{lem:specutoc}, we have 
$\det(I - \mathcal{M}(\bsbeta))=\det(I -\matmu)$.
Therefore, from the \Bzf,
$\det (\Hesse F)$ is positive for any fixed point of LBP.
The index sum formula implies that the fixed point is unique.
\end{proof}

\section{Uniqueness result for graphs with nullity two}\label{sec:uniqueness2loop}
In this section we focus on graphs with nullity two and
show the uniqueness of the LBP fixed point for unattractive interactions.
In the proof of the above corollary,
we only used the bounds on the moduli of correlation coefficients.
In the following case of Corollary \ref{cor2loop},
we can utilize the information of signs.

To state the corollary, we need a terminology.
Two interactions $\{J_{ij},h_i\}$ and $\{J'_{ij},h'_i\}$ are
said to be {\it equivalent} if
there exists $(s_i) \in \{\pm 1\}^{V}$ such that $J'_{ij}=J_{ij}s_i s_j$
and $h'_i=h_i s_i$.
Since an equivalent model is obtained by a gauge transformation
$x_i \rightarrow x_i s_i$, the uniqueness property of LBP for equivalent models is unchanged.
Recall that a given model $\{J_{ij},h_i\}$ is {\it attractive} if $J_{ij} \geq 0$ for all $ij \in E$.
\begin{cor} 
\label{cor2loop}
Let $G$ be a connected graph with nullity two,
and assume that the interaction is not equivalent to attractive interactions,
then the LBP fixed point is unique.
\end{cor}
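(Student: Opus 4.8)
The plan is to run the same machinery as in the proof of Corollary \ref{cor:uniqueradius} --- the index sum formula (Theorem \ref{thm:indexsum}) together with the \Bzf (Theorem \ref{thm:BZ}) --- but now using the \emph{sign} information in Lemma \ref{betabound} and the very limited combinatorial structure available when $n(G)=2$. Since Theorem \ref{thm:indexsum} gives $\sum_{\bseta}\sgn(\det\Hesse F(\bseta))=1$ over the LBP fixed points, it suffices to prove $\det\Hesse F(\bseta)>0$ at every fixed point; then all indices equal $+1$, the fixed point is unique, and the hypothesis $\det\Hesse F\ne 0$ needed to invoke the theorem is verified along the way. By the \Bzf and positivity of the variance determinants (Assumption \ref{asm:expregular}), $\sgn(\det\Hesse F(\bseta))=\sgn(\det(I-\matmu))$, and by Lemma \ref{lem:specutoc} this equals $\sgn(\det(I-\mathcal{M}(\bsbeta)))$, where $\bsbeta=\{\beta_{[e]}\}$ collects the correlation coefficients of the fixed-point beliefs. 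Thus everything reduces to showing $\det(I-\mathcal{M}(\bsbeta))>0$, using only $|\beta_{[e]}|\le\tanh|J_{[e]}|<1$, $\sgn(\beta_{[e]})=\sgn(J_{[e]})$, and non-equivalence to an attractive interaction.

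First I would reduce the graph. Edges with $J_{ij}=0$ satisfy $\beta_{ij}=0$ and $\Psi_{ij}\equiv 1$, so deleting them changes neither the fixed points nor $\det(I-\mathcal{M}(\bsbeta))$; if this drops the nullity below $2$ we are done by convexity (Theorem \ref{thm:convexcondition}), so assume all $J_{ij}\ne 0$. Next, $\det(I-\mathcal{M}(\bsbeta))=Z_G(\bsbeta)^{-1}=Z_{\core(G)}(\bsbeta)^{-1}$ by Proposition \ref{prop:invprime}, deleting degree-one parts does not affect frustration, and degree-two vertices may be suppressed (a path becoming one edge carrying the product of its $\beta$'s) without changing the zeta function (Section \ref{sec:whatis}). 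Since $\sum_v(d_v-2)=2(n(G)-1)=2$ on a connected coregraph, the reduced graph is exactly one of: two disjoint cycles $C\sqcup C'$ (a bridge having been removed by coring); a theta graph with three parallel weighted edges $\beta_1,\beta_2,\beta_3$ between two vertices; or a figure-eight, one vertex with two weighted loops $\beta_a,\beta_b$. All weights are products of the original $\beta_e$, hence lie in $(-1,1)$; by Lemma \ref{betabound} a cycle weight is negative iff that cycle is frustrated, and ``not equivalent to attractive'' means some basis cycle is frustrated.

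The three cases are then settled by explicit determinant evaluations. For $C\sqcup C'$, Example \ref{example1} gives $\det(I-\mathcal{M}(\bsbeta))=(1-P)^2(1-P')^2>0$ since the cycle products satisfy $|P|,|P'|<1$ (the frustration hypothesis is not even needed). For the figure-eight, Corollary \ref{cor:IBfornonhyper} gives $\det(I-\mathcal{M}(\bsbeta))=(1-\beta_a^2)(1-\beta_b^2)\,(1-\frac{2\beta_a}{1+\beta_a}-\frac{2\beta_b}{1+\beta_b})$; since $t\mapsto\frac{2t}{1+t}$ is $<1$ on $(-1,1)$ and $<0$ on $(-1,0)$, the last factor is positive as soon as one of $\beta_a,\beta_b$ is negative, which is precisely the non-balanced case. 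For the theta graph, Corollary \ref{cor:IBfornonhyper} yields $\det(I-\mathcal{M}(\bsbeta))=\prod_{k=1}^{3}(1-\beta_k^2)\cdot(1+\sum_k\frac{\beta_k}{1-\beta_k})(1-\sum_k\frac{\beta_k}{1+\beta_k})$; frustration forces the $\beta_k$ not to be all of one sign, and a gauge transformation (which preserves LBP uniqueness and the non-balanced condition while flipping all three signs simultaneously) plus relabeling lets us assume exactly one $\beta_k$ is negative, after which $\frac{\beta}{1-\beta}>-\frac12$ and $\frac{\beta}{1+\beta}<\frac12$ on $(-1,1)$ make both bracketed factors strictly positive.

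The step I expect to be the real obstacle is the structural one: establishing cleanly that a connected nullity-two coregraph reduces to a theta graph or a figure-eight (and that the disconnected core is two cycles), and handling the degree-two suppression, the gauge normalization, and the loop-edge bookkeeping in the Ihara--Bass formula (where $\bar e$ is again incident to the same vertex) carefully enough that the computations above are legitimate. The scalar inequalities themselves are routine once the normal form is reached.
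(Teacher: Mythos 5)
Your overall strategy is exactly the one the paper uses: reduce to $\det(I-\mathcal{M}(\bsbeta))>0$ at every fixed point via Theorem \ref{thm:indexsum} and Theorem \ref{thm:BZ}, exploit the sign information from Lemma \ref{betabound}, normalize the graph by coring and degree-two suppression, and finish with explicit determinants. Your theta-graph and figure-eight computations agree with the paper's Cases (1), (4), (5) after simplification, and the scalar inequalities there are fine.

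However, your structural classification contains a genuine error that breaks the proof. You claim the connected nullity-two coregraphs reduce to ``two disjoint cycles (a bridge having been removed by coring), a theta graph, or a figure-eight.'' Coring only deletes vertices of degree one together with their incident edges; it never deletes a bridge whose two sides each contain a cycle, and deleting leaves cannot disconnect a connected graph, so the core of a connected graph is connected. The missing normal form is the \emph{dumbbell}: two cycles joined by a path (after suppression, two loops $e_1,e_3$ at two vertices joined by an edge $e_2$). This is a coregraph, and its prime cycles include geodesics that cross the bridge (e.g.\ $e_2$, loop at $v$, $\bar e_2$, loop at $u$), so $\zeta_G$ is \emph{not} the product of the zeta functions of the two cycles. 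Concretely, the paper computes for this case
\begin{equation*}
\det(I-\mathcal{M}(\bsbeta))=(1-\beta_{e_1})(1-\beta_{e_3})\bigl((1-\beta_{e_1})(1-\beta_{e_3})-4\beta_{e_1}\beta_{e_2}^{2}\beta_{e_3}\bigr),
\end{equation*}
and the cross term $-4\beta_{e_1}\beta_{e_2}^{2}\beta_{e_3}$ makes the last factor negative when both cycle weights are positive and close to $1$. So your parenthetical remark that ``the frustration hypothesis is not even needed'' in this case is exactly backwards: this is the case where frustration (at least one of $\beta_{e_1},\beta_{e_3}$ nonpositive) is essential, and treating it as two disjoint cycles would yield a false positivity claim. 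To repair the argument you must add the dumbbell as a third normal form and verify positivity of the factor above under the sign constraints of the paper's Cases (2) and (3).
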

Interactions that are not equivalent to attractive interactions are
sometimes referred to as {\it frustrated}.
For attractive interactions, there are possibly multiple LBP fixed points \cite{MKproperty}.

For graphs with cycles, all the existing a priori conditions of the uniqueness 
upper bound the strength of interactions essentially 
and does not depend on the signs of $J_{ij}$ \cite{MKsufficient,Huniquness,TJgibbsmeasure,IFW}. 
In contrast,
Corollary \ref{cor2loop} applies to arbitrary strength of interactions if
the graph has nullity two and the signs of $J_{ij}$ are frustrated.

It is known that the LBP fixed point is unique if the graph has nullity less than two.
If the graph is a tree, the fixed point is obviously unique.
For graphs with nullity one, the uniqueness is deduced from the uniqueness of 
the Perron-Frobenius eigenvector, which correspond to the fixed point message \cite{W1loop}.
The result is easily understood form the variational point of view; 
in these cases, the \Bfe function are convex from Theorem \ref{thm:convexcondition}.
Compared to these cases, the uniqueness problem for graphs with nullity two is much more difficult.
If we write down the fixed point equation of the messages, 
it is a complicated non-linear equation. 
In such approaches, it is hard to directly figure out that the fixed point is unique.

\subsection{Example}
\label{sec:uniqunessexample}
In this subsection, we show an example to illustrate how to apply 
Theorem \ref{thm:BZ},
Theorem \ref{thm:indexsum} and Lemma \ref{betabound} to prove the uniqueness.
The complete proof of Corollary \ref{cor2loop} is given in the next subsection.

\begin{figure}
\begin{minipage}{.5\linewidth}
\begin{center}
\includegraphics[scale=0.4]{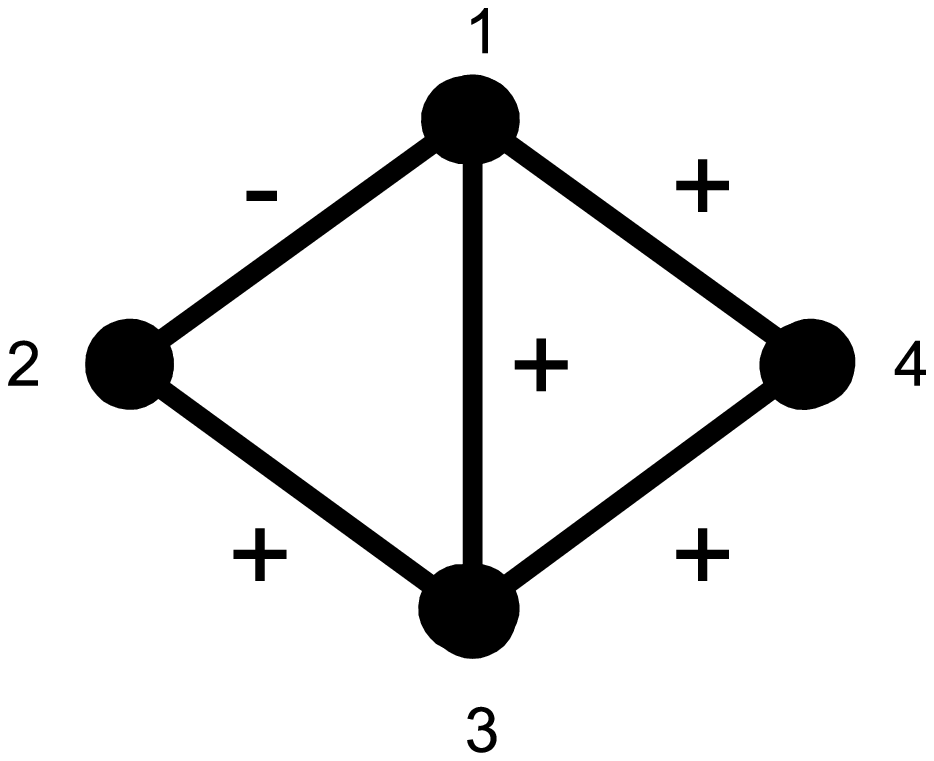} 
\caption{The graph $G$. \label{figureExampleGraph1}}
\end{center}
\end{minipage}
\begin{minipage}{.5\linewidth}
\begin{center}
\includegraphics[scale=0.3]{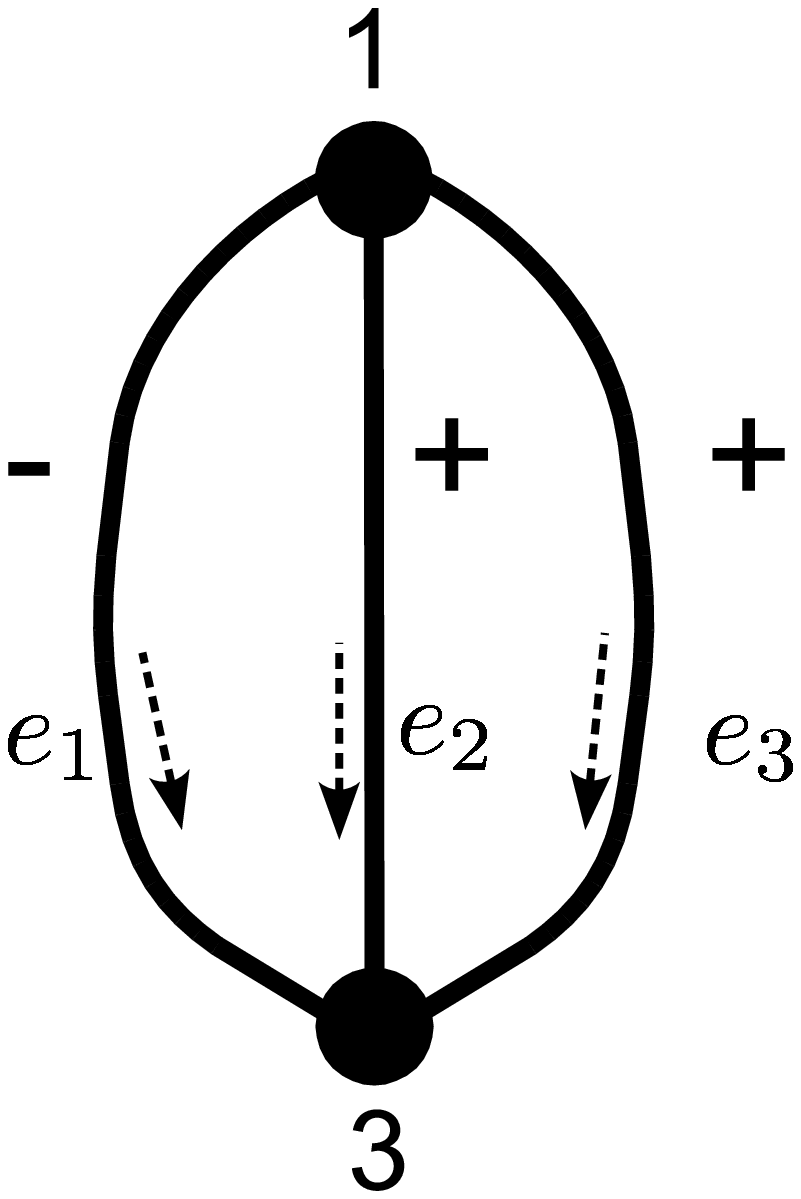}
\vspace{-3mm}
\caption{The graph $\cg{G}$. \label{figureExampleGraph2}}
\end{center}
\end{minipage}
\end{figure}

Let $V:=\{1,2,3,4\}$ and $E:=\{12,13,14,23,34\}$.
The interactions are given by
arbitrary $\{h_i\}$ and
$\{-J_{12},J_{13},J_{14},J_{23},J_{34}\}$
with $J_{ij} \geq 0$. See Figure \ref{figureExampleGraph1}.
The $+$ and $-$ signs represent that of two body interactions.
For the uniqueness of LBP fixed point, it is enough to check that
\begin{equation}
 \det(I -  \mathcal{M}(\bsbeta)) > 0
\end{equation}
for arbitrary
$0 \leq \beta_{13},\beta_{23},\beta_{14},\beta_{34} < 1$
and $-1 < \beta_{12} \leq 0$
because of Theorem \ref{thm:BZ} and Theorem \ref{thm:indexsum}.
The graph $\cg{G}$ in Figure \ref{figureExampleGraph2}
is obtained by erasing vertices $2$ and $4$ in $G$.
To compute $\det(I -  \mathcal{M}(\bsbeta))$,
it is enough to consider the smaller graph $\cg{G}$.
In fact,
\begin{align}
\det(I -  \mathcal{M}(\bsbeta))
&=
\zeta_{G}(\boldsymbol{\beta})^{-1} \nonumber \\
&=
\prod_{\mathfrak{p} \in \mathfrak{P}_G }
(1-g(\mathfrak{p}))              \label{GtoG'eq2}  \\
&=
\prod_{\mathfrak{\cg{p}} \in \mathfrak{P}_{\cg{G}} }
(1-g(\mathfrak{\cg{p}}))             \label{GtoG'eq3}   \\
&=
\zeta_{\cg{G}}(\boldsymbol{\cg{\beta}})^{-1} 
=
\det(I -  \cg{\mathcal{M}}(\cg{\bsbeta}), \nonumber
\end{align}
where $\cg{\beta}_{e_1}:=\beta_{12}\beta_{23}$,
$\cg{\beta}_{e_2}:=\beta_{13}$, 
$\cg{\beta}_{e_3}:=\beta_{14}\beta_{34}$ and
$\cg{\beta}_{e_i}=\cg{\beta}_{\bar{e_i}}$
($\bar{e_i}$ is the opposite directed edge to $e_i$).
The equality between Eq.~(\ref{GtoG'eq2}) and (\ref{GtoG'eq3})
is obtained by the one-to-one correspondence between
the prime cycles of $G$ and $\cg{G}$.
By definition,
we have 
\begin{equation*}
 \cg{\mathcal{M}}(\cg{\bsbeta})
=
\begin{small}
\left[
\begin{array}{cccccc}
0 & 0 & 0 & 0   & \cg{\beta}_{e_1} & \cg{\beta}_{e_1}   \\
0       & 0 &0 & \cg{\beta}_{e_2}  & 0 & \cg{\beta}_{e_2} \\
0 & 0  & 0 &  \cg{\beta}_{e_3}  & \cg{\beta}_{e_3} & 0         \\ 
0 & \cg{\beta}_{e_1}& \cg{\beta}_{e_1} &0 & 0 & 0  \\
\cg{\beta}_{e_2} & 0& \cg{\beta}_{e_2} &0 & 0 & 0  \\
\cg{\beta}_{e_3} & \cg{\beta}_{e_3}& 0 & 0 & 0 & 0 
\end{array}
\right],
\end{small}
\end{equation*}
where
the rows and columns are indexed by $e_1,e_2,e_3,\bar{e_1},\bar{e_2}$
and $\bar{e_3}$.
Then the determinant is
\begin{align*}
\det(I -   \cg{\mathcal{M}}(\cg{\bsbeta})  )
&=
\det \left[
I -
\left(
\begin{array}{ccc}
 0   & \cg{\beta}_{e_1} & \cg{\beta}_{e_1}    \\
 \cg{\beta}_{e_2}  & 0 & \cg{\beta}_{e_2} \\
  \cg{\beta}_{e_3}  & \cg{\beta}_{e_3} & 0         
\end{array}
\right)
\right]
\det \left[
I +
\left(
\begin{array}{ccc}
 0   & \cg{\beta}_{e_1} & \cg{\beta}_{e_1}    \\
 \cg{\beta}_{e_2}  & 0 & \cg{\beta}_{e_2} \\
  \cg{\beta}_{e_3}  & \cg{\beta}_{e_3} & 0         
\end{array}
\right)
\right] \nonumber \\
&=
(1 - \cg{\beta}_{e_1} \cg{\beta}_{e_2} - \cg{\beta}_{e_1} \cg{\beta}_{e_3} -
 \cg{\beta}_{e_2} \cg{\beta}_{e_3} 
- 2 \cg{\beta}_{e_1} \cg{\beta}_{e_2} \cg{\beta}_{e_3}) \nonumber \\
& \quad \qquad
(1 - \cg{\beta}_{e_1} \cg{\beta}_{e_2} - \cg{\beta}_{e_1} \cg{\beta}_{e_3} - 
\cg{\beta}_{e_2} \cg{\beta}_{e_3}
+ 2 \cg{\beta}_{e_1} \cg{\beta}_{e_2} \cg{\beta}_{e_3}).
\end{align*}
Since  $-1 < \cg{\beta}_{e_1} \leq 0$ and
$0 \leq \cg{\beta}_{e_2},\cg{\beta}_{e_3} < 1$,
we conclude that this is positive.

\subsection{Proof of Corollary \ref{cor2loop}}
In this subsection, we prove Corollary \ref{cor2loop} by classifying the graphs with nullity two.
There are two operations on graphs
that do not change the set of prime cycles. 
The first one is adding or erasing a vertex of degree two on any edge.
The second one is adding or removing an edge with a vertex of degree
one.
With these two operations,
all graphs with nullity two 
are reduced to three types of graphs. 
The first type is in Figure \ref{figureExampleGraph2}.
The other two types are in Figure \ref{figureExampleGraph3s}.

\begin{figure}
\begin{center}
\includegraphics[scale=0.4]{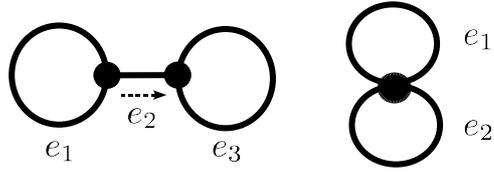} 
\caption{Two other types of graphs. \label{figureExampleGraph3s}}
\end{center}
\end{figure}
\begin{figure}
\includegraphics[scale=0.4]{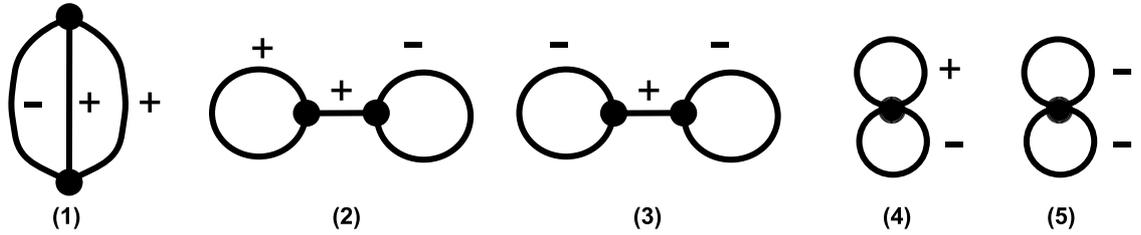} 
\caption{List of interaction types. \label{figureExampleGraphAllTypes}}
\end{figure}

Up to equivalence of interactions,
all types of signs of two body 
interactions are listed in Figure 
\ref{figureExampleGraphAllTypes} except for the attractive case.
We will check the uniqueness for each case in order.
As discussed in the previous example, all we have to do is to prove
\begin{equation}
 \det(I -  \mathcal{M}(\bsbeta)) > 0
\end{equation}
for correlation coefficients $\beta$ in a certain region.
\\
\textbf{Case (1):} Proved in Subsection \ref{sec:uniqunessexample}.\\
\textbf{Case (2):}
In this case,
\begin{equation*}
\mathcal{M}(\bsbeta)
=
\begin{small}
\left[
\begin{array}{cccccc}
 {\beta}_{e_1} &0   & 0 &0   &  {\beta}_{e_1} & 0   \\
{\beta}_{e_2}      & 0 &0 & {\beta}_{e_2}  & 0 & 0 \\
0 & {\beta}_{e_3}  &  {\beta}_{e_3} &  0  & 0 & 0         \\ 
0 & 0 & 0 &{\beta}_{e_1} & {\beta}_{e_1} & 0  \\
0 & 0& {\beta}_{e_2} &0 & 0 & {\beta}_{e_2}  \\
0 &  {\beta}_{e_3} & 0 & 0 & 0 & {\beta}_{e_3} 
\end{array}
\right]
\end{small},
\end{equation*}
where rows and columns are labeled by
$e_{1},e_{2},e_{3},\bar{e_1},\bar{e_2}$ and $\bar{e_3}$.
Then the determinant is
\begin{equation}
\det(I -  \mathcal{M}(\bsbeta))
=
(1- {\beta}_{e_1})
(1- {\beta}_{e_3})
(1- {\beta}_{e_1}-{\beta}_{e_3}+ {\beta}_{e_1}{\beta}_{e_3} 
-4 {\beta}_{e_1}{\beta}_{e_2}^2 {\beta}_{e_3} ).    \label{zetatype2}
\end{equation}
This is positive when 
$0 \leq \beta_{e_1},{\beta}_{e_2} < 1$ and
$-1 < \beta_{e_3} \leq 0$.\\
\textbf{Case (3):}
The determinant Eq.~(\ref{zetatype2})
is also positive when 
$0 \leq {\beta}_{e_2} < 1$ and
$-1 < \beta_{e_1} ,\beta_{e_3} \leq 0$.\\
\textbf{Case (4):}
In this case,
\begin{equation*}
\mathcal{M}(\bsbeta)
=
\begin{small}
\left[
\begin{array}{cccc}
 {\beta}_{e_1} &  {\beta}_{e_1}   &  0 & {\beta}_{e_1}      \\
{\beta}_{e_2}      & {\beta}_{e_2}    & {\beta}_{e_2}    & 0   \\
0 & {\beta}_{e_1}  &  {\beta}_{e_1} &  {\beta}_{e_1}          \\ 
{\beta}_{e_2}  & 0  & {\beta}_{e_2}  &{\beta}_{e_2}  
\end{array}
\right]
\end{small},
\end{equation*}
where rows and columns are labeled by
$e_{1},e_{2},\bar{e_1}$ and $\bar{e_2}$.
Then we have
\begin{equation}
\det(I -  \mathcal{M}(\bsbeta))
=
(1- {\beta}_{e_1})
(1- {\beta}_{e_2})
(1- {\beta}_{e_1}-{\beta}_{e_2}
- 3{\beta}_{e_1}{\beta}_{e_2})
.    \label{zetatype3}
\end{equation}
This is positive when 
$0 \leq {\beta}_{e_1} < 1$ and
$-1 < \beta_{e_2} \leq 0$.\\
\textbf{Case (5):}
The determinant Eq.~(\ref{zetatype3}) is positive
when
$-1 < \beta_{e_1},\beta_{e_2} \leq 0$.

\section{Discussion}
This chapter developed a new differential topological method for investigating the uniqueness of the LBP fixed point.
Our method is based on the index sum formula, which states that the sum of indices at the LBP fixed points is equal to one.
From this formula and the \Bzf,
the uniqueness is proved if $\det (I -\mathcal{M}(\bsbeta)) >0$
at all LBP fixed points.

Applying this method, we proved the uniqueness under Mooij's spectral condition \cite{MKsufficient,MKproperty}.
Our proof gives an interpretation why
the directed edge matrix $\mathcal{M}$ appears in a sufficient condition for the uniqueness.

We also showed the uniqueness for graphs with nullity two and frustrated interactions.
Though the computation of the exact partition function on such a graph is a feasible problem,
the proof of the uniqueness requires involved techniques.
Indeed our result implies that certain class of non-linear equations have the unique solutions. 
It is noteworthy that our approach is applicable to graphs that have nullity greater than two.
For example, let $B_3$ be the bouquet graph, which has the unique vertex and three self loop edges.
Applying our method for graphs homeomorphic to $B_3$, we can straightforwardly show the uniqueness
if the signs of the two body interactions are $(+,-,-)$.
It may be mathematically interesting to find a class of edge-signed graphs 
that are guaranteed to have the unique LBP fixed point with the prescribed signs of two body interactions.

For the first application, in Corollary \ref{cor:uniqueradius}, we only used a priori bounds for moduli of correlation coefficients
while, in Corollary \ref{cor2loop}, we also used information of signs but the scope was restricted to graphs with nullity two.
It would be interesting if we can utilize both information for general graphs and
show the uniqueness under stronger condition than the Mooij's spectral condition.

In this chapter, for simplicity, we focused on pairwise binary models.
However, our method can be extended to multinomial models;
the index sum formula is generalized to multinomial cases in a straightforward way.
Combining with the Bethe-zeta formula, we can show the uniqueness of the LBP fixed points under some conditions
in analogous manners.
It would be interesting to compare the uniqueness condition for multinomial models
obtained by this method and those of obtained in previous researches \cite{MKsufficient,Huniquness,TJgibbsmeasure,IFW}.

\part{Loop Series}
\chapter{Loop Series}

\section{Introduction}
Given a graphical model, 
the computation of the partition function and marginal distributions are important problems.
For multinomial models, evaluations of these quantities on general graphs are NP-hard \cite{Bcomputational,Ccomputational},
but efficient exact algorithm is known for tree-structured graphs.
Extending this efficient algorithm,
\lbp, or the Bethe approximation, provides an empirically good approximations for these problems on general graphs.
Theoretical understanding of these approximation errors is important issue
because it would provide practical guides for further improvement of the method.
Since the method is exact for trees, the error should be related to
cycles of graphs.

In this line of researches, Ikeda et al. \cite{ITAdecoder,ITAinfo} have developed perturbative analysis of marginals 
based on information-geometric methods.
In their analysis, cycles yield non-zero m-embedded curvature of a submanifold called $E^{*}$ and
this curvature produces the discrepancy between the true and the approximate marginals.
For pairwise models, another approaches for corrections of the Bethe approximation is considered by 
Montanari and Rizzo \cite{MRloop}.
In their method, correlations of neighbors, which are neglected in the Bethe approximation, are counted.
Parisi and Slanina \cite{PSloop} also derive corrections based on field-theoretic formulation of the Bethe approximation.
In this chapter, different from the aforementioned approaches,
we investigate an expansion called Loop Series (LS) introduced by Chertkov and Chernyak \cite{CCloopPRE,CCloop}. 
In the loop series approach, the partition function and marginal distributions 
of a {\it binary} model are expanded to sums, where the first terms are exactly the LBP solutions.
Therefore, the analysis of the remaining terms is equivalent to the quality analysis of the LBP approximation.
The most remarkable feature of the expansion, which is not achieved by the other approaches,
is that the number of terms in the sum is finite.
In the expansion of the partition function,
all terms are labeled by subgraphs of the factor graph called {\it generalized loops}, or {\it sub-coregraphs},
allowing us to observe the effects of cycles in the factor graph.
The contribution of each term is the product of local contributions along the subgraph
and is easily calculated by the LBP output.

The number of terms in the \ls is exponential with respect to the nullity of the graph,
so the direct summation is not feasible. 
(The number of sub-coregraphs is discussed in the next chapter.)
However, it provides a theoretical background to understand the approximation errors
and ways of correcting the bare LBP approximation.

The remainder of this chapter is organized as follows.
Section \ref{sec:derivationLS} derives the LS of the partition function in our notation.
In an analogous manner, we also expand the marginal distributions.
Section \ref{sec:applicationsLS} presents applications of the LS.
In section \ref{sec:perfectmatching}, we discusses the LS in a special case: the perfect matching problem.

\section{Derivation of loop series}
\label{sec:derivationLS}
In this section, we introduce the LS initiated by Chertkov and Chernyak \cite{CCloop,CCloopPRE}.
The scope of the method is the binary multinomial models.
Though our notations of the LS is different from \cite{CCloop,CCloopPRE}, 
it is essentially equivalent to theirs.
Our representation of the LS is motivated by the graph polynomial treatment of the LS in the next chapter.

Beforehand, we review the setting.
Assume that the given graphical model is
\begin{equation}
 p( x )= \frac{1}{Z} \prod_{\alpha \in F} \Psi_{\alpha}(x_{\alpha}),
\end{equation}
where $H=(V,F)$ is the factor graph, $\Psi_{\alpha}$ is a non-negative function,
$x=(x_i)_{i \in V}$, and $x_i \in \{ \pm 1\}$.
We perform inferences by the LBP algorithm using the binary multinomial \ifa on $H$.
After the LBP algorithm converged, 
we obtain the beliefs $\beliefs$ and the Bethe approximation of the partition function $Z_B$,
which is the starting point of the \ls.

\subsection{Expansion of partition functions}
The aim of this subsection is to show and prove the following theorem.
We define a set of polynomials $\{f_n(x)\}_{n=0}^{\infty}$ 
inductively by the relations $f_0(x)=1,f_1(x)=0$ and $f_{n+1}(x)=x f_n(x) + f_{n-1}(x)~ (n \geq 1)$.
Therefore, $f_2(x)=1,f_3(x)=x$ and so on.
Moreover, $f_{n}(-x)=(-1)^{n}f_n(x)$ and the coefficients of $f_n(x)$ are non-negative integers.

\begin{thm}
[Loop series expansion]
\label{thm:LS}
Let $H=(V,F)$ be the factor graph of the given graphical model
and let $B_H=(V \cup F,E_{B_H})$ be the bipartite representation of $H$. 
Assume that LBP converges to a fixed point and obtains $Z_B$ and $\beliefsw$. 
Then the following expansion of the partition function holds.
\begin{equation} 
\label{eq:LSfactor}
Z=Z_B \sum_{s \subset E_{B_H}} r(s),
\qquad \quad
r(s):=
(-1)^{|s|}
\prod_{\alpha \in F}
\beta^{\alpha}_{I_{\alpha}(s)}
\prod_{i \in V} f_{d_i(s)}(\gamma_{i}),
\end{equation}
where 
$ m_i =\E{b_i}{x_i}$,
\begin{align}
   &\beta^{\alpha}_{I}:= \E{b_{\alpha}}{ \prod_{i \in I} \frac{x_i-m_i}{\sqrt{1-m_i^2}} }  \quad \text{ for }I \subset \alpha, \label{def:betamulti} \\
   &\gamma_i := \frac{2 m_i}{\sqrt{1-m_i^2}}. \label{def:gammamulti}
\end{align}
Note that, for $s \subset E_{B_H}$,
$I_{\alpha}(s) \subset V$ is the set of vertices which are connected to $\alpha$ by edges of $s$,
and 
$d_i(s)$ is the number of factors connected to $i \in V$ by the edges of $s$.
\end{thm}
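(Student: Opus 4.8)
The plan is to reduce the statement to the evaluation of $\sum_x \prod_{\alpha}b_{\alpha}(x_{\alpha})\prod_i b_i(x_i)^{1-d_i}$ and then to perform that evaluation by an orthonormal expansion on the Boolean cube. First I would use Theorem~\ref{thm:LBPcharacterizations}: at the LBP fixed point the beliefs are the exponential-family densities (\ref{eq:beliefbyepara1})--(\ref{eq:beliefbyepara2}), and the constraints defining $A(\mathcal{I},\Psi)$ (namely $\pa{\theta}=\pa{\bar{\theta}}$ and $\sum_{\alpha\ni i}\va{\bar{\theta}}{i}=(1-d_i)\theta_i+\sum_{\alpha\ni i}\va{\theta}{i}$) show that $\prod_{\alpha}b_{\alpha}(x_{\alpha})\prod_i b_i(x_i)^{1-d_i}$ and $\prod_{\alpha}\Psi_{\alpha}(x_{\alpha})$ have identical dependence on $x$, with the two differing only by the constant $\exp(-\sum_{\alpha}\psi_{\alpha}(\fa{\bar{\theta}})-\sum_i(1-d_i)\psi_i(\theta_i))=Z_B^{-1}$ by Definition~\ref{defn:BFE}. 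This is the content of Lemma~\ref{lem:ZZB}. Summing over $x$ gives $Z=Z_B\sum_x\prod_{\alpha}b_{\alpha}(x_{\alpha})\prod_i b_i(x_i)^{1-d_i}$, so the whole problem becomes a combinatorial evaluation of this sum, and the empty-subset term will reproduce $r(\emptyset)=1$.

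Next I would whiten the variables. Since each $b_i$ is a strictly positive distribution on $\{\pm 1\}$ with mean $m_i$, the pair $\{1,\tilde x_i\}$ with $\tilde x_i:=(x_i-m_i)/\sqrt{1-m_i^2}$ is an orthonormal basis of $L^2(b_i)$, hence $\{\prod_{i\in I}\tilde x_i : I\subset\alpha\}$ is an orthonormal basis of $L^2(\prod_{i\in\alpha}b_i)$. Using $\sum_{\alpha\ni i}1=d_i$ I would rewrite $\prod_{\alpha}b_{\alpha}\prod_i b_i^{1-d_i}=\big(\prod_{\alpha}b_{\alpha}(x_{\alpha})\prod_{i\in\alpha}b_i(x_i)^{-1}\big)\prod_i b_i(x_i)$ and expand each factor in that basis: the coefficient of $\prod_{i\in I}\tilde x_i$ in $b_{\alpha}(x_{\alpha})/\prod_{i\in\alpha}b_i(x_i)$ is exactly $\E{b_{\alpha}}{\prod_{i\in I}\tilde x_i}=\beta^{\alpha}_I$ as defined in (\ref{def:betamulti}) (local consistency (\ref{eq:localconsistency}) makes the singleton and empty coefficients equal to $0$ and $1$ respectively, which is consistent but not actually needed). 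Thus $\prod_{\alpha}b_{\alpha}\prod_i b_i^{1-d_i}=\prod_{\alpha}\big(\sum_{I_{\alpha}\subset\alpha}\beta^{\alpha}_{I_{\alpha}}\prod_{i\in I_{\alpha}}\tilde x_i\big)\prod_i b_i(x_i)$.

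Then I would expand the product over $\alpha$: choosing $I_{\alpha}\subset\alpha$ for every $\alpha$ is the same as choosing a subset $s\subset E_{B_H}$, with $I_{\alpha}=I_{\alpha}(s)$, and collecting powers of $\tilde x_i$ turns the monomial into $\tilde x_i^{d_i(s)}$. Summing over $x$ factorizes over $i$, giving $\sum_x\prod_{\alpha}b_{\alpha}\prod_i b_i^{1-d_i}=\sum_{s\subset E_{B_H}}\big(\prod_{\alpha}\beta^{\alpha}_{I_{\alpha}(s)}\big)\prod_i\E{b_i}{\tilde x_i^{d_i(s)}}$. It remains to identify the one-variable moments. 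Since $\tilde x_i$ takes only two values it satisfies the quadratic relation $\tilde x_i^2=-\gamma_i\tilde x_i+1$ with $\gamma_i=2m_i/\sqrt{1-m_i^2}$ as in (\ref{def:gammamulti}); taking $b_i$-expectations of $\tilde x_i^{n+1}=-\gamma_i\tilde x_i^{n}+\tilde x_i^{n-1}$ yields the recursion $g_{n+1}=-\gamma_i g_n+g_{n-1}$ for $g_n:=\E{b_i}{\tilde x_i^n}$, with $g_0=1$, $g_1=0$; this is precisely the recursion obeyed by $(-1)^nf_n(\gamma_i)$, so $\E{b_i}{\tilde x_i^{d_i(s)}}=(-1)^{d_i(s)}f_{d_i(s)}(\gamma_i)$. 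Finally $\prod_i(-1)^{d_i(s)}=(-1)^{|s|}$ because $|s|=\sum_i d_i(s)$, and multiplying with $Z=Z_B\sum_x(\cdots)$ delivers (\ref{eq:LSfactor}).

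I expect the fussiest step to be the first one: verifying that the proportionality constant in Theorem~\ref{thm:LBPcharacterizations} is exactly $Z_B$, which needs careful bookkeeping of the $A(\mathcal{I},\Psi)$-constraints together with the definitions of $\psi_{\alpha},\psi_i$ and of $Z_B$ (this is Lemma~\ref{lem:ZZB}). Everything after that is routine: the orthonormal-basis expansion on $\{\pm1\}^{\alpha}$, the bijection between $\{I_{\alpha}\}$ and $s\subset E_{B_H}$, and the two-term recurrence identifying the moments of $\tilde x_i$ with $(-1)^nf_n(\gamma_i)$.
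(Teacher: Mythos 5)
Your proposal is correct and follows essentially the same route as the paper: reduce to $\sum_x\prod_\alpha b_\alpha\prod_i b_i^{1-d_i}$ via Lemma~\ref{lem:ZZB}, expand each ratio $b_\alpha/\prod_{i\in\alpha}b_i$ over subsets $I\subset\alpha$ with coefficients $\beta^\alpha_I$, reindex by $s\subset E_{B_H}$, and evaluate the single-site sums. Your orthonormal-basis/moment-recursion packaging is just the paper's $\xi$-parametrization in disguise --- with $b_i(x_i)=\xi_i^{x_i}/(\xi_i+\xi_i^{-1})$ one has $\tilde x_i=x_i\xi_i^{-x_i}$ and $\gamma_i=\xi_i-\xi_i^{-1}$, so your recursion $g_{n+1}=-\gamma_i g_n+g_{n-1}$ reproduces the paper's closed form $f_n(\xi-\xi^{-1})=(\xi^{n-1}-(-\xi)^{-n+1})/(\xi+\xi^{-1})$ --- and all steps, including the sign count $(-1)^{|s|}=\prod_i(-1)^{d_i(s)}$, check out.
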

In Eq.~(\ref{eq:LSfactor}), there is a summation over all subsets of $E_{B_H}$.
An edge set $s \subset E_{B_H}$ is identified with the spanning subgraph $(V \cup F,s)$ of $B_H$.
Since $f_1(x)=0$ and $\beta^{\alpha}_{\{i\}}=0$,
a subgraph $s$ makes a contribution to the summation only if $s$ 
has neither vertices nor factors of degree one. 
Therefore, the summation is over all
coregraphs of the forms $(V \cup F,s)$;
we call them {\it sub-coregraphs}.
In relevant papers, such subgraphs are called generalized loops
\cite{CCloopPRE,CCloop} or closed subgraphs \cite{Nseries,Nsubgraph}.
\begin{figure}
\begin{minipage}{.4\linewidth}
\begin{center}
\includegraphics[scale=0.4]{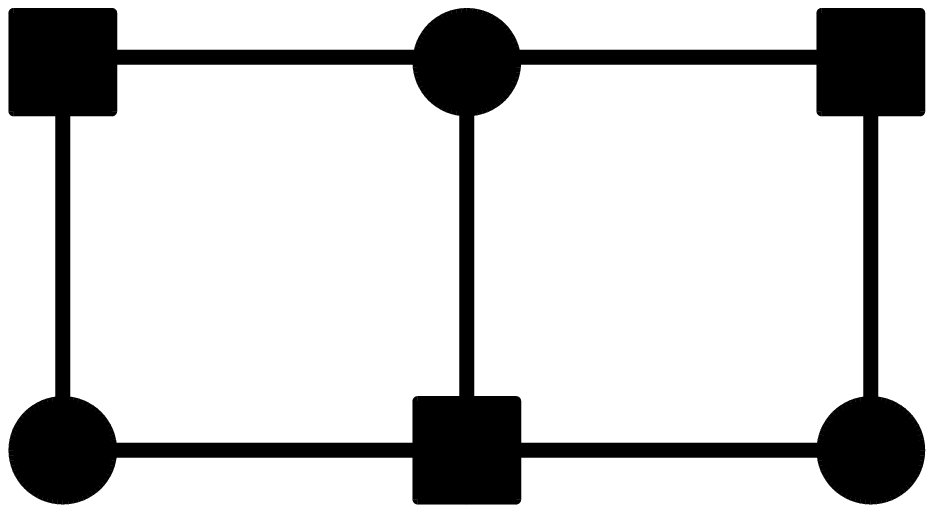} 
\caption{A hypergraph. \label{figureExampleGraph1}}
\end{center}
\end{minipage}
\begin{minipage}{.6\linewidth}
\begin{center}
\includegraphics[scale=0.23]{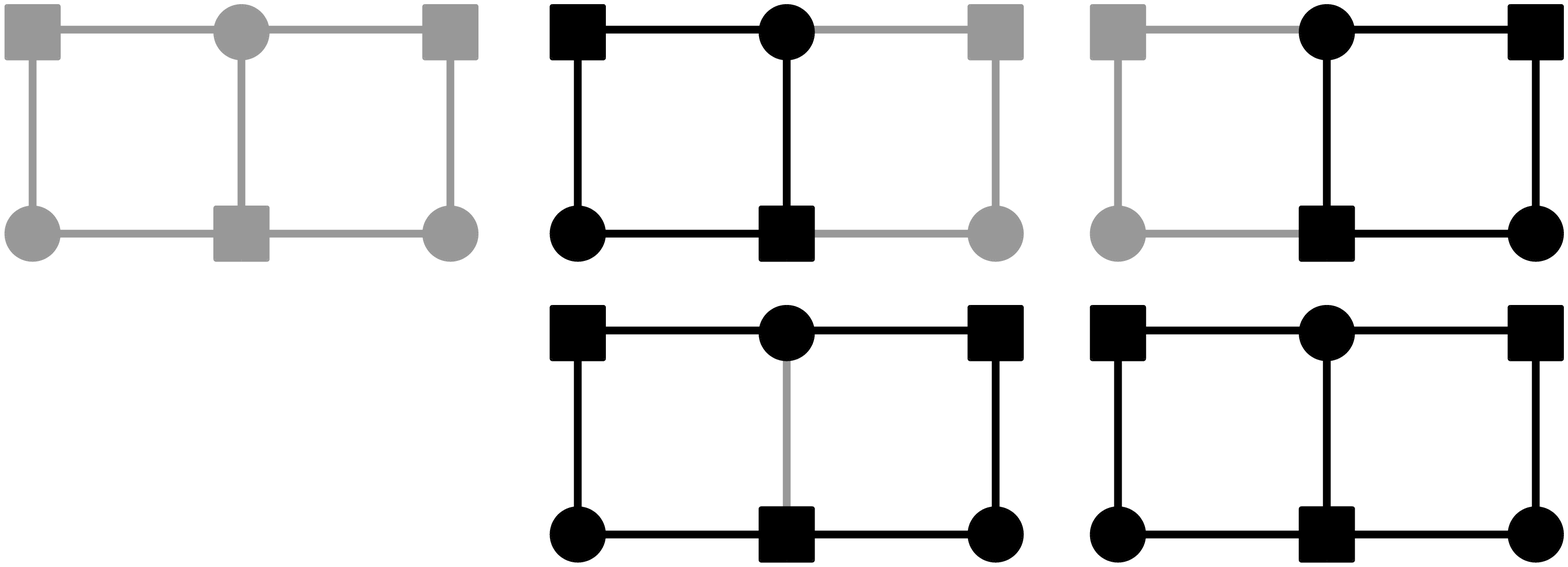}
\vspace{2.2mm}
\caption{The list of sub-coregraphs. \label{figureExampleGraph2}}
\end{center}
\end{minipage}
\end{figure}
Figure~\ref{figureExampleGraph1} and \ref{figureExampleGraph2} give an example
of a hypergraph and its sub-coregraphs.
In this example, there are five sub-coregraphs.
The number of sub-coregraphs will be discussed in Section \ref{sec:numsubcoregraph}.

Each summand $r(s)$ of the expansion is easily calculated by the resulting beliefs.
Actually, $\beta^{\alpha}_{I}$ is the multi-correlation coefficient of variables in $I$
and $\gamma_i$ is the scaled bias of the variable $x_i$.
Both of them are efficiently calculated and $r(s)$ is a product of them.

The contribution of the empty set is $r(\emptyset)=1$
because $f_{0}(x)=1$ and $\beta^{\alpha}_{\emptyset}=1$.
Therefore, the ``first'' term of the \ls expansion Eq.~(\ref{eq:LSfactor}) is
$Z_B$.
In this sense, the LS is an expansion from the Bethe approximation.

Note that, for pairwise case,
there is an understanding of the loop series
from a view point of message passing schemes \cite{WFloop}.
The correlation coefficients $\beta_{ij}$ are 
the second eigenvalues of the message transfer matrices.

The proof of Theorem \ref{thm:LS} is divided into two lemmas.
The first lemma provides a compact characterization of the ratio
of the true partition function and its Bethe approximation.
The second lemma is the key identity for the \ls expansion.

\begin{lem}
\label{lem:ZZB}
 \begin{equation}
\frac{Z}{Z_B} =
\sum_{ x \in \{ \pm 1\}^{V} }
\prod_{\alpha \in F}
\frac{b_{\alpha}(x_{\alpha})}{\prod_{i \in \alpha}b_{i}(x_i)}
\prod_{i \in V}
b_i(x_i).
\end{equation}
\end{lem}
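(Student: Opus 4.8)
The strategy is to start from the characterization of beliefs in terms of the fixed-point messages, substitute them into the right-hand side, and show that all the message factors cancel, leaving exactly $Z/Z_B$. First I would recall from Theorem \ref{thm:LBPcharacterizations} (equivalently Eqs.~(\ref{eq:defbelief1}) and (\ref{eq:defbelief2})) that at an LBP fixed point
\begin{equation*}
b_i(x_i) = \omega_i \prod_{\alpha \ni i} m^{*}_{\edai}(x_i), \qquad
b_{\alpha}(x_{\alpha}) = \omega_{\alpha}\, \Psi_{\alpha}(x_{\alpha}) \prod_{j \in \alpha} \prod_{\beta \ni j, \beta \neq \alpha} m^{*}_{\edbj}(x_j),
\end{equation*}
where $\omega_i$ and $\omega_{\alpha}$ are the normalization constants. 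Forming the ratio $b_{\alpha}(x_{\alpha})/\prod_{i \in \alpha} b_i(x_i)$, the product $\prod_{j\in\alpha}\prod_{\beta\ni j,\beta\neq\alpha} m^{*}_{\edbj}(x_j)$ in the numerator combines with $\prod_{i\in\alpha}\prod_{\alpha'\ni i} m^{*}_{\edai}(x_i)^{-1}$ from the denominator; the key bookkeeping step is that each directed edge $(\edbj)$ with $j\in\alpha$ contributes positively once (for each $\alpha\ni j$, $\alpha\neq\beta$) and negatively once, so the net effect is that $b_{\alpha}/\prod_{i\in\alpha} b_i$ equals $\Psi_{\alpha}(x_{\alpha})$ times $m^{*}_{\edbj}(x_j)^{-1}$ factors counted with appropriate multiplicity, times a ratio of normalization constants.

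The cleanest way to carry this out is to take the full product over $\alpha \in F$ together with $\prod_{i\in V} b_i(x_i)$ and track the exponent of each message $m^{*}_{\edai}(x_i)$ in the total expression. In $\prod_{\alpha}\bigl(b_{\alpha}/\prod_{i\in\alpha}b_i\bigr)\prod_i b_i$, a fixed message $m^{*}_{\edai}(x_i)$ appears: with exponent $+(d_i-1)$ coming from the numerators $b_{\beta}$ for $\beta\ni i$, $\beta\neq\alpha$; with exponent $-(d_i - 1 + 1)\cdot 1 = -d_i$ inside $\prod_{i\in\alpha} b_i$ appearing in each of the $d_i$ denominators $\prod_{i'\in\beta} b_{i'}$ for $\beta\ni i$; and with exponent $+1$ from the single factor $\prod_{i\in V} b_i(x_i)$. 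The arithmetic $(d_i-1) - \bigl(\text{count from denominators}\bigr) + 1$ must total zero; doing this count carefully is the one place where care is needed. Once the messages cancel, what remains is $\prod_{\alpha}\Psi_{\alpha}(x_{\alpha})$ times a product of normalization constants $C := \prod_{\alpha}\omega_{\alpha}\prod_i \omega_i^{1-d_i}$, which is independent of $x$.

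Summing over $x \in \{\pm 1\}^{V}$ then gives $\sum_x \prod_{\alpha}\Psi_{\alpha}(x_{\alpha})\cdot C = Z \cdot C$. It remains to identify $C$ with $1/Z_B$. For this I would invoke Definition \ref{defn:BFE}: $-\log Z_B = F(\bstheta)$ evaluated at the fixed point, and use the belief expressions (\ref{eq:beliefbyepara1})--(\ref{eq:beliefbyepara2}) for beliefs in terms of natural parameters together with the product condition (\ref{eq:productcondition}), $\prod_{\alpha} b_{\alpha}(x_{\alpha}) \prod_i b_i(x_i)^{1-d_i} \propto \prod_{\alpha}\Psi_{\alpha}(x_{\alpha})$; comparing the normalization of both sides shows the proportionality constant is precisely $Z_B$, equivalently $C = 1/Z_B$. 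The main obstacle is purely the combinatorial exponent-counting in the middle step — making sure every directed-edge message cancels and that the leftover constant is exactly the Bethe normalization — but this is a finite, mechanical verification rather than a conceptual difficulty.
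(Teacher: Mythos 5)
Your proposal is correct and lands on the same central identity as the paper, namely $\prod_{\alpha}\Psi_{\alpha}=Z_B\prod_{\alpha}b_{\alpha}\prod_i b_i^{1-d_i}$ followed by summing over $x$; your exponent count $(d_i-1)-d_i+1=0$ for each message is right, and the leftover constant is indeed $\prod_{\alpha}\omega_{\alpha}\prod_i\omega_i^{1-d_i}$. The difference is in how you reach that identity: you re-derive the product condition Eq.~(\ref{eq:productcondition}) from scratch via the message representation (\ref{eq:defbelief1})--(\ref{eq:defbelief2}) and a cancellation bookkeeping, whereas the paper simply takes Eq.~(\ref{eq:productcondition}) as already established (it is characterization 2 of Theorem \ref{thm:LBPcharacterizations}) and reads off the proportionality constant directly from the natural-parameter form of the beliefs: by Eqs.~(\ref{eq:beliefbyepara1})--(\ref{eq:beliefbyepara2}) and the constraints of $A(\mathcal{I},\Psi)$, one gets $\prod_{\alpha}b_{\alpha}\prod_i b_i^{1-d_i}=\prod_{\alpha}\Psi_{\alpha}\exp\bigl(-\sum_{\alpha}\psi_{\alpha}-\sum_i(1-d_i)\psi_i\bigr)=\prod_{\alpha}\Psi_{\alpha}/Z_B$. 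One small caution in your last step: identifying $C=1/Z_B$ is not literally a matter of ``comparing the normalization of both sides'' (summing both sides over $x$ gives $CZ$, which does not by itself pin down $C$); it follows from $\log Z_B=\sum_{\alpha}\psi_{\alpha}+\sum_i(1-d_i)\psi_i$ together with the exponential-family expressions you cite, so the ingredients you list do suffice, but that is the argument to make explicit.
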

\begin{proof}
From the definition of $Z_B$ in Subsection \ref{sec:LBPcharacterizations},
we have $\log Z_{B}= \sum_{\alpha \in F} \psi_{\alpha}(\theta^{\alpha})+\sum_{i \in V}(1-d_i) \psi_{i}(\theta_i)$.
From Eq.~(\ref{eq:beliefbyepara1},\ref{eq:beliefbyepara1}) and Assumption \ref{asm:modelindludes},
the condition Eq.~(\ref{eq:productcondition}) comes to
\begin{equation}
\prodf \Psi_{\alpha}= \prodf  b_{\alpha} 
\prodv b_i^{1-d_i} \times Z_B. \label{eq:ZBfactorout}
\end{equation}
Taking the sum over $(x_i)_{i \in V}$, we obtain the asserting equation.
\end{proof}

\begin{lem}
\label{lem:LSidentity}
For each factor $\alpha \in F$ and $I \subset \alpha$, we introduce
an indeterminate $\beta^{\alpha}_{I}$, 
where we use notation
$\beta^{\alpha}_{\phi}=1$ and $\beta^{\alpha}_{I}=0$ if $|I|=1$.
The following identity holds.
\begin{equation}
\begin{split}
\sum_{\{x_i\}}
\prod_{\alpha \in F}
\sum_{I \subset \alpha}
\beta^{\alpha}_{I}
(x_{i_1} \xi_{i_1}^{-x_{i_1}})
\cdots
(x_{i_k} \xi_{i_k}^{-x_{i_k}}) 
\prod_{i \in V}
\frac{\xi_{i}^{x_i}}{\xi_{i}+\xi_{i}^{-1}} \\ 
=
\sum_{s \subset E_{B_{H}}}
(-1)^{|s|}
\prod_{\alpha \in F}
\beta^{\alpha}_{I_{\alpha}(s)}
\prod_{i \in V} f_{d_i(s)}(\xi_{i}-\xi_{i}^{-1}), 
\end{split}\label{thm1mod}
\end{equation}
where $I=\{i_1,\ldots,i_k\}$. 
\end{lem}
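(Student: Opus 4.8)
The plan is to prove the identity by expanding the product over factors on the left-hand side, reorganizing the result as a sum over spanning subgraphs of $B_H$, and then evaluating the spin sum, which factorizes over vertices.

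First I would expand $\prod_{\alpha \in F}\big(\sum_{I \subset \alpha}\,\cdots\big)$ into $\sum_{\{I_\alpha\}_{\alpha \in F}}\prod_{\alpha \in F}(\cdots)$, where the outer sum runs over all choices of a subset $I_\alpha \subset \alpha$ for each $\alpha \in F$. Such a choice is in bijection with a subset $s \subset E_{B_H}$: place the edge joining $\alpha$ and $i$ into $s$ exactly when $i \in I_\alpha$. Under this bijection $I_\alpha = I_\alpha(s)$, and each $i \in V$ lies in exactly $d_i(s)$ of the sets $I_\alpha$. Consequently the left-hand side becomes
\[
\sum_{s \subset E_{B_H}}\Big(\prod_{\alpha \in F}\beta^{\alpha}_{I_\alpha(s)}\Big)\ \sum_{\{x_i\}}\ \prod_{i \in V}\Big[\,(x_i\xi_i^{-x_i})^{d_i(s)}\,\frac{\xi_i^{x_i}}{\xi_i+\xi_i^{-1}}\,\Big],
\]
since for fixed $i$ the factor $(x_i\xi_i^{-x_i})$ is produced once for every $\alpha$ with $i \in I_\alpha(s)$.

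Next, for fixed $s$ the spin sum factorizes over $i \in V$ into local sums $\sum_{x=\pm1} x^{d}\xi_i^{-xd}\,\xi_i^{x}/(\xi_i+\xi_i^{-1})$ with $d = d_i(s)$, and a direct evaluation of the two terms $x = \pm 1$ gives
\[
\sum_{x=\pm1}\frac{x^{d}\xi^{-xd}\xi^{x}}{\xi+\xi^{-1}}=\frac{\xi^{1-d}+(-1)^{d}\xi^{d-1}}{\xi+\xi^{-1}}.
\]
It then remains to identify this expression with $(-1)^{d}f_{d}(\xi-\xi^{-1})$, which I would prove by induction on $d$: the cases $d=0$ and $d=1$ are immediate, and the inductive step follows because, after carrying the sign $(-1)^{d}$ through, the right-hand side obeys the same three-term recurrence $F_{d+1}=(\xi-\xi^{-1})F_{d}+F_{d-1}$ (up to the sign) as $f_{d}$ does with parameter $\xi-\xi^{-1}$. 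Finally, collecting signs and using $|s|=\sum_{i\in V}d_i(s)$, so that $(-1)^{|s|}=\prod_{i\in V}(-1)^{d_i(s)}$, converts the product of local factors into $(-1)^{|s|}\prod_{i\in V}f_{d_i(s)}(\xi_i-\xi_i^{-1})$, which is exactly the summand on the right-hand side, completing the proof.

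The computation is entirely elementary; the only points requiring a little care are the last step -- recognizing that the vertex-local generating factor telescopes into the polynomials $f_{d}$ -- and the sign bookkeeping matching $\prod_{i\in V}(-1)^{d_i(s)}$ with the global factor $(-1)^{|s|}$. I do not expect any genuine obstacle beyond this routine verification.
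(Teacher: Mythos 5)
Your proposal is correct and follows essentially the same route as the paper: expand the product over factors into a sum over subsets $s \subset E_{B_H}$, factorize the spin sum over vertices, and identify the local factor $\sum_{x=\pm1} x^{d}\xi^{-xd}\xi^{x}/(\xi+\xi^{-1})$ with $(-1)^{d}f_{d}(\xi-\xi^{-1})$, collecting signs via $|s|=\sum_i d_i(s)$. The paper does the last step by quoting the closed form $f_n(\xi-\xi^{-1})=(\xi^{n-1}-(-\xi)^{-n+1})/(\xi+\xi^{-1})$, which amounts to the same inductive verification you propose.
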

\begin{proof}
 \begin{equation*}
\begin{split}
(L.H.S.)&=\sum_{\{x_i\}}\sum_{s \subset E_{B_H}}
\prodf  \big\{  \beta^{\alpha}_{I_{\alpha}(s)}   \prod_{j \in I_{\alpha}(s)}(x_{j} \xi_{j}^{-x_{j}})   \big\}
\prod_{i \in V} \frac{\xi_{i}^{x_i}}{\xi_{i}+\xi_{i}^{-1}} \\
&=
\sum_{s \subset E_{B_H}}
\prodf (-1)^{|I_{\alpha}(s)|} \beta^{\alpha}_{I_{\alpha}(s)}
\prod_{i \in V} \sum_{x_i=\pm 1}(-x_i  \xi_{i}^{-x_i})^{d_{i}(s)}
\frac{\xi_{i}^{x_i}}{\xi_{i}+\xi_{i}^{-1}} \\ 
&=
\sum_{s \subset E_{B_H}} (-1)^{|s|}
\prodf   \beta^{\alpha}_{I_{\alpha}(s)}
\prod_{i \in V} 
\frac{-(-\xi_{i})^{-d_i(s)+1}+\xi_{i}^{d_i(s)-1}}{\xi_{i}+\xi_{i}^{-1}}.
\end{split}
\end{equation*}
In the first equality, we took the sum $I \subset \alpha$ out of the product $\alpha \in F$.
On the other hand, by the definition of $f_n$, we have
\begin{equation}
f_n(\xi-\xi^{-1})=
\frac{\xi^{n-1}-(-\xi)^{-n+1}}{\xi_{}+\xi_{}^{-1}}. \label{fn}
\end{equation}
Then the identity is proved.
\end{proof}

\begin{proof}
[Proof of Theorem \ref{thm:LS}]
Let $\{ \beta^{\alpha}_{I}  \}_{I \subset \alpha, |I| \geq 2}$ be given by Eq.~(\ref{def:betamulti})
and let $\{\xi_{i}\}_{i \in \alpha}$ be given by Eq.~(\ref{def:gammamulti}) and $\gamma_i=\xi_i -\xi_i^{-1}$.
Then $b_{\alpha}(x_{\alpha})$ has the following form:
\begin{equation}
b_{\alpha}(x_{\alpha})=
\frac{1}{\prod_{i \in \alpha} (\xi_{i} + \xi_{i}^{-1})}
\sum_{I \subset \alpha}
\beta^{\alpha}_{I}
\prod_{i \in I} x_i
\prod_{j \in \alpha \smallsetminus I} \xi_{j}^{x_j}. \label{eq:beliefBG}
\end{equation}
Indeed, from the Eq.~(\ref{eq:beliefBG}), we can check the $2^{|\alpha|}-1$ conditions 
Eq.~(\ref{def:gammamulti},\ref{def:betamulti}), which determine $b_{\alpha}$ completely.
Using $b_i(x_i)=\frac{\xi_i^{x_i}}{\xi_i+\xi_i^{-1}}$, we see that the R.H.S. of Lemma \ref{lem:ZZB}
is equal to the L.H.S. of Lemma \ref{lem:LSidentity}.
Accordingly, the expansion formula is proved.
\end{proof}

\subsection{Expansion of marginals}
In this subsection, we expand the true marginal distribution $p_1(x_1):=\sum_{x \smallsetminus 1} p(x)$ rather than the
partition function.
For the sake of simplicity,
we write down the expansion of $p_1(+1)-p_1(-1)$
rather than $p_1(+1)$ or $p_1(-1)$.
Since the variable is binary, this is enough.
We define a set of polynomials $\{g_n(x)\}_{n=0}^{\infty}$ 
inductively by the relations $g_0(x)=x,g_1(x)=-2$ and $g_{n+1}(x)=x
g_n(x) + g_{n-1}(x)$. 
Therefore, $g_2(x)=-x$, $g_3(x)=-x^2-2$, and so on.
\begin{thm}
\label{thm:LSmarginal}
In the same situation as Theorem \ref{thm:LS}, the following expansion holds.
\begin{equation}
\frac{Z}{Z_B}
\frac{p_{1}(+1)-p_{1}(-1)}{\sqrt{b_{1}(1)b_{1}(-1)}}
\hspace{-1mm}
=
\hspace{-1mm}
\sum_{s \subset E_{B_H}}
\hspace{-1mm}
(-1)^{|s|}
\prod_{\alpha \in F}
\hspace{-0.7mm}
\beta^{\alpha}_{I_{\alpha}(s)}
\hspace{-3mm}
\prod_{i \in V \smallsetminus \{1\}}
\hspace{-2.5mm}
 f_{d_i(s)}(\gamma_{i})
\hspace{1mm}
g_{d_1(s)}
(\gamma_1).  \label{eq:LSmarginal}
\end{equation}
\end{thm}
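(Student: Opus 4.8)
The plan is to mirror the derivation of the partition function expansion (Theorem~\ref{thm:LS}), replacing the global sum $\sum_x$ by the ``marked'' sum that computes $p_1(+1)-p_1(-1)$, and then applying a variant of the key combinatorial identity in Lemma~\ref{lem:LSidentity} in which the factor associated with vertex $1$ is modified. First I would establish the analogue of Lemma~\ref{lem:ZZB}: starting from Eq.~(\ref{eq:ZBfactorout}), namely $\prod_\alpha \Psi_\alpha = Z_B \prod_\alpha b_\alpha \prod_i b_i^{1-d_i}$, I multiply by $x_1$ and sum over $x \in \{\pm1\}^V$ to get
\begin{equation*}
\frac{Z}{Z_B}\big(p_1(+1)-p_1(-1)\big) = \sum_{x \in \{\pm1\}^V} x_1 \prod_{\alpha \in F} \frac{b_\alpha(x_\alpha)}{\prod_{i\in\alpha} b_i(x_i)} \prod_{i \in V} b_i(x_i).
\end{equation*}
Then, using the representation Eq.~(\ref{eq:beliefBG}) of $b_\alpha$ in terms of $\beta^\alpha_I$ and $\xi_i$, and $b_i(x_i)=\xi_i^{x_i}/(\xi_i+\xi_i^{-1})$ with $\gamma_i=\xi_i-\xi_i^{-1}$, the right-hand side becomes
\begin{equation*}
\sum_{\{x_i\}} x_1\, \xi_1^{x_1}\!\!\prod_{\alpha \in F} \sum_{I\subset\alpha}\beta^\alpha_I \prod_{i\in I}(x_i \xi_i^{-x_i}) \prod_{i\in V}\frac{\xi_i^{x_i}}{\xi_i+\xi_i^{-1}} \Big/ \xi_1^{x_1}\cdot\frac{1}{\,}\,,
\end{equation*}
so effectively the only change from the proof of Lemma~\ref{lem:LSidentity} is that the local sum over $x_1$ carries an extra factor $x_1$.

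The second step is the modified identity. In the proof of Lemma~\ref{lem:LSidentity}, the local sum at vertex $i$ produces $\sum_{x_i=\pm1}(-x_i\xi_i^{-x_i})^{d_i(s)}\xi_i^{x_i}/(\xi_i+\xi_i^{-1})$, which equals $f_{d_i(s)}(\gamma_i)$ by Eq.~(\ref{fn}). For vertex $1$ the extra $x_1$ changes this local sum to $\sum_{x_1=\pm1} x_1(-x_1\xi_1^{-x_1})^{d_1(s)}\xi_1^{x_1}/(\xi_1+\xi_1^{-1})$. I would compute this directly: it equals $(-1)^{d_1(s)}\big(\xi_1^{-d_1(s)+1}+(-1)^{d_1(s)}\xi_1^{d_1(s)-1}\big)/(\xi_1+\xi_1^{-1})$ up to the sign bookkeeping already absorbed into $(-1)^{|s|}$, and I claim this is precisely $g_{d_1(s)}(\gamma_1)$. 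To verify this, I would check that the function $h_n(\xi):= \big(\xi^{n-1}+(-\xi)^{-n+1}\big)/(\xi+\xi^{-1})$ — the ``$+$'' version of Eq.~(\ref{fn}) — satisfies $h_0 = \xi - \xi^{-1} = \gamma_1$, $h_1 = -2/(\,\xi+\xi^{-1}\,)\cdot(\,\cdots\,)$, wait — more carefully, evaluate $h_0, h_1$ and the three-term recurrence $h_{n+1}(\xi) = (\xi-\xi^{-1}) h_n(\xi) + h_{n-1}(\xi)$, which follows from the same algebra used for $f_n$. Matching initial values $g_0(x)=x$, $g_1(x)=-2$ and the recurrence $g_{n+1}=xg_n+g_{n-1}$ then gives $h_n(\xi) = g_n(\xi-\xi^{-1})$ by induction.

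Finally I would reassemble: the factor contributions $\beta^\alpha_{I_\alpha(s)}$ and the overall $(-1)^{|s|}$ arise exactly as in Lemma~\ref{lem:LSidentity}, the vertices $i \neq 1$ contribute $f_{d_i(s)}(\gamma_i)$, and vertex $1$ contributes $g_{d_1(s)}(\gamma_1)$, yielding the right-hand side of Eq.~(\ref{eq:LSmarginal}). Dividing both sides by $\sqrt{b_1(1)b_1(-1)} = 1/(\xi_1+\xi_1^{-1})$ accounts for the normalization in the statement (since $b_1(1)b_1(-1) = (\xi_1+\xi_1^{-1})^{-2}$). The main obstacle I anticipate is purely a careful sign/normalization audit: tracking how the extra $x_1$ interacts with the $(-1)^{d_1(s)}$ already present, confirming that the initial conditions $g_0(x)=x$ and $g_1(x)=-2$ (rather than, say, $g_1 = +2$) come out with the right sign, and making sure the $\xi_1^{x_1}$ from $b_1$ and the $\xi_1^{-x_1}$ from the factor expansion cancel correctly so that only $\gamma_1 = \xi_1 - \xi_1^{-1}$ survives. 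This is a finite, mechanical check but it is the place where an error would most likely creep in.
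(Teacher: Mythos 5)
Your proposal is correct and follows essentially the same route as the paper's proof: establish the marked analogue of Lemma~\ref{lem:ZZB} by inserting $x_1$ into the sum, replace the vertex-$1$ local sum $\sum_{x_1}(-x_1\xi_1^{-x_1})^n\xi_1^{x_1}/(\xi_1+\xi_1^{-1})=f_n(\gamma_1)/(\xi_1+\xi_1^{-1})$ by its $x_1$-weighted version, which equals $g_n(\gamma_1)/(\xi_1+\xi_1^{-1})$, and restore the normalization via $\xi_1+\xi_1^{-1}=(b_1(+1)b_1(-1))^{-1/2}$. The sign audit you flag is exactly the only delicate point (your tentative closed form for $h_n$ is off by an overall sign, but the verification strategy you describe — matching $g_0(x)=x$, $g_1(x)=-2$ and the recurrence $g_{n+1}=xg_n+g_{n-1}$ — settles it correctly).
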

\begin{proof}
The proof proceeds in a similar fashion to the proof of Theorem \ref{thm:LS}.
Analogous to Lemma \ref{lem:ZZB}, we have
\begin{equation}
\frac{Z}{Z_B}(p_1(+1)-p_1(-1)) =
\sum_{ x \in \{ \pm 1\}^{V} } x_1
\prod_{\alpha \in F}
\frac{b_{\alpha}(x_{\alpha})}{\prod_{i \in \alpha}b_{i}(x_i)}
\prod_{i \in V}
b_i(x_i),
\end{equation}
which is obtained by Eq.~(\ref{eq:ZBfactorout}).
On the other hand, 
using
\begin{equation}
\sum_{x_1=\pm 1}x_1 (-x_1 \xi_1^{-x_1})^{n}
\frac{\xi_{1}^{x_1}}{\xi_{1}+\xi_{1}^{-1}}
=
\frac{
g_{n}
(\xi_{1}-\xi_{1}^{-1})
}
{\xi_{1}+\xi_{1}^{-1}},
\end{equation}
we obtain a similar identity to Lemma \ref{lem:LSidentity}.
That is,
\begin{equation}
\begin{split}
(\xi_1+\xi_1^{-1})
\sum_{\{x_i\}} x_1
\prod_{\alpha \in F}
\sum_{I \subset \alpha}
\beta^{\alpha}_{I}
(x_{i_1} \xi_{i_1}^{-x_{i_1}})
\cdots
(x_{i_k} \xi_{i_k}^{-x_{i_k}}) 
\prod_{i \in V}
\frac{\xi_{i}^{x_i}}{\xi_{i}+\xi_{i}^{-1}} \\ 
=
\hspace{-1mm}
\sum_{s \subset E_{B_H}}
\hspace{-1mm}
(-1)^{|s|}
\prod_{\alpha \in F}
\hspace{-0.7mm}
\beta^{\alpha}_{I_{\alpha}(s)}
\hspace{-3mm}
\prod_{i \in V \smallsetminus \{1\}}
\hspace{-2.5mm}
 f_{d_i(s)}(\gamma_{i})
\hspace{1mm}
g_{d_1(s)}
(\gamma_1). 
\end{split}
\end{equation}
Combining these equations and $(\xi_1+\xi_1^{-1})=(b_1(+1)b_1(-1))^{-1/2}$, 
the theorem is proved. 
\end{proof}
By definition, the sum in the expansion is taken over all the subgraphs $s=(V \cup F, s)$ of $B_H$.
The subgraph $s$ contributes to the sum only if no vertices nor factors are degree one
except for the vertex $1 \in V$.

The ``first'' term $r(\emptyset)$ is equal to $g_0(\gamma_1)=\gamma_1=\frac{b_1(+1)-b_1(-1)}{\sqrt{b_1(+)b_1(-1)}}$.
Therefore, omitting the remaining terms, we obtain an approximation
\begin{equation}
 \argmax_{x_1 = \pm 1}p_1(x_1) \approx \argmax_{x_1 = \pm 1} b_1 (x_1).
\end{equation}
The assignment $x_1$ that maximize the marginal probability distribution $p_1(x_1)$ is called {\it Maximum Posterior Marginal} (MPM) assignment.
The above argument suggests that the Bethe approximation, which is obtained by taking the first term,
for the MPM problem of $p_1$ is given by the MPM of $b_1$.   

MPM problems of marginal distributions are especially important in the application of error
correcting codes.
In such applications, the receiver wants to infer the sent bits rather than the probabilities.

\section{Applications of LS}
\label{sec:applicationsLS}
In this section, we discuss applications of the \ls.
The first subsection provide our application of the LS and
The next subsection reviews results of other authors.

\subsection{One-cycle graphs}
In \cite{W1loop}, it is shown that if the graph has the unique cycle and the concerning node 
$1 \in V$ is on the cycle, 
then the assignment that maximize the belief $b_1$ gives the exact MPM assignment.
Using Theorem \ref{thm:LSmarginal}, we can easily show the result as follows \cite{WFloop}.
\begin{thm}
 Let $G=(V,E)$ be a graph with a single cycle with the node $1 \in V$ on it.
(See Figure \ref{fig:DBhypergraph} for example.)
Then,
$p_1(1)-p_1(-1)$ and
$b_1(1)-b_1(-1)$ have the same sign.
\end{thm}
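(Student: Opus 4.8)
The plan is to apply the loop series expansion of the marginal (Theorem \ref{thm:LSmarginal}) to the special case of a one-cycle graph with the node $1$ lying on the cycle, and to observe that the sign structure of the sub-coregraph sum collapses to a single controlled term.

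First I would enumerate the sub-coregraphs of $B_G$ that actually contribute. Since $G$ has nullity one, its core is a single cycle $C$; by the remarks following Theorem \ref{thm:LS} (and Theorem \ref{thm:LSmarginal}), a subgraph $s \subset E_{B_G}$ contributes only if it has no vertex or factor of degree one, with the single exception that the distinguished node $1$ is allowed (indeed forced, since $g_1(x) = -2 \neq 0$ but $g_0,g_2,\dots$ also do not vanish) to have any degree. In a graph with a unique cycle, the only edge subsets with all degrees $\geq 2$ are $\emptyset$ and the cycle $C$ itself. Allowing node $1$ to have degree $0$ or $1$ adds nothing new: a subgraph containing node $1$ with degree one must have an edge dangling from $1$, and to keep all other vertices at degree $\geq 2$ that edge would have to be part of a second cycle, which does not exist. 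Hence the only contributing sub-coregraphs are $s = \emptyset$ and $s = C$ (viewed as the corresponding edge set of the bipartite graph $B_G$, where each original edge $ij$ becomes the two bipartite edges through its factor vertex).

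Next I would write out the two surviving terms of Eq.~(\ref{eq:LSmarginal}). The empty subgraph gives $r(\emptyset) = g_0(\gamma_1) = \gamma_1 = \dfrac{b_1(1) - b_1(-1)}{\sqrt{b_1(1)b_1(-1)}}$. The cycle subgraph $C$ gives a term of the form $(-1)^{|C|}\Big(\prod_{\alpha \in C} \beta^{\alpha}_{I_{\alpha}(C)}\Big) g_2(\gamma_1) \prod_{i \in C \setminus\{1\}} f_2(\gamma_i)$; since $f_2 = 1$ and $g_2(x) = -x$, this is $-\gamma_1$ times a product of pairwise correlation coefficients $\beta_{ij}$ around the cycle (up to the sign $(-1)^{|C|}$, which in the pairwise bipartite encoding is a product of pairs of edges and hence $+1$). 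Writing $B := \prod_{ij \in C}\beta_{ij}$ for the product of the correlation coefficients around the cycle, Theorem \ref{thm:LSmarginal} yields
\begin{equation*}
\frac{Z}{Z_B}\,\frac{p_1(1) - p_1(-1)}{\sqrt{b_1(1)b_1(-1)}} = \gamma_1 \big(1 - B\big).
\end{equation*}
Since $Z/Z_B > 0$ and $\sqrt{b_1(1)b_1(-1)} > 0$, the sign of $p_1(1) - p_1(-1)$ equals the sign of $\gamma_1(1 - B)$, while the sign of $b_1(1) - b_1(-1)$ equals the sign of $\gamma_1$. So it suffices to show $1 - B > 0$, i.e. $|B| < 1$, i.e. $\prod_{ij \in C}|\beta_{ij}| < 1$.

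The main obstacle — and the only real content beyond bookkeeping — is this last inequality $\prod |\beta_{ij}| < 1$. I would handle it using the a priori bound on correlation coefficients of fixed-point beliefs already proved in Lemma \ref{betabound}: each $|\beta_{ij}| \leq \tanh(|J_{ij}|) < 1$ whenever $J_{ij}$ is finite (and if some $J_{ij} = 0$ then $\beta_{ij} = 0$ and $B = 0$ trivially). Hence $|B| = \prod_{ij\in C}|\beta_{ij}| < 1$ provided the interactions are finite, which gives $1 - B > 0$ and completes the argument. (If one wants to allow $|J_{ij}| = \infty$ one must argue slightly more carefully that the cycle cannot have all edges deterministic while node $1$ remains non-deterministic; but under the standing assumption that the beliefs come from a genuine LBP fixed point with $b_{ij} > 0$, each $|\beta_{ij}| < 1$ strictly.) I would close by remarking that this recovers the result of \cite{W1loop} that the belief at a node on the unique cycle gives the exact MPM assignment, now as a one-line consequence of the loop series.
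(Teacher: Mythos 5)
Your proposal is correct and follows essentially the same route as the paper: apply Theorem \ref{thm:LSmarginal}, observe that only $s=\emptyset$ and the unique cycle contribute, and conclude that the sum is $\gamma_1(1-\prod\beta_{ij})$, which has the sign of $\gamma_1$. You are in fact slightly more careful than the paper on the one point of substance — the paper only invokes $|\beta^{\alpha}_{\{ij\}}|\leq 1$, whereas your appeal to Lemma \ref{betabound} gives the strict inequality $\prod|\beta_{ij}|<1$ needed for the sign to be determined rather than merely non-negative.
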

\begin{proof}
In the right hand side of Eq.~(\ref{eq:LSmarginal}), only two subgraphs $s$ are contribute to the sum:
the empty set and the unique cycle.
From $g_0(\gamma_1)=\gamma_1$ , $g_2(\gamma_1)=-\gamma_1$ and
 $|\beta^{\alpha}_{\{ ij \}}|\leq 1$, we see that the sum is positively proportional to
 $\gamma_1$.
\end{proof}

If $1$ is not on the unique cycle, this property does not hold.
In this case, three types of subgraphs appear in Eq.~(\ref{eq:LSmarginal}).

\subsection{Review of other applications}
\label{sec:applicationLS}
{\bf Attractive models:}
A notable feature of the Bethe approximation of the partition function is that, for certain classes of models,
it lower bounds the true partition function, i.e., $Z_{B} \leq Z $.
As shown in \cite{SWWattractive}, this fact is deduced utilizing the LS for ``attractive models'':
a subclass of binary multinomial models.
Their definition of attractive models coincides with the condition $J_{ij} \geq 0$ for pairwise binary case.
In fact, if one of the following conditions
\begin{enumerate}
 \item $\gamma_i \geq 0$ for all $i \in V$, and $(-1)^{|I|}\beta^{\alpha}_{I} \geq 0$ for all $I \subset \alpha$
 \item $\gamma_i \leq 0$ for all $i \in V$, and $\beta^{\alpha}_{I} \geq 0$ for all $I \subset \alpha$
\end{enumerate}
holds, $Z/Z_B$ is obviously greater or equal to one.

~\\
{\bf Planar graphs:}
In \cite{CCTplanar}, Chertkov et al. have shown that
the partial sum of the loop series over the one-cycle sub-coregraphs 
reduces to evaluation of the partition function of the perfect matchings on an extended, $3$-regular graph. 
Weights of the perfect matchings are easily calculated by the LBP output. 
If the graph is planar, then the extend graph is also planar by construction
and the computation of the perfect matching partition function reduces to a Pfaffian.
Thus, the partial sum of the \ls is computed by a tractable determinant. 
Moreover, they find that the entire LS is reducible to a weighted Pfaffian series,
where each Pfaffian is a partial sum of the \ls.

In \cite{GKCplanar}, G\'{o}mez et al. have proposed an approximation algorithm for partition functions on planar graphs 
based on the above result.
Experimental results are presented for planar-intractable binary models, showing significant improvements over LBP.

~\\
{\bf Independent sets:}
The LS framework is utilized to analyze the performance of the Bethe approximation
for counting independent sets.
An independent set is a set of vertices such that there is no edge between any two of the vertices.
In \cite{CCGSSindependent}, Chandrasekaran et al. established that for any graph $G=(V,E)$
with max-degree $d$ and girth larger than
$8 d \log_2 |V| $, the multiplicative error decays as $1+O(|V|^{- \gamma})$ for a certain $\gamma > 0$.

\section{Special Case: Perfect matchings}
\label{sec:perfectmatching}
In this section, we apply the LS technique to a special class of problems:
the partition function of perfect matchings.
As we show in Theorem \ref{thm:PMLS}, the loop series expansion has an interesting form in this case.
First, we introduce the partition function of perfect matchings
using language of graphical models.
Then we describe the Bethe approximation in this specific case.

Let $G=(V,E)$ be a graph with non-negative edge weights $\bsw=\{w_{ij}\}_{ij \in E}$.
A {\it matching} $\match$ of $G$ is a set of edges such that any edges do not occupy a
same vertex. 
A matching is {\it perfect} if all the vertices are occupied.
The partition function of the perfect matchings is given by
\begin{equation}
 Z(\bsw) = \sum_{ \match : \text{perfect}}\hspace{1mm} \prod_{ij \in \match} w_{ij}, \label{def:Zpmatching}
\end{equation}
where the sum is taken over all the perfect matchings.
(An extension of this class of partition functions called {\it monomer-dimer partition functions}
will appear in Subsection \ref{subsectionomegamonomerdimer})
This partition function is formulated by a graphical model
over edge binary variables $ \sigma = \{  \sigma_{ij}=0,1 \}_{ij \in E} $.
A perfect matching $\match$ is identified with $\sigma$ that satisfy $\sum_{j \in N_i} \sigma_{ij}=1$ for all $i \in V$.
Let us define
\begin{equation*}
 \Psi_i(\sigma_i)=
\begin{cases}
w_{ij_0}  \quad &\text{: if } \sum_{j \in N_i}\sigma_{ij}=1 \text{ and } \sigma_{ij_0}=1, \\
0               &\text{: otherwise},
\end{cases}
\end{equation*}
where $\sigma_{i}= \{ \sigma_{ij} \}_{ j \in N_i}$.
In fact, Eq.~(\ref{def:Zpmatching}) is the normalization constant of a graphical model:
\begin{equation}
 p(\sigma)= \frac{1}{Z(\bsw)} \prod_{i} \Psi_i (\sigma_i).
\end{equation}
This is a probability distribution over all the perfect matchings.
Note that the corresponding factor graph $H$ has the vertex set and the factor set identified with $E$ and $V$, respectively.

We apply the Bethe approximation and loop series method to this partition function.
Since the functions $\Psi_i$ have zero values,
the domain of the \Bfe function $F$ is restricted.
We choose parameters $v_{ij}= b_{ij}(1)$,
then $b_i$ is determined by $v_{ij}$:
\begin{equation*}
 b_{i}(\sigma_i) = 
\begin{cases}
 v_{ij_0} \quad &\text{ if } \sum_{j \in N_i}\sigma_{ij}=1 \text{ and } \sigma_{ij_0}=1, \\
0             &\text{ otherwise}.
\end{cases}
\end{equation*}
Therefore, the \Bfe function is significantly simplified in our case:
\begin{equation}
 F(v)= -\sum_{ ij \in E}v_{ij} \log(w_{ij})
+
\sum_{ij \in E} \left\{ 
v_{ij}\log v_{ij} -(1-v_{ij})\log(1-v_{ij})
\right\}, \label{eq:PMBFE}
\end{equation}
where the domain of this function is given by
\begin{equation}
 L':= \{ v = \{v_{ij}\}_{ij \in E}| ~v_{ij} > 0, \sum_{j \in N_i} v_{ij}=1  \quad {}^{\forall}i \in V \}. \label{eq:PMlocalpolytope}
\end{equation}

To analyze the stationary points of the \Bfe function, which gives the Bethe approximation,
it is useful to introduce the following Lagrangian
\begin{equation}
 \mathcal{L}(v,\mu)= F(v)-
 \sum_{i \in V} (\mu_i+1)\left(\sum_{j \in N_i} v_{ij}-1 \right). \label{eq:Lagr}
\end{equation}
Looking for a stationary point of Eq.~(\ref{eq:Lagr}) over the $v$ variables, one arrives at the
following set of quadratic equations for each variables $v_{ij}$
\begin{equation}
 v_{ij}(1-v_{ij})=w_{ij}\exp\left(\mu_i+\mu_j\right) \quad \text{ for all } ij \in E. \label{eq:PMBP1}
\end{equation}
We call a point $v \in L'$ that satisfy this equation {\it LBP solution}.
At an LBP solution, the Bethe approximation of the partition function is given by Eq.~(\ref{eq:PMBFE}) 
and $F(v)= - \log Z_B$.

\subsection{Loop Series of perfect matching}
\label{sec:LSperfectmatching}
\begin{thm}
\label{thm:PMLS}
Let $v= \{ v_{ij} \}$ be an LBP solution and $Z_B$ be the Bethe approximation.
Then the following expansion hold.
 \begin{equation}
   Z=Z_{B} \sum_{s \subset E} r(s),\quad 
   r(s)= \prod_{i\in V} (1-d_i(s))
   \prod_{ij \in s} \frac{v_{ij}}{1-v_{ij}} .\label{eq:PMLS}
  \end{equation}
\end{thm}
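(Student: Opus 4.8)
The plan is to obtain the identity by specializing the general loop series of Theorem~\ref{thm:LS} to the factor graph $H$ of the perfect matching problem; alternatively, one could mimic the proof of Theorem~\ref{thm:LS} directly, starting from the specialization of Lemma~\ref{lem:ZZB}. Recall that here the vertex set and the factor set of $H$ are identified with $E$ and $V$ respectively, so in $B_H$ each $G$-edge $ij\in E$ is a variable node joined to exactly the two factor nodes $i,j\in V$; in particular every $ij\in E$ contributes precisely the two edges $(ij,i),(ij,j)$ to $E_{B_H}$. Using the identification $\sigma_{ij}=1\leftrightarrow x_{ij}=+1$, $\sigma_{ij}=0\leftrightarrow x_{ij}=-1$, one has $m_{ij}=\E{b_{ij}}{x_{ij}}=2v_{ij}-1$, hence $1-m_{ij}^2=4v_{ij}(1-v_{ij})$ and $\gamma_{ij}=(2v_{ij}-1)/\sqrt{v_{ij}(1-v_{ij})}$; as it turns out the precise value of $\gamma_{ij}$ will be irrelevant.

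The computational core is the evaluation of the multi-correlation coefficients $\beta^i_I$ of the matching-constrained belief $b_i$. Writing $I=\{ij:j\in J\}$ with $J\subset N_i$, $|J|=m$, and $t_{ij}:=\sqrt{v_{ij}/(1-v_{ij})}$, the centred statistic $(x_{ij}-m_{ij})/\sqrt{1-m_{ij}^2}$ equals $1/t_{ij}$ on $\sigma_{ij}=1$ and $-t_{ij}$ on $\sigma_{ij}=0$. I would compute $\beta^i_I$ by conditioning on which single incident edge $ij^*$ is occupied (probability $v_{ij^*}$); the sum over $j^*\in N_i$ collapses to $\beta^i_I=(-1)^{m-1}\Big(\sum_{j^*\in J}(1-v_{ij^*})-\sum_{l\in N_i\setminus J}v_{il}\Big)\prod_{j\in J}t_{ij}$, and using $\sum_{j\in N_i}v_{ij}=1$ the bracket telescopes to $m-1$, so $\beta^i_I=(-1)^{m-1}(m-1)\prod_{j\in J}t_{ij}$. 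This also reproduces the standard conventions $\beta^i_\emptyset=1$ and $\beta^i_{\{ij\}}=0$.

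The remaining steps are combinatorial. Since the variable node $ij$ of $H$ meets only the two factors $i,j$, we have $d_{ij}(s)\in\{0,1,2\}$, and $f_1=0$ while $f_0=f_2=1$; hence $s\subset E_{B_H}$ contributes to the loop series only when $d_{ij}(s)\neq1$ for all $ij$, i.e.\ only when $s=\{(ij,i),(ij,j):ij\in\tilde s\}$ for a unique $\tilde s\subset E$. For such $s$ one has $(-1)^{|s|}=(-1)^{2|\tilde s|}=1$, $\prod_{ij\in E}f_{d_{ij}(s)}(\gamma_{ij})=1$, and $I_i(s)$ is the set of $\tilde s$-edges at $i$, of size $d_i(\tilde s)$. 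Substituting the formula for $\beta^i_I$ into $r(s)=(-1)^{|s|}\prod_{i\in V}\beta^i_{I_i(s)}\prod_{ij\in E}f_{d_{ij}(s)}(\gamma_{ij})$, the factors $t_{ij}$ pair up across the two endpoints of each $\tilde s$-edge to produce $\prod_{ij\in\tilde s}t_{ij}^2=\prod_{ij\in\tilde s}v_{ij}/(1-v_{ij})$, while the prefactor $\prod_{i\in V}(-1)^{d_i(\tilde s)-1}(d_i(\tilde s)-1)=\prod_{i\in V}(-1)^{d_i(\tilde s)}(1-d_i(\tilde s))$ equals $\prod_{i\in V}(1-d_i(\tilde s))$ since $\sum_i d_i(\tilde s)=2|\tilde s|$ is even. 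Summing over all $s$, equivalently over all $\tilde s\subset E$ (the non-conforming $s$ contributing $0$, and those $\tilde s$ with a degree-one vertex being annihilated by $1-d_i(\tilde s)$), gives exactly the stated expansion. I expect the main obstacle to be this last piece of bookkeeping: reconciling the sign $(-1)^{m-1}(m-1)$ thrown off by the belief computation with the clean weight $1-d_i(s)$ in the statement, which is where the parity of $\sum_i d_i(\tilde s)$, and hence the handshake lemma, enters. (The factor $\prod_i(1-d_i(s))$ is of the same form as those in Theorem~\ref{thm:detofM}, which underlies the connection to the graph zeta function mentioned in the section's introduction.)
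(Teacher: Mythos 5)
Your proposal is correct and follows essentially the same route as the paper: specialize Theorem~\ref{thm:LS} by evaluating the multi-correlation coefficients $\beta^i_I$ of the matching-constrained beliefs via conditioning on the unique occupied incident edge, then collapse the sum over $s\subset E_{B_H}$ to a sum over $\tilde s\subset E$. The only cosmetic difference is your sign convention $x_{ij}=2\sigma_{ij}-1$ (the paper uses $x_{ij}=1-2\sigma_{ij}$, giving $\beta^i_I=(1-|I|)\prod_{j}\sqrt{v_{ij}/(1-v_{ij})}$ directly), and your extra $(-1)^{|I|}$ is correctly absorbed by the handshake-lemma parity argument in the final bookkeeping, which the paper leaves as "deduced easily."
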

\begin{proof}
We transform the $0,1$ variables $\sigma_{ij}$ to $\pm 1$ variables $x_{ij}$ by
$x_{ij}= 1- 2 \sigma_{ij}$.
Then $m_{ij}=\E{b_{ij}}{x_{ij}}= 1-2v_{ij}$.
For a factor $\alpha=i$ (i.e. the factor corresponding to the vertex $i$ of the original graph)
and $I \subset \alpha$, one derives
\begin{align*}
 \beta^{i}_{I}
&=  \E{b_{i}}{ \prod_{j \in I} \frac{ v_{ij} - \sigma_{ij} }{ \sqrt{v_{ij}(1-v_{ij})} }  } \\
&= \left(
\sum_{j \in I}v_{ij}(v_{ij}-1) \prod_{l \in I \smallsetminus j}v_{il} 
+ \sum_{j \notin I}v_{ij}  \prod_{l \in I}v_{il} 
\right) 
\prod_{j \in I}\frac{ 1 }{ \sqrt{v_{ij}(1-v_{ij})} }  \\
&=(1-|I|)\prod_{j \in I}\sqrt{ \frac{v_{ij}}{1-v_{ij}} }.
\end{align*}
Then the expansion Eq.~(\ref{eq:PMLS}) is deduced easily.
\end{proof}

In \cite{Nseries}, Nagle derived an expansion of monomer-dimer partition function,
which reduces to an expansion of the perfect matching partition function
if the monomer weights are zero.
Our expansion Eq.~(\ref{eq:PMLS}) extends his reduced expansion,
where only regular graphs and uniform edge weights cases are discussed.
Note that a similar $(1-d_i(s))$ type loop series also appear in the definition of the omega polynomial,
which we will discuss in Section \ref{sec:omega}.

For given non-negative square matrix $A$ of size $N$,
the permanent of $A$ is equal to the partition function of the perfect matchings
on the complete bipartite graph of size $N$ and with edge weights $A_{ij}$.
In \cite{CKVbeyond}, Chertkov et al. discuss this permanent problem
and use the loop series Eq.~(\ref{eq:PMLS}) to improve the Bethe approximation.
It is noteworthy that, empirically, the Bethe approximation lower bounds the true permanent.

\subsection{Loop Series by \IB type determinant formula}
\label{sec:Per_BP_Per}
The aim of this subsection is to demonstrate the importance and ubiquity of the graph zeta function
in the context of LBP and the Bethe approximation.
More precisely, we give another proof of Theorem \ref{thm:PMLS} based on the \IB type determinant formula.
Analogous to the proof of Theorem \ref{thm:LS}, the proof proceeds in two steps.
First, in Lemma \ref{lem:PMPM'}, we give a compact representation of the ratio of the partition function.
Secondly, we prove an identity involving an average of determinants.

\begin{lem}
\label{lem:PMPM'} 
For an LBP solution $v=\{v_{ij}\}$, the following is true 
\begin{equation}
 \frac{Z(\bsw)}{ Z_{B}} =  Z(v.*(1-v)) \prod_{ij\in E}(1-v_{ij})^{-1}, \label{eq:lem:PMPM'} 
\end{equation}
where $A.*B$ marks the element by element multiplication of the two matrices, $A$ and $B$ of
equal sizes.
\end{lem}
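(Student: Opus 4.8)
The plan is to prove Lemma~\ref{lem:PMPM'} by specializing the general relation Eq.~(\ref{eq:ZBfactorout}), which holds for any LBP fixed point, to the perfect matching model. Recall that Eq.~(\ref{eq:ZBfactorout}) states $\prod_{\alpha} \Psi_\alpha = Z_B \prod_\alpha b_\alpha \prod_i b_i^{1-d_i}$. In our setting the factors are indexed by $i \in V$ (with $\Psi_i(\sigma_i)$ the matching-indicator weight) and the variables are indexed by $ij \in E$, each of degree two in the factor graph $H$ (since edge $ij$ appears in exactly the factors corresponding to $i$ and $j$). So the ``variable'' correction exponent $1 - d_{ij}$ equals $-1$ for every edge variable.

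First I would take the sum of Eq.~(\ref{eq:ZBfactorout}) over all configurations $\sigma = \{\sigma_{ij}\}$. The left side sums to $Z(\bsw)$ by definition of the perfect matching partition function. On the right side, $Z_B$ pulls out, and we are left with $Z_B \sum_{\sigma} \prod_i b_i(\sigma_i) \prod_{ij \in E} b_{ij}(\sigma_{ij})^{-1}$. Here $b_{ij}$ is the edge-variable belief (a distribution on $\sigma_{ij} \in \{0,1\}$), not to be confused with a factor belief; by local consistency its mean is $v_{ij} = b_{ij}(1)$, so $b_{ij}(1) = v_{ij}$ and $b_{ij}(0) = 1 - v_{ij}$.

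Next I would evaluate the sum. The factor $\prod_{ij} b_{ij}(\sigma_{ij})^{-1} = \prod_{ij} v_{ij}^{-\sigma_{ij}} (1-v_{ij})^{-(1-\sigma_{ij})}$. Pulling out the $\sigma$-independent piece $\prod_{ij}(1-v_{ij})^{-1}$, the remaining summand becomes $\prod_i b_i(\sigma_i) \prod_{ij} \big(\tfrac{1-v_{ij}}{v_{ij}}\big)^{\sigma_{ij}}$. Now $b_i(\sigma_i)$ is supported on matchings and equals $v_{ij_0}$ when $\sigma_{ij_0}=1$ is the unique occupied edge at $i$; substituting, each surviving configuration is a perfect matching $\match$ and its contribution is $\prod_{i}\, (\text{stuff})$ which, after collecting the two endpoint contributions of each matched edge against the $\tfrac{1-v_{ij}}{v_{ij}}$ factors, equals $\prod_{ij \in \match} v_{ij}(1-v_{ij})$. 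That is precisely $Z(v.*(1-v))$, the perfect matching partition function with edge weights $v_{ij}(1-v_{ij})$. Combining gives Eq.~(\ref{eq:lem:PMPM'}).

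The main obstacle I anticipate is the bookkeeping in the last step: carefully matching the per-vertex factors $b_i(\sigma_i)$ against the per-edge reweighting $\big(\tfrac{1-v_{ij}}{v_{ij}}\big)^{\sigma_{ij}}$ so that each matched edge $ij$ contributes exactly $v_{ij}(1-v_{ij})$ and each unmatched edge contributes $1$. One has to check that a vertex $i$ with $\sigma_{ij_0}=1$ contributes $v_{ij_0}$ from $b_i$, that edge $ij_0$ is counted at both endpoints giving $v_{ij_0}^2$, and that the two edge reweighting factors $\tfrac{1-v_{ij_0}}{v_{ij_0}}$ (one is actually a single factor since $\sigma_{ij_0}$ is one variable) reduce $v_{ij_0}^2 \cdot \tfrac{1-v_{ij_0}}{v_{ij_0}} = v_{ij_0}(1-v_{ij_0})$; getting the exponent arithmetic right is the only delicate point, and it is routine. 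Everything else follows from Eq.~(\ref{eq:ZBfactorout}) and the definitions.
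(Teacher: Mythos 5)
Your route is genuinely different from the paper's. The paper never touches the configuration sum: it computes $Z_B$ in closed form from the restricted \Bfe Eq.~(\ref{eq:PMBFE}) together with the constraints Eq.~(\ref{eq:PMlocalpolytope}) and the stationarity conditions Eq.~(\ref{eq:PMBP1}), obtaining $Z_B=\prod_{ij}(1-v_{ij})\prod_i e^{-\mu_i}$, then rewrites $Z(\bsw)$ as $Z(v.*(1-v))\prod_i e^{-\mu_i}$ using $w_{ij}=v_{ij}(1-v_{ij})e^{-\mu_i-\mu_j}$, and divides; the Lagrange multipliers cancel. You instead sum the belief-reparametrization identity over configurations, in the spirit of Lemma \ref{lem:ZZB}. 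Your combinatorial bookkeeping is correct: each matched edge picks up $v_{ij}$ from each endpoint belief and one factor $(1-v_{ij})/v_{ij}$ from the single edge variable, giving $v_{ij}(1-v_{ij})$, and the prefactor $\prod_{ij}(1-v_{ij})^{-1}$ comes out right.

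The gap is at the very first step, where you invoke Eq.~(\ref{eq:ZBfactorout}). That identity was derived from Theorem \ref{thm:LBPcharacterizations} and Lemma \ref{lem:ZZB} under Assumptions \ref{asm:expregular} and \ref{asm:modelindludes}, i.e.\ for strictly positive compatibility functions lying in a regular exponential family, with $Z_B$ given by Definition \ref{defn:BFE}. The perfect-matching model violates all of this: the $\Psi_i$ take the value zero, the factor beliefs $b_i$ are supported on a proper subset of configurations, ``LBP solution'' is defined in this section by the Lagrangian conditions Eq.~(\ref{eq:PMBP1}) on the restricted polytope $L'$, and $Z_B$ is defined by $-\log Z_B = F(v)$ for the restricted free energy Eq.~(\ref{eq:PMBFE}). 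So the identity $\prod_i\Psi_i(\sigma_i)=Z_B\prod_i b_i(\sigma_i)\prod_{ij}b_{ij}(\sigma_{ij})^{-1}$ cannot simply be cited; it must be re-established for this degenerate model with this $Z_B$. It does hold, but verifying it (e.g.\ by checking it on a single perfect matching) requires exactly the computation the paper carries out — expressing $Z_B$ via Eqs.~(\ref{eq:PMBFE}), (\ref{eq:PMlocalpolytope}) and substituting $w_{ij}=v_{ij}(1-v_{ij})e^{-\mu_i-\mu_j}$ from Eq.~(\ref{eq:PMBP1}). In other words, the step you dismiss as ``follows from Eq.~(\ref{eq:ZBfactorout}) and the definitions'' is where the actual content of the lemma lives; once that identity is justified, your configuration-sum argument and the paper's direct cancellation of the $e^{-\mu_i}$ are two equivalent ways of finishing.
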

\begin{proof}
From the definition of the \Bfe Eq.~(\ref{eq:PMBFE}), and conditions for LBP solutions Eqs.~(\ref{eq:PMlocalpolytope},\ref{eq:PMBP1}), one derives
\begin{equation}
 Z_{B}=
\prod_{ij \in E}(1-v_{ij}) \prod_{ij \in E}\Big( \frac{w_{ij}}{v_{ij}(1-v_{ij})} \Big)^{v_{ij}} =
\prod_{ij \in E}(1-v_{ij}) \prod_{i \in V} \exp (- \mu_i).  \label{ZBPs}
\end{equation}
On the other hand Eq.~(\ref{eq:PMBP1}) results in $Z(\bsw)= Z(v.*(1-v))  \prodv \exp(- \mu_i)$. 
Combining the two equations, we arrive at Eq.~(\ref{eq:lem:PMPM'}).
\end{proof}
Note that this lemma, representing the ratio in terms of another partition function of perfect matchings,
reminisce Lemma \ref{lem:ZZB}.

\begin{lem}
[LS as average of determinants]\label{lem:detav}
Let $\vec{E}$ be the set of directed edges of the graph $G=(V,E)$
and $x=(x_{\edij})_{(\edij) \in \vec{E}}$ be a set of random variables that satisfies $\E{}{x_{\edij}}=0$, $\E{}{x_{\edij}x_{\edji}}=1$ and
$\E{}{x_{\edij}x_{\edkl}}=0 \quad (\{k,l\} \neq \{i,j\})$. (Here and below $\E{x}{\cdots}$ stands for
the expectation over the random variables $x$.) Then the following relation 
holds;
 \begin{equation*}
  \sum_{s \subset E}  \prod_{i\in V} (1-d_i(s)) \prod_{ij \in s} \frac{v_{ij}}{1-v_{ij}} 
   =\E{x}{ \det [ I - i \mathcal{V} \mathcal{M}] },
 \end{equation*}
where $\mathcal{V}=\diag (\sqrt{v_{ij}/(1-v_{ij})} x_{\edij})$.
\end{lem}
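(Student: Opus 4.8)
The plan is to prove Lemma~\ref{lem:detav} by expanding the determinant $\det[I - i\mathcal{V}\mathcal{M}]$ via the permutation (or cycle) expansion, and then taking the expectation over $x$ term by term. The operator $\mathcal{M}$ is the directed edge matrix on $\vec{E}$, with $\mathcal{M}_{e,e'}=1$ iff $\etea$, and $\mathcal{V}$ is diagonal with entry $\sqrt{v_{ij}/(1-v_{ij})}\,x_{\edij}$ on the directed edge $e=(\edij)$. So $(\mathcal{V}\mathcal{M})_{e,e'}$ is nonzero only when $\etea$, and then it equals $\sqrt{v_{[e]}/(1-v_{[e]})}\,x_e$. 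The determinant expansion $\det(I-A)=\sum_{\sigma}\mathrm{sgn}(\sigma)\prod_e(-A)_{e,\sigma(e)}$ groups into products over disjoint cycles in the directed edge graph; a cycle here is a closed geodesic (with no immediate backtracking, since $\etea$ forbids $e'=\bar e$). Thus $\det[I-i\mathcal{V}\mathcal{M}]$ is a sum over collections of vertex-disjoint-in-$\vec E$ closed geodesics, each contributing a product of $\sqrt{v/(1-v)}\,x$ factors and a sign/$i$-power.

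The second step is to take $\E{x}{\cdot}$. Since $\E{}{x_{\edij}}=0$, $\E{}{x_{\edij}x_{\edji}}=1$, and cross moments over distinct undirected edges vanish (and, crucially, $\E{}{x_{\edij}^2}$, $\E{}{x_{\edij}x_{\edik}}$ etc. do not enter because a closed geodesic uses each directed edge at most once), the only monomials in the expansion that survive expectation are those in which the directed edges appearing come in opposite pairs $\{e,\bar e\}$ — i.e. each undirected edge $[e]$ is used either zero times or exactly twice, once in each direction. A collection of closed geodesics using each of a set $s\subset E$ of undirected edges exactly once in each direction corresponds exactly to choosing, at each vertex $i$ incident to $d_i(s)$ edges of $s$, a pairing/permutation that routes the incoming half-edges to outgoing half-edges; the number of such routings, weighted by the sign coming from $\mathrm{sgn}(\sigma)$ and the power of $i$, collapses (after the standard local computation at each vertex, using that at a degree-$d$ vertex the relevant signed count of geodesic-compatible permutations is $(1-d)$) to $\prod_{i\in V}(1-d_i(s))$. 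The $x$-factors give $\E{}{\prod_{e\in s}x_e x_{\bar e}}=1$, and the $\sqrt{v/(1-v)}$ factors give $\prod_{ij\in s}\sqrt{v_{ij}/(1-v_{ij})}\cdot\sqrt{v_{ij}/(1-v_{ij})}=\prod_{ij\in s}v_{ij}/(1-v_{ij})$. Summing over $s\subset E$ yields the claimed identity.

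Concretely, I would first reduce the bookkeeping by invoking the decomposition $\mathcal{M}=\iota\mathcal{T}\mathcal{T}^{*}-\iota$ from Lemma~\ref{lem:decM} (with $\iota=\iota(\bs{1})$), or equivalently work directly with the \IB type determinant formula of Theorem~\ref{thm:Ihara} / Corollary~\ref{cor:IBfornonhyper}: after taking the expectation, the surviving contribution of each undirected edge $[e]$ is governed by the $2\times 2$ block $U_{[e]}$ and its determinant $\det(I-u_e u_{\bar e})$, and the vertex contributions by $I+\hat{\mathcal D}-\hat{\mathcal A}$. With scalar weights $u_e=u_{\bar e}=i\sqrt{v_{[e]}/(1-v_{[e]})}$ when the edge is "on", one recognizes the structure; alternatively, and more transparently, expanding $\E{x}{\det[I-i\mathcal V\mathcal M]}$ directly in the permutation sum and using Theorem~\ref{thm:detofM} (which gives $\det\mathcal M=\prod_i(1-d_i)\prod_\alpha(1-d_\alpha)$, here $\prod_\alpha(1-d_\alpha)=\prod_{[e]}(1-2)=(-1)^{|E|}$) as the "top" term $s=E$ provides a strong consistency check.

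The main obstacle I anticipate is the combinatorial/sign computation at a single vertex: one must show that when $d_i(s)$ undirected edges are incident at $i$ and we count the ways a system of closed geodesics can pass through $i$ using each of these edges once in each direction — weighted by the determinant's signature and the accumulated powers of $i$ from the $-i\mathcal V\mathcal M$ entries — the net local weight is exactly $(1-d_i(s))$. This is the same "$(1-d)$" that appears in $\det\mathcal M$ (Theorem~\ref{thm:detofM}), in the loop series of Theorem~\ref{thm:PMLS}, and in the $f_n$ polynomials' low-order behavior, so the cleanest route is probably to match the expansion against the already-established identity by first proving the $s=E$ (or small-graph) cases and then arguing term-by-term in $s$; equivalently, restricting $\mathcal V$ to have support only on edges of $s$ and using $\det\mathcal M$ on the core of the resulting subgraph. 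Handling the signs carefully — and confirming that the $i$'s combine with the signature to produce the sign-free $\prod_i(1-d_i(s))$ with the positive real weights $v_{ij}/(1-v_{ij})$ — is the delicate part; everything else is routine moment computation.
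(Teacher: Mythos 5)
Your overall strategy is sound and you have correctly isolated the two essential facts: (i) after expanding the determinant, the expectation over $x$ kills every monomial in which some undirected edge contributes only one of its two orientations, so only directed-edge subsets of the form $\vec{s}$ for $s\subset E$ survive; and (ii) the surviving coefficient of $\prod_{ij\in s}v_{ij}/(1-v_{ij})$ must come out to $\prod_{i\in V}(1-d_i(s))$. However, your primary route --- the permutation/cycle expansion followed by a per-vertex signed count of geodesic routings --- leaves precisely the hard part undone: you assert that the local weight at a degree-$d$ vertex ``collapses to $(1-d)$'' and then yourself flag this signed count as the delicate, unresolved step. Once the expectation has been taken, that count \emph{is} the content of the lemma, so as written the argument has a genuine gap rather than a routine omission.

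The paper closes this gap by choosing a coarser expansion that makes the vertex-by-vertex combinatorics unnecessary: instead of the permutation sum, it uses the principal-minor expansion
\begin{equation*}
\det[I - i\mathcal{V}\mathcal{M}]
=\sum_{S\subset\vec{E}}(-i)^{|S|}\Bigl(\prod_{e\in S}\mathcal{V}_{e,e}\Bigr)\det\bigl(\mathcal{M}|_{S}\bigr),
\end{equation*}
valid because $\mathcal{V}$ is diagonal. Taking $\E{x}{\cdot}$ leaves only $S=\vec{s}$ with $s\subset E$, each term contributing $(-i)^{2|s|}\det(\mathcal{M}|_{\vec{s}})\prod_{ij\in s}v_{ij}/(1-v_{ij})$; since $\mathcal{M}|_{\vec{s}}$ is exactly the directed edge matrix of the subgraph $(V,s)$, Theorem \ref{thm:detofM} gives $\det(\mathcal{M}|_{\vec{s}})=(-1)^{|s|}\prod_{i\in V}(1-d_i(s))$ (the factor $\prod_\alpha(1-d_\alpha)=(-1)^{|s|}$ since every edge has degree two), which cancels against $(-i)^{2|s|}=(-1)^{|s|}$ and yields the claim with no local sign analysis at all. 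You do gesture at exactly this shortcut (``using $\det\mathcal{M}$ on the core of the resulting subgraph''), so the fix is one step from what you wrote: promote that aside to the main argument, i.e.\ replace the cycle expansion by the principal-minor expansion and cite Theorem \ref{thm:detofM} for the restricted matrix, rather than re-deriving its $(1-d)$ count by hand inside the permutation sum.
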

\begin{proof}
Expanding the determinant, one derives
\begin{equation*}
 \det [ I - i \mathcal{V} \mathcal{M}]
=\sum_{\{ e_1,\ldots,e_n \} \subset \vec{E} } \det \mathcal{M}|_{ \{ e_1,\ldots,e_n \} }
(-i)^{n} \prod_{l=1}^n (\mathcal{V})_{e_l,e_l}. 
\end{equation*}
Evaluating expectation of each summand, one observe that it is nonzero
only if $(\edij)  \in \{ e_1,\ldots,e_n \}$ implies $(\edji) \in \{ e_1,\ldots,e_n \}$.  
Thus we arrive at
\begin{equation*}
 \E{x}{ \det [ I - i \mathcal{V} \mathcal{M}] }
= \sum_{s \subset E}   (-1)^{|s|}  \det \mathcal{M}|_s  \prod_{ij \in s}
\frac{v_{ij}}{1-v_{ij}} = \sum_{s \subset E} r(s),
\end{equation*}
where $\mathcal{M}|_s$ is the restriction to the set of directed edges obtained by $s$.
In the final equality, we used the formula $\det{\mathcal{M}}=(-1)^{|E|} \prod_{i \in V}(1-d_i)$,
which is proved in Theorem \ref{thm:detofM},
for the subgraph $s$. 
\end{proof}

The determinant in the expectation is nothing but the reciprocal of the graph zeta function.
As we show below, we obtain the loop series of Theorem \ref{thm:PMLS}
by applying the Ihara-Bass type determinant formula.

\begin{proof}
[Proof of Theorem \ref{thm:PMLS} by Lemma \ref{lem:PMPM'} and \ref{lem:detav}]\quad\\
We use Lemma \ref{lem:detav} in a case that the random variables $x_{ij} =x_{\edij}=x_{\edji}$
taking $\pm 1$ values with probability $1/2$.
Using the \IB type determinant formula Eq.~(\ref{eq:IBfornonhyper}), one derives
\begin{equation*}
\det [ I - i \mathcal{V} \mathcal{M}]  = 
\det \hat{A} 
\prod_{ij \in E}(1-v_{ij})^{-1},
\end{equation*}
where 
\begin{equation*}
 \hat{A}_{ij}=\sqrt{-v_{ij}(1-v_{ij})} \hspace{1mm} x_{ij}.
\end{equation*}
Therefore,
\begin{align*}
 \sum_{s \subset E} r(s) 
= \E{x}{ \det [ I - i \mathcal{V} \mathcal{M}] }
&= \E{x}{\det \hat{A}} \prod_{ij \in E}(1-v_{ij})^{-1} \\ 
&= Z(v.*(1-v)) \prod_{ij \in E}(1-v_{ij})^{-1} \\
&= \frac{Z(\bsw)}{Z_B}.
\end{align*}
\end{proof}

\section{Discussion}
This chapter developed an alternative method for deriving and expressing the LS for the partition function
of binary models.
The loop series expansion of marginal distributions is also developed.
The form of the loop series expression reflects, in some sense, the geometry of the factor graph. 
In fact, utilizing the LS, we showed that the MPM assignment is exact for 1-cycle graphs.
In this proof, restriction on the appearing subgraphs was essential.
Such graph geometric viewpoints are further discussed in the next chapter, treating the LS as a graph polynomial.

The loop series is not independent from the graph zeta functions.
Indeed, for the perfect matchings problem, the loop series is also derived from the 
Ihara-Bass type determinant formula as discussed in Subsection \ref{sec:Per_BP_Per}.
This result suggests deep connections between LBP, the \Bfe and the graph zeta function.
However, we do not know how to derive the general \ls expansion based on graph zeta techniques.
It would be interesting to find such a derivation,
elucidating the relation between the \ls and the graph zeta function.

\chapter{Graph polynomials from LS}

%
%
%
%
%
%
%
%
%
%
%
%
%
%
%

%
%
%
\section{Introduction}
This chapter treats the Loop Series (LS) as a weighted graph characteristics called theta polynomial $\Theta_{G}(\bs{\beta},\bs{\gamma})$.
Since the LS is the ratio of the partition function and its Bethe approximation,
elucidating mathematical structures of the LS are worth interest.
In this chapter, we only discuss the {\it binary pairwise} models.

Our motivation for the graph polynomial treatment of the LS
is to ``divide the problem in two parts.''
The \ls is evaluated in two steps: 
1.~the computation of $\bs{\beta}=(\beta_{ij})_{ij \in E}$ and $\bs{\gamma}=(\gamma_{i})_{i \in V}$ by an LBP solution; 
2.~the summation of all the contributions from the sub-coregraphs. 
Since it seems difficult to derive strong results on the first step, we intend to focus on the second step.
If there is an interesting property in the form of the \ls sum, or the $\Theta$-polynomial,
the property should be related to the behavior of the error of the partition function approximation.

For example, if the graph is a tree,
the $\Theta$-polynomial is equal to one because there are no sub-coregraphs in trees.
This fact implies that the Bethe approximation gives the exact values of the partition functions
on trees.
Another notable success, in this line of approach, is the proof of $Z \geq Z_B$ for attractive models 
with means biased in one direction \cite{SWWattractive}.
The result can be understood by the property of $\Theta_G$: the coefficients of $\Theta_{G}(\bsbeta,\bsgamma)$ are non-negative.
(See Subsection \ref{sec:applicationLS}.)
Though we have not been successful in deriving properties of $\Theta_{G}(\bsbeta,\bsgamma)$
that can be used to derive unproved properties of the Bethe approximation,
we show that the $\Theta$-polynomial has an interesting property called \dcr
if the vertex weights $\gamma_i$ are set to be the same.
We further analyze the bivariate graph polynomial $\theta_{G}(\beta,\gamma)$,
which is obtained as the two-variable version of $\Theta_{G}(\beta,\gamma)$,
and the univariate graph polynomial $\omega_{G}(\beta)$,
which is obtained from $\theta_{G}(\beta,\gamma)$ by specializing $\gamma=2 \sqrt{-1}$ and 
eliminating a factor $(1-\beta)^{|E|-|V|}$.
We believe that our results give hints for future investigations on the $\Theta$-polynomial.

\subsection{Basic notations and definitions}\label{sec1.2}
In the first place, we review basic notations and definitions on graphs following Subsection \ref{sec:basicsgraph}.
For clarity, we summarize them for the case of graphs, not hypergraphs.
Let $G=(V,E)$ be a finite graph, where
$V$ is the set of vertices and 
$E$ is the set of undirected edges.
In this chapter, a graph means a multigraph, in which
loops and multiple edges are allowed.
Note that, in graph theory, a {\it loop}\footnote{
The term ``loop'' in `` loopy belief propagation'' and ``\ls'' has no relation to this definition of loop.  
}
is an edge that connects a vertex to itself.
A subset $s$ of $E$ is identified with the {\it spanning subgraph}
$(V,s)$ of $G$ unless otherwise stated.

In this chapter, we use a symbol $e$ to represent an undirected edge,
though it was mainly used to represent a directed edge in previous chapters.
By the notation of $e=ij$ we mean that 
vertices $i$ and $j$ are the endpoints of $e$. 
The number of ends of edges connecting to a vertex $i$ is called 
the {\it degree} of $i$ and denoted by $d_i$.

The number of connected components of $G$ is denoted by $k(G)$. 
The {\it nullity} and the {\it rank} of $G$ are defined by
$n(G):=|E|-|V|+k(G)$ and $r(G):=|V|-k(G)$ respectively.

For a graph $G$, the {\it core} of the graph $G$ 
is given by a process of clipping
vertices of degree one step by step \cite{Stopology}. 
This graph is denoted by $\core(G)$.
A graph $G$ is called a {\it coregraph} if $G=\core(G)$. 
In other words, a graph is a coregraph if and only if the degree of each
vertex is not equal to one.

For an edge $e \in E$,
the graph
$G \backslash e$ is obtained by deleting $e$
and $G/e$ is obtained by contracting $e$. 
If $e$ is a loop, $G/e$ is the same as $G \backslash e$.
The disjoint union of graphs $G_1$ and
$G_2$ is denoted by $G_1 \cup G_2$.
The graph with a single vertex and
$n$ loops is called the {\it bouquet graph} and denoted by $B_n$.

\subsection{Graph polynomials}
Partition functions studied in statistical physics have been a source of 
many graph polynomials. 
For example, the partition functions of
the q-state Potts model and 
the bivariated random-cluster model of Fortuin and Kasteleyn
derive graph polynomials.
They are known to be equivalent to the Tutte polynomial
\cite{Bollobas}.
Another example is
the monomer-dimer partition function with uniform 
monomer and dimer weights,
which is essentially the matching polynomial \cite{HLmonomerdimer}.

The most important feature of our graph polynomials is the
deletion-contraction relation:
\begin{align}
&\theta_{G}({\beta},\gamma)=
(1-\beta) \theta_{G\backslash e}({\beta},\gamma) +
\beta  \theta_{G/e}({\beta},\gamma),   \nonumber \\
&\omega_{G}({\beta})=
\omega_{G\backslash e}({\beta}) +
\beta  \omega_{G/e}({\beta}),   \nonumber
\end{align}
where $e \in E$ is not a loop.
Furthermore, these polynomials are multiplicative:
\begin{equation}
 \theta_{G_1 \cup G_2} =  \theta_{G_1}  \theta_{G_2}  \quad \text{ and } \quad 
 \omega_{G_1 \cup G_2} =  \omega_{G_1}  \omega_{G_2}. \nonumber
\end{equation}
Graph invariants that satisfy 
the deletion-contraction relation and the multiplicative law
are studied by Tutte \cite{Tring} in the name of V-function.
Our graph polynomials $\theta_{G}$ and $\omega_{G}$ are essentially
examples of V-function.

Graph polynomials that satisfy deletion-contraction relations
arise from wide range of problems
\cite{Bollobas,EMinter1}. 
To the best of our knowledge,
all of the graph polynomials that satisfy \dcrs and appear in some applications 
are known to be equivalent to the Tutte polynomial
or obtained by its specialization.
We can list the chromatic polynomial, the flow polynomial and the reliability polynomial
for such examples.
The Tutte polynomial have a reduction formula for loops,
but our new graph polynomials do not have such reduction formulas for loops and
are essentially different from the Tutte polynomial.

\subsection{Overview of this chapter}
This chapter discusses the following topics.
First, in Section \ref{sec:MLS}, we define the weighted graph characteristic
$\Theta_{G}(\bsbeta,\bsgamma)$.
An interesting property called \dcr is shown when the vertex weights connected by the contracted edge are equal.
In Section \ref{sec:numsubcoregraph}, we derive upper and lower bounds 
on the number of sub-coregraphs, which are attained by $3$-regular graphs and bouquet graphs respectively. 
Section \ref{sec:theta} is a discussion on the $\theta$-polynomial.
We see that the $\theta$-polynomial is essentially a new interesting example of a special class of graph polynomials called V-function.
Section \ref{sec:omega} is devoted to investigations of the $\omega$-polynomial
including a study on the special value $\beta=1$.
We show that the polynomial coincides with the monomer-dimer partition function with weights parameterized by $\beta$.
Especially, it is essentially the matching polynomial if the graph is regular.

\section{Loop series as a weighted graph polynomial}
\label{sec:MLS}
\subsection{Definition}
In the first place, we introduce the expansion of the LS as a weighted graph polynomial.
We associate complex numbers with vertices and edges $\bs{\gamma}=(\gamma_{i})_{i \in V}$ and
$\bs{\beta}=(\beta_{e})_{e \in E}$ respectively,
making a graph $G$ be a {\it weighted graph}.

Recall that the set of polynomials $\{f_n(x)\}_{n=0}^{\infty}$ is defined
inductively by the relations
\begin{equation}
 f_0(x)=1, \quad
f_1(x)=0,   \quad
\text{ and } \quad
f_{n+1}(x)=x f_n(x) + f_{n-1}(x). \label{defindf}
\end{equation}
Therefore, $f_2(x)=1,f_3(x)=x$ and so on.
Note that,
these polynomials are transformations of the Chebyshev polynomials of the second
kind:
$f_{n+2}(2 \sqrt{-1} z)=(\sqrt{-1})^{n} U_{n}(z)$,
where $U_{n}(\cos \theta)= \frac{\sin ((n+1)\theta)}{\sin \theta}$.

Since we are considering graphs,
the expression in Eq.~(\ref{eq:LSfactor}) reduces to the following form.
\begin{defn}
\label{defTheta}
Let $\bs{\beta}=(\beta_{e})_{e \in E}$ 
 and $\bs{\gamma}=(\gamma_{i})_{i \in V}$
be the weights of $G$. 
We define
\begin{equation}
\Theta_{G}(\bs{\beta},\bs{\gamma}):= 
\sum_{s \subset E}
\prod_{e \in s} 
\beta_{e}
\prod_{i \in V} f_{d_i(s)}(\gamma_i),  \label{eq:defnTheta} 
\end{equation}
where $d_i(s)$ is the degree of the vertex $i$ in $s$.
\end{defn}
If all the vertex weights are equal to $\gamma$, it is denoted by $\Theta_{G}(\bs{\beta},{\gamma})$.
In addition, if all the edge weights are set to be the same, it is denoted by $\theta_{G}(\beta,\gamma)$.

In Eq.~(\ref{eq:defnTheta}), there is a summation over all subsets of $E$.
Recall that an edge set $s$ is identified with the spanning subgraph $(V,s)$.
Since $f_1(x)=0$, 
the subgraph $s$ makes a contribution to the summation only if $s$ does
not have a vertex of degree one. 
Therefore, the summation is regarded as the summation over all
coregraphs of the forms $(V,s)$;
we call them {\it sub-coregraphs}.
In relevant papers, such subgraphs are called generalized loops
\cite{CCloopPRE,CCloop} or closed subgraphs \cite{Nseries,Nsubgraph}.

It is trivial by definition that
\begin{align}
&\Theta_{G_1 \cup G_2}(\bs{\beta},\bs{\gamma})
 =\Theta_{G_1}(\bs{\beta},\bs{\gamma})\Theta_{G_2}(\bs{\beta},\bs{\gamma}), \label{Thetaproperty0} \\
&\Theta_{B_0}(\bs{\beta},\bs{\gamma})=1, \label{Thetaproperty1} \\ 
&\Theta_{G}(\bs{\beta},\bs{\gamma})=\Theta_{\core (G)}(\bs{\beta},\bs{\gamma}). \label{Thetaproperty2}
\end{align}
These properties are reminiscence of the properties of the graph zeta function:
$\zeta_{H_1 \cup H_2} =\zeta_{H_1} \zeta_{H_2}$,~$\zeta_{ \emptyset } = 1$ and $\zeta_H =\zeta_{\core(H)}$.
\subsection{Deletion-contraction relation}
Assuming a certain relation on $\bsgamma=(\gamma_i)_{i \in V}$,
we prove the most important property of the graph polynomial $\Theta$ called a 
deletion-contraction relation. 
The following formula of $f_n(x)$ is essential in the proof of the relation.
\begin{lem}
\label{lem:f}
 ${}^\forall n,m \in \mathbb{N}$,
\begin{equation}
f_{n+m-2}(x)= f_n(x)f_m(x)+f_{n-1}(x)f_{m-1}(x). \nonumber
\end{equation}
\end{lem}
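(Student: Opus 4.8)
The plan is to prove the identity through the transfer-matrix representation of the sequence $\{f_n\}$. Introduce
\[
M(x) := \begin{bmatrix} x & 1 \\ 1 & 0 \end{bmatrix},
\]
and first establish, by induction on $k$, that for every integer $k \ge 0$
\[
M(x)^k = \begin{bmatrix} f_{k+2}(x) & f_{k+1}(x) \\ f_{k+1}(x) & f_{k}(x) \end{bmatrix}.
\]
The base case $k=0$ is precisely $M(x)^0 = I = \begin{bmatrix} f_2 & f_1 \\ f_1 & f_0 \end{bmatrix}$, which holds because $f_0 = f_2 = 1$ and $f_1 = 0$. For the inductive step I would multiply the claimed form of $M(x)^k$ on the left by $M(x)$ and simplify the two nontrivial entries using the defining recursion $f_{n+1}(x) = x f_n(x) + f_{n-1}(x)$; since the relevant subscripts are $k+1 \ge 1$ and $k+2 \ge 1$, the recursion applies, and the result is exactly the claimed form of $M(x)^{k+1}$.

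Next I would use the semigroup law $M(x)^{a+b} = M(x)^a M(x)^b$ for $a,b \ge 0$. Reading off the lower-right entry of both sides of this matrix identity and applying the formula for $M(x)^k$ gives
\[
f_{a+b}(x) = f_{a+1}(x) f_{b+1}(x) + f_{a}(x) f_{b}(x).
\]
Finally, substituting $a = n-1$ and $b = m-1$ — which is legitimate since $n, m \in \mathbb{N}$, so that $a, b \ge 0$ and $a + b = n + m - 2 \ge 0$, keeping every subscript nonnegative — yields $f_{n+m-2}(x) = f_n(x) f_m(x) + f_{n-1}(x) f_{m-1}(x)$, the asserted identity. (If one wishes to cover $n = 0$ or $m = 0$ as well, it suffices to set $f_{-1}(x) := -x$, the value forced by extending the recursion one step backward, and note that the matrix formula then persists at $k = -1$.)

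I do not expect a genuine obstacle here: the argument is a short induction once the matrix picture is in place, and the only points that require any care are the index bookkeeping (confirming that every subscript arising lies in the range where the recursion is valid) and checking the base case against the initial data $f_0 = 1,\ f_1 = 0,\ f_2 = 1$. If one prefers to avoid matrices entirely, an equivalent fallback is a direct double induction: fix $m$, verify the identity at $n = 1$ and $n = 2$ by hand, and then pass from the cases $n-1, n$ to the case $n+1$ by applying the recursion to $f_{n+m-1}$ and to the products on the right-hand side; both routes carry out the same computation.
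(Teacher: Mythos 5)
Your proof is correct. The paper itself only says ``Easily proved by induction using the recursion,'' i.e.\ it intends a bare (double) induction on the identity itself, whereas you route the argument through the transfer matrix
\begin{equation*}
M(x)^k=\begin{bmatrix} f_{k+2}(x) & f_{k+1}(x)\\ f_{k+1}(x) & f_k(x)\end{bmatrix},
\qquad M(x)=\begin{bmatrix} x & 1\\ 1 & 0\end{bmatrix},
\end{equation*}
and read the identity off the $(2,2)$ entry of $M^{a+b}=M^aM^b$ with $a=n-1$, $b=m-1$. The two routes perform the same computation, but yours packages the induction once and for all in the matrix power formula, so the addition formula then comes for free from associativity rather than from a second induction; it also makes the index bookkeeping transparent ($n,m\ge 1$ keeps all subscripts nonnegative, which is exactly the range needed in the application, where $n=d_i(s)$ and $m=d_j(s)$ for an edge $ij\in s$). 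Your base case, inductive step, and the extension $f_{-1}(x)=-x$ (which indeed gives $M(x)^{-1}$) all check out. The only cosmetic caveat is that your matrix identity is an auxiliary statement the paper never states, so if this were to replace the paper's proof it would slightly lengthen a deliberately terse argument — but as a self-contained verification it is complete and arguably cleaner than the unwritten double induction.
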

\begin{proof}
Easily proved by induction using Eq.(\ref{defindf}).
\end{proof}

\begin{thm}
[Deletion-contraction relation]
\label{thm:Mcd}
Let $e=ij \in E$ be not a loop.
Assume that the weights $(\bs{\beta},\bs{\gamma})$ on $G$ satisfies 
$\gamma_{i}=\gamma_{j}$.
The weights on $G \backslash e$ and $G/e$
are naturally induced and denoted by $(\bs{\beta'},\bs{\gamma'})$
and $(\bs{\beta''},\bs{\gamma''})$ respectively. 
(On $G/e$, the weight on the new vertex, which is the fusion of $i$ and $j$, is set to be $\gamma_i$.)
Under these conditions, we have
\begin{equation}
 \Theta_{G}(\bs{\beta},\bs{\gamma})=
(1-\beta_e) \Theta_{G\backslash e}(\bs{\beta'},\bs{\gamma'}) +
\beta_{e}  \Theta_{G/e}(\bs{\beta''},\bs{\gamma''}).  \nonumber
\end{equation}
\end{thm}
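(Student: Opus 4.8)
The plan is to split the sum defining $\Theta_G(\bs\beta,\bs\gamma)$ according to whether a given non-loop edge $e=ij$ belongs to the subgraph $s$ or not, and to recognize the two resulting partial sums — after a local manipulation at the merged vertex — as $\Theta_{G\backslash e}$ and $\Theta_{G/e}$. First I would write
\begin{equation*}
\Theta_G(\bs\beta,\bs\gamma)=\sum_{s\subset E\setminus e}\prod_{e'\in s}\beta_{e'}\prod_{v\in V}f_{d_v(s)}(\gamma_v)
+\beta_e\sum_{s\subset E\setminus e}\prod_{e'\in s}\beta_{e'}\,\Phi_{ij}(s),
\end{equation*}
where in the first sum $d_i,d_j$ are computed in $G\backslash e$, and $\Phi_{ij}(s)$ collects the two vertex factors at $i$ and $j$ when $e$ is present, namely $\Phi_{ij}(s)=f_{d_i(s)+1}(\gamma_i)f_{d_j(s)+1}(\gamma_j)\prod_{v\neq i,j}f_{d_v(s)}(\gamma_v)$, with $d_i(s),d_j(s)$ the degrees in $s\subset E\setminus e$.

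The first sum is manifestly $(1-\beta_e)\Theta_{G\backslash e}(\bs\beta',\bs\gamma')$ once I add and subtract the $\beta_e$-weighted copy; more precisely the ``$e\notin s$'' part of $\Theta_G$ is exactly $\Theta_{G\backslash e}$, and I am regrouping $\Theta_{G\backslash e}=(1-\beta_e)\Theta_{G\backslash e}+\beta_e\Theta_{G\backslash e}$ so that the leftover $\beta_e\Theta_{G\backslash e}$ merges with the second sum. So the real work is to show
\begin{equation*}
\Theta_{G\backslash e}(\bs\beta',\bs\gamma')+\sum_{s\subset E\setminus e}\prod_{e'\in s}\beta_{e'}\,\Phi_{ij}(s)=\Theta_{G/e}(\bs\beta'',\bs\gamma'').
\end{equation*}
Here I would use the hypothesis $\gamma_i=\gamma_j=:\gamma$ together with Lemma \ref{lem:f}: for each fixed $s\subset E\setminus e$, writing $n=d_i(s)$, $m=d_j(s)$, the contribution to the left-hand side from the pair of vertices $\{i,j\}$ is $f_n(\gamma)f_m(\gamma)+f_{n+1}(\gamma)f_{m+1}(\gamma)$, which by Lemma \ref{lem:f} (applied with the substitution $n\mapsto n+1$, $m\mapsto m+1$) equals $f_{n+m}(\gamma)$. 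But $n+m$ is precisely the degree of the merged vertex in the image of $s$ inside $G/e$, and the edge set $E\setminus e$ of $G\backslash e$ is canonically the edge set of $G/e$ with the same weights $\bs\beta''$; the factors at all other vertices match verbatim. Hence the left-hand side is exactly $\sum_{s\subset E(G/e)}\prod_{e'\in s}\beta''_{e'}\prod_{v}f_{d_v(s)}(\gamma''_v)=\Theta_{G/e}(\bs\beta'',\bs\gamma'')$.

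The main obstacle, and the only place care is needed, is the bookkeeping of the vertex $\{i,j\}$: one must check that the map $s\mapsto s$ (viewing $E\setminus e$ as edges of $G/e$) is a bijection between spanning subgraphs, that it preserves all edge weights and all vertex degrees away from $i,j$, and that it sends the pair $(d_i(s),d_j(s))$ to the single degree $d_i(s)+d_j(s)$ at the fused vertex — this last identity relies on $e$ not being a loop, so that $i\neq j$ and no edge of $E\setminus e$ is double-counted. Assuming the edge $e$ has multiplicity one is not needed; parallel edges are handled automatically since they contribute independent factors. With this identification and Lemma \ref{lem:f} the claimed relation follows. No estimates or limits are involved; this is a finite combinatorial identity. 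I would also remark that taking all edge weights equal recovers the stated corollaries for $\theta_G$ and (after the specialization $\gamma=2\sqrt{-1}$ and removal of the $(1-\beta)^{|E|-|V|}$ factor) for $\omega_G$.
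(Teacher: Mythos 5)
Your proof is correct and follows essentially the same route as the paper's: classify subgraphs by whether they contain $e$, and use Lemma \ref{lem:f} (you apply it in the form $f_n f_m + f_{n+1}f_{m+1} = f_{n+m}$ after regrouping a $\beta_e\Theta_{G\backslash e}$ term, whereas the paper reads the same identity in the subtractive direction on the $s\ni e$ terms directly — a purely cosmetic rearrangement). Your bookkeeping of the fused vertex's degree and the role of the non-loop hypothesis is accurate, so nothing is missing.
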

\begin{proof}
Classify subgraph $s$ in the sum of Eq.~(\ref{eq:defnTheta}) whether $s$ includes $e$ or not.
A subgraph $s  \ni e=ij$ in the former case yields 
$-\beta \Theta_{G \backslash e}+\beta\Theta_{G / e}$,
where Lemma \ref{lem:f} is used with $n=d_i$ and $m=d_j$.
A subgraph $s \not \ni e$ in the latter case yields $\Theta_{G \backslash e}$.
\end{proof}

Especially, $\Theta_{G}(\bs{\beta},\gamma)$ satisfies this relation.
By successive applications of the relations, $\Theta_{G}(\bs{\beta},\gamma)$ 
can be reduced to the values at disjoint unions of bouquet graphs.
The \dcr allows another expansion of $\Theta_{G}(\bs{\beta},\gamma)$ as a sum over the subgraphs.

\begin{thm} 
\label{cortheta}
\begin{equation}
 \Theta_{G}(\bs{\beta},\gamma)=
\sum_{s \subset E}
\prod_{n=0}
\theta_{B_n}(1,\gamma)^{i_n(s)}
\prod_{e \in s}
\beta_{e}
\prod_{e \in  E \smallsetminus s}
(1-\beta_e ),  \label{corthetaeq1}
\end{equation}
where $i_n(s)$ is the number of connected components of the subgraph
$s$ with nullity $n$.
\end{thm}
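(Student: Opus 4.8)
\textbf{Proof plan for Theorem \ref{cortheta}.}

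The plan is to apply the deletion-contraction relation of Theorem \ref{thm:Mcd} repeatedly until no non-loop edges remain, and then read off the resulting expansion as a sum over subgraphs. Since the statement concerns $\Theta_G(\bs{\beta},\gamma)$ with all vertex weights equal to a common value $\gamma$, the hypothesis $\gamma_i=\gamma_j$ of Theorem \ref{thm:Mcd} is automatically satisfied for every non-loop edge $e=ij$, and the weight on the fused vertex in $G/e$ is again $\gamma$; so the relation
\begin{equation*}
 \Theta_{G}(\bs{\beta},\gamma)=
(1-\beta_e) \Theta_{G\backslash e}(\bs{\beta},\gamma) +
\beta_{e}  \Theta_{G/e}(\bs{\beta},\gamma)
\end{equation*}
can be applied at any non-loop edge, with the edge weights $\bs{\beta}$ on the remaining edges carried over unchanged. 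First I would argue by induction on the number of non-loop edges of $G$. The base case is a graph all of whose edges are loops, i.e. a disjoint union of bouquet graphs; there the only subgraph $s\subset E$ contributing on the right-hand side of \eqref{corthetaeq1} is $s=E$ (any $s$ omitting a loop-edge at a vertex, combined with the factor structure, is accounted for because a bouquet has a single vertex of even degree only when all its loops are selected — more precisely one uses $\Theta$ multiplicativity and $\Theta_{B_n}(\bs\beta,\gamma)=\sum_{t\subset E(B_n)}\prod_{e\in t}\beta_e\, f_{2|t|}(\gamma)$), and by multiplicativity (Eqs.~\eqref{Thetaproperty0}, \eqref{Thetaproperty1}) the right-hand side collapses to $\prod_{n}\theta_{B_n}(1,\gamma)^{i_n(E)}\prod_{e\in E}\beta_e$, which equals $\Theta_G(\bs\beta,\gamma)$; here I would need the elementary identity $\theta_{B_n}(\bs\beta,\gamma)=\beta^{(\cdot)}\theta_{B_n}(1,\gamma)$ reading off the $\beta$-homogeneity of a one-vertex graph. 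Actually it is cleaner to let the induction absorb loops too, treating $B_n$ directly.

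For the inductive step, pick a non-loop edge $e$ and expand. By the inductive hypothesis both $\Theta_{G\backslash e}(\bs\beta,\gamma)$ and $\Theta_{G/e}(\bs\beta,\gamma)$ admit the expansion \eqref{corthetaeq1} over subgraphs of $E\setminus\{e\}$ (note $G/e$ has one fewer vertex, but its edge set is $E\setminus\{e\}$ as well, possibly with $e$'s former parallel edges now becoming loops). I would then match terms: a subgraph $s\subset E$ with $e\notin s$ is produced exactly by the $(1-\beta_e)\Theta_{G\backslash e}$ term with subgraph $s$, contributing $(1-\beta_e)\prod_{n}\theta_{B_n}(1,\gamma)^{i_n^{G\backslash e}(s)}\prod_{e'\in s}\beta_{e'}\prod_{e'\in (E\setminus e)\setminus s}(1-\beta_{e'})$, and since deleting the non-loop edge $e$ which is absent from $s$ does not change the connected components of $(V,s)$ nor their nullities, $i_n^{G\backslash e}(s)=i_n^{G}(s)$ and this is precisely the $s$-term of \eqref{corthetaeq1}. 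A subgraph $s\ni e$ corresponds to the subgraph $s\setminus\{e\}$ of $G/e$; one checks that the connected components of $(V(G/e),s\setminus\{e\})$ are in bijection with those of $(V,s)$ (fusing the endpoints of $e$ does not merge or split components, since $e\in s$ already joins them) and that contracting $e$ inside the component containing it drops its nullity by zero — contracting an edge lying in a connected subgraph preserves the nullity — so $i_n^{G/e}(s\setminus e)=i_n^{G}(s)$. Hence the $\beta_e\Theta_{G/e}$ term reproduces exactly the $s$-terms of \eqref{corthetaeq1} with $e\in s$. Summing the two families of terms gives \eqref{corthetaeq1} for $G$.

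The main obstacle I anticipate is bookkeeping the behavior of the invariants $i_n(s)$ — the count of connected components of the spanning subgraph $(V,s)$ having nullity $n$ — under the deletion and contraction operations, and in particular making sure nothing goes wrong when $e$ is a multiple edge (so that $G\backslash e$ or $G/e$ creates parallel edges or loops) and when the endpoints of $e$ lie in the same or different components of $(V,s)$. The two combinatorial facts that carry the argument are: (i) deleting a non-loop edge absent from $s$ leaves every component and nullity of $(V,s)$ untouched; and (ii) if $e\in s$, then $e$ lies in a cycle-free-or-not connected piece of $s$, and contracting an edge of a connected graph decreases $|E|$ and $|V|$ each by one while keeping $k=1$, hence preserves the nullity $n=|E|-|V|+k$. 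Once these are established the term-matching is mechanical. A secondary subtlety is the precise form of the base case for bouquets; I would state and verify separately that $\Theta_{B_n}(\bs\beta,\gamma)=\sum_{t\subset E(B_n)}\big(\prod_{e\in t}\beta_e\big) f_{2|t|}(\gamma)$ agrees with the $s=E(B_n)$ (plus sub-selections) reading of \eqref{corthetaeq1}, which is immediate since a one-vertex graph has $\core$ equal to itself and the RHS sum over $s\subset E(B_n)$ telescopes via $\prod_{e\in s}\beta_e\prod_{e\notin s}(1-\beta_e)$ against $\theta_{B_{|s|}}(1,\gamma)=f_{2|s|}(\gamma)$.
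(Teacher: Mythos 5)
Your overall strategy is the same as the paper's: the paper also proves the identity by showing that the right-hand side of Eq.~(\ref{corthetaeq1}) satisfies the deletion--contraction recursion of Theorem \ref{thm:Mcd} (your term-matching of subgraphs with $e\in s$ against the contraction term and $e\notin s$ against the deletion term is exactly that verification, and your two combinatorial facts about the invariance of $i_n(s)$ under deleting an absent non-loop edge and contracting a present one are correct), and then comparing the two sides on bouquet graphs, to which everything reduces.

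The gap is in your base case. You assert $\theta_{B_{|s|}}(1,\gamma)=f_{2|s|}(\gamma)$; this is false, since $\theta_{B_m}(1,\gamma)=\sum_{k=0}^{m}\binom{m}{k}f_{2k}(\gamma)$ (every subset of the $m$ loops is a sub-coregraph of $B_m$, not just the full one). Consequently neither the claim that the right-hand side of Eq.~(\ref{corthetaeq1}) ``collapses'' to the single term $s=E$ on a bouquet, nor the claim that the sum ``telescopes immediately,'' is justified: for $B_n$ the right-hand side is $\sum_{s\subset E}\theta_{B_{|s|}}(1,\gamma)\prod_{e\in s}\beta_e\prod_{e\notin s}(1-\beta_e)$ with every subset contributing, while the left-hand side is $\sum_{u\subset E}f_{2|u|}(\gamma)\prod_{e\in u}\beta_e$. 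Showing these two expressions agree is the one genuinely computational step of the whole proof: one must expand $\prod_{e\notin s}(1-\beta_e)$ over subsets $t\subset E\smallsetminus s$, regroup by $u=s\cup t$, and invoke the binomial identity $\sum_{j=k}^{n}\binom{n}{j}\binom{j}{k}(-1)^{n+j}=\delta_{n,k}$, which is exactly how the paper handles it. Once that bouquet lemma is supplied (together with multiplicativity over disjoint unions for the all-loops base case), your induction closes; as written, the base case rests on a false identity and is unproved.
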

\begin{proof}
In this proof,
the right hand side of Eq.~(\ref{corthetaeq1}) is 
denoted by $ \tilde{\Theta}_{G}(\bs{\beta},\gamma)$.
First, we check that ${\Theta}_{G}$ and 
$\tilde{\Theta}_{G}$ are equal at the bouquet graphs.
\begin{align}
\tilde{\Theta}_{B_n}(\bs{\beta},\gamma)
&=
\sum_{s \subset E}
\theta_{B_{|s|}}(1,\gamma)
\prod_{e \in s}
\beta_{e}
\prod_{e \in  E \smallsetminus s}
(1-\beta_e ) \nonumber \\
&=
\sum_{s \subset E}
\sum_{k =0}^{|s|}
{|s| \choose k}f_{2k}(\gamma)
\prod_{e \in s}
\beta_{e}
\sum_{t \subset E \smallsetminus s}
\prod_{e \in t}
(-\beta_e ) \nonumber \\
&=
\sum_{u \subset E}
\sum_{s \subset u}
\sum_{k =0}^{|s|}
{|s| \choose k}f_{2k}(\gamma)
(-1)^{|u|-|s|}
\prod_{e \in u}
\beta_e  \nonumber \\
&=
\sum_{u \subset E}
\sum_{l=0}^{|u|}
\sum_{k =0}^{l}
{|u| \choose l}
{l \choose k}f_{2k}(\gamma)
(-1)^{|u|-l}
\prod_{e \in u}
\beta_e. \nonumber
\end{align}
Using the equality
$\sum_{j=k}^{n} {n \choose j}{j \choose k} (-1)^{n+j}=\delta_{n,k}$,
which is obtained by comparing the coefficients of $\left( 1 - (1-x) \right)^n=x^n$,
we have
\begin{equation}
\tilde{\Theta}_{B_n}(\bs{\beta},\gamma)
=
\sum_{u \subset E} f_{2|u|}(\gamma) 
\prod_{e \in u}\beta_{e} 
=
\Theta_{B_n}(\bs{\beta},\gamma). \nonumber
\end{equation}

Secondly, we see that $ \tilde{\Theta}_{G}(\bs{\beta},\gamma)$
satisfies the deletion-contraction relation 
\begin{equation}
 \tilde{\Theta}_{G}(\bs{\beta},\gamma)=
(1-\beta_e) \tilde{\Theta}_{G\backslash e}(\bs{\beta'},\gamma) +
\beta_{e}  \tilde{\Theta}_{G/e}(\bs{\beta''},\gamma) \nonumber
\end{equation}
for all non-loop edges $e$,
because
the subsets including $e$ amount to
$\beta_e \tilde{\Theta}_{G / e}(\bs{\beta},\gamma)$ 
and
the other subsets
amount to
$(1-\beta_e) \tilde{\Theta}_{G \smallsetminus e}(\bs{\beta},\gamma)$.

Applying this form of deletion-contraction relations to both $\Theta_{G}$
and $\tilde{\Theta}_{G}$, we can reduce the values at $G$ to those of  
disjoint unions of the same bouquet graphs.
Therefore
we conclude that $\tilde{\Theta}_{G}=\Theta_{G }$.
\end{proof}

Eq.~(\ref{corthetaeq1}) resembles the famous
{\it random-cluster model} of Fortuin and Kasteleyn \cite{FKrandom}
\begin{equation}
R_{G}(\bs{\beta},\kappa)
=
\sum_{s \subset E}
\kappa^{k(s)}
\prod_{e \in s}
\beta_{e}
\prod_{e \in  E \smallsetminus s}
(1-\beta_e ),  \nonumber
\end{equation}
which is a special case of the colored Tutte polynomial in \cite{BRTutte}.
This function satisfies a deletion-contraction relation of the form
\begin{equation}
 R_{G}(\bs{\beta},\kappa) = (1-\beta_e) R_{G \backslash e}(\bs{\beta'},\kappa) 
+\beta_e R_{G/e}(\bs{\beta''},\kappa) 
\quad \text{ for all } e \in E. \nonumber
\end{equation}
Note that this relation holds also for loops in contrast to
$\Theta_{G}(\bs{\beta},\gamma)$.
This difference comes form that of the coefficients of subgraphs $s$:
$\kappa^{k(s)}$ and $\prod \theta_{B_n}(1,\gamma)^{i_n(s)}$.

\section{Number of sub-coregraphs}
\label{sec:numsubcoregraph}
An important property of the LS is that only the sub-coregraphs contribute to the sum,
not all the subgraphs.
Thus it is worth investigating how many sub-coregraphs in a graph among the subgraphs.
This section discusses the number of sub-coregraphs,
i.e., the number of terms in the LS.
A simple analysis of the theta polynomial gives bounds on the numbers.
First, we compute $\theta_{G}(1,\gamma)$.

\begin{lem} 
\label{lembeta1}
For a connected graph $G$,
\begin{align}
\theta_{G}(1,\xi-\xi^{-1})
&=
\xi^{1-n(G)}(\xi+\xi^{-1})^{n(G)-1}
+
\xi^{n(G)-1}(\xi+\xi^{-1})^{n(G)-1}.
\label{lembeta1eq1}
\end{align}
Note that the value $\theta_{G}(1,\gamma)$ is determined by the nullity $n(G)$.
\end{lem}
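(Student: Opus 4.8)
The plan is to reduce $G$ to a bouquet graph by repeated edge contractions, and then to evaluate the resulting binomial sum explicitly.

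First I would specialize the deletion--contraction relation of Theorem~\ref{thm:Mcd} to the case at hand, where every edge weight equals $1$ and every vertex weight equals $\gamma$. Since contracting a non-loop edge $e=ij$ with $\gamma_i=\gamma_j$ keeps all vertex weights equal to $\gamma$, the relation stays within the $\theta$-setting, and with $\beta_e=1$ it collapses to $\theta_{G}(1,\gamma)=\theta_{G/e}(1,\gamma)$ for every non-loop edge $e$. Contracting a non-loop edge also leaves the number of connected components and the nullity $n(G)=|E|-|V|+k(G)$ unchanged. As $G$ is connected it has a spanning tree $T$ with $|V|-1$ edges; contracting the edges of $T$ one at a time — each being a non-loop edge at the moment it is contracted, because the uncontracted portion of $T$ stays acyclic — produces the bouquet graph $B_m$ with $m=|E|-(|V|-1)=n(G)$. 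Hence $\theta_{G}(1,\gamma)=\theta_{B_{n(G)}}(1,\gamma)$, and it suffices to establish the formula for bouquet graphs.

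Next I would compute $\theta_{B_n}(1,\gamma)$ straight from Definition~\ref{defTheta}: a spanning subgraph of $B_n$ amounts to a choice of $k$ of the $n$ loops, in which the unique vertex has degree $2k$, so $\theta_{B_n}(1,\gamma)=\sum_{k=0}^{n}\binom{n}{k}f_{2k}(\gamma)$. Setting $\gamma=\xi-\xi^{-1}$ and using the closed form in Eq.~(\ref{fn}) — which for even index simplifies, via $(-\xi)^{1-2k}=-\xi^{1-2k}$, to $f_{2k}(\xi-\xi^{-1})=(\xi^{2k-1}+\xi^{1-2k})/(\xi+\xi^{-1})$ — the sum breaks into two pieces to which the binomial theorem applies, giving $(\xi+\xi^{-1})^{-1}\bigl(\xi^{-1}(1+\xi^{2})^{n}+\xi(1+\xi^{-2})^{n}\bigr)$. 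Finally I would use $1+\xi^{2}=\xi(\xi+\xi^{-1})$ and $1+\xi^{-2}=\xi^{-1}(\xi+\xi^{-1})$ to pull out a factor $(\xi+\xi^{-1})^{n}$, obtaining $(\xi^{n-1}+\xi^{1-n})(\xi+\xi^{-1})^{n-1}$, which is exactly the asserted expression with $n=n(G)$.

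I do not anticipate a real obstacle. The only points demanding a little care are the bookkeeping in the contraction step — verifying that every contracted spanning-tree edge is genuinely a non-loop edge and that the nullity is invariant under these contractions — and the parity simplification of Eq.~(\ref{fn}) at even indices; everything after that is a routine binomial identity.
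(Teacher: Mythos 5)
Your proof is correct, but it takes a genuinely different route from the paper's. The paper proves the lemma by specializing the spin-sum identity of Lemma~\ref{lem:LSidentity} (Eq.~(\ref{thetaidentityeq})): with $\beta_e=1$ and $\xi_i=\xi$ the edge factor $1+x_ix_j\xi^{-x_i}\xi^{-x_j}$ vanishes whenever $x_i\neq x_j$, so for a connected graph only the two constant configurations $x\equiv +1$ and $x\equiv -1$ survive, and each contributes one of the two terms of Eq.~(\ref{lembeta1eq1}) directly. You instead exploit the deletion--contraction relation at $\beta_e=1$, where it degenerates to the pure contraction identity $\theta_G(1,\gamma)=\theta_{G/e}(1,\gamma)$, contract a spanning tree to reach $B_{n(G)}$, and then evaluate $\sum_k\binom{n}{k}f_{2k}(\xi-\xi^{-1})$ by the binomial theorem using the even-index specialization of Eq.~(\ref{fn}); all the ingredients you invoke (Theorem~\ref{thm:Mcd}, Proposition~\ref{propbasictheta}(b), Eq.~(\ref{fn})) are already available in the paper, and your bookkeeping on loops, components and nullity under contraction checks out. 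The paper's argument is shorter and makes the role of connectedness transparent (exactly two surviving spin configurations per connected component), while yours makes the final remark of the lemma --- that $\theta_G(1,\gamma)$ depends only on $n(G)$ --- structurally obvious, since every connected graph contracts to the same bouquet; this is precisely the reduction the paper later uses implicitly in Lemma~\ref{lemthetacoeff}.
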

\begin{proof}
For the proof, we use Lemma \ref{lem:LSidentity}, which gives an alternative representation
of $\Theta_{G}$.
In this graph case, Eq.~(\ref{thm1mod}) reduces to
\begin{equation}
\Theta_{G}(\bs{\beta},(\xi_{i}-\xi_{i}^{-1})_{i \in V})= 
\sum_{x_1,\ldots,x_N=\pm 1}
\prod_{e \in E \atop e=ij}
(1+x_{i}x_{j}\beta_{e}\xi_{i}^{-x_i}\xi_{j}^{-x_j})
\prod_{i \in V}
\frac{\xi_{i}^{x_i}}{\xi_{i}+\xi_{i}^{-1}}. \label{thetaidentityeq}
\end{equation}
We set $\beta_e=1$ and $\xi_i=\xi$.
If $x_i \neq x_j$, then $1+x_{i}x_{j}\xi^{-x_i}\xi^{-x_j}=0$.
Since $G$ is connected, only the two terms of $x_1=\cdots=x_N=1$ and $x_1=\cdots=x_N=-1$ contribute to the sum.
Then the equality is proved.
\end{proof}

If $\xi=\frac{1+\sqrt{5}}{2}$, then $\xi-\xi^{-1}=1$. From Eq.~(\ref{lembeta1eq1}),
we see that
\begin{equation}
\theta_{G}(1,1)=
 \Biggl(\frac{5-\sqrt{5}}{2}\Biggr)^{n(G)-1}
\hspace{-2mm}
+
\Biggl(\frac{5+\sqrt{5}}{2}\Biggr)^{n(G)-1}. \label{numup}
\end{equation}
Setting $\xi=1$,
We also deduce from Eq.~(\ref{lembeta1eq1}) that
\begin{equation}
\theta_{G}(1,0)=2^{n(G)}. \label{numlow}
\end{equation}

\subsection{Bounds on the number of sub-coregraphs}
For a given graph $G$,
let
$\mathcal{C}(G):=\{ s; s \subset  E, (V,s) \text{ is a coregraph.} \}$
be the set of sub-coregraphs of $G$. 
In the following theorem,
the values in Eqs.~(\ref{numup}) and (\ref{numlow})
are used to bound the number of sub-coregraphs.
The upper bound is first proved in \cite{WFloop}.

\begin{thm}
\label{thmbound}
For a connected graph $G$,
\begin{equation}
2^{n(G)}
\leq
\left| \mathcal{C}(G) \right|
\leq
 \Biggl(\frac{5-\sqrt{5}}{2}\Biggr)^{n(G)-1}
\hspace{-2mm}
+
\Biggl(\frac{5+\sqrt{5}}{2}\Biggr)^{n(G)-1}. \label{thmboundeq1}
\end{equation}
The lower bound is attained if and only if
$\core(G)$ is a subdivision of a bouquet graph,
and the upper bound is attained if and only if 
$\core (G)$ is a subdivision of a 3-regular graph
or $G$ is a tree.
\end{thm}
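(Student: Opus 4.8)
The plan is to extract bounds on $|\mathcal{C}(G)|$ directly from the special values of the theta polynomial computed in Lemma~\ref{lembeta1}, using the non-negativity of the relevant coefficients. First I would observe that, by the definition of $\Theta_G$ and the fact that $f_1=0$ while $f_0=1$, the polynomial $\theta_G(\beta,\gamma)=\sum_{s \subset E}\beta^{|s|}\prod_{i\in V}f_{d_i(s)}(\gamma)$ is really a sum over sub-coregraphs. Setting $\beta=1$ and choosing $\gamma$ wisely turns $\theta_G(1,\gamma)$ into a weighted count of $\mathcal{C}(G)$. Two choices matter: $\gamma=0$ gives $f_n(0)\in\{0,1\}$ (indeed $f_n(0)=1$ if $n$ is even, $0$ if $n$ is odd), so $\theta_G(1,0)=\#\{s\in\mathcal{C}(G): \text{all degrees in } s \text{ are even}\}=2^{n(G)}$ by Eq.~(\ref{numlow}); and $\gamma=1$ gives $\theta_G(1,1)$ as in Eq.~(\ref{numup}), where the point is that every coefficient $\prod_i f_{d_i(s)}(1)$ is a non-negative integer and is at least $1$ whenever $(V,s)$ is a coregraph (each $f_{d_i(s)}(1)\ge 1$ for $d_i(s)\ne 1$, since $f_0(1)=f_2(1)=1$, $f_3(1)=1$, and the $f_n$ are increasing on positive arguments). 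Hence $2^{n(G)}=\theta_G(1,0)\le|\mathcal{C}(G)|$ (the even-degree sub-coregraphs are a subset) and $|\mathcal{C}(G)|\le\theta_G(1,1)$ (each sub-coregraph contributes at least $1$), which is exactly Eq.~(\ref{thmboundeq1}) after invoking Eq.~(\ref{Thetaproperty2}) to reduce from $G$ to $\core(G)$ and noting $n(G)=n(\core(G))$.

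Next I would treat the equality cases. For the lower bound, $2^{n(G)}=|\mathcal{C}(G)|$ forces every sub-coregraph of $\core(G)$ to have all vertex-degrees even. I would argue that a coregraph all of whose sub-coregraphs are even is precisely a subdivision of a bouquet: if $\core(G)$ had a vertex of degree $\ge 3$ that is not the unique "bouquet center," one could exhibit a sub-coregraph with an odd-degree vertex (e.g.\ take a cycle through that vertex together with a third edge-path leaving it and returning, forcing degree $3$ somewhere), contradicting evenness; conversely a subdivision of $B_n$ has the property that contracting all degree-$2$ vertices gives $B_n$, whose sub-coregraphs correspond to subsets of the $n$ loops, all of which are even. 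For the upper bound, $|\mathcal{C}(G)|=\theta_G(1,1)$ forces $\prod_i f_{d_i(s)}(1)=1$ for every $s\in\mathcal{C}(G)$, i.e.\ $f_{d_i(s)}(1)=1$ for all $i$ and all sub-coregraphs $s$. Since $f_0(1)=f_2(1)=f_3(1)=1$ but $f_4(1)=2>1$, this means no sub-coregraph ever has a vertex of degree $\ge 4$; in particular no vertex of $\core(G)$ itself (a sub-coregraph of itself, if $\core(G)\ne B_0$) has degree $\ge 4$, and combined with the coregraph condition $d_i\ne 1$ and the observation that degree $2$ vertices are subdivision vertices, $\core(G)$ must be a subdivision of a $3$-regular graph; the case $\core(G)=B_0$ is exactly $G$ being a tree. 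The converse directions are straightforward: a subdivision of a $3$-regular graph has all $f_{d_i(s)}(1)\in\{1\}$ for its sub-coregraphs since $d_i(s)\le 3$, so the sum $\theta_G(1,1)$ counts $\mathcal{C}(G)$ exactly.

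I would organize the write-up as: (i) the identity $\theta_G(1,\gamma)=\sum_{s\in\mathcal{C}(G)}\prod_i f_{d_i(s)}(\gamma)$ with the reduction to $\core(G)$; (ii) evaluation of $f_n(0)$ and $f_n(1)$ and the two inequalities; (iii) the structural lemma characterizing coregraphs whose every sub-coregraph is $2$-regular-ish (all-even) versus at-most-$3$-regular; (iv) deducing the equality cases via these lemmas. The main obstacle I expect is step (iii) — making rigorous the claim that "$\core(G)$ has a vertex of degree $\ge 3$ not of a special form $\Rightarrow$ some sub-coregraph has an odd (resp.\ $\ge 4$) degree vertex." The clean way is probably to argue on $\core(G)$ after suppressing degree-$2$ vertices, reducing to the case where $\core(G)$ is $2$-connected-ish with minimum degree $\ge 3$, and then use that such a graph contains a cycle through any prescribed vertex plus an extra incident edge extendable to a sub-coregraph (using that the graph minus a cycle still has min degree constraints making its core nonempty or empty appropriately). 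Care is needed with loops and multiple edges in the multigraph setting, and with the degenerate cases $n(G)=0$ (tree, both bounds degenerate consistently) and $n(G)=1$ (single cycle), which should be checked by hand.
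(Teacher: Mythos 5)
Your proposal follows essentially the same route as the paper: reduce to $\core(G)$, write $\theta_G(1,\gamma)=\sum_{s\in\mathcal{C}(G)}\prod_i f_{d_i(s)}(\gamma)$, and sandwich $|\mathcal{C}(G)|$ between $\theta_G(1,0)=2^{n(G)}$ and $\theta_G(1,1)$ via the termwise bounds $w(s;0)\le 1\le w(s;1)$, reading off the equality cases from when these are tight. The one step you flag as the main obstacle --- that a coregraph of minimum degree $\ge 3$ all of whose sub-coregraphs have even degrees must be a bouquet --- has a two-line resolution in the paper: for any non-loop edge $e=i_0j_0$ both $E$ and $E\smallsetminus e$ are sub-coregraphs, so one of $d_{i_0}(E)$ and $d_{i_0}(E)-1$ is odd, a contradiction.
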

Note that a {\it subdivision} of a graph $G$ is 
a graph that is obtained by adding vertices
of degree 2 on edges.
\begin{proof}
It is enough to consider the case that $G$ is a coregraph
and does not have vertices of degree 2,
because 
the operations of taking core and subdivision
do not change the nullity and the set of sub-coregraphs essentially.

From the definition Eq.~(\ref{defthetaeq}), we can write
\begin{equation}
\theta_{G}(1,\gamma)
=
\sum_{s \in \mathcal{C}} w(s;\gamma),  \nonumber
\end{equation}
where $w(s;\gamma)=\prod_{i \in V}f_{d_{i}(s)}(\gamma)$.
For all $s \in \mathcal{C}$,
we claim that
\begin{equation}
w(s;0) \leq 1 \leq w(s;1). \label{thmboundeq2}
\end{equation}
The left inequality of Eq.~(\ref{thmboundeq2})
is immediate from the fact that $f_n(0)=1$ if $n$ is even
and $f_n(0)=0$ if $n$ is odd.
The equality holds if and only if
all vertices have even degree in $s$. 
Since $f_n(1)>1$ for all $n>4$ and $f_2(1)=f_3(1)=1$,
we have $w(s;1) \geq 1$.
The equality holds if and only if 
$d_i(s) \leq 3$
for all $i\in V$.
Therefore  
the inequalities in Eq.~(\ref{thmboundeq1}) 
are proved.
The upper bound is attained
if and only if $G$ is a 3-regular graph or
the $B_0$.
For the equality condition of the lower bound,
it is enough to prove the following claim.
\begin{claim}
Let $G$ be a connected graph, and assume that the degree of every vertex is 
at least 3
and $d_i(s)$ is even for every $i \in V$
and $s \in \mathcal{C}$.
Then $G$ is a bouquet graph.
\end{claim}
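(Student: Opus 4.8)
The plan is to argue by contradiction: assuming $G$ has at least two vertices (so that $G$ is not a bouquet), I will exhibit a sub-coregraph $s \in \mathcal{C}(G)$ in which some vertex has odd degree, contradicting the hypothesis. Recall that, by the reduction already made in the proof of Theorem \ref{thmbound}, we may assume $G$ is a connected coregraph with no vertex of degree two, so every vertex has degree at least three. The basic observation is that both the (subdivided) \emph{theta graph} --- two vertices joined by three internally disjoint paths --- and the (subdivided) \emph{dumbbell} --- two vertex-disjoint cycles joined by a path --- are coregraphs possessing exactly two vertices of degree three; realised as edge subsets of $G$ (with all remaining vertices left isolated, hence of degree zero), each is a sub-coregraph of $G$ with an odd-degree vertex. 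So it suffices to show that a connected graph of minimum degree at least three with at least two vertices always contains one of these configurations.

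First I would dispose of the loopless case. Since $G$ has minimum degree at least three it is not a forest, so its block--cut tree contains a block that is not a bridge; a leaf block $B_0$ of this tree has a single cut vertex, and every other vertex of $B_0$ keeps all of its (at least three) incident edges inside $B_0$, so $B_0$ is a $2$-connected multigraph possessing a vertex of degree at least three, and in particular $B_0$ is not a single cycle. Taking an open ear decomposition of $B_0$ and adjoining the first ear to the initial cycle then produces a theta subgraph (the two-vertex, three-parallel-edges graph being the degenerate case). This theta, viewed inside $G$, is the desired sub-coregraph with two vertices of odd degree.

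It remains to treat the case where $G$ has a loop, say at a vertex $v$. If $G$ has some cycle $C$ avoiding $v$ (a loop at another vertex counts), then a shortest $v$--$C$ path $P$ together with one loop at $v$ and the cycle $C$ forms a sub-coregraph in which $\deg(v) = 3$. Otherwise every cycle of $G$ passes through $v$, so $G - v$ is a loopless forest; for each tree component $K$ of $G - v$ a degree count gives at least $|V(K)| + 2$ edges joining $K$ to $v$, since the vertices of $K$ have total degree at least $3|V(K)|$ in $G$ but total degree only $2(|V(K)|-1)$ inside the tree $K$. A pigeonhole argument then yields either a vertex joined to $v$ by at least three parallel edges (a theta) or two vertices $z, z'$ of the same component $K$ each joined to $v$ by at least two parallel edges; in the latter case the two digons $vz$ and $vz'$ together with a $z$--$z'$ path inside $K$ give a sub-coregraph with $\deg(z) = 3$. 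In every case the hypothesis is contradicted, so $G$ has exactly one vertex, i.e. $G = B_n$ for some $n$, a bouquet graph.

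The step I expect to be the real obstacle is this last one --- handling loops and parallel edges uniformly. The purely topological intuition (find a theta or a dumbbell as a subgraph) breaks down precisely when the graph is forced to route all of its cycles through one vertex, and there the explicit edge-counting and pigeonhole argument above seems unavoidable; care is also needed to verify, in each construction, that the edge set produced really is a coregraph (no vertex of degree exactly one) and that the designated vertex genuinely has odd degree.
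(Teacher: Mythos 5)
Your argument is, as far as I can check, correct: every configuration you construct (theta, dumbbell, digon-pair) is an edge subset with no vertex of degree one and with some vertex of odd degree, and the block/ear-decomposition and counting steps do cover the loopy and loopless cases. But it takes a genuinely different and far heavier route than the paper. The paper's proof exploits the minimum-degree hypothesis globally rather than searching for a small witness: since every vertex of $G$ has degree at least $3$, the full edge set $E$ is itself a sub-coregraph, and for any non-loop edge $e=i_0j_0$ the set $E \smallsetminus e$ is again a sub-coregraph (deleting one edge drops the degrees of $i_0,j_0$ to at least $2$, so no degree-one vertex appears). Then $d_{i_0}(E)$ and $d_{i_0}(E \smallsetminus e)=d_{i_0}(E)-1$ are consecutive integers, so one of them is odd, contradicting the hypothesis that every $s \in \mathcal{C}$ has all degrees even; hence $G$ has no non-loop edge and, being connected, is a bouquet. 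This two-line argument avoids all of the block-tree, ear-decomposition, and pigeonhole machinery, and in particular sidesteps the multigraph subtleties (loops, parallel edges, digon blocks) that you rightly identify as the fragile points of your construction. What your approach buys is explicit \emph{small} odd-degree witnesses (thetas and dumbbells), which could matter if one wanted the analogous statement under a weaker hypothesis where $E$ itself need not be usable; under the hypotheses actually in force, though, the observation that $E$ and $E \smallsetminus e$ are both sub-coregraphs is the idea your proposal misses.
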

If $G$ is not a bouquet graph, there is a non-loop edge $e=i_0j_0$.
Then $E$ and $E \smallsetminus e$ are sub-coregraphs of $G$.
Thus $d_{i_0}(E)$ or $d_{i_0}(E \smallsetminus e)=d_{i_0}(E)-1$ is odd.
This is a contradiction.
\end{proof}

\subsection{Number of sub-coregraphs in 3-regular graphs}
If the core of a graph is a subdivision of a 3-regular graph,
we obtain more information on the number of specific types of sub-coregraphs. 

We can rewrite Lemma \ref{lembeta1} as follows.
\begin{lem} 
\label{lemthetacoeff}
Let $G$ be connected and not a tree. Then we have
\begin{equation}
\theta_{G}(1,\gamma)
=
\sum_{l=0}^{n(G)-1} C_{n(G),l} \gamma^{2l},  \nonumber
\end{equation}
where $C_{n,l}:=\sum_{k=l+1}^{n}\binom{n}{k}\binom{k+l-1}{2l}$ for 
$1 \leq l \leq n-1$ and $C_{n,0}:= 2^{n}$.
\end{lem}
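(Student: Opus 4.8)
The plan is to obtain the claim by re-expanding the closed form supplied by Lemma~\ref{lembeta1}. Put $n := n(G)$; since $G$ is connected and not a tree, $n\ge 1$, and Lemma~\ref{lembeta1} shows $\theta_G(1,\gamma)$ depends on $G$ only through $n$ (it coincides with $\theta_{B_n}(1,\gamma)$, as $B_n$ is connected and not a tree). The first step would be to record the intermediate identity
\begin{equation*}
\theta_G(1,\gamma)=\sum_{k=0}^{n}\binom{n}{k}f_{2k}(\gamma).
\end{equation*}
This follows from Theorem~\ref{cortheta} with all edge weights set to $1$: for $B_n$ the sum there runs over subsets $u$ of its $n$ loop edges and gives $\sum_{k}\binom{n}{k}f_{2k}(\gamma)$. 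Alternatively one can derive it directly: Eq.~(\ref{fn}) yields $f_{2a}(\gamma)+f_{2a+2}(\gamma)=\xi^{2a}+\xi^{-2a}$ for $\gamma=\xi-\xi^{-1}$, so expanding $(\xi^{n-1}+\xi^{1-n})(\xi+\xi^{-1})^{n-1}$ by the binomial theorem and symmetrising in $\xi\leftrightarrow\xi^{-1}$ gives $\sum_{a=0}^{n-1}\binom{n-1}{a}\bigl(f_{2a}(\gamma)+f_{2a+2}(\gamma)\bigr)$, which collapses to the displayed sum by Pascal's rule.

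The second step is to compute the coefficients of $f_{2k}$. From the defining recursion~(\ref{defindf}) one extracts the pure even-index three-term recurrence $f_{2k+2}=(\gamma^2+2)f_{2k}-f_{2k-2}$ with $f_0=f_2=1$. Writing $f_{2k}(\gamma)=\sum_{l\ge 0}a_{k,l}\gamma^{2l}$, this reads $a_{0,l}=a_{1,l}=\delta_{l,0}$ and $a_{k+1,l}=a_{k,l-1}+2a_{k,l}-a_{k-1,l}$. I would then prove, by induction on $k\ge 1$, that $a_{k,l}=\binom{k+l-1}{2l}$ (with the convention $\binom{m}{j}=0$ for $j<0$ or for $0\le m<j$), checking the rows $k=1,2$ directly and verifying that $\binom{k+l-1}{2l}$ satisfies the recurrence for $k\ge 2$ via two applications of Pascal's identity together with $\binom{k+l-1}{2l}-\binom{k+l-2}{2l}=\binom{k+l-2}{2l-1}$. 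Since $\deg f_{2k}=2(k-1)$ this gives $f_0=1$ and $f_{2k}(\gamma)=\sum_{l=0}^{k-1}\binom{k+l-1}{2l}\gamma^{2l}$ for $k\ge 1$; one may instead simply quote the classical coefficient formula for Fibonacci polynomials, noting $f_n=F_{n-1}$.

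Substituting this into the intermediate identity and collecting powers of $\gamma$ finishes the proof: for $l\ge 1$ the coefficient of $\gamma^{2l}$ is $\sum_{k=l+1}^{n}\binom{n}{k}\binom{k+l-1}{2l}=C_{n,l}$, while the constant term is $\binom{n}{0}+\sum_{k=1}^{n}\binom{n}{k}\binom{k-1}{0}=\sum_{k=0}^{n}\binom{n}{k}=2^{n}=C_{n,0}$, and the largest $l$ that occurs is $l=n-1$ (attained at $k=n$). I expect the only genuinely fiddly point to be the induction in the second step: verifying that $\binom{k+l-1}{2l}$ satisfies the recurrence is a short computation, but it is sensitive to the binomial conventions at the ends of the $l$-range and at $k\le 1$, where the base rows must be handled separately from the closed form. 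Everything else is routine reindexing.
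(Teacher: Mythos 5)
Your proposal is correct and follows essentially the same route as the paper: reduce to $\theta_{B_{n(G)}}(1,\gamma)=\sum_{k=0}^{n}\binom{n}{k}f_{2k}(\gamma)$ via Lemma~\ref{lembeta1}, establish $f_{2k}(\gamma)=\sum_{l=0}^{k-1}\binom{k+l-1}{2l}\gamma^{2l}$ by induction on the recursion, and collect powers of $\gamma$. The only cosmetic differences are that the paper runs the induction on the original mixed even/odd recurrence~(\ref{defindf}) rather than your derived three-term even-index recurrence, and that your intermediate identity is really Proposition~\ref{propbasictheta}(b) (the definition applied to $B_n$) rather than Theorem~\ref{cortheta}, at whose weights $\beta_e=1$ only $s=E$ survives.
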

\begin{proof}
First we note that
for $k \geq 1$,
\begin{equation}
f_{2k}(\gamma)=\sum_{l=0}^{k-1}\binom{k+l-1}{2l}\gamma^{2l}  
\quad  \text{  and  } \quad
f_{2k+1}(\gamma)=\sum_{l=0}^{k-1}\binom{k+l}{2l+1}\gamma^{2l+1}.  \nonumber
\end{equation}
This is easily proved inductively using
Eq.~(\ref{defindf}).
Then Lemma \ref{lembeta1} derives
\begin{align*}
\theta_{G}(1,\gamma)
=
\theta_{B_{n(G)}}(1,\gamma)
&=\sum_{k=1}^{n(G)} \binom{n(G)}{k} f_{2k}(\gamma)
+f_{0}(\gamma) \\
&=
\sum_{l=0}^{n(G)-1}\sum_{k=l+1}^{n(G)} \binom{n(G)}{k} \binom{k+l-1}{2l}\gamma^{2l}
+1 \\
&=
\sum_{l=0}^{n(G)-1} C_{n(G),l} \gamma^{2l}.
\end{align*}
\end{proof}

\begin{thm}
\label{thm:numberSC3regular}
Let $G$ be a connected graph and not a tree.
If every vertex of $\core(G)$ has the degree at most $3$,
then 
\begin{equation}
C_{n(G),l}=  
|\{ s \in \mathcal{C}(G) ; s \text{ has exactly }2l \text{ vertices of degree
3.} \}|  \nonumber
\end{equation}
for $0 \leq l \leq n(G)-1$.
\end{thm}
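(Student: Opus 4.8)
The plan is to extract the claim directly from Lemma~\ref{lemthetacoeff} by matching the coefficient of $\gamma^{2l}$ on both sides of the two known expressions for $\theta_{G}(1,\gamma)$. On the one hand, setting all edge weights to $1$ in Definition~\ref{defTheta} gives
\begin{equation*}
\theta_{G}(1,\gamma)=\sum_{s \subset E}\prod_{i \in V}f_{d_i(s)}(\gamma)=\sum_{s \in \mathcal{C}(G)}\prod_{i \in V}f_{d_i(s)}(\gamma),
\end{equation*}
where the reduction to sub-coregraphs uses $f_1(x)=0$, and where I may further assume $\core(G)$ has no degree-$2$ vertices (subdivision and core-taking change neither the nullity nor the set of sub-coregraphs in an essential way, as already exploited in the proof of Theorem~\ref{thmbound}). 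Under the hypothesis that every vertex of $\core(G)$ has degree at most $3$, each $i \in V$ has $d_i(s)\in\{0,2,3\}$ for every $s\in\mathcal{C}(G)$, so $f_{d_i(s)}(\gamma)\in\{f_0(\gamma),f_2(\gamma),f_3(\gamma)\}=\{1,1,\gamma\}$. Hence $\prod_{i\in V}f_{d_i(s)}(\gamma)=\gamma^{\,\#\{i:\,d_i(s)=3\}}$, and the number of degree-$3$ vertices of a sub-coregraph is always even (the sum of degrees in any graph is even, and the remaining vertices contribute even degree), so this product is $\gamma^{2l}$ where $2l$ is the number of degree-$3$ vertices of $s$.

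Collecting terms, the first expression becomes
\begin{equation*}
\theta_{G}(1,\gamma)=\sum_{l\ge 0}\big|\{s\in\mathcal{C}(G): s \text{ has exactly }2l\text{ degree-}3\text{ vertices}\}\big|\,\gamma^{2l}.
\end{equation*}
On the other hand, Lemma~\ref{lemthetacoeff} gives $\theta_{G}(1,\gamma)=\sum_{l=0}^{n(G)-1}C_{n(G),l}\gamma^{2l}$. Since $\gamma$ is an indeterminate, comparing coefficients of $\gamma^{2l}$ yields the asserted identity for $0\le l\le n(G)-1$ (and incidentally shows the count vanishes for $l\ge n(G)$, consistent with the fact that a sub-coregraph of a graph whose core has maximum degree $3$ can have at most $2(n(G)-1)$ degree-$3$ vertices).

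The only genuinely delicate point is the reduction step: one must argue that passing from $G$ to $\core(G)$ and suppressing degree-$2$ vertices preserves both sides of the claimed equation. For $C_{n,l}$ this is immediate since it depends only on $n(G)$, which is invariant under these operations. For the right-hand side, I would note that $\mathcal{C}(G)=\mathcal{C}(\core(G))$ canonically (degree-one vertices and their pendant edges lie in no sub-coregraph), and that a subdivision induces a weight-preserving bijection on sub-coregraphs which leaves every vertex degree either equal to the original or equal to $2$, hence does not change the number of degree-$3$ vertices. This bookkeeping, already implicit in the proof of Theorem~\ref{thmbound}, is the main obstacle to a fully rigorous write-up, but it is routine; once it is in place the coefficient comparison finishes the proof immediately.
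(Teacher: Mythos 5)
Your proof is correct and is essentially the paper's argument: under the degree bound each factor $f_{d_i(s)}(\gamma)$ is $1$ or $\gamma$ (since $f_0=f_2=1$, $f_3(\gamma)=\gamma$), so $\prod_{i}f_{d_i(s)}(\gamma)=\gamma^{2l}$ with $2l$ the number of degree-$3$ vertices of $s$, and comparing coefficients with Lemma \ref{lemthetacoeff} finishes it. The reduction to a core with no degree-$2$ vertices that you flag as the delicate point is actually unnecessary: every edge of a sub-coregraph already lies in $\core(G)$, so $d_i(s)\leq 3$ holds directly, and degree-$0$ and degree-$2$ vertices contribute a factor $1$ without any subdivision bookkeeping.
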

\begin{proof}
From the assumption of this theorem, all degrees of a sub-coregraph $s$ are at most three.
Accordingly, for a sub-coregraph $s$,
$\prod_{i \in V}f_{d_{i}(s)}(\gamma)=\gamma^{2l}$ holds,
where $2l$ is the number of vertices of degree three in the subgraph $s$.
Therefore, Lemma \ref{lemthetacoeff} implies the assertion.
\end{proof}

For example, consider the hypergraph in Figure \ref{figureExampleGraph1} as a graph.
Since this graph is a subdivision of a $3$-regular graph, we can apply the Theorem \ref{thm:numberSC3regular}.
This graph has nullity $2$ and
\begin{equation*}
 C_{2,1}=1, \quad C_{2,0}=4.
\end{equation*}
From Figure \ref{figureExampleGraph2}, one observes that Theorem \ref{thm:numberSC3regular} is correct for this case.

\section{Bivariate graph polynomial $\theta_G$}
\label{sec:theta}
In this section, we discuss $\theta_G$.
For a given graph $G$, 
\begin{equation}
\theta_{G}({\beta},{\gamma}):= 
\sum_{s \subset E}
\beta^{|s|}
\prod_{i \in V} f_{d_i(s)}(\gamma)
\quad
\in \mathbb{Z} [ \beta,\gamma ],  \label{defthetaeq}
\end{equation}
where $d_{i}(s)$ is the degree of the vertex $i$ in $s$.

\subsection{Basic properties}
The following facts are immediate from the previous results.
\begin{prop}$\quad$ 
\label{propbasictheta}
\begin{itemize}
\item[{\rm (a)}]
$\theta_{G_1 \cup G_2}({\beta},{\gamma})
 =\theta_{G_1}({\beta},{\gamma})\theta_{G_2}({\beta},{\gamma}). $
\item[{\rm (b)}]
$\theta_{B_n}({\beta},\gamma)=
\sum_{k=0}^{n} {n \choose k} f_{2k}(\gamma) \beta^{k}.$ 
\item[{\rm (c)}]
$ \theta_{G}({\beta},{\gamma})=\theta_{\core(G)}({\beta},{\gamma}).$
\item[{\rm (d)}]
$ \theta_{G}(\beta,\gamma)= (1-\beta) \theta_{G\backslash e}(\beta,\gamma) + \beta  \theta_{G/e}({\beta},\gamma)$ for a non loop edge $e$.
\item[{\rm (e)}]
$\theta_{G}(\beta,\gamma)= \sum_{s \subset E} \prod_{n=0} \theta_{B_n}(1,\gamma)^{i_n(s)} \beta^{|s|} (1-\beta)^{|E|-|s|}.$
\end{itemize}
\end{prop}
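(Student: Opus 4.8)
The plan is to obtain all five items by specializing the weighted identities already proved for $\Theta_G$, together with a one-line count for the bouquet graph in (b). The key observation that makes this routine is that $\theta_G(\beta,\gamma)$ is precisely $\Theta_G(\bs{\beta},\bs{\gamma})$ evaluated on the ``diagonal'' where every edge weight equals the common value $\beta$ and every vertex weight equals the common value $\gamma$. In particular, the hypothesis $\gamma_i=\gamma_j$ required by the deletion-contraction relation of Theorem~\ref{thm:Mcd} is automatically met, which is why (d) and (e) follow for free.

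For (a), I would restrict the factorization of Eq.~(\ref{Thetaproperty0}) to this diagonal; equivalently, one splits the sum over $s\subset E_1\sqcup E_2$ in Definition~\ref{defTheta} as a product of the sums over $s\cap E_1$ and $s\cap E_2$, using that $d_i(s)$ depends only on the part of $s$ lying in the component containing $i$. For (c), the same restriction of Eq.~(\ref{Thetaproperty2}) gives $\theta_G=\theta_{\core(G)}$; a self-contained alternative is to note that a pendant vertex $i$ forces $f_{d_i(s)}(\gamma)$ to vanish unless its unique incident edge is absent from $s$, and $f_0(\gamma)=1$, so clipping $i$ together with that edge leaves the sum unchanged, and then iterate. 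For (d), I would invoke Theorem~\ref{thm:Mcd} with $\bs{\gamma}$ and $\bs{\beta}$ constant: the induced weights $\bs{\beta}'$ on $G\backslash e$ and $\bs{\beta}''$ on $G/e$ are again the constant $\beta$, and since $\gamma_i=\gamma_j=\gamma$ the fused vertex in $G/e$ inherits weight $\gamma$, so the relation specializes exactly to the stated one for non-loop $e$. For (e), I would specialize Theorem~\ref{cortheta}: with constant edge weight the factor $\prod_{e\in s}\beta_e\prod_{e\in E\smallsetminus s}(1-\beta_e)$ becomes $\beta^{|s|}(1-\beta)^{|E|-|s|}$, yielding the asserted expansion in terms of $\theta_{B_n}(1,\gamma)$ and the component nullities $i_n(s)$.

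It remains to check (b) directly. The bouquet $B_n$ has one vertex $v$ and $n$ loops; a spanning subgraph $s$ is a choice of a subset of the loops, and if $|s|=k$ then each chosen loop adds $2$ to $d_v$, so $d_v(s)=2k$. Hence
\begin{equation*}
\theta_{B_n}(\beta,\gamma)=\sum_{s\subset E}\beta^{|s|}f_{d_v(s)}(\gamma)=\sum_{k=0}^{n}\binom{n}{k}\beta^{k}f_{2k}(\gamma),
\end{equation*}
which is exactly (b); this computation already occurred inside the proof of Theorem~\ref{cortheta}.

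There is no genuine obstacle here, since every part is a specialization of a result proved earlier plus the bouquet count. The one point deserving a line of care is in (d): one must verify that constancy of the weights survives deletion and contraction, i.e.\ that contracting $e=ij$ with $\gamma_i=\gamma_j$ produces a graph whose single new vertex again carries the common weight $\gamma$, so that the induced pair $(\bs{\beta}'',\bs{\gamma}'')$ is once more the constant pair $(\beta,\gamma)$; this is immediate but is what lets us phrase the relation purely in the two variables $\beta,\gamma$.
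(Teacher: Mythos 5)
Your proposal is correct and matches the paper's treatment: the paper states these facts as ``immediate from the previous results,'' i.e.\ precisely the specializations of Eqs.~(\ref{Thetaproperty0}) and (\ref{Thetaproperty2}), Theorem~\ref{thm:Mcd}, and Theorem~\ref{cortheta}) to constant weights, with (b) being the bouquet computation that already appears inside the proof of Theorem~\ref{cortheta}. Your extra care about the induced weights under contraction in (d) and the direct degree-one argument for (c) are both sound.
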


\begin{example}
For a tree $T$, $\theta_{T}(\beta,\gamma)=1$.
For the cycle graph $C_n$, which has $n$ vertices and $n$ edges,
$\theta_{C_n}(\beta,\gamma)=1+\beta^{n}$. 
For the complete graph $K_{4}$,
$\theta_{K_4}(\beta,\gamma)=1+4\beta^{3}+3 \beta^{4}+6
 \beta^{5}\gamma^{2}+\beta^{6}\gamma^{4}$.
For the graph $X_1$, which is in Figure \ref{X1X2},
$\theta_{X_1}(\beta,\gamma)=1+3 \beta^{2}+\beta^{3}\gamma^{2}$.
For the graph $X_2$, which is also in Figure \ref{X1X2},
$\theta_{X_2}(\beta,\gamma)=1+2\beta+\beta^{2}+\beta^{3}\gamma^{2}$.
\end{example}

\begin{figure}
\begin{center}
\includegraphics[scale=0.3]{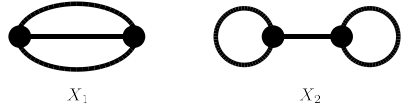} 
\vspace{-1mm}
\caption{Graph $X_1$ and $X_2$}
\label{X1X2}
\end{center}
\end{figure}

\subsection{$\theta_G$ as a Tutte's V-function}
In 1947 \cite{Tring}, Tutte defined
a class of graph invariants called V-function.
The definition is as follows.
\begin{defn}
Let $\mathcal{G}$ be the set of isomorphism classes of finite undirected
 graphs, with loops and multiple edges allowed.
Let $R$ be a commutative ring.
A map $\mathcal{V} : \mathcal{G} \rightarrow R$
is called a {\it V-function}
if it satisfies the following two conditions:
\begin{align*}
&{\rm(i)} \quad 
\mathcal{V}(G)=\mathcal{V}(G \backslash e)+\mathcal{V}(G/e)
\qquad \text{if $e \in E$ is not a loop,} \qquad \qquad \\
&{\rm(ii)} \quad
\mathcal{V}(G_1 \cup G_2)= \mathcal{V}(G_1) \mathcal{V}(G_2). \qquad \qquad
\end{align*}
\end{defn}

Our graph invariant $\theta$ is essentially an example of 
a V-function.
In the definition of V-functions, the coefficients of the
deletion-contraction relation are $1$, while those of $\theta$
are $(1-\beta)$ and $\beta$.
If we modify $\theta$ to
\begin{equation}
\hat{\theta}_{G}(\beta,\gamma):= (1-\beta)^{-|E|+|V|}\beta^{-|V|}
 \theta_{G}(\beta,\gamma) ,  \nonumber
\end{equation}
this is a V-function
$\hat{\theta} : \mathcal{G} \rightarrow  \mathbb{Z}[\beta,\gamma,\beta^{-1},(1-\beta)^{-1}]$.

By successive applications of the conditions of a V-function,
we can reduce the value at any graph to the values at bouquet graphs.
Therefore we can say that a V-function is completely determined by its
boundary condition, i.e., the values at the bouquet graphs.
Conversely,
Tutte shows in \cite{Tring} that for an arbitrary boundary condition,
there is a V-function that satisfies it.
More explicitly, the V-function satisfying a boundary condition
$\{ \mathcal{V}(B_n) \}_{n=0}$
is given by
\begin{equation}
 \mathcal{V}(G)= \sum_{s \subset E}
\prod_{n=0} z_n^{i_n(s)}, \label{V-funcTutterep}
\end{equation}
where $z_n:=\sum_{j=0}^{n}{n \choose j}(-1)^{n+j}\mathcal{V}(B_j)$ and
$i_n(s)$ is the number of connected components of the subgraph
$s$ with nullity $n$.
In our case of $\hat{\theta}$, the expansion Eq.~(\ref{V-funcTutterep}) is equivalent 
to the (e) of Proposition \ref{propbasictheta}.

The Formulas (\ref{defthetaeq}) and (e) of Proposition \ref{propbasictheta} are both 
represented in the sum of the subsets of edges,
but the terms of a subset are different.
Generally, a V-function does not have a representation
corresponding to Eq.~(\ref{defthetaeq});
this representation makes $\theta$ worthy to be investigated among V-functions.

\subsection{Comparison with Tutte polynomial}
The most famous example of a V-function is the Tutte polynomial
multiplied with a trivial factor.
Many graph polynomials, which appear in computer science, engineering, statistical physics, etc.,
are equivalent to the Tutte polynomial or its specialization \cite{EMinter1}.
The {\it Tutte polynomial} is defined by
\begin{equation}
 T_{G}(x,y):=\sum_{s \subset E}
(x-1)^{r(G)-r(s)}
(y-1)^{n(s)}.\label{defTutte}
\end{equation}
It satisfies a deletion-contraction relation
\begin{equation*}
T_{G}(x,y)=
\begin{cases}
 x T_{G \backslash e}(x,y)  \qquad \qquad \qquad \qquad \text{ if $e$ is a bridge,} \\
 y T_{G \backslash e}(x,y)  \qquad \qquad \qquad \qquad \text{ if $e$ is a loop,}   \\
 T_{G \backslash e}(x,y)+ T_{G / e}(x,y) \qquad \quad \text{ otherwise. }
\end{cases} 
\end{equation*}
It is easy to see that
$\hat{T}_{G}(x,y):=(x-1)^{k(G)}T_{G}(x,y)$ is a V-function
to $\mathbb{Z}[x,y]$.
For bouquet graphs, $\hat{T}_{B_n}(x,y)=(x-1)y^{n}$.
In the case of Tutte polynomial, 
Eq.~(\ref{V-funcTutterep}) derives 
Eq.~(\ref{defTutte}).

The V-functions 
$\hat{\theta}$ and $\hat{T}$ are 
essentially different.
The assertion in the following remark implies the
difference irrespective of transforms between $(\beta,\gamma)$ and $(x,y)$.
\begin{remark}
\label{propthetaTutte}
For any field $K$, inclusions
$\phi_1: \mathbb{Z}[\beta,\gamma,\beta^{-1},(1-\beta)^{-1}] \hookrightarrow K$,
and $\phi_2: \mathbb{Z}[x,y] \hookrightarrow K$,
we have
\begin{equation}
\phi_1 \circ \hat{\theta} \neq \phi_2 \circ \hat{T}. \nonumber
\end{equation}
\end{remark}
\begin{proof}
It is easy to see that
$\phi_2(\hat{T}_{B_n}) /\phi_2(\hat{T}_{B_0})= \phi_2(y)^{n}$ and
$\phi_1(\hat{\theta}_{B_n}) / \phi_1(\hat{\theta}_{B_0}) 
= \phi_1(1-\beta )^{-n} \phi_1(\sum_{k=0}^{n} {n \choose k} f_{2k}(\gamma) \beta^{k}).$
If $\phi_1 \circ \hat{\theta} = \phi_2 \circ \hat{T}$,
then
$a_n:=$
$\sum_{k=0}^{n} {n \choose k} f_{2k}(\gamma') \beta'^{k}$
$= z^{n}$
for some $z \in K$, where $\gamma'=\phi_1(\gamma)$ and $
 \beta'=\phi_1(\beta)$.
The equation $a_1^{2} = a_2$ gives $\gamma'^{2}\beta'^{2}=0$.
This is a contradiction because $\beta \neq 0$ and $\gamma \neq 0$.
\end{proof}

As suggested in the proof of the above remark,
if we set $\gamma=0$, the polynomial $\theta_{G}(\beta,0)$ is included in 
the Tutte polynomial.
\begin{prop}
\label{thmgamma0}
\begin{equation*}
 \theta_{G}(\beta,0)=
(1-\beta)^{n(G)}\beta^{r(G)}
T_{G}
\Big(
\frac{1}{\beta},\frac{1+\beta}{1-\beta}
\Big).
\end{equation*}
\end{prop}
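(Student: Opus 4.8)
\textbf{Proof proposal for Proposition \ref{thmgamma0}.}

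The plan is to evaluate $\theta_G(\beta,0)$ directly from the subset-sum definition Eq.~(\ref{defthetaeq}) and recognize the result as a specialization of the Tutte polynomial via its random-cluster (Fortuin--Kasteleyn) expansion. First I would substitute $\gamma=0$ into $\theta_{G}(\beta,\gamma)=\sum_{s\subset E}\beta^{|s|}\prod_{i\in V}f_{d_i(s)}(0)$. Using the recursion $f_0=1$, $f_1=0$, $f_{n+1}(x)=xf_n(x)+f_{n-1}(x)$, one checks immediately by induction that $f_n(0)=1$ when $n$ is even and $f_n(0)=0$ when $n$ is odd. Hence $\prod_{i\in V}f_{d_i(s)}(0)$ equals $1$ if every vertex of the spanning subgraph $(V,s)$ has even degree, and $0$ otherwise. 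Therefore $\theta_G(\beta,0)=\sum_{s}\beta^{|s|}$, where the sum runs over all even subgraphs (edge sets inducing an Eulerian spanning subgraph, i.e.\ elements of the cycle space $\mathbb{F}_2^{E}$ cut out by the boundary map).

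Next I would invoke the standard fact that the generating function of even subgraphs by edge count is a known evaluation of the Tutte polynomial. Concretely, the number of spanning even subgraphs with a weight $\beta$ per edge is
\begin{equation*}
\sum_{\substack{s\subset E\\ s\text{ even}}}\beta^{|s|}
=
(1-\beta)^{n(G)}\beta^{r(G)}\,T_{G}\!\left(\frac{1}{\beta},\frac{1+\beta}{1-\beta}\right),
\end{equation*}
which one derives from the subgraph (rank--nullity) expansion Eq.~(\ref{defTutte}) together with the elementary identity $\sum_{s\text{ even}}z^{|s|}=\sum_{s\subset E} z^{|s|}\prod_{i}\frac{1}{2}(1+(-1)^{d_i(s)})$, expanding the product and swapping sums so that the inner sum over $V$-indexed signs collapses (by the same cycle-space/orthogonality argument used for Eq.~(\ref{thetaidentityeq})) to a weighted count governed by the rank of the incidence matrix, i.e.\ to $2^{n(G)}$ times a rank-nullity generating function. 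Matching the resulting $(x-1)^{r(G)-r(s)}(y-1)^{n(s)}$ shape with $x-1=1/\beta-1=(1-\beta)/\beta$ and $y-1=2\beta/(1-\beta)$ yields exactly the claimed prefactors $(1-\beta)^{n(G)}\beta^{r(G)}$.

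Alternatively, and perhaps more cleanly, I would verify the proposition by checking that both sides are (essentially) V-functions in the sense of Tutte and agree on bouquet graphs. The left side: Proposition \ref{propbasictheta}(d) gives the deletion--contraction $\theta_{G}(\beta,0)=(1-\beta)\theta_{G\setminus e}(\beta,0)+\beta\theta_{G/e}(\beta,0)$ for non-loop $e$, and $\theta_{B_n}(\beta,0)=\sum_{k=0}^{n}\binom{n}{k}f_{2k}(0)\beta^k=(1+\beta)^n$ by part (b). The right side: $\hat T_G(x,y)=(x-1)^{k(G)}T_G(x,y)$ is a V-function with $\hat T_{B_n}(x,y)=(x-1)y^n$, and on bouquet graphs $(1-\beta)^{n}\beta^{0}T_{B_n}(1/\beta,(1+\beta)/(1-\beta))=(1-\beta)^n\big(\frac{1+\beta}{1-\beta}\big)^n=(1+\beta)^n$. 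Since both sides satisfy the same deletion--contraction relation with the same coefficients $(1-\beta),\beta$ (after the trivial normalization by $(1-\beta)^{-|E|+|V|}\beta^{-|V|}$, as in the definition of $\hat\theta$), are multiplicative over disjoint unions, and agree on all $B_n$, they coincide for every graph by induction on $|E|$.

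The main obstacle is the bookkeeping of the normalization prefactors: one must track how the factors $(1-\beta)$ and $\beta$ accumulated along a deletion--contraction reduction to bouquets compare on the two sides, i.e.\ confirming that the substitution $x=1/\beta$, $y=(1+\beta)/(1-\beta)$ is precisely the one under which $\hat T$ and $\hat\theta(\cdot,0)$ have matching boundary values. Once the bouquet-graph identity $(1-\beta)^n(\frac{1+\beta}{1-\beta})^n=(1+\beta)^n$ is in hand, the V-function uniqueness argument closes everything with no further computation.
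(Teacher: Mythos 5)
Your second (``alternatively'') argument is exactly the paper's proof: both $\hat\theta_G(\beta,0)$ and $\hat T_G(x,y)$ are V-functions, a V-function is determined by its values on bouquet graphs, and the boundary values $\theta_{B_n}(\beta,0)=(1+\beta)^n=(1-\beta)^n\bigl(\tfrac{1+\beta}{1-\beta}\bigr)^n$ match under the stated substitution. Your first route via the even-subgraph (van der Waerden) expansion is also sound --- the paper itself remarks on this interpretation right after the proof --- but it is only sketched and is not needed once the V-function argument is in place.
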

\begin{proof}
From Proposition \ref{propbasictheta}.(b)
and $f_{2k}(0)=1$, we have
\begin{equation*}
 \hat{\theta}_{B_n}(\beta,0)=(1-\beta)^{1-n}\beta^{-1}
\sum_{k=0}^{n} {n \choose k} \beta^{k} =(1-\beta)^{1-n}\beta^{-1} (1+\beta)^{n}.
\end{equation*}
We also have 
$\hat{T}_{B_n}(\frac{1}{\beta},\frac{1+\beta}{1-\beta})=(\beta^{-1}-1)(\frac{1+\beta}{1-\beta})^{n}$.
Therefore 
$\hat{\theta}_{B_n}(\beta,0)=\hat{T}_{B_n}(\frac{1}{\beta},\frac{1+\beta}{1-\beta})$.
Since V-functions are determined by the values at the bouquet graphs,
 $\hat{\theta}_{G}(\beta,0)=\hat{T}_{G}(\frac{1}{\beta},\frac{1+\beta}{1-\beta})$
 holds for any graph $G$.
\end{proof}

This result is natural in the view of the Ising partition function Eq.~(\ref{defIsing})
with uniform coupling constant $J$ and no external fields ($h_i=0$).
Here, for simplicity, we call it the simple Ising partition function.
The Tutte polynomial is equivalent to the partition
function of the q-Potts model \cite{Bollobas}, where $q$ is the number of allowed states at each vertex.
If we set $q=2$, it becomes the simple Ising partition function.
In terms of the Tutte polynomial, it correspond to the parameters
$(x,y)=(\frac{1}{\beta},\frac{1+\beta}{1-\beta})$.
Therefore, $T_{G}(\frac{1}{\beta},\frac{1+\beta}{1-\beta})$
is the simple Ising partition function in essence.
On the other hand, at a point of $\gamma=0$, $\theta_{G}(\beta,0)$
is also the simple Ising partition function essentially,
because
the representation of $\theta_{G}(\beta,0)$ in the sum of sub-coregraphs
coincides with the expansion of van der Waerden \cite{Welsh}.

We can say that the Tutte polynomial is an extension of 
the simple Ising partition function 
to the $q$-state model 
while
the polynomial $\theta$ is an extension of it
to a model with specific form of local external fields.

%
%
%
%


\section{Univariate graph polynomial $\omega_G$}
\label{sec:omega}
\subsection{Definition and basic properties}
In this section we define the second graph polynomial
$\omega$ by setting $\gamma=2\sqrt{-1}$.
It is easy to check that
$f_n(2\sqrt{-1})=(\sqrt{-1})^{n}(1-n)$,
using Eq.~(\ref{defindf}).
Therefore
\begin{equation}
\theta_{G}(\beta,2\sqrt{-1})
=
\sum_{s \subset E} 
(- \beta)^{|s|}
\prod_{i \in V}(1-d_i(s)). \label{theta2i} \\ 
\end{equation}
An interesting point of this specialization is 
the relation to the monomer-dimer partition function
with specific form of monomer-dimer weights,
as described in Section \ref{subsectionomegamonomerdimer}.
Furthermore the product of $(1-d_i(s))$ resembles the \ls
of the perfect matching problem given in Theorem \ref{thm:PMLS}.

From Eq.~(\ref{lembeta1eq1}), $\theta_G(1,2\sqrt{-1})=0$
unless all the nullities of connected components of $G$ are less than $2$.
The following theorem asserts that $\theta_G(\beta,2\sqrt{-1})$ can be divided by
$(1-\beta)^{|E|-|V|}$.
We define $\omega_G$ by dividing that factor.
\begin{thm}
\label{thmomega}
\begin{equation}
 \omega_{G}(\beta):= \frac{\theta_{G}(\beta,2 \sqrt{-1})}{(1-\beta)^{|E|-|V|}} \quad
 \in \mathbb{Z}[\beta].  \nonumber
\end{equation}
\end{thm}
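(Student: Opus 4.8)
The plan is to deduce the statement from the subgraph expansion of $\theta_G$ already established in Proposition~\ref{propbasictheta}(e), rather than by a deletion--contraction induction (which is inconvenient here because $\theta_G$ admits no reduction rule for loops). The point is that specializing $\gamma = 2\sqrt{-1}$ annihilates the ``high-nullity'' contributions and leaves an expansion from which the factor $(1-\beta)^{|E|-|V|}$ visibly drops out. First I would record the boundary values at $\gamma = 2\sqrt{-1}$: from the recurrence Eq.~(\ref{defindf}) one checks by induction that $f_n(2\sqrt{-1}) = (\sqrt{-1})^n(1-n)$ (as already noted just before Theorem~\ref{thmomega}), and then Proposition~\ref{propbasictheta}(b) --- equivalently Lemma~\ref{lembeta1} at $\xi = \sqrt{-1}$, where $\xi - \xi^{-1} = 2\sqrt{-1}$ and $\xi + \xi^{-1} = 0$ --- gives $\theta_{B_0}(1,2\sqrt{-1}) = 1$, $\theta_{B_1}(1,2\sqrt{-1}) = 2$, and $\theta_{B_n}(1,2\sqrt{-1}) = 0$ for every $n \geq 2$.

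Next I would substitute $\gamma = 2\sqrt{-1}$ into the polynomial identity of Proposition~\ref{propbasictheta}(e). Any $s \subset E$ for which the spanning subgraph $(V,s)$ has a connected component of nullity at least $2$ then contributes $0$, so the sum collapses to a sum over those $s$ for which every component of $(V,s)$ is a tree or is unicyclic. Writing $i_0(s)$ for the number of tree components (acyclic, isolated vertices included) and $i_1(s)$ for the number of unicyclic components, such an $s$ carries the integer coefficient $\theta_{B_0}(1,2\sqrt{-1})^{i_0(s)}\,\theta_{B_1}(1,2\sqrt{-1})^{i_1(s)} = 2^{i_1(s)}$. For such an $s$ one has $k(s) = i_0(s) + i_1(s)$ and $n(s) = i_1(s)$, hence $|s| = |V(s)| - k(s) + n(s) = |V| - i_0(s)$, so that $\beta^{|s|}(1-\beta)^{|E|-|s|} = (1-\beta)^{|E|-|V|}\,\beta^{|V|-i_0(s)}(1-\beta)^{i_0(s)}$. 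Factoring $(1-\beta)^{|E|-|V|}$ out of the whole sum yields
\[
\theta_G(\beta, 2\sqrt{-1}) = (1-\beta)^{|E|-|V|}\sum_{s} 2^{i_1(s)}\,\beta^{|V|-i_0(s)}(1-\beta)^{i_0(s)},
\]
the sum ranging over the spanning subgraphs $s$ all of whose components are trees or unicyclic.

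Finally, since $0 \leq i_0(s) \leq |V|$ for every such $s$, each summand $2^{i_1(s)}\beta^{|V|-i_0(s)}(1-\beta)^{i_0(s)}$ lies in $\mathbb{Z}[\beta]$, so the displayed sum is a polynomial with integer coefficients; hence $\omega_G(\beta) = \theta_G(\beta,2\sqrt{-1})/(1-\beta)^{|E|-|V|}$ equals that sum and lies in $\mathbb{Z}[\beta]$, which is the assertion. The argument is uniform in the sign of $|E| - |V|$ --- when $|E| < |V|$ the claim is automatic, and the same closed form still holds. The only place demanding genuine care is the bookkeeping in the middle step: that a component of nullity $\geq 2$ really does force the coefficient to vanish (no hidden cancellation issue, since all relevant $\theta_{B_n}(1,2\sqrt{-1})$ with $n \geq 2$ are exactly $0$), and that the exponent $|V| - i_0(s)$ of $\beta$ is never negative, which is precisely $i_0(s) \leq |V|$. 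I do not expect any deeper obstacle; as a by-product the plan also produces the explicit combinatorial formula for $\omega_G$ displayed above.
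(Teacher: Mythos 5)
Your proof is correct and follows essentially the same route as the paper: the paper also deduces Theorem~\ref{thmomega} from the subgraph expansion of Proposition~\ref{propbasictheta}(e), computes $\theta_{B_n}(1,2\sqrt{-1})=1,2,0$ for $n=0,1,\geq 2$, and redistributes the factor $(1-\beta)^{|V|-|s|}$ over the components (this is exactly its Theorem~\ref{thmomegaaltrep}, with $h_0=1-\beta$, $h_1=2$, $h_n=0$). Your bookkeeping $|s|=|V|-i_0(s)$ for the surviving subgraphs is the same identity written slightly more explicitly, and it correctly shows each summand lies in $\mathbb{Z}[\beta]$.
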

In Eq.~(\ref{theta2i}), $\theta_{G}(\beta,2 \sqrt{-1})$ is given by the summation over all
sub-coregraphs and each term is not necessarily divisible by $(1-\beta)^{|E|-|V|}$.
If we use the representation in (e) of Proposition \ref{propbasictheta}, however, each summand
is divisible by the factor as
we show in the following theorem.
Theorem \ref{thmomega} is a trivial consequence of Theorem \ref{thmomegaaltrep}.
\begin{thm}
\label{thmomegaaltrep}
\begin{equation}
\omega_{G}(\beta)
=
\sum_{s \subset E } 
\beta^{|s|}
\prod_{n=0}
h_{n}(\beta)^{i_n(s)}, \nonumber
\end{equation}
where $h_0(\beta):=(1-\beta)$, $h_1(\beta):=2$ and 
$h_n(\beta):=0$ for $n \geq 2$. 
\end{thm}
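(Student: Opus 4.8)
The plan is to obtain Theorem \ref{thmomegaaltrep} directly from the subgraph expansion in part (e) of Proposition \ref{propbasictheta}, and then to read off Theorem \ref{thmomega} as an immediate corollary. Recall that (e) asserts
\begin{equation*}
\theta_{G}(\beta,\gamma)= \sum_{s \subset E} \beta^{|s|} (1-\beta)^{|E|-|s|}\prod_{n \geq 0} \theta_{B_n}(1,\gamma)^{i_n(s)} .
\end{equation*}
The first step is to specialize $\gamma=2\sqrt{-1}$ and compute the coefficients $\theta_{B_n}(1,2\sqrt{-1})$. From Lemma \ref{lembeta1} with $\xi=\sqrt{-1}$, so that $\xi-\xi^{-1}=2\sqrt{-1}$ while $\xi+\xi^{-1}=0$, one gets $\theta_{B_n}(1,2\sqrt{-1})=0$ for every $n\ge 2$; the remaining values $\theta_{B_0}(1,2\sqrt{-1})=1$ and $\theta_{B_1}(1,2\sqrt{-1})=2$ are checked directly (for instance $\theta_{B_1}(\beta,\gamma)=f_0(\gamma)+\beta f_2(\gamma)=1+\beta$), or via part (b) of Proposition \ref{propbasictheta} together with the identity $f_n(2\sqrt{-1})=(\sqrt{-1})^{n}(1-n)$.

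Next I would discard from the sum every $s$ possessing a connected component of nullity at least $2$: for such $s$ the factor $\prod_{n}\theta_{B_n}(1,2\sqrt{-1})^{i_n(s)}$ vanishes, so the term contributes nothing. For each surviving $s$ only the nullities $0$ and $1$ occur among its components, and this factor equals $2^{i_1(s)}$. The one genuinely delicate step is the rearrangement of the power of $1-\beta$. Writing $s$ as the disjoint union of its components and applying $n(C)=|E(C)|-|V(C)|+1$ on each component $C$ gives $|V|-|s|=\sum_{n\ge 0}(1-n)\,i_n(s)$, which for a surviving $s$ equals $i_0(s)$; hence
\begin{equation*}
(1-\beta)^{|E|-|s|}=(1-\beta)^{|E|-|V|}(1-\beta)^{|V|-|s|}=(1-\beta)^{|E|-|V|}(1-\beta)^{i_0(s)} .
\end{equation*}
It matters that the nullity-$\ge 2$ components have already been removed, since only then are all exponents $1-n$ that appear nonnegative and no spurious negative powers of $1-\beta$ arise. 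Substituting back yields
\begin{equation*}
\theta_{G}(\beta,2\sqrt{-1})=(1-\beta)^{|E|-|V|}\sum_{s}\beta^{|s|}(1-\beta)^{i_0(s)}2^{i_1(s)} ,
\end{equation*}
the sum running over the surviving subgraphs.

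Finally, dividing by $(1-\beta)^{|E|-|V|}$ gives $\omega_G(\beta)$ as that sum, and I would observe that with $h_0(\beta)=1-\beta$, $h_1(\beta)=2$ and $h_n(\beta)=0$ for $n\ge 2$ one has $\prod_{n\ge 0}h_n(\beta)^{i_n(s)}=(1-\beta)^{i_0(s)}2^{i_1(s)}$ precisely when every component of $s$ has nullity at most $1$, and $0$ otherwise. Therefore the sum may be extended back over all $s\subset E$ without changing its value, giving $\omega_G(\beta)=\sum_{s\subset E}\beta^{|s|}\prod_{n\ge 0}h_n(\beta)^{i_n(s)}$, which is Theorem \ref{thmomegaaltrep}. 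Since the right-hand side is visibly a polynomial in $\beta$ with integer coefficients, Theorem \ref{thmomega} follows at once. The argument involves no real obstacle beyond this bookkeeping; the substantive content is entirely carried by Proposition \ref{propbasictheta} and Lemma \ref{lembeta1}.
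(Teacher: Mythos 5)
Your proposal is correct and follows essentially the same route as the paper: both start from the subgraph expansion in Proposition \ref{propbasictheta}(e), evaluate $\theta_{B_n}(1,2\sqrt{-1})$ to get $1,2,0$ for $n=0,1,\ge 2$, and redistribute the power of $(1-\beta)$ over connected components via $|V|-|s|=\sum_{n}(1-n)i_n(s)$. Your extra step of first discarding subgraphs with a component of nullity $\ge 2$ merely makes explicit a point the paper handles formally (avoiding the negative exponents $(1-\beta)^{1-n}$), so there is no substantive difference.
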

\begin{proof}
Proposition \ref{propbasictheta} (b) and
$f_m(2\sqrt{-1})=(\sqrt{-1})^{m}(1-m)$, we have
\begin{align}
\theta_{B_n}(1,2\sqrt{-1})
=
\sum_{k=0}^{n}{n \choose k}
(-1)^{k}(1-2k)
=
\begin{cases}
1 
\quad \text{  if }n=0 \\
2
\quad \text{  if }n=1 \\
0
\quad \text{  if }n \geq 2.
\end{cases}  \nonumber
\end{align}
Proposition \ref{propbasictheta} (e) gives
\begin{align*}
\omega_{G}(\beta)
&=
\sum_{s \subset E}
\prod_{n=0}
\theta_{B_n}(1,2\sqrt{-1})^{i_n(s)}
\beta^{|s|}
(1-\beta)^{|V|-|s|} \\
&=
\sum_{s \subset E}
\prod_{n=0}
[(1-\beta)^{1-n}
\theta_{B_n}(1,2\sqrt{-1})
]^{i_n(s)}
\beta^{|s|}.
\end{align*}
Then the assertion is proved.
\end{proof}

\begin{example}
\label{exampleomega}
$\\$
For a tree $T$, $\omega_{T}(\beta)=1-\beta$.
For the cycle graph $C_n$,
$\omega_{C_n}(\beta)=1+\beta^{n}$.
For the complete graph $K_4$,
$\omega_{K_4}(\beta)=1+2\beta+3\beta^{2}+8\beta^{3}+16\beta^{4}$.
For graphs in Figure \ref{X1X2},
$\omega_{X_1}(\beta)=1+\beta+4\beta^{2}$ and
$\omega_{X_2}(\beta)=1+3\beta+4\beta^{2}$.
\end{example}

We list basic properties of $\omega$ below.
\begin{prop}$\quad$ 
\label{propbasicomega}
\begin{itemize}
\item[{\rm (a)}]
$\omega_{G_1 \cup G_2}(\beta)=\omega_{G_1}(\beta)\omega_{G_2}(\beta). $
\item[{\rm (b)}]
$\omega_{G}(\beta)=\omega_{G \backslash e}(\beta)+\beta
\omega_{G / e}(\beta )
\quad
\text{if } e \in E \text{ is not a loop.}$
\item[{\rm (c)}]
$\omega_{B_n}(\beta)=1+(2n-1)\beta$.
\item[{\rm (d)}]
$ \omega_{G}(\beta)=\omega_{\core(G)}(\beta). $
\item[{\rm (e)}]
$\omega_{G}(\beta)$ is a polynomial of degree $|V_{\core(G)}|$.
The leading coefficient is $\prod_{i \in V_{\core(G)}} (d_i -1)$
and the constant term is $1$.
\item[{\rm (f)}]
Let $G^{(m)}$ be the graph obtained by subdividing each edge to m edges. Then,
 \begin{equation*}
 \omega_{G^{(m)}}(\beta)=(1+\beta+ \cdots + \beta^{m-1})^{|E|-|V|}\omega_{G}(\beta^m).
 \end{equation*}
\end{itemize}
\end{prop}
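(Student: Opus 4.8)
\textbf{Proof proposal for Proposition \ref{propbasicomega}.}

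The plan is to establish the six claims (a)--(f) mostly by transferring the already-proven structural properties of $\theta_G$ through the substitution $\gamma = 2\sqrt{-1}$ and the division by $(1-\beta)^{|E|-|V|}$, and then handling (f) separately by a direct subdivision argument. First I would dispose of (a): since $\theta_{G_1 \cup G_2}(\beta,\gamma) = \theta_{G_1}(\beta,\gamma)\theta_{G_2}(\beta,\gamma)$ by Proposition \ref{propbasictheta}(a), specializing $\gamma=2\sqrt{-1}$ and using additivity of $|E|$ and $|V|$ over disjoint union gives $\omega_{G_1 \cup G_2} = \omega_{G_1}\omega_{G_2}$ immediately. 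For (b), I would start from the deletion-contraction relation of Proposition \ref{propbasictheta}(d), namely $\theta_G(\beta,\gamma) = (1-\beta)\theta_{G\backslash e}(\beta,\gamma) + \beta\,\theta_{G/e}(\beta,\gamma)$ for a non-loop edge $e$. Here $|E(G\backslash e)| = |E(G)|-1$ and $|V(G\backslash e)| = |V(G)|$, while $|E(G/e)| = |E(G)|-1$ and $|V(G/e)| = |V(G)|-1$; hence both $G\backslash e$ and $G/e$ have the same cycle rank deficiency relative to $G$, and dividing the whole identity by $(1-\beta)^{|E(G)|-|V(G)|}$ yields $\omega_G(\beta) = \omega_{G\backslash e}(\beta) + \beta\,\omega_{G/e}(\beta)$ after the factor $(1-\beta)$ cancels exactly one power. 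This cancellation is the one place where one must be careful that the exponent $|E|-|V|$ is non-negative on the relevant graphs, which holds for connected graphs and more generally follows from Theorem \ref{thmomega} guaranteeing $\omega_G \in \mathbb{Z}[\beta]$.

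Next, (c) is the boundary computation: $B_n$ has one vertex and $n$ edges, so $\omega_{B_n}(\beta) = \theta_{B_n}(\beta,2\sqrt{-1})/(1-\beta)^{n-1}$, and from the computation already carried out in the proof of Theorem \ref{thmomegaaltrep}, $\theta_{B_n}(1,2\sqrt{-1})$ equals $1,2,0$ for $n=0,1,\ge 2$; combined with the alternative representation of Theorem \ref{thmomegaaltrep} applied to $B_n$ (whose only spanning subgraphs are the connected one-vertex graphs $B_k$ for $0\le k\le n$), one gets $\omega_{B_n}(\beta) = \sum_{k=0}^n \binom{n}{k}\beta^k h_k(\beta) = (1-\beta) + 2n\beta = 1+(2n-1)\beta$. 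For (d), $\theta_G = \theta_{\core(G)}$ by Proposition \ref{propbasictheta}(c); passing to $\core(G)$ removes only degree-one vertices and their incident edges, which decreases $|E|$ and $|V|$ by the same amount, so $|E|-|V|$ is unchanged and dividing by the same power of $(1-\beta)$ gives $\omega_G = \omega_{\core(G)}$. Part (e) then follows by using (d) to reduce to a coregraph $G'=\core(G)$ and expanding $\theta_{G'}(\beta,2\sqrt{-1})$ via Eq.~(\ref{theta2i}): the top-degree term in $\beta$ comes from $s=E(G')$, contributing $(-\beta)^{|E(G')|}\prod_{i}(1-d_i)$, and $\omega_{G'}$ has degree $|E(G')|-(|E(G')|-|V(G')|) = |V(G')|$ with leading coefficient $(-1)^{|E(G')|}\prod_i(1-d_i) = \prod_i(d_i-1)$ (signs match because $|E(G')| \equiv \sum_i d_i \pmod 2$ and the number of odd-degree vertices is even); the constant term is the $s=\emptyset$ term of $\theta$ divided by $(1-\beta)^{|E|-|V|}$ evaluated appropriately, which is $1$.

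The main obstacle, and the step I would spend the most care on, is (f), the subdivision formula, since it is not a formal consequence of the $\theta$-identities but requires a combinatorial argument about how sub-coregraphs behave under subdivision. I would argue as follows: when each edge of $G$ is replaced by a path of $m$ edges, a sub-coregraph $s$ of $G^{(m)}$ is determined by which original edges are ``partially used''; but any new degree-two vertex introduced in the subdivision forces, in a sub-coregraph (no degree-one vertices), that along each subdivided edge either all $m$ sub-edges are present or none are — so sub-coregraphs of $G^{(m)}$ correspond bijectively to sub-coregraphs of $G$. Tracking the $\beta$-weights, a used original edge contributes $\beta^m$ instead of $\beta$, which replaces $\beta$ by $\beta^m$ in $\omega_G$; the extra factor $(1+\beta+\cdots+\beta^{m-1})^{|E|-|V|}$ should emerge from the $(1-\beta)^{|E(G^{(m)})|-|V(G^{(m)})|}$ denominator, using $|E(G^{(m)})| = m|E|$, $|V(G^{(m)})| = |V| + (m-1)|E|$, so $|E(G^{(m)})|-|V(G^{(m)})| = |E|-|V|$ is unchanged, while $1-\beta^m = (1-\beta)(1+\beta+\cdots+\beta^{m-1})$. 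Rather than the bijection argument alone, the cleanest route is probably to verify (f) on bouquet graphs using (c) — checking that $\omega_{B_n^{(m)}}(\beta) = (1+\cdots+\beta^{m-1})^{n}\omega_{B_n}(\beta^m)$ by noting $B_n^{(m)}$ has core $B_n^{(m)}$ itself with $n$ loops each of length $m$ — and then invoking that both sides satisfy the same deletion-contraction recursion in (b) (applied to the subdivided edges, or rather using uniqueness of V-functions from their boundary values), so equality on bouquets propagates to all graphs. I expect reconciling the denominator bookkeeping with the loop-handling to be the fiddly part, but no deep idea is needed beyond the observation that subdivision preserves cycle rank.
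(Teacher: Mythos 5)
Your proposal is correct and follows essentially the same route as the paper, which simply records that (a)--(e) are easy and proves (f) from exactly the two facts you identify: $|E_{G^{(m)}}|-|V_{G^{(m)}}|=|E_G|-|V_G|$ and $\theta_{G^{(m)}}(\beta,2\sqrt{-1})=\theta_{G}(\beta^{m},2\sqrt{-1})$; your degree-two-vertex argument (internal vertices of a subdivided path force all-or-nothing membership in a contributing subgraph) is the right justification of the latter, and the detour through bouquets plus V-function uniqueness is unnecessary and would in fact be harder, since deleting or contracting an edge of $G$ does not correspond to a single edge operation in $G^{(m)}$. Two small slips, neither affecting the conclusions: in (b), $|E|-|V|$ drops by one under deletion but is unchanged under contraction --- they do not have ``the same deficiency,'' and it is precisely this asymmetry together with the explicit factor $(1-\beta)$ in the $\theta$-recursion that makes the powers of $(1-\beta)$ cancel correctly; and in (e) the sign bookkeeping should be $(-1)^{|E'|}\prod_i(1-d_i)\,/\,(-1)^{|E'|-|V'|}=(-1)^{|V'|}\prod_i(1-d_i)=\prod_i(d_i-1)$, whereas the parity claim $|E(G')|\equiv\sum_i d_i \pmod 2$ that you invoke is false in general (the degree sum $\sum_i d_i=2|E'|$ is always even).
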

\begin{proof}
The assertions (a-e) are easy.
(f) is proved by $|E_G|-|V_G|=|E_{G^{(m)}}|-|V_{G^{(m)}}|$ and
$\theta_{G^{(m)}}(\beta,2\sqrt{-1})
=\theta_{G}(\beta^{m},2\sqrt{-1})$.
\end{proof}

\begin{prop} 
\label{propomeganonneg}
If $G$ does not have connected components of nullity $0$, then
the coefficients of $\omega_{G}(\beta)$ are non-negative.
\end{prop}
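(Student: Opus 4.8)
The plan is to exploit the deletion-contraction relation in Proposition~\ref{propbasicomega}(b) together with the multiplicativity in (a), reducing the claim to the case of a single connected coregraph with no vertex of degree one and nullity $n(G)\ge 1$. Since taking the core does not change $\omega_G$ by (d), and disjoint unions multiply, it suffices to prove non-negativity of coefficients for connected graphs $G$ with minimum degree at least $2$ and $n(G)\ge 1$. I would argue by induction on the number of edges.

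For the base cases, a connected coregraph with $n(G)=1$ is a cycle graph $C_n$ (possibly after subdivision), for which $\omega_{C_n}(\beta)=1+\beta^n$ by Example~\ref{exampleomega}, manifestly with non-negative coefficients; more generally a bouquet $B_n$ with $n\ge1$ gives $\omega_{B_n}(\beta)=1+(2n-1)\beta$ by (c). For the inductive step, pick any non-loop edge $e=ij$ of $G$. Then
\begin{equation}
\omega_G(\beta)=\omega_{G\backslash e}(\beta)+\beta\,\omega_{G/e}(\beta).
\end{equation}
The graph $G/e$ is still connected and has $n(G/e)=n(G)\ge1$, and every connected component of its core has positive nullity, so by the induction hypothesis $\omega_{G/e}$ has non-negative coefficients. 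The term $\beta\,\omega_{G/e}(\beta)$ therefore contributes non-negative coefficients. The remaining task is to handle $\omega_{G\backslash e}$: deleting $e$ may create vertices of degree one (hence a non-trivial core), may disconnect the graph, and, crucially, may lower the nullity by one, so that a component of $G\backslash e$ could have nullity $0$ and fall outside the inductive hypothesis.

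The main obstacle is exactly this: if $G\backslash e$ (after coring) has a tree component $T$, then $\omega_T(\beta)=1-\beta$ carries a negative coefficient, and one must show the negative contribution is cancelled. The way I would resolve this is to choose $e$ more carefully --- take $e$ to lie on a cycle of $G$ but not to be a ``bridge of the core'' in a way that disconnects it, which is always possible since $G$ is a coregraph of positive nullity; then $G\backslash e$ remains connected, $n(G\backslash e)=n(G)-1\ge0$, and either $n(G\backslash e)\ge1$ (induction applies directly) or $n(G\backslash e)=0$, meaning $\core(G\backslash e)=B_0$ and $\omega_{G\backslash e}(\beta)=1$. In the latter case $\omega_G(\beta)=1+\beta\,\omega_{G/e}(\beta)$ has non-negative coefficients since $\omega_{G/e}$ does. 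An alternative, cleaner route avoiding the case analysis is to prove the stronger statement directly from the subgraph expansion of Theorem~\ref{thmomegaaltrep}: group the sum over $s\subset E$ by the ``$2$-core'' type of each component, absorbing the factors $h_0(\beta)=1-\beta$ against the $\beta^{|s|}$ weights of edges attached to tree-parts, and show each grouped term expands with non-negative coefficients; I would pursue the induction first and fall back on this expansion if the edge-selection argument becomes delicate.
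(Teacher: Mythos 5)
Your overall strategy is the same as the paper's: reduce to connected coregraphs via multiplicativity and core-invariance, take bouquets and cycles as base cases, and induct on the number of edges through the deletion-contraction relation, applying the induction hypothesis to both $G/e$ and $G\backslash e$.

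The gap is in your treatment of $G\backslash e$. First, your claim that one can always pick a non-loop edge lying on a cycle (i.e.\ a non-bridge non-loop edge) is false: let $G$ consist of two vertices joined by one edge, with a loop attached to each vertex. This is a connected coregraph of nullity $2$ whose unique non-loop edge is a bridge, and whose only cycles are loops, to which the deletion-contraction relation does not apply; your argument does not cover this graph. Second, in your fallback case you assert $\omega_{G\backslash e}(\beta)=1$ when $\core(G\backslash e)=B_0$, but $\omega_{B_0}(\beta)=1-\beta$ (consistent with $\omega_T(\beta)=1-\beta$ for trees, which you quote two sentences earlier), so the identity $\omega_G=1+\beta\,\omega_{G/e}$ is wrong as written; in any case that situation only arises when $G$ is a cycle, which is already a base case. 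What actually closes the induction --- and what the paper uses implicitly --- is that the feared tree components of $G\backslash e$ never occur: for a connected coregraph $G$ that is neither a bouquet nor a cycle, and for \emph{any} non-loop edge $e=ij$, every component of $G\backslash e$ has nullity at least $1$. Indeed, if $e$ is not a bridge then $G\backslash e$ is connected of nullity $n(G)-1\ge 1$, because a connected coregraph of nullity at most $1$ is a bouquet or a cycle; if $e$ is a bridge, then a tree component with at least two vertices would have at least two leaves, of which at most one is the endpoint of $e$ it contains, so $G$ would have a vertex of degree one, while a single-vertex component would force $d_i=1$ or $d_j=1$ in $G$. With this observation any non-loop edge works, no careful edge selection is needed, and your induction goes through.
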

\begin{proof}
We prove the assertion by induction on the number of edges.
Assume that every connected component is not a tree.
If $G$ has only one edge,
then $G=B_1$ and the coefficients are non-negative.
Let $G$ have $M (\geq 2)$ edges and assume that
the assertion holds for the graphs with at most $M-1$ edges.
It is enough to consider the case that $G$ is a connected coregraph
because of Proposition \ref{propbasicomega}.(a) and (d).
If all the edges of $G$ are loops, $G=B_n$ for some $n \geq 2$ 
and the coefficients are non-negative.
If $G=C_M$, the coefficients are also non negative as in Example
\ref{exampleomega}. 
Otherwise, we reduce $\omega_{G}$ 
to graphs with nullity not less than $1$
by an application of the
deletion-contraction relation 
and
see that the coefficients of 
$\omega_{G \backslash e}$ and $\omega_{G/e}$ are both non-negative by the induction hypothesis.
\end{proof}

\subsection{Relation to monomer-dimer partition
  function} \label{subsectionomegamonomerdimer}
In the next theorem, we prove that
the polynomial $\omega_{G}(\beta)$
is the monomer-dimer partition function with specific form of weights.

As defined in Section \ref{sec:perfectmatching},
a {\it matching} of $G$ is a set of edges such that any edges do not occupy a
same vertex. 
It is also called a {\it dimer arrangement} in statistical
physics \cite{HLmonomerdimer}.
We use both terminologies.
The number of edges in a matching ${\bf D}$ is denoted by $|{\bf D}|$.
If a matching ${\bf D}$ consists of $k$ edges, then it is called a {\it k-matching}. 
The vertices covered by the edges in ${\bf D}$ are denoted by
$[{\bf D}]$. 
The set of all matchings of $G$ are denoted by $\mathcal{D}$.

The monomer-dimer partition function with edge weights 
$\bs{\mu}=(\mu_{e})_{e \in E}$ and vertex weights 
$\bs{\lambda}=(\lambda_{i})_{i \in V}$
is defined by
\begin{equation}
\Xi_G(\bs{\mu},\bs{\lambda})
:=
\sum_{{\bf D} \in \mathcal{D}}
\prod_{e \in {\bf D}} \mu_{e}
\prod_{i \in V \backslash [{\bf D}]}  \lambda_{i}.  \nonumber
\end{equation} 
We write $\Xi_G(\mu,\bs{\lambda})$ if all weights $\mu_{e}$ are set
to be the same $\mu$.

\begin{thm} 
\label{thmmonomer}
Let $\lambda_{i}:=1+(d_i - 1) \beta$, then
\begin{equation}
\omega_{G}(\beta)
=
\Xi_{G}(- \beta, \bs{\lambda}).  \nonumber
\end{equation}
\end{thm}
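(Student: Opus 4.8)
\textbf{Proof plan for Theorem \ref{thmmonomer}.}
The cleanest route is to show that both sides are V-functions (in the modified sense, after dividing out a trivial factor) so that they are determined by their values on bouquet graphs, and then to check agreement on the bouquets $B_n$. I already know from Proposition \ref{propbasicomega} that $\omega_G$ satisfies the deletion-contraction relation $\omega_{G}(\beta)=\omega_{G\backslash e}(\beta)+\beta\,\omega_{G/e}(\beta)$ for a non-loop edge $e$, is multiplicative over disjoint unions, and depends only on $\core(G)$; moreover $\omega_{B_n}(\beta)=1+(2n-1)\beta$. So the plan is to establish the same structural properties for $\Xi_G(-\beta,\bs\lambda)$ with the specific choice $\lambda_i=1+(d_i-1)\beta$, and to compute $\Xi_{B_n}(-\beta,\bs\lambda)$.

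First I would deal with the dependence on $\core(G)$: if $i$ is a vertex of degree one with incident edge $e=ij$, then $\lambda_i=1+0\cdot\beta=1$, and every matching of $G$ either uses $e$ or leaves $i$ uncovered (contributing the factor $\lambda_i=1$); grouping matchings accordingly gives $\Xi_G(-\beta,\bs\lambda)=\Xi_{G\setminus\{i,e\}}(-\beta,\bs\lambda')$ where on the smaller graph the weight of the (former) neighbour $j$ has dropped by one in degree and hence its $\lambda_j$-value is adjusted consistently — this is exactly the bookkeeping needed to see the ``core'' invariance, and it also matches how $\lambda_i$ is defined through $d_i$. Multiplicativity over disjoint unions is immediate from the product form of $\Xi$. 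The key step is the deletion-contraction relation: fix a non-loop edge $e=ij$ and split the sum over matchings $\mathbf D$ into those with $e\notin\mathbf D$ and those with $e\in\mathbf D$. For $e\notin\mathbf D$ the matching is a matching of $G\setminus e$, but the vertex weights are those of $G$, not of $G\setminus e$; the discrepancy is precisely in $\lambda_i$ and $\lambda_j$, which differ by a factor accounting for one unit of degree — I would expand $\lambda_i=1+(d_i-1)\beta$ as $\lambda_i^{G\setminus e}+\beta$ (since in $G\setminus e$ the degree of $i$ is $d_i-1$), and similarly for $j$, then reorganize. For $e\in\mathbf D$, vertices $i,j$ are covered, $\mathbf D\setminus e$ is a matching of $G/e$, the weight $-\beta$ from $e$ appears, and the new merged vertex of $G/e$ has degree $d_i+d_j-2$, so its $\lambda$-value is $1+(d_i+d_j-3)\beta$; I would need to verify that the leftover terms from the $e\notin\mathbf D$ expansion combine with this to produce exactly $\beta\,\Xi_{G/e}(-\beta,\bs\lambda'')$. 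This is the computational heart of the argument and where the particular affine form of $\lambda_i$ is forced.

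Having verified that $\hat\Xi_G:=(\text{trivial normalization})\cdot\Xi_G(-\beta,\bs\lambda)$ and the correspondingly normalized $\hat\omega_G$ are both V-functions valued in $\mathbb{Z}[\beta,\beta^{-1},\dots]$ (or, more simply, that both $\omega_G$ and $\Xi_G(-\beta,\bs\lambda)$ satisfy the identical recursion $F_G=F_{G\setminus e}+\beta F_{G/e}$ together with multiplicativity and the loop-reduction behaviour), it remains only to compare them on bouquet graphs. On $B_n$ every edge is a loop, so a matching can contain no edges at all: $\mathcal D=\{\emptyset\}$, $[\mathbf D]=\emptyset$, and the single vertex has degree $d=2n$, whence $\Xi_{B_n}(-\beta,\bs\lambda)=\lambda=1+(2n-1)\beta=\omega_{B_n}(\beta)$. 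Since any graph reduces, via deletion-contraction on non-loop edges and removal of degree-one vertices, to a disjoint union of bouquets, and both functions obey the same rules under these reductions, they agree on all graphs.

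\textbf{Main obstacle.} The delicate point is the deletion-contraction step, because $\Xi$ is \emph{a priori} only a V-function in the variable $\mu$ with \emph{fixed} vertex weights, whereas here the vertex weights $\lambda_i$ themselves depend on the graph (through the degrees), so they change under deletion and contraction. The whole proof hinges on the affine form $\lambda_i=1+(d_i-1)\beta$ being exactly the one for which the degree-shift induced by deleting or contracting an edge is absorbed into a clean additive splitting $\lambda_i=\lambda_i'+\beta$. I expect that carefully tracking these degree changes at the two endpoints of $e$ — and at the merged vertex in $G/e$ — is the step most prone to sign and bookkeeping errors, and it is where I would spend the most care; everything else (multiplicativity, core-invariance, the bouquet evaluation) is routine.
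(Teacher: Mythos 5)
Your plan is correct and is essentially the paper's own proof: both verify that $\Xi_G(-\beta,\bs{\lambda})$ satisfies the deletion--contraction relation $F_G=F_{G\backslash e}+\beta F_{G/e}$ by splitting matchings according to whether they use $e$ and which endpoints they cover, absorbing the degree shift via $\lambda_i=(1+(d_i-2)\beta)+\beta$, and then match the boundary values $\Xi_{B_n}(-\beta,\bs{\lambda})=1+(2n-1)\beta=\omega_{B_n}(\beta)$ on bouquets. The computational step you flag as delicate is exactly the five-case bookkeeping the paper carries out, and your identification of the affine form of $\lambda_i$ as the reason it works is the right insight.
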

\begin{proof}
We show that $\Xi_{G}(- \beta, \bs{\lambda})$ satisfies 
the deletion-contraction relation and the boundary condition of the form in 
Proposition \ref{propbasicomega}.(c).
For the bouquet graph $B_n$, ${\bf D}= \phi$ is the only possible
dimer arrangement, and thus
\begin{equation}
\Xi_{B_n}(- \beta, \bs{\lambda})=1+(2 n-1)\beta =\omega_{B_n}(\beta). \nonumber
\end{equation}
For a non-loop edge $e=i_0j_0$, we show that the deletion-contraction
 relation is satisfied.
A dimer arrangement ${\bf D} \in \mathcal{D}$ is classified into the following
 five types: 
(a) ${\bf D}$ includes $e$, 
(b) ${\bf D}$ does not include $e$ and ${\bf D}$ covers both $i_0$ and $j_0$, 
(c) ${\bf D}$ covers $i_0$ while does not cover $j_0$, 
(d) ${\bf D}$ covers $j_0$ while does not cover $i_0$, 
(e) ${\bf D}$ covers neither $i_0$ nor $j_0$.  
According to this classification, $\Xi_{G}(- \beta, \bs{\lambda})$ is a
 sum of the five terms $A,B,C,D$ and $E$.
We see that
\begin{align*}
C
=
\sum_{{\bf D} \in \mathcal{D} \atop [{\bf D}] \ni i_0, [{\bf D}] \not\ni j_0}
&(- \beta)^{|{\bf D}|}
\prod_{i \in V \backslash [{\bf D}]}
\lambda_{i}
\\
=
\sum_{{\bf D} \in \mathcal{D} \atop [{\bf D}] \ni i_0, [{\bf D}] \not\ni j_0}
&(- \beta)^{|{\bf D}|}
(1+(d_{j_0}-2)\beta)
\prod_{i \in V \backslash [{\bf D}] \atop i \neq j_0}
\lambda_{i}  \nonumber \\
&+
\beta
\sum_{{\bf D} \in \mathcal{D} \atop [{\bf D}] \ni i_0, [{\bf D}] \not\ni j_0}
(- \beta)^{|{\bf D}|}
\prod_{i \in V \backslash [{\bf D}] \atop i \neq j_0}
\lambda_{i}   \\
=:C_1 + \beta C_2.
\end{align*}
In the same way, $D=D_1 + \beta D_2$.
Similarly,
\begin{align*}
E
&=
\sum_{{\bf D} \in \mathcal{D} \atop [{\bf D}] \not\ni i_0, [{\bf D}] \not \ni j_0}
(- \beta)^{|{\bf D}|}
\lambda_{i_0}\lambda_{j_0}
\prod_{i \in V \backslash [{\bf D}] \atop i \neq i_0,j_0}
\lambda_{i} \\
&=
\sum_{{\bf D} \in \mathcal{D} \atop [{\bf D}] \not\ni i_0, [{\bf D}] \not \ni j_0}
(- \beta)^{|{\bf D}|}
(1+(d_{i_0}-2)\beta)(1+(d_{j_0}-2)\beta)
\prod_{i \in V \backslash [{\bf D}] \atop i \neq i_0,j_0}
\lambda_{i} \nonumber \\
&+
\beta
\sum_{{\bf D} \in \mathcal{D} \atop [{\bf D}] \not\ni i_0, [{\bf D}] \not \ni j_0}
(- \beta)^{|{\bf D}|}
(2+(d_{i_0}+d_{j_0}-3)\beta) 
\prod_{i \in V \backslash [{\bf D}] \atop i \neq i_0,j_0}
\lambda_{i} \\
&=:E_1 + \beta E_2.
\end{align*}
We can straightforwardly check that
\begin{equation} 
\Xi_{G \backslash  e}(- \beta, \bs{\lambda '})
=
B+C_1+D_1+E_1  \nonumber
\end{equation}
and
\begin{equation} 
\beta \Xi_{G / e }(- \beta, \bs{\lambda ''}) 
=
A+\beta C_2 +\beta D_2 + \beta E_2, \label{thmmonomereq10}
\end{equation}
where $\bs{\lambda'}$ and $\bs{\lambda''}$ are defined by the degrees of 
$G \backslash e$ and $G / e$ respectively.
Note that $C_2+D_2$ in Eq.~(\ref{thmmonomereq10}) corresponds to the dimer
 arrangements in $G / e$ that cover the new vertex formed by the contraction.
This shows the deletion-contraction relation.
\end{proof}
Let $p_{G}(k)$ be the number of k-matchings of $G$. 
The {\it matching polynomial} $\alpha_{G}$ is defined by
\begin{equation}
\alpha_{G}(x)= \sum_{k=0}^{\lfloor \frac{|V|}{2} \rfloor}(-1)^{k}
p_{G}(k) x^{|V|-2k}.  \nonumber
\end{equation}
The matching polynomial is essentially the monomer-dimer
partition function with uniform weights;
if we set all vertex weights $\lambda$ and all edge weights $\mu$ respectively, 
we have
\begin{equation}
\Xi_{G}(\mu, \lambda)=
\alpha_{G}\Big( \frac{\lambda}{\sqrt{- \mu}} \Big)
{\sqrt{- \mu}}^{|V|}. \nonumber
\end{equation}
Therefore, for a $(q+1)$-regular graph $G$, 
Theorem \ref{thmmonomer} implies
\begin{equation}
\omega_{G}(u^{2})
=
\alpha_{G}\Big(
\frac{1}{u} +q u \Big)
u^{|V|}.\label{cormatchingeq1}
\end{equation}
In \cite{Nseries}, Nagle derives a sub-coregraph expansion of the
monomer-dimer partition function with uniform weights,
or matching polynomials, 
on regular graphs.
With a transform of variables, 
his expansion theorem
is essentially equivalent to
Eq.~(\ref{cormatchingeq1}).
We can say that
Theorem \ref{thmmonomer} gives an extension of the
expansion to non-regular graphs.

As an immediate consequence of Eq.~(\ref{cormatchingeq1}),
we remark on the
symmetry of the coefficients of 
$\omega_{G}$ for regular graphs.
\begin{cor}
Let $G$ be a $(q+1)-$regular graph $(q \geq 1)$ with $N$ vertices and
$w_k$ be the $k$-th coefficient of $\omega_{G}(\beta)$.
Then we have 
\begin{equation}
\qquad \qquad
w_{N-k}=w_{k} q^{N-2k}
\quad \quad
\text{ for }  0 \leq k \leq N.  \nonumber
\end{equation}
\end{cor}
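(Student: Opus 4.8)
The plan is to deduce the symmetry from the matching-polynomial identity in Eq.~(\ref{cormatchingeq1}), exploiting the well-known fact that the matching polynomial of an $N$-vertex graph has parity $N$, i.e. $\alpha_G(-x)=(-1)^N\alpha_G(x)$. Concretely, for a $(q+1)$-regular graph $G$ on $N$ vertices, Eq.~(\ref{cormatchingeq1}) reads
\begin{equation}
\omega_G(u^2)=\alpha_G\!\Big(\tfrac{1}{u}+qu\Big)\,u^{N}.
\end{equation}
First I would substitute $u\mapsto -1/(qu)$ into the right-hand side. Since $\tfrac{1}{(-1/(qu))}+q\cdot(-1/(qu))=-qu-\tfrac{1}{u}=-\big(\tfrac{1}{u}+qu\big)$, and since $\alpha_G$ has parity $N$, we get
\begin{equation}
\alpha_G\!\Big(\tfrac{1}{-1/(qu)}+q\cdot\tfrac{-1}{qu}\Big)\Big(\tfrac{-1}{qu}\Big)^{N}
=(-1)^N\alpha_G\!\Big(\tfrac{1}{u}+qu\Big)\cdot\frac{(-1)^N}{q^N u^{N}}
=\frac{1}{q^N u^{N}}\,\alpha_G\!\Big(\tfrac{1}{u}+qu\Big).
\end{equation}

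Next I would combine this with Eq.~(\ref{cormatchingeq1}) itself to eliminate $\alpha_G$. From the displayed identity, $\alpha_G\big(\tfrac1u+qu\big)=\omega_G(u^2)u^{-N}$, while the left-hand side of the previous display equals $\omega_G\big(1/(q^2u^2)\big)\cdot(qu^2)^{-N}\cdot(q u^2)^{N}$... more carefully: applying Eq.~(\ref{cormatchingeq1}) with $u$ replaced by $-1/(qu)$ gives $\omega_G\big(1/(q^2u^2)\big)=\alpha_G\big(\tfrac{1}{-1/(qu)}+q\cdot\tfrac{-1}{qu}\big)\cdot(-1/(qu))^N$, which by the computation above equals $q^{-N}u^{-N}\alpha_G\big(\tfrac1u+qu\big)=q^{-N}u^{-N}\cdot\omega_G(u^2)u^{-N}=q^{-N}u^{-2N}\omega_G(u^2)$. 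Writing $t:=u^2$, this is the clean functional equation
\begin{equation}
\omega_G\!\Big(\frac{1}{q^2 t}\Big)=q^{-N}t^{-N}\,\omega_G(t),
\qquad\text{equivalently}\qquad
t^{N}\,\omega_G\!\Big(\frac{1}{q^2 t}\Big)=q^{-N}\,\omega_G(t).
\end{equation}
Now I would expand both sides in powers of $t$. Writing $\omega_G(\beta)=\sum_{k=0}^{N}w_k\beta^k$ (the degree is $N$ by Proposition~\ref{propbasicomega}.(e), since $G$ is a coregraph with $N$ vertices), the left side is $\sum_k w_k q^{-2k}t^{N-k}$ and the right side is $\sum_k q^{-N}w_k t^k$. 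Matching the coefficient of $t^{N-k}$ on both sides yields $w_k q^{-2k}=q^{-N}w_{N-k}$, i.e. $w_{N-k}=w_k q^{N-2k}$, as claimed.

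The potential obstacle is justifying the parity relation $\alpha_G(-x)=(-1)^N\alpha_G(x)$: this is immediate from the definition $\alpha_G(x)=\sum_k(-1)^k p_G(k)x^{N-2k}$, since every exponent $N-2k$ has the same parity as $N$, so no genuine difficulty arises there. A second, more bookkeeping-type point requiring care is that Eq.~(\ref{cormatchingeq1}) is an identity of Laurent-type expressions in $u$, so I must confirm the substitution $u\mapsto -1/(qu)$ is legitimate (it is, since both sides are genuine polynomials in $t=u^2$ once the stated identity is established, and $q\geq 1$ so $1/(q^2 t)$ makes sense formally). Apart from these, the argument is a short manipulation; I would also double-check the edge cases $k=0$ (giving $w_N=q^{N}$, consistent with the leading coefficient $\prod_{i}(d_i-1)=q^{N}$ from Proposition~\ref{propbasicomega}.(e)) and $k=N$ (giving $w_0=q^{-N}w_N q^{-N}\cdot$... which reduces to $w_0=1$, consistent with the constant term $1$), as sanity checks that the indexing is correct.
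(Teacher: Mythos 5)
Your argument is correct and is precisely the derivation the paper intends when it calls the corollary ``an immediate consequence'' of Eq.~(\ref{cormatchingeq1}): the functional equation $t^{N}\omega_G(1/(q^2t))=q^{-N}\omega_G(t)$ obtained from the invariance of $\tfrac1u+qu$ is exactly the right mechanism, and your coefficient comparison and sanity checks at $k=0,N$ are all sound. One small simplification: the substitution $u\mapsto 1/(qu)$ (without the minus sign) already fixes $\tfrac{1}{u}+qu$, so the parity identity $\alpha_G(-x)=(-1)^N\alpha_G(x)$ is not actually needed.
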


\subsection{Zeros of $\omega_{G}(\beta)$}
Physicists are interested in
the complex zeros of partition functions, 
because it restricts the occurrence of phase transitions,
i.e., discontinuity of physical quantities with respect to parameters
such as temperature.
In the limit of infinite size of graphs,
analyticity of the scaled log partition function 
on a complex domain
is guaranteed 
if there are no zeros in the domain and
some additional conditions hold.
(See \cite{YLstat1,Sbounds}.)
For the monomer-dimer partition
function, 
Heilman and Lieb \cite{HLmonomerdimer}
show the following result.
\begin{thm}
[\cite{HLmonomerdimer} Theorem 4.6.] \label{thmmonomerdimerzero}
If $\mu_{e} \geq 0$ for all $e \in E$ and
${\rm Re}(\lambda_{j}) > 0 $  for all $j \in V$
then
$\Xi_G(\bs{\mu},\bs{\lambda}) \neq 0$.
The same statement is true if 
${\rm Re}(\lambda_{j}) < 0 $  for all $j \in V$.
\end{thm}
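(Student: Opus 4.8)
The plan is to prove this exactly as stated --- it is the Heilmann--Lieb half-plane theorem --- by induction on the number of vertices $|V|$, using the standard vertex-peeling recursion for $\Xi_G$ together with elementary geometry of half-planes. Write $H := \{ z \in \mathbb{C} : {\rm Re}(z) > 0 \}$ for the open right half-plane. First I would dispose of two reductions. Edges with $\mu_e = 0$ may be deleted without changing $\Xi_G$, so we may assume $\mu_e > 0$ throughout. And the two assertions are equivalent: negating all vertex weights gives $\Xi_G(\bs{\mu}, -\bs{\lambda}) = (-1)^{|V|}\Xi_G(\bs{\mu}, \bs{\lambda})$, since each matching $\mathbf{D}$ leaves $|V| - 2|\mathbf{D}|$ vertices uncovered; hence $|\Xi_G|$ is unchanged and the case ${\rm Re}(\lambda_j) < 0$ for all $j$ reduces to the case ${\rm Re}(\lambda_j) > 0$ for all $j$.

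The essential point --- and the step I expect to be the real obstacle --- is to carry an inductive hypothesis strong enough to propagate, since bare non-vanishing does not. I would prove the following: for every graph $G$ with $|V(G)| = n$, every $\bs{\mu} \ge 0$, and every $\bs{\lambda}$ with $\lambda_j \in H$ for all $j \in V$, both (a) $\Xi_G(\bs{\mu}, \bs{\lambda}) \neq 0$, and (b) for every vertex $i$, the ratio $\Xi_{G \setminus i}(\bs{\mu}, \bs{\lambda}) / \Xi_G(\bs{\mu}, \bs{\lambda})$ lies in $H$, where $G \setminus i$ denotes $G$ with the vertex $i$ and its incident edges removed. The base cases are immediate: for $n = 0$ we have $\Xi_G = 1$; for $n = 1$ we have $\Xi_G = \lambda_1 \in H$ and $\Xi_{G \setminus 1}/\Xi_G = \lambda_1^{-1} \in H$.

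For the inductive step one fixes $G$ with $|V| = n$ and an arbitrary vertex $i$, and splits the matchings of $G$ according to whether $i$ is covered:
\begin{equation*}
\Xi_G(\bs{\mu},\bs{\lambda}) = \lambda_i\, \Xi_{G \setminus i}(\bs{\mu},\bs{\lambda}) + \sum_{e = ij} \mu_e\, \Xi_{G \setminus \{i,j\}}(\bs{\mu},\bs{\lambda}),
\end{equation*}
where the sum runs over the edges $e = ij$ incident to $i$ (a loop at $i$ can never be a dimer and contributes nothing). Applying the inductive hypothesis to $G \setminus i$, which has $n-1$ vertices with all relevant weights in $H$, part (a) gives $\Xi_{G \setminus i} \neq 0$ and part (b) applied to the vertex $j$ gives $\Xi_{G \setminus \{i,j\}} / \Xi_{G \setminus i} = \Xi_{(G \setminus i) \setminus j}/\Xi_{G\setminus i} \in H$. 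Dividing the displayed identity by $\Xi_{G \setminus i}$ then yields
\begin{equation*}
\frac{\Xi_G}{\Xi_{G \setminus i}} = \lambda_i + \sum_{e = ij} \mu_e\, \frac{\Xi_{G \setminus \{i,j\}}}{\Xi_{G \setminus i}} \in H + \overline{H} \subseteq H,
\end{equation*}
using that $\overline{H}$ is a convex cone, so a nonnegative combination of elements of $H$ stays in $\overline{H}$, and that $H + \overline{H} \subseteq H$. This proves (a); and since the reciprocal of any element of $H$ again lies in $H$, it also proves (b) for the vertex $i$. As $i$ was arbitrary, the induction closes. Once (a) and (b) are set up, the remaining work is only the elementary verifications about half-planes used above, and no results from earlier in the thesis are required.
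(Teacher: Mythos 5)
Your proof is correct. Note that the paper does not prove this statement at all --- it is imported verbatim as Theorem 4.6 of Heilmann--Lieb \cite{HLmonomerdimer} --- so there is no internal argument to compare against; what you have written is essentially the standard Heilmann--Lieb induction, with exactly the right strengthening of the hypothesis (carrying the half-plane condition on the ratio $\Xi_{G\setminus i}/\Xi_G$ so that the vertex-peeling recursion $\Xi_G=\lambda_i\,\Xi_{G\setminus i}+\sum_{e=ij}\mu_e\,\Xi_{G\setminus\{i,j\}}$ closes). Your convention that a loop is never a dimer also matches the paper's (see the treatment of bouquet graphs in the proof of Theorem \ref{thmmonomer}), so the argument applies to the multigraphs actually used in Corollary \ref{cor:omegazeros}.
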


Since our polynomial $\omega_{G}(\beta)$  
is a monomer-dimer partition function,
we obtain a bound of the region of complex zeros.

\begin{cor}
\label{cor:omegazeros}
Let $G$ be a graph and
let $d_m$ and $d_M$ be the minimum and maximum degree in
$\core(G)$ respectively
and assume that $d_m \geq 2$.
If $\beta \in \mathbb{C}$ satisfies $\omega_{G}(\beta)=0$,
then
\begin{equation}
\frac{1}{d_M -1} \leq |\beta| \leq \frac{1}{d_m -1}.  \nonumber
\end{equation}
\end{cor}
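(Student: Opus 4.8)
The plan is to deduce the location of the zeros of $\omega_G$ from Theorem~\ref{thmmonomer} together with Theorem~\ref{thmmonomerdimerzero} of Heilman and Lieb. Since $\omega_G(\beta) = \omega_{\core(G)}(\beta)$ by Proposition~\ref{propbasicomega}.(d), I may assume $G$ is a coregraph, so that every vertex has degree $d_i \geq 2$; in particular $d_m \geq 2$ is exactly the hypothesis that guarantees this reduction is nontrivial. By Theorem~\ref{thmmonomer},
\begin{equation}
 \omega_{G}(\beta) = \Xi_{G}(-\beta,\bs{\lambda}), \qquad \lambda_i = 1+(d_i-1)\beta.
\end{equation}
The first step is to record that $\Xi_G$ is a monomer-dimer partition function with edge weights $\mu_e = -\beta$ and vertex weights $\lambda_i$ depending on $\beta$, so I can invoke Theorem~\ref{thmmonomerdimerzero} for any fixed complex $\beta$ with the stated sign conditions.

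Next I would carry out a small substitution. Write $\beta = 1/z$ for $z \neq 0$; then $\mu_e = -1/z$ and $\lambda_i = 1 + (d_i-1)/z = (z + d_i - 1)/z$. Factoring the common $1/z$ out of all $|V|$ vertex weights and all $|E|$ edge weights (each dimer in a matching ${\bf D}$ of size $|{\bf D}|$ contributes $|{\bf D}|$ edge factors and $|V|-2|{\bf D}|$ vertex factors, giving a uniform power of $1/z$ overall only after accounting for both, so one should track the total carefully), one sees that $\omega_G(1/z)$ equals a nonzero scalar multiple of $\Xi_G(-1, (z+d_i-1)_{i\in V})$. Now I apply Theorem~\ref{thmmonomerdimerzero} with $\mu_e = -1 < 0$: the theorem as stated requires $\mu_e \geq 0$, so the correct normalization is to instead write the edge weight as $+1$ by absorbing the sign, using $\alpha_G$-type bookkeeping — more cleanly, set $z = -u$ type substitutions or directly note $\Xi_G$ with $\mu_e = -1$ is $\pm$ a value of $\Xi_G$ with $\mu_e = 1$ under $\lambda_i \mapsto \sqrt{-1}\,\lambda_i$, which rotates the half-plane condition. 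The upshot of this routine manipulation: if every $\lambda_i = z + d_i - 1$ has real part of one fixed sign, or more precisely if $z + d_i - 1$ lies in a suitable open half-plane for all $i$, then $\omega_G(1/z) \neq 0$. Choosing the half-plane $\{\operatorname{Re}(w) > 0\}$ or its mirror, the obstruction to all $z + d_i - 1$ lying there is governed by the extreme degrees $d_m$ and $d_M$: if $|z| > d_M - 1$ then $z + d_i - 1 \neq 0$ for all $i$ and a nonvanishing argument goes through; similarly $|z|$ too small forces a contradiction through the reciprocal. Unwinding $\beta = 1/z$ converts $|z| > d_M-1$ into $|\beta| < 1/(d_M-1)$ being excluded, i.e. $|\beta| \geq 1/(d_M-1)$, and the other bound into $|\beta| \leq 1/(d_m-1)$.

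I expect the main obstacle to be getting the half-plane bookkeeping exactly right: Theorem~\ref{thmmonomerdimerzero} is stated for $\mu_e \geq 0$ and $\operatorname{Re}(\lambda_j)$ of one sign, whereas here $\mu_e = -\beta$ and $\lambda_i = 1+(d_i-1)\beta$ both depend on the complex parameter $\beta$, so one must find the right rotation/reciprocal of $\beta$ that simultaneously places $\mu_e$ on the nonnegative axis (up to an overall phase the zeros don't see) and forces all $\lambda_i$ into a common open half-plane. The cleanest route is probably: for $|\beta| < 1/(d_M-1)$, observe $|(d_i-1)\beta| < 1$ for every $i \in V_{\core(G)}$, hence $\operatorname{Re}(\lambda_i) = 1 + (d_i-1)\operatorname{Re}(\beta) > 0$, and apply the Heilman--Lieb theorem after reducing to $\mu_e \geq 0$ by the standard substitution relating $\Xi_G$ to the matching polynomial $\alpha_G$ (as already used in the derivation of Eq.~(\ref{cormatchingeq1})), which replaces $-\beta$ by its modulus at the cost of an irrelevant nonzero prefactor and a rotation of the $\lambda_i$. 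For the lower bound $|\beta| \geq 1/(d_M-1)$ one uses the degree-$|V_{\core(G)}|$ statement of Proposition~\ref{propbasicomega}.(e): the reciprocal polynomial $\beta^{|V|}\omega_G(1/\beta)$ has leading coefficient $1$ and constant term $\prod(d_i-1)$, and running the same half-plane argument on it yields that its zeros $1/\beta$ satisfy $|1/\beta| \geq d_m - 1$ — wait, the symmetry is not perfect for non-regular graphs, so instead I would argue the lower bound directly: if $|\beta| < 1/(d_M-1)$ fails to be the relevant regime, then for $|\beta| < 1/(d_m-1)$... I would instead apply Theorem~\ref{thmmonomerdimerzero} to the reciprocal polynomial with the roles of the extreme degrees swapped, giving $|\beta| \le 1/(d_m-1)$ for any zero. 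Assembling the two one-sided arguments gives the stated annulus.
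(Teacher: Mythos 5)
You have the right ingredients --- reduce to $\core(G)$, invoke Theorem~\ref{thmmonomer} to write $\omega_G(\beta)=\Xi_G(-\beta,\bs{\lambda})$ with $\lambda_i=1+(d_i-1)\beta$, and appeal to the Heilman--Lieb theorem --- and this is exactly the paper's strategy. But the proposal stops short of the one computation that makes the argument work, and the substitutes you offer for it do not hold up. Writing $\beta=|\beta|\e^{\iunit\theta}$, the homogeneity $\Xi_G(c^2\bsmu,c\bs{\lambda})=c^{|V|}\Xi_G(\bsmu,\bs{\lambda})$ with $c=\iunit \e^{-\iunit\theta/2}$ turns the edge weight $-\beta$ into $|\beta|\ge 0$, as required by Theorem~\ref{thmmonomerdimerzero}; the price is that the half-plane condition must now be checked for the \emph{rotated} weights $c\lambda_j$, and one finds
\begin{equation*}
{\rm Re}\bigl(\iunit \e^{-\iunit\theta/2}\lambda_j\bigr)=\bigl(1-(d_j-1)|\beta|\bigr)\sin\tfrac{\theta}{2}.
\end{equation*}
This single identity delivers both bounds at once: at a zero the signs of these real parts cannot all agree, so (since $\sin\frac{\theta}{2}\neq 0$ once the nonnegative-real case is excluded via Proposition~\ref{propomeganonneg}) the quantities $1-(d_j-1)|\beta|$ cannot be all positive nor all negative, i.e.\ $1-(d_M-1)|\beta|\le 0\le 1-(d_m-1)|\beta|$. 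Your proposal instead dismisses the rotation as ``an irrelevant nonzero prefactor and a rotation of the $\lambda_i$'' and argues from ${\rm Re}(\lambda_i)=1+(d_i-1){\rm Re}(\beta)>0$; that is the condition in the wrong frame, and it is not what Heilman--Lieb requires after the edge weights have been normalized. Likewise, the claim that $|z|>d_M-1$ implies ``$z+d_i-1\neq 0$ for all $i$ and a nonvanishing argument goes through'' is not a proof: nonvanishing of the vertex weights is far weaker than their lying in a common open half-plane, which is what the theorem demands.

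The treatment of the lower bound is also left unresolved --- you notice yourself that the reciprocal-polynomial symmetry fails for non-regular graphs and then trail off without a replacement. In the correct argument no separate reciprocal trick is needed: both inequalities fall out of the displayed formula above, because the obstruction is a two-sided sign condition on the family $\{1-(d_j-1)|\beta|\}_{j}$, whose extremes are governed by $d_M$ and $d_m$ simultaneously. So the gap is concrete: you must actually compute ${\rm Re}(c\lambda_j)$ for the specific rotation $c$ that normalizes the edge weight, rather than treat that rotation as harmless.
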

\begin{proof}
Without loss of generality, we assume that $G$ is a coregraph.
Let  $\beta = |\beta| {\rm e}^{\iunit \theta}$ satisfy
$\omega_{G}(\beta)=0$,
where
$0 \leq \theta < 2 \pi$ and $\iunit$ is the imaginary unit.
Since $\omega_{G}(0)=1$ and the coefficients of $\omega_{G}(\beta)$ is
 not negative from Proposition \ref{propomeganonneg},
we have $\beta \neq 0$ and $\theta \neq 0$.
We see that
\begin{equation}
\omega_{G}(\beta)
=
\Xi_G(-\beta,\bs{\lambda}) 
=
\Xi_G(|\beta|,\iunit {\rm e}^{-\iunit \theta /2} \bs{\lambda})
(\iunit{\rm e}^{-\iunit \theta /2})^{-|V|},  \nonumber
\end{equation}
where $\lambda_j=1+(d_j-1)\beta$,
and ${\rm Re}(\iunit {\rm e}^{-\iunit \theta /2}\lambda_{j})$
$=(1-(d_j-1)|\beta|) \sin\frac{\theta}{2}$.
From Theorem \ref{thmmonomerdimerzero},
the assertion follows.
\end{proof}
Especially, if the graph is a $(q+1)$-regular graph,
the roots are on the circle of radius $1/q$,
which is also directly seen by Eq.~(\ref{cormatchingeq1})
combining the famous result on the roots of matching polynomials
\cite{HLmonomerdimer}:
the zeros of matching polynomials are on the real interval
$(-2\sqrt{q},2 \sqrt{q})$.

\subsection{Determinant sum formula}
Let 
$\mathcal{T}:=\{C \subset E ; d_{i}(C)=0 \text{ or } 2 \text{ for all }i \in V \}$
be the set of unions of vertex-disjoint cycles.
In this subsection,
an element $C \in \mathcal{T}$ 
is identified with the subgraph $(V_{C},C)$,
where $V_{C}:=\{i \in V ; d_i(C) \neq 0\}$.
A graph $G \smallsetminus C$ is given by deleting all the vertices in $V_{C}$
and the edges of $G$ that are incident with them.

The aim of this subsection is Theorem \ref{coromega},
in which we represent $\omega_{G}$ as a sum of determinants.
This theorem is similar to the expansion of the matching polynomial
by characteristic polynomials \cite{GGmatching};
\begin{equation}
\alpha_{G}
(x)
=
\sum_{C \in \mathcal{T}}
2^{k(C)}
\det[x I - A_{G \smallsetminus C }], \label{expansionmatching}
\end{equation}
where $A_{G \smallsetminus C }$ is the adjacency matrix of $G \smallsetminus C$
and
$k(C)$ is the number of connected components of $C$.

\begin{thm}
\label{coromega}
\begin{equation}
\omega_{G}
(u^{2})
=
\sum_{C \in \mathcal{T}}
2^{k(C)}
\det
\Big(
[
I
-u {A_{G}}
+
u^{2}(D_{G}-I)
]
\Big|_{G\smallsetminus C}
\Big)
u^{|C|}, \label{expansionomega}
\end{equation}
where
$D_G$ is the degree matrix defined by 
$(D_G)_{i,j}:= d_i \delta_{i,j} $ and
$\cdot \big|_{G \smallsetminus C}$ denotes the restriction
to the principal minor indexed by the vertices of $G \smallsetminus C$.
\end{thm}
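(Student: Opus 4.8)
## Proof plan for Theorem \ref{coromega}

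The plan is to reduce the claimed identity to the hypergraph Ihara--Bass determinant formula (Corollary \ref{cor:IBfornonhyper}, in its univariate $u_e = u_{\bar e} = u$ specialization, which is the classical Ihara--Bass formula stated at the end of Subsection \ref{sec:specialIB}) combined with the characteristic-polynomial expansion of the matching polynomial, Eq.~(\ref{expansionmatching}). First I would recall from Eq.~(\ref{cormatchingeq1}) that for a $(q+1)$-regular graph $\omega_G(u^2) = \alpha_G(u^{-1} + qu)\, u^{|V|}$, and that $\alpha_G$ expands via Eq.~(\ref{expansionmatching}). Plugging $x = u^{-1}+qu$ into $\det(xI - A_{G\smallsetminus C})$ and pulling out a factor $u$ from each row, one gets $\det([I - uA + u^2 qI]|_{G\smallsetminus C})$; since $G$ is $(q+1)$-regular, $qI = D_G - I$ on the relevant index set, and a bookkeeping of the powers of $u$ against $|C|$ and $|V|$ recovers Eq.~(\ref{expansionomega}) in the regular case. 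This is essentially a routine substitution.

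The substantive work is to promote this from regular graphs to arbitrary graphs. Here I would argue that both sides of Eq.~(\ref{expansionomega}) are graph invariants satisfying the Tutte V-function formalism in the variable $\beta = u^2$: the left side is $\omega_G(\beta)$, known to be multiplicative and to obey the deletion-contraction relation Proposition \ref{propbasicomega}(b). So it suffices to show the right side obeys the same deletion-contraction relation for a non-loop edge $e$ and is multiplicative, and then check the two sides agree on the bouquet graphs $B_n$ (where $\mathcal{T}$ is either empty or all its members are collections of loops with $G\smallsetminus C$ a single vertex, and $\det$ over the empty index set is $1$), giving $\omega_{B_n}(u^2) = 1 + (2n-1)u^2$ on both sides. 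Multiplicativity of the right side is immediate because $\mathcal{T}$, $k(C)$, and the principal-minor determinant all factor over connected components. The deletion-contraction relation is the delicate part: one partitions the sum over $C \in \mathcal{T}$ of $G$ according to whether the two endpoints of $e$, or the cycles through them, behave in $G\backslash e$ versus $G/e$, and one must match the change in the determinant factor (a principal minor of $I - uA + u^2(D-I)$, whose diagonal entries depend on degrees that change under deletion/contraction) against the combinatorial weights $2^{k(C)} u^{|C|}$. An alternative, and perhaps cleaner, route would be to prove Eq.~(\ref{expansionomega}) directly by expanding the determinant $\det([I - uA + u^2(D-I)]|_{G\smallsetminus C})$ into its permutation/cycle-cover terms and recognizing the total double sum (over $C$ and over the cycle structure of the permutation) as a sum over all disjoint-cycle subgraphs of $G$, which by the matching-polynomial identity Eq.~(\ref{expansionmatching}) applied componentwise — or by directly matching against the sub-coregraph expansion of $\omega_G$ — collapses to $\omega_G(u^2)$.

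The hard part will be the combinatorial matching in the deletion-contraction step (or, in the direct approach, carefully keeping track of signs and the $2^{k(C)}$ factors arising from the two orientations of each cycle when reorganizing the permutation expansion of the determinant into $\mathcal{T}$-indexed pieces). I would handle the signs by noting that $-uA$ contributes $(-u)$ per edge traversed and each cycle of length $\ell$ in a permutation contributes a sign $(-1)^{\ell+1}$, so that a cycle cover using $\ell$ edges yields $(-u)^\ell (-1)^{\ell+1} = -u^\ell$; a pair of $2$-cycles on the same edge and genuinely longer cycles must be disentangled so that the surviving contributions are exactly vertex-disjoint unions of graph cycles, each counted with the factor $2$ for its two orientations, which is the source of $2^{k(C)}$. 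Once the sign and multiplicity bookkeeping is pinned down, the identity follows from Corollary \ref{cor:IBfornonhyper}/the Ihara--Bass formula together with Eq.~(\ref{expansionmatching}), with no further analytic input required.
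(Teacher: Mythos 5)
Your route is genuinely different from the paper's, and the comparison is instructive. The paper's proof is three lines: it rewrites $\omega_G(u^2)$ via Theorem \ref{thmmonomer} as the monomer--dimer partition function $\Xi_G(-u^2,\bs{\lambda})$ with vertex weights $\lambda_i=1+(d_i-1)u^2$, and then cites a determinantal identity of Chernyak and Chertkov, $\Xi_G(\bs{\mu},\bs{\lambda})=\sum_{C\in\mathcal{T}}2^{k(C)}\det(H|_{G\smallsetminus C})\prod_{e\in C}\sqrt{-\mu_e}$ with $H=\diag(\bs{\lambda})-\sum_e\sqrt{-\mu_e}A_e$; setting $\sqrt{-\mu_e}=u$ makes $H=I-uA_G+u^2(D_G-I)$ and the theorem falls out. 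Note that the Ihara--Bass formula plays no role in the paper's argument, contrary to your framing. Your regular-case reduction via Eqs.~(\ref{cormatchingeq1}) and (\ref{expansionmatching}) is correct, and your observation that the diagonal of $I-uA_G+u^2(D_G-I)$ is $1+(d_i-1)u^2$ is exactly the reason the general case works --- but you never connect this to Theorem \ref{thmmonomer}, which is the bridge that turns the theorem into a weighted Godsil--Gutman-type identity (the non-regular generalization of Eq.~(\ref{expansionmatching}) that you correctly sense is needed).

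The gap is that neither of your two routes to the general case is carried out, and in each the deferred step is the entire content of the theorem. In the V-function route, verifying deletion--contraction for the right-hand side is the whole difficulty: deletion changes the diagonal entries at the endpoints of $e$, contraction creates loops and parallel edges (so $\mathcal{T}$, $A_G$, $D_G$ must all be handled for multigraphs, with digons counted in $\mathcal{T}$), and even your bouquet check is off in its details --- for $C$ a single loop, $G\smallsetminus C$ is the empty graph, not a single vertex; $|C|\le 1$ for bouquets; and the $C=\emptyset$ term $\det=1-2nu+(2n-1)u^2$ only yields $\omega_{B_n}(u^2)$ after the $-2nu$ cancels against the $n$ single-loop contributions of $2u$ each. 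In the direct route, your sign bookkeeping ($-u^{\ell}$ per $\ell$-cycle, factor $2$ per orientation) is the standard and workable argument, but it amounts to reproving the Chernyak--Chertkov identity from scratch; moreover, ``matching against the sub-coregraph expansion of $\omega_G$'' is the wrong target (the $(1-\beta)^{|E|-|V|}$ normalization obstructs a term-by-term match) --- the surviving terms of the expansion are monomer--dimer configurations, so the correct target is the expansion of Theorem \ref{thmmonomer}, whose fixed-point weights are precisely the diagonal entries of your matrix.
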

\begin{proof}
For the proof, we use the result of Chernyak and Chertkov
 \cite{CCfermion2}.
For given weights $\bs{\mu}=(\mu_e)_{e \in E}$ and 
$\bs{\lambda}=(\lambda_i)_{i \in V}$,
a $|V| \times |V|$  matrix $H$ is defined by
\begin{equation}
H:={\rm diag}(\bs{\lambda})- \sum_{e \in E} \sqrt{- \mu_e} A_e, \nonumber
\end{equation}
where $A_{e}=E_{i,j}+E_{j,i}$ for $e=ij$ and $E_{i,j}$ is the matrix base.
In our notation, their result implies
\begin{equation}
\Xi_G(\bs{\mu},\bs{\lambda}) 
=
\sum_{C \in \mathcal{T}}
2^{k(C)}
\det
H |_{G\smallsetminus C}
\prod_{e \in C}\sqrt{- \mu_e}. \nonumber
\end{equation}
If we
set $\lambda_i=1+(d_i - 1)u^2$ and $\sqrt{- \mu_e}=u$,
then the assertion follows.
\end{proof}

For regular graphs, Eqs.~(\ref{expansionmatching}) and (\ref{expansionomega}) are equivalent
because of Eq.~(\ref{cormatchingeq1}).
It is noteworthy that the matrix $\left(I -u {A_{G}}+u^{2}(D_{G}-I) \right)$ is nothing but 
the matrix that appear in the \IB formula of the Ihara zeta function.
It is also noteworthy that the region of zeros in Corollary \ref{cor:omegazeros}
resembles the region of poles of Ihara zeta function derived from Eq.~(\ref{eq:PFboundMgraph}).

\subsection{Values at $\beta=1$}
The value of $\omega_{G}(1)$ is interpreted as the number of a set
constructed from $G$.
For the following theorem, recall that $G^{(2)}$ is
obtained by adding a vertex on each edge in $G=(V,E)$.
The vertices of $G^{(2)}:=(V^{(2)},E^{(2)})$ are classified into 
$V_{O}$ and $V_{A}$,
where $V_{O}$ is the original vertices and 
$V_{A}$ is the ones newly added.
The set of matchings on $G^{(2)}$ is denoted by
$\mathcal{D}_{G^{(2)}}$.
\begin{thm} 
\label{thmomega1}
\begin{equation}
\omega_{G}(1)
=
|\{
{\bf D} \in \mathcal{D}_{G^{(2)}}
;
[{\bf D}]  \supset  V^{}_{O}
\}|.                   \nonumber
\end{equation}
\end{thm}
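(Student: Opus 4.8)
The plan is to combine the monomer--dimer identity of Theorem~\ref{thmmonomer} with a direct combinatorial count of the matchings of $G^{(2)}$ that saturate $V_O$. Specializing Theorem~\ref{thmmonomer} at $\beta=1$ turns the vertex weight $\lambda_i=1+(d_i-1)\beta$ into $\lambda_i=d_i$, so that
\[
\omega_G(1)=\Xi_G(-1,(d_i)_{i\in V})=\sum_{{\bf D}\in\mathcal{D}}(-1)^{|{\bf D}|}\prod_{i\in V\smallsetminus[{\bf D}]}d_i ,
\]
the sum running over matchings ${\bf D}$ of $G$ (which use only non-loop edges). It therefore suffices to prove that the number $N(G):=|\{{\bf D}'\in\mathcal{D}_{G^{(2)}}:[{\bf D}']\supseteq V_O\}|$ equals the right-hand side.

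First I would rewrite $N(G)$ in terms of a choice function. The graph $G^{(2)}$ is bipartite with parts $V_O$ and $V_A=\{v_e:e\in E\}$, and each edge $e=ij$ of $G$ is replaced by the two halves of the path $i-v_e-j$ (a double edge $i=v_e$ when $e$ is a loop at $i$). A matching ${\bf D}'$ saturating $V_O$ thus selects, for each $i\in V_O$, exactly one edge-end (``dart'') at $i$ --- the half along which $i$ is matched to some $v_e$ --- and the only restriction is that $v_e$ be matched at most once, i.e.\ for every non-loop edge $e=ij$ the darts $(i,e)$ and $(j,e)$ are not both selected (a loop imposes no restriction, since its two darts lie at the same vertex, which selects only one). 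This sets up a bijection between the configurations counted by $N(G)$ and the functions $f$ assigning to each $i\in V$ a dart at $i$ subject to that non-adjacency condition. Since the number of darts at $i$ is exactly $d_i$ in the paper's degree convention (ends of loops counted twice), the number of unconstrained such $f$ is $\prod_{i\in V}d_i$.

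The final step is inclusion--exclusion over the bad events ``both darts of $e$ lie in the image of $f$'', one for each non-loop edge $e$. A set $S$ of non-loop edges can occur as simultaneous bad events only if $S$ is a matching --- two edges of $S$ meeting at a vertex $i$ would force two values of $f(i)$ --- and for a matching $S$ the values of $f$ on $[S]$ are forced while those on $V\smallsetminus[S]$ are free, contributing $\prod_{i\notin[S]}d_i$. Hence $N(G)=\sum_{S}(-1)^{|S|}\prod_{i\in V\smallsetminus[S]}d_i$, which is precisely $\Xi_G(-1,(d_i)_i)=\omega_G(1)$. The main obstacle is purely bookkeeping: one must work with darts (edge-ends) rather than edges throughout, so that both the number of choices at $i$ genuinely equals $d_i$ in the presence of loops and multiple edges, and the family of subsets of bad events that can co-occur is exactly the set of matchings of the loopless part of $G$; once this formalism is fixed, the bijection and the inclusion--exclusion are routine, and a quick check on $G=B_n$ (where $N(B_n)=2n=\omega_{B_n}(1)$) confirms the normalization.
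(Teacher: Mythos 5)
Your proof is correct, but it takes a genuinely different route from the paper's. The paper proves the identity from Theorem~\ref{thmomegaaltrep}: at $\beta=1$ the only surviving subsets $s\subset E$ are those all of whose connected components have nullity one, each weighted $2^{k(s)}$, and the map sending a matching of $G^{(2)}$ covering $V_O$ to the set of $G$-edges it half-covers is then observed to be exactly $2^{k(s)}$-to-one onto such subsets. You instead start from Theorem~\ref{thmmonomer}, specialize to $\omega_G(1)=\Xi_G(-1,(d_i)_{i\in V})=\sum_{\mathbf{D}}(-1)^{|\mathbf{D}|}\prod_{i\notin[\mathbf{D}]}d_i$, and recover that alternating sum by inclusion--exclusion over the ``dart-choice'' encoding of matchings of $G^{(2)}$ saturating $V_O$. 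This is essentially the paper's own closing remark --- that the monomer--dimer value at $\beta=1$ follows from Theorem~\ref{thmomega1} by inclusion--exclusion --- run in the opposite direction, and it is a legitimate proof because Theorem~\ref{thmmonomer} is established independently by deletion--contraction. Your bookkeeping is sound: counting edge-ends makes the number of choices at $i$ exactly $d_i$ even with loops and multiple edges; the sets of simultaneously occurring bad events are exactly the loop-free matchings, consistent with the paper's convention (visible in its treatment of $B_n$) that dimer arrangements contain no loops; and the $B_n$ normalization check is right. What the paper's argument buys is a structural explanation of the factor $2^{k(s)}$ (two admissible matchings per unicyclic component); what yours buys is a signed-enumeration proof needing only the monomer--dimer identity and elementary inclusion--exclusion, bypassing Theorem~\ref{thmomegaaltrep} entirely.
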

\begin{proof} 
From Theorem \ref{thmomegaaltrep}, we have
\begin{equation}
\omega_{G}(1)
=
\sum_{s \subset E, s=G_1 \cup \cdots \cup G_{k(s)} 
\atop n(G_j)=1 \text{ for } j=1\ldots k(s)}
 2^{k(s)}, \label{thmomega1eq2}
\end{equation}
where $G_j$ is a connected component of $(V,s)$.
We construct a map $F$ from 
$\{ {\bf D} \in \mathcal{D}_{G^{(2)}}; [{\bf D}] 
\supset  V^{}_{O}  \}$
to $s \subset E$
as 
\begin{equation}
F({\bf D}):= 
\{ e \in E ;
\text{ the half of }e \text{ is covered by an edge in } {\bf D}
\}. \nonumber
\end{equation}
Then the nullity of each connected component of $F({\bf D})$ is $1$ and
 $|F^{-1}(s)|=2^{k(s)}$.
\end{proof}

\begin{example}
For the graph $X_3$ in Figure \ref{figomega1}, 
$\omega_{X_3}(1)=\omega_{C_3}(1)=2$.
The corresponding arrangements are also shown in Figure \ref{figomega1}.
\end{example}

\begin{figure}
\begin{center}
\includegraphics[scale=0.28]{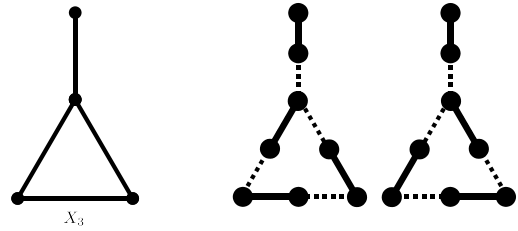}
\vspace{-1mm}
\caption{Graph $X_3$ and possible arrangements on $X^{(2)}_3$.}
\label{figomega1}
\end{center}
\end{figure}

In the end, we remark on the relations between the results on
$\omega_{G}(1)$ obtained in this paper. 
From Proposition \ref{propbasicomega},
$\omega_{G}(1)$ satisfies
\begin{equation}
\omega_{G}(1)=\omega_{G \backslash e}(1)+\omega_{G / e}(1)
\quad 
\text{ if }
e \in E \text{ is not a loop.}  \nonumber
\end{equation}
This relation can be directly observed from the interpretation
of Theorem \ref{thmomega1}.
Theorem \ref{thmmonomer} 
gives
\begin{equation}
 \omega_{G}(1)=
\sum_{{\bf D} \in \mathcal{D}} \hspace{-0.5mm} (-1)^{|{\bf D}|} \hspace{-2mm}
\prod_{i \in V \backslash [{\bf D}]} \hspace{-2mm} d_{i},  \nonumber
\end{equation}
which can be proved from Theorem \ref{thmomega1}
with the inclusion-exclusion principle.
Theorem \ref{coromega} gives
\begin{equation}
\omega_{G}(1)
 =
\sum_{C \in \mathcal{T}} 2^{k(C)}
\det{[D_G-A_G]
\Big|_{G\smallsetminus C}
}. \nonumber
\end{equation}
We can directly prove this formula from Eq.~(\ref{thmomega1eq2})
using a kind of matrix-tree theorem.

\section{Discussion}
In this chapter, we analyzed the LS ignoring the relations between the weights $\bsbeta$ and $\bsgamma$.
In other words, we treated the LS as a weighted graph polynomial $\Theta_G(\bsbeta,\bsgamma)$.
Under the treatment, we derived strict bounds on the number of sub-coregraphs, 
which scales with the nullity of graphs.
We also showed that $\Theta_G$ satisfies \dcr, assuming the vertex weights on the ends of the contracted edge are the same.
Though the result does not have direct implication for the properties of the Bethe approximation,
it demonstrates rich mathematical structures of the LS.

Specializing $\Theta_G(\bsbeta,\bsgamma)$, we introduced two graph polynomials and elucidated their properties. 
These are new instances of Tutte's V-function, allowing alternative sum expression with respect to all the subgraphs.
For the univariate graph polynomial $\omega_G$, we found interesting property such as the relation
to the monomer-dimer partition function and a little connection to the Ihara zeta function.

\chapter{Conclusion}

\section{Conclusion}
In this thesis, we analyzed mathematical properties of \lbp algorithm in emphasis of the graph geometry.
The exact inference on a graph requires ``global computation,'' which is computationally intractable,
whereas the approximate inference by the LBP algorithm only requires ``local computation.''
The global/local discrepancy is the origin of the approximation errors of the LBP algorithm.
The gap between the global and the local disappears if the graph geometry is trivial, i.e., tree.

This concept is not restricted to the LBP algorithm.
In fact, we often encounter ``global'' computational problem which is approximated by ``local'' computations 
such as message passing algorithms on graphs. 
Obviously, the max-product algorithm, which gives exact result of maximization problems associated with trees,
has the same difficulty.

In Part I, we introduced the graph zeta function and showed the \Bzf.
Since the LBP fixed points are characterized in terms of the \Bfe function,
the graph geometry should be reflected in the function.
The \Bzf claims that the graph zeta function is the key quantity that reflects the graph geometry
in the context of LBP algorithm.

The novel relationship between LBP, or the \Bfe function, and the graph zeta function
provides new techniques for the analysis of the properties of LBP and the \Bfe function.
We demonstrated applications of the techniques in this thesis.
For example, we showed that the region where the Hessian of the \Bfe function is related to
the nearest pole of the Ihara zeta function.
We also showed that locally stable fixed points of LBP are local minima of the \Bfe function.
For a certain class of models on graphs with nullity two,
the uniqueness of the LBP fixed point is proved by checking positives of the graph zeta function.

Since the relationship between LBP and the \Bfe is clarified by Yedidia et al \cite{YFWGBP},
many variants of the LBP algorithm have been proposed based on the understanding.
We believe that our new relation to the graph zeta function also opens the door
to the future developments or improvements of LBP algorithm.

In Part II, we investigated into the \ls.
Since the \ls is the sum with respect to sub-coregraphs,
the form of the \ls expansion reflects the graph geometry.
In fact, it is equal to $1$ if the underlying graph is a tree.
Our analysis was basically focused on the expression itself,
leaving the relations between the weights $\bsbeta$ and $\bsgamma$.

We analyzed mathematical properties of $\Theta_G$ 
and showed interesting properties such as \dcr.
We also showed partial connection between the \ls and graph zeta function.
However, many problems are left regarding the connection.

\section{Suggestions for future researches}
This section suggests possible extensions and developments of our analysis.

\subsection{Variants and extensions of the \Bzf}
As mentioned in Section \ref{sec:PreAdd}, there are many variants and extensions of the LBP algorithm.
Accordingly, it is natural to think of variants and extensions of \Bzf.

~\\
{\bf Fractional belief propagation:}
This extension is possible, using the Bartholdi type graph zeta function.
This extension will be discussed in a future paper.

~\\
{\bf Generalized belief propagation:}
Another possible direction of the extension of the formula is Generalized Belief Propagation (GBP).
We have not considered this extension.
The zeta function appear in this extension may be interesting from combinatorics view point.
And may prove or disprove the statement: ''locally stable fixed points of GBP are local minima of the Kikuchi free energy.''

~\\
{\bf Expectation propagation:}
We can also think of the extension to expectation propagation.
In the method, local exponential families are glued together
by local consistency condition of expectations of sufficient statistics.
In the proof of Bethe-zeta formula, key property was
$\var{b_{\alpha}}{\phi_i}=\var{b_i}{\phi}$, which is not guaranteed by the consistency of expectations.
Therefore, the extension of the \Bfe is nothing apparent.

\subsection{Dynamics and convergence of LBP algorithm}
In chapter \ref{chap:unique}, we developed a new approach to show the uniqueness of the LBP fixed point.
An interesting question is how we can extend our approach to show the convergence of the LBP algorithm.
By definition, the convergence property is stronger than the uniqueness property.
However, in binary pairwise case, the uniqueness condition in Corollary \ref{cor:uniqueradius} 
also guarantees the convergence \cite{MKsufficient}.

One vague suggestion for approaching to the dynamics and the convergence of LBP is considering graph covers.
A {\it cover} $\tilde{G}$ of a graph $G$ is a graph having a map $\pi$ to $G$ that is a surjection and a local isomorphism.
The Ihara zeta function has rich connection with graph covers \cite{STzeta1,STzeta2}.
For example, It is well known that $\zeta_G(u)^{-1}$ divides $\zeta_{\tilde{G}}(u)^{-1}$ \cite{MScoverings}. 
The uniqueness and convergence of LBP is also related to graph covers.
Obviously, the uniqueness on $\tilde{G}$ that has induced compatibility function from $G$ guarantees the uniqueness on $G$.  
The uniqueness of the Gibbs measure on the universal covering tree, i.e. the infinite depth computation tree, guarantees the
convergence of the LBP algorithm on $G$ \cite{TJgibbsmeasure}.
These fragmented facts suggests further developments of theories on
graph zeta functions, graph covers and the dynamics of the LBP algorithm.

Finally, it is noteworthy that the Ihara zeta function has an interpretation as a dynamical zeta function.
In general, dynamical zeta functions encodes information of periodic points of the given dynamical systems \cite{AMperiodic}.
It is known that the Ihara zeta function is the dynamical zeta function of a certain symbolic dynamical system derived by the graph \cite{KSzeta}.
It would be interesting to pursue the relation between this dynamical system and the LBP algorithm.

\subsection{Other researches related to LBP and graph zeta function}
Some recent researches have suggested the importance of zeta function. 
In the context of the LDPC codes, which is an important application of LBP, 
Koetter et al have shown the connection between pseudo-codewords and the edge zeta function \cite{KLVWpseudo,KLVWcharacterizations}.
Though there appears graph zeta function, our result and their result are basically different.
In the field of codes, parity check constraints are considered.
Thus the compatibility functions have values of zero.
In contrast, we considered arbitrary positive compatibility functions in applications of the \Bzf.
Compatibility functions with zero values are related to faces of the closure of $L$,
and limits of the \Bfe function to faces are nothing obvious.

For the Gaussian \bp, 
Johnson et al \cite{JCC} give zeta-like product formula of the partition function.

An implicit reason for the appearance of zeta function is the local nature of message passing algorithms.
Local operation does not distinguish covering graphs and the original graph in some sense.
Graph zeta functions are intimate relation to graph covers.
Though their works are not directly related to our work, from such viewpoints,
pursuing connections is an interesting future research topic.

\appendix
\chapter{Useful theorems}

\section{Linear algebraic formulas}
Basic notation is as follows.
The set of $n_1 \times n_2$ matrices is denoted by $\mat{n_1}{n_2}$.
For a square matrix $X$, the set of eigenvalue is denoted by $\spec{X}$ and
the spectral radius, i.e. the maximum modulus of the eigenvalues, is denoted by $\specr{X}$.

\label{sec:linearformula}
\begin{thm}
[\cite{HJmatrix}]%
\label{app:thm:specrcomparison}
Let $X=(x_{ij})$ and $Y=(y_{ij})$ be non-negative matrices satisfying $x_{ij} \leq y_{ij}~(i,j=1,\ldots,n)$,
then $\specr{X} \leq \specr{Y}$.  
\end{thm}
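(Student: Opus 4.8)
The plan is to reduce the statement to Gelfand's spectral radius formula together with the elementary fact that entrywise domination of non-negative matrices is preserved under taking powers.

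First I would show, by induction on $k$, that $0 \le x_{ij} \le y_{ij}$ for all $i,j$ implies $0 \le (X^k)_{ij} \le (Y^k)_{ij}$ for all $i,j$ and all $k \ge 1$. The base case $k=1$ is the hypothesis. For the inductive step one writes $(X^{k+1})_{ij} = \sum_{l} (X^k)_{il}\, x_{lj}$; every factor here is non-negative and is dominated termwise by the corresponding factor in $(Y^{k+1})_{ij} = \sum_l (Y^k)_{il}\, y_{lj}$, so the sum is dominated as well.

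Next I would equip $\mat{n}{n}$ with the norm $\norm{A} := \sum_{i,j} |a_{ij}|$. This is a submultiplicative matrix norm, and it is monotone on the cone of non-negative matrices: if $0 \le A \le B$ entrywise then $\norm{A} \le \norm{B}$. Combining this with the previous step gives $\norm{X^k} \le \norm{Y^k}$ for every $k \ge 1$, hence $\norm{X^k}^{1/k} \le \norm{Y^k}^{1/k}$.

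Finally, Gelfand's formula states that for any submultiplicative matrix norm one has $\specr{A} = \lim_{k \to \infty} \norm{A^k}^{1/k}$. Applying this to both $X$ and $Y$ and passing to the limit in the inequality above yields $\specr{X} \le \specr{Y}$. I do not expect a genuine obstacle here: the only points needing a line of verification are the submultiplicativity and monotonicity of $\norm{\cdot}$ and the applicability of Gelfand's formula, both entirely standard. As an alternative route one could perturb $Y$ to $Y + \varepsilon J$ with $J$ the all-ones matrix to make it irreducible, use the Collatz--Wielandt characterization of the Perron--Frobenius eigenvalue, and then let $\varepsilon \to 0$; the norm-based argument is preferable since it avoids the case distinction between irreducible and reducible matrices.
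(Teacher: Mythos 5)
Your proof is correct. The paper does not prove this statement at all—it simply cites it from Horn and Johnson's \emph{Matrix Analysis}—and your argument (entrywise domination is preserved under powers, monotonicity and submultiplicativity of the entrywise $\ell_1$ norm, then Gelfand's formula $\specr{A}=\lim_k \norm{A^k}^{1/k}$) is the standard and complete proof of this fact.
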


\begin{thm}
[\cite{HJmatrix}] %
\label{app:luboudofspecr}
Let $X=(x_{ij}) \in \mat{n}{n}$ be a non-negative matrix, then
\begin{equation}
  \min_{1 \leq j \leq n} \sum_{i=1}^n x_{ij} \leq \specr{X} \leq \max_{1 \leq j \leq n} \sum_{i=1}^n x_{ij}
\end{equation}  
and
\begin{equation}
  \min_{1 \leq i \leq n} \sum_{j=1}^n x_{ij} \leq \specr{X} \leq \max_{1 \leq i \leq n} \sum_{j=1}^n x_{ij}.
\end{equation}  
\end{thm}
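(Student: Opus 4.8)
The plan is to treat the four inequalities as two pairs — a pair of upper bounds and a pair of lower bounds — and to reduce each column-sum statement to the corresponding row-sum statement by the observation that $\specr{X}=\specr{X^{T}}$, that the column sums of $X$ are the row sums of $X^{T}$, and that $X^{T}$ is again non-negative. Thus it suffices to establish
\[
  \min_{1 \leq i \leq n} \sum_{j=1}^n x_{ij} \;\leq\; \specr{X} \;\leq\; \max_{1 \leq i \leq n} \sum_{j=1}^n x_{ij}.
\]

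For the upper bound I would invoke the standard fact that $\specr{X} \leq \norm{X}$ for any matrix norm induced by a vector norm, applied to the norm induced by $\norm{\cdot}_{\infty}$ on $\mathbb{C}^{n}$; the induced matrix norm is the maximum absolute row sum, which for a non-negative matrix equals $\max_i \sum_j x_{ij}$. The column-sum upper bound then follows either from the $\norm{\cdot}_{1}$-induced norm or, equivalently, by transposition.

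The lower bound is the only step requiring a genuine argument. Set $r := \min_i \sum_j x_{ij}$; since $\specr{X} \geq 0$ for any square matrix, we may assume $r > 0$. Writing $\mathbf{1}=(1,\dots,1)^{T}$, we have componentwise $(X\mathbf{1})_i = \sum_j x_{ij} \geq r$, i.e. $X\mathbf{1} \geq r\,\mathbf{1}$. I would then prove the general lemma: if $X \geq 0$, $v > 0$ componentwise, $\lambda > 0$, and $Xv \geq \lambda v$, then $\specr{X} \geq \lambda$. Suppose not, so $\specr{X} < \lambda$; then $I - X/\lambda$ is invertible with $(\lambda I - X)^{-1} = \lambda^{-1}\sum_{k \geq 0}(X/\lambda)^{k}$, a convergent series of non-negative matrices, hence $(\lambda I - X)^{-1} \geq 0$. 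From $(\lambda I - X)v = \lambda v - Xv \leq 0$ and the fact that multiplication by a non-negative matrix preserves componentwise inequalities, applying $(\lambda I - X)^{-1}$ gives $v \leq 0$, contradicting $v > 0$. The lemma with $v = \mathbf{1}$, $\lambda = r$ then yields $\specr{X} \geq r$, finishing the proof.

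The main, and essentially the only, obstacle is the lower bound; the rest is a direct appeal to the operator-norm bound on the spectral radius together with transposition. An alternative route for the lemma is Gelfand's formula $\specr{X} = \lim_k \norm{X^{k}}^{1/k}$ combined with the iterated inequality $X^{k}\mathbf{1} \geq r^{k}\mathbf{1}$, but the Neumann-series argument above is more self-contained. One could also derive the lower bound from the Perron--Frobenius theorem by perturbing $X$ to the strictly positive matrix $X + \epsilon J$ ($J$ the all-ones matrix), reading off the Perron eigenvalue bound, and letting $\epsilon \to 0$ using continuity of $\specr{\cdot}$; however this invokes the irreducible case of Perron--Frobenius, which is heavier than necessary here.
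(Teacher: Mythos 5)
Your proposal is correct. Note that the paper itself gives no proof of this statement: it is quoted directly from Horn and Johnson, so there is no "paper proof" to compare against line by line. Your argument is a standard and complete one — the upper bounds via the operator norms induced by $\norm{\cdot}_{\infty}$ and $\norm{\cdot}_{1}$ (or by transposition), and the lower bound via the Neumann-series lemma that $Xv\geq\lambda v$ with $v>0$, $\lambda>0$, $X\geq 0$ forces $\specr{X}\geq\lambda$; the handling of the degenerate case $r=0$ and the reduction of column sums to row sums through $\specr{X}=\specr{X^{T}}$ are both in order. It is worth pointing out that a slightly shorter route to the lower bound is available using the monotonicity theorem stated immediately above this one in the appendix (Theorem \ref{app:thm:specrcomparison}): scale the entries of each row of $X$ down to obtain a non-negative matrix $B\leq X$ entrywise whose row sums all equal $r:=\min_{i}\sum_{j}x_{ij}$; then $B\mathbf{1}=r\mathbf{1}$ exhibits $r$ as an eigenvalue while $r$ is also the maximal row sum of $B$ and hence an upper bound on $\specr{B}$, so $\specr{B}=r$, and monotonicity gives $\specr{X}\geq\specr{B}=r$. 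That argument trades your Neumann-series lemma for a result already assumed in the paper, whereas yours is fully self-contained; either is acceptable.
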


\begin{defn}
Let $X=(x_{ij}) \in \mat{n}{n}$ be a non-negative matrix and
let $G_{X}$ be a directed graph consists of vertices $V=\{1,\ldots,n\}$ 
and directed edges $\edji$ for $x_{ij} \neq 0$.
The matrix $X$ is {\it irreducible} if $G_{X}$ is strongly connected, i.e.,
for each directed pair $(i,j) \in V \times V$, there is a directed walk from $i$ to $j$.
\end{defn}

\begin{thm}
[Perron Frobenius theorem \cite{HJmatrix}]\label{thm:PF}
Let $X$ be a non-negative matrix of size $n$.
Then $\specr{X}$ is an eigenvalue of $X$ having non-negative eigenvector.
The eigenvalue is called the {\it Perron-Frobenius eigenvalue} of $X$.
Furthermore, if $X$ is irreducible, $\specr{X}$ is positive, simple and having the positive eigenvector.
\end{thm}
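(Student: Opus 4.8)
The plan is to prove this in three stages: first settle the statement for strictly positive matrices by a fixed-point argument, then pass to arbitrary non-negative matrices by perturbation, and finally upgrade to the irreducible case.

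\textbf{Stage 1 (positive matrices).} Suppose first that $A \in \mat{n}{n}$ has all entries strictly positive. Let $\Delta = \{v \geq 0 : \sum_i v_i = 1\}$ be the standard simplex, which is compact and convex, and consider the continuous self-map $T(v) = Av/\|Av\|_1$, which is well defined because $Av > 0$ for every $v \in \Delta$. Brouwer's fixed point theorem produces $v^{\ast} \in \Delta$ with $A v^{\ast} = \lambda v^{\ast}$, where $\lambda := \|A v^{\ast}\|_1$; since $A v^{\ast} > 0$ we get $\lambda > 0$ and $v^{\ast} > 0$. Applying the same construction to $A^T$ yields a strictly positive left eigenvector $u$ with $u^T A = \lambda' u^T$; evaluating $u^T A v^{\ast}$ two ways and using $u^T v^{\ast} > 0$ forces $\lambda = \lambda'$. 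Now for any eigenvalue $\mu$ of $A$ with eigenvector $w$, the componentwise inequality $A|w| \geq |Aw| = |\mu|\,|w|$ gives $\lambda\, u^T |w| = u^T A |w| \geq |\mu|\, u^T |w|$, hence $|\mu| \leq \lambda$; therefore $\lambda = \specr{A}$. For geometric simplicity, if $w$ were a real eigenvector for $\lambda$ not proportional to $v^{\ast}$, then for a suitable scalar $t$ the vector $v^{\ast} - t w$ would be non-negative, nonzero, and have a vanishing coordinate, contradicting $A(v^{\ast} - tw) = \lambda(v^{\ast} - tw) > 0$; splitting into real and imaginary parts handles complex $w$. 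Algebraic simplicity follows by excluding a generalized eigenvector: if $(A - \lambda I)y = v^{\ast}$, pairing with $u^T$ gives $0 = u^T v^{\ast} > 0$, a contradiction.

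\textbf{Stage 2 (non-negative matrices).} Given $X \geq 0$, set $X_\varepsilon = X + \varepsilon J$ for $\varepsilon > 0$, with $J$ the all-ones matrix. By Stage 1, each $X_\varepsilon$ has a positive eigenvector $v_\varepsilon \in \Delta$ with eigenvalue $\specr{X_\varepsilon}$. By Theorem \ref{app:thm:specrcomparison}, $\varepsilon \mapsto \specr{X_\varepsilon}$ is monotone in $\varepsilon$ and bounded below by $\specr{X}$, so it converges as $\varepsilon \downarrow 0$; choosing $\varepsilon_k \downarrow 0$ and, by compactness of $\Delta$, a convergent subsequence $v_{\varepsilon_k} \to v^{\ast} \in \Delta$, the limit of $X_{\varepsilon_k} v_{\varepsilon_k} = \specr{X_{\varepsilon_k}} v_{\varepsilon_k}$ reads $X v^{\ast} = \ell\, v^{\ast}$ with $v^{\ast} \geq 0$, $v^{\ast} \neq 0$ and $\ell = \lim_k \specr{X_{\varepsilon_k}}$. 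Since $\ell$ is an eigenvalue of $X$ we have $\ell \leq \specr{X}$, while monotonicity gave $\ell \geq \specr{X}$; hence $\ell = \specr{X}$, which proves the first assertion.

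\textbf{Stage 3 (irreducible matrices).} Assume $X$ irreducible with $n \geq 2$ (the case $n=1$ being trivial). Strong connectivity of $G_X$ gives $(I+X)^{n-1} > 0$. Applying this to the eigenvector $v^{\ast}$ from Stage 2, $(I + X)^{n-1} v^{\ast} = (1+\specr{X})^{n-1} v^{\ast} > 0$, so $v^{\ast} > 0$; consequently $\specr{X} = 0$ would force $X v^{\ast} = 0$ with $v^{\ast} > 0$, i.e. $X = 0$, contradicting irreducibility, so $\specr{X} > 0$. Simplicity is handled as in Stage 1 with $(I+X)^{n-1}$ in place of $A$: a real eigenvector $w$ for $\specr{X}$ not proportional to $v^{\ast}$ would make some $v^{\ast} - tw \geq 0$ nonzero with a zero coordinate, contradicting $(I+X)^{n-1}(v^{\ast} - tw) = (1+\specr{X})^{n-1}(v^{\ast}-tw) > 0$, and a generalized eigenvector is excluded by pairing with the positive left eigenvector of $X$, which exists because $X^T$ is also irreducible.

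I expect the genuine obstacle to be not any single computation but the step in Stage 1 that identifies the Brouwer eigenvalue with $\specr{A}$: Brouwer only manufactures \emph{some} positive eigenpair, and promoting it to the spectral radius is exactly where positivity enters, through the strictly positive left eigenvector together with the triangle-inequality comparison $A|w| \geq |\mu|\,|w|$. Once that mechanism is in place, the perturbation limit, the positivity of the eigenvector, and both simplicity arguments are routine reuses of it, assisted by Theorem \ref{app:thm:specrcomparison} and the elementary fact $(I+X)^{n-1}>0$ for irreducible $X$.
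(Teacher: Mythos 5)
Your proof is correct. Note, however, that the paper does not prove this statement at all: Theorem \ref{thm:PF} appears in Appendix A purely as a quoted standard result with a citation to Horn and Johnson, so there is no ``paper's proof'' to match. What you have written is a complete, self-contained derivation along classical lines: Brouwer's fixed point theorem on the simplex to manufacture a positive eigenpair for a strictly positive matrix, the left-eigenvector pairing together with the componentwise bound $A|w|\geq |Aw|$ to identify that eigenvalue with $\specr{A}$ and to kill both a second eigendirection and a generalized eigenvector, the perturbation $X+\varepsilon J$ combined with Theorem \ref{app:thm:specrcomparison} and compactness of the simplex to pass to general non-negative $X$, and the primitivity of $(I+X)^{n-1}$ to upgrade to the irreducible case. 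Each step checks out; in particular the two-sided squeeze $\ell\geq\specr{X}$ (monotonicity) versus $\ell\leq\specr{X}$ ($\ell$ is an eigenvalue) in Stage 2 is exactly what is needed to avoid the common gap of only producing \emph{some} non-negative eigenpair. The only cosmetic caveat is the $n=1$ convention in Stage 3 (the zero $1\times 1$ matrix must be declared reducible for the positivity claim to hold), which you dismiss as trivial and which is indeed the standard convention. Relative to the paper, your argument buys a self-contained appendix at the cost of invoking Brouwer; the alternative analytic route (maximizing the Collatz--Wielandt functional $\min_i (Xv)_i/v_i$ over the simplex) would avoid even that, but nothing in the thesis requires it.
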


\begin{prop}
[Schur complement]
\label{app:schur}
Let $X \in \mat{n}{n}$.
Let $Y$ be its inverse. 
The blocks of sizes $n_1$ and $n_2$ ($n=n_1+n_2$) are denoted by 
\begin{small}
\begin{equation*}
 X=
 \begin{bmatrix}
  X_{11} & X_{12} \\
  X_{21} & X_{22} \\
 \end{bmatrix}, \qquad
 Y=
 \begin{bmatrix}
  Y_{11} & Y_{12} \\
  Y_{21} & Y_{22} \\
 \end{bmatrix}.
\end{equation*} 
\end{small}
Then 
\begin{equation}
 X_{11}^{-1}=Y_{11}-Y_{12}Y_{22}^{-1}Y_{21} \label{app:eq:schur1}
\end{equation}
\begin{equation}
 \det X = \det X_{11} \det Y_{22}^{-1} \label{app:eq:schur2}
\end{equation}
\end{prop}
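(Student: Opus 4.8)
The plan is to reduce both identities to the standard block $LDU$ factorization of $X$. Throughout I assume, as the statement implicitly requires, that $X_{11}$ is invertible (so that $X_{11}^{-1}$ in Eq.~(\ref{app:eq:schur1}) makes sense); the invertibility of $Y_{22}$ will then be produced automatically by the computation below, matching the appearance of $Y_{22}^{-1}$ in both formulas.

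First I would write down the factorization
\begin{equation*}
X=
\begin{bmatrix}
 I & 0 \\
 X_{21}X_{11}^{-1} & I
\end{bmatrix}
\begin{bmatrix}
 X_{11} & 0 \\
 0 & S
\end{bmatrix}
\begin{bmatrix}
 I & X_{11}^{-1}X_{12} \\
 0 & I
\end{bmatrix},
\qquad
S:=X_{22}-X_{21}X_{11}^{-1}X_{12},
\end{equation*}
which is verified by direct block multiplication. The two triangular factors have determinant $1$, so taking determinants immediately yields $\det X=\det X_{11}\det S$. Inverting the factorization (each triangular factor inverts by negating its off-diagonal block) gives a closed form for $Y=X^{-1}$, and in particular $Y_{22}=S^{-1}$; hence $S=Y_{22}^{-1}$ is invertible, and substituting into $\det X=\det X_{11}\det S$ gives Eq.~(\ref{app:eq:schur2}).

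For Eq.~(\ref{app:eq:schur1}) I would read off from the same inversion the blocks
\begin{equation*}
Y_{11}=X_{11}^{-1}+X_{11}^{-1}X_{12}S^{-1}X_{21}X_{11}^{-1},\qquad
Y_{12}=-X_{11}^{-1}X_{12}S^{-1},\qquad
Y_{21}=-S^{-1}X_{21}X_{11}^{-1},
\end{equation*}
and then compute $Y_{11}-Y_{12}Y_{22}^{-1}Y_{21}=Y_{11}-Y_{12}\,S\,Y_{21}$, in which the correction term exactly cancels the second summand of $Y_{11}$, leaving $X_{11}^{-1}$. Equivalently, and perhaps more cleanly, one can apply the general identity $(M^{-1})_{11}=(M_{11}-M_{12}M_{22}^{-1}M_{21})^{-1}$ --- itself a one-line consequence of the factorization above --- to the matrix $M=Y$, using $Y^{-1}=X$.

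There is no genuine obstacle here; the only point that deserves a word of care is the bookkeeping of invertibility hypotheses (that $X_{11}^{-1}$ is assumed and $Y_{22}^{-1}$ is then produced, so that both stated formulas are meaningful), after which everything is a routine verification by block matrix multiplication.
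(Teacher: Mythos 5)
Your proof is correct, but it runs in the opposite direction from the paper's. You factor $X$ itself as a block $LDU$ product built from the Schur complement $S=X_{22}-X_{21}X_{11}^{-1}X_{12}$, read off all four blocks of $Y=X^{-1}$ explicitly, and then verify the two identities by substitution; this requires you to posit $X_{11}$ invertible at the outset and yields $Y_{22}=S^{-1}$ as a byproduct. The paper instead performs a single block row operation on $Y$,
\begin{equation*}
 \begin{bmatrix}
  I & -Y_{12}Y_{22}^{-1} \\
  0 & I
 \end{bmatrix} Y
=
 \begin{bmatrix}
  Y_{11}-Y_{12}Y_{22}^{-1}Y_{21} & 0 \\
  Y_{21} & Y_{22}
 \end{bmatrix},
\end{equation*}
and then multiplies by $X$ on the right: since $YX=I$, comparing $(1,1)$ blocks gives $(Y_{11}-Y_{12}Y_{22}^{-1}Y_{21})X_{11}=I$, which is Eq.~(\ref{app:eq:schur1}), and taking determinants of the displayed identity gives Eq.~(\ref{app:eq:schur2}). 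The paper's route is shorter because it never needs the explicit block inverse of $X$, and it starts from the invertibility of $Y_{22}$ rather than of $X_{11}$ (each hypothesis implies the other here, and both are implicit in the statement). What your version buys is the explicit formula for every block of $Y$, which is more information than the proposition asks for; what the paper's version buys is that both claimed identities drop out of one matrix equation in a single line each. Both arguments are complete and the invertibility bookkeeping in yours is handled correctly.
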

\begin{proof}
It is trivial that
 \begin{equation*}
 \begin{bmatrix}
  I_{n_1}     & -Y_{12}Y_{22}^{-1} \\
  0           & I_{n_2} \\
 \end{bmatrix}
 Y
=
 \begin{bmatrix}
  Y_{11}-Y_{12}Y_{22}^{-1}Y_{21}    & 0 \\
  Y_{21}                       & Y_{22} \\
 \end{bmatrix}.
\end{equation*} 
Multiplying $X$ form right, we obtain Eq.~(\ref{app:eq:schur1}).
Eq.~(\ref{app:eq:schur2}) is derived by taking the determinant of the above identity.
\end{proof}

\begin{prop}
\label{app:detdet}
For $A \in \mat{n}{m}$ and $B \in \mat{m}{n}$,
\begin{equation}
 \det(I_n-AB)=\det(I_m-BA)
\end{equation}
\end{prop}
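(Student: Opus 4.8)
The statement to prove is Proposition~\ref{app:detdet}: for $A \in \mat{n}{m}$ and $B \in \mat{m}{n}$, one has $\det(I_n - AB) = \det(I_m - BA)$. This is a classical ``Sylvester determinant identity,'' and the plan is to prove it by a block-matrix manipulation that exhibits both $\det(I_n-AB)$ and $\det(I_m-BA)$ as the determinant of a single $(n+m)\times(n+m)$ matrix, computed by Schur complements in two different ways.

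First I would write down the block matrix
\begin{equation*}
 M =
 \begin{bmatrix}
  I_n & A \\
  B & I_m \\
 \end{bmatrix}.
\end{equation*}
The key step is then to apply Proposition~\ref{app:schur} (the Schur complement formula, specifically Eq.~(\ref{app:eq:schur2})) with the two natural choices of which diagonal block plays the role of $X_{11}$. Taking $X_{11} = I_n$ in the top-left corner, the Schur complement of $I_n$ in $M$ is $I_m - B I_n^{-1} A = I_m - BA$, giving $\det M = \det(I_n)\det(I_m - BA) = \det(I_m - BA)$. Taking instead $X_{11} = I_m$ in the bottom-right corner (equivalently, first permuting block rows and columns to move $I_m$ to the top-left, which does not change the determinant), the corresponding Schur complement is $I_n - A I_m^{-1} B = I_n - AB$, giving $\det M = \det(I_m)\det(I_n - AB) = \det(I_n - AB)$. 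Comparing the two evaluations yields the identity. Alternatively, and perhaps more transparently, I would use the two explicit factorizations
\begin{equation*}
 \begin{bmatrix} I_n & 0 \\ -B & I_m \end{bmatrix}
 \begin{bmatrix} I_n & A \\ B & I_m \end{bmatrix}
 =
 \begin{bmatrix} I_n & A \\ 0 & I_m - BA \end{bmatrix},
 \qquad
 \begin{bmatrix} I_n & A \\ B & I_m \end{bmatrix}
 \begin{bmatrix} I_n & -A \\ 0 & I_m \end{bmatrix}
 =
 \begin{bmatrix} I_n - AB & 0 \\ B & I_m \end{bmatrix},
\end{equation*}
and take determinants of both, using that the triangular factors with identity diagonal blocks have determinant~$1$ and that the determinant of a block-triangular matrix is the product of the determinants of its diagonal blocks.

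There is essentially no obstacle here: the only mild point of care is justifying that a block-triangular matrix $\left[\begin{smallmatrix} P & * \\ 0 & Q \end{smallmatrix}\right]$ has determinant $\det P \cdot \det Q$, which is standard (it follows, e.g., from the Leibniz expansion or from Proposition~\ref{app:schur} itself in the degenerate case), and the observation that the elementary block matrices $\left[\begin{smallmatrix} I_n & 0 \\ -B & I_m \end{smallmatrix}\right]$ and $\left[\begin{smallmatrix} I_n & -A \\ 0 & I_m \end{smallmatrix}\right]$ are invertible with determinant one. I would present the factorization version since it is self-contained and avoids invoking invertibility of $I_n-AB$ or $I_m-BA$ (which need not hold), making the argument valid for all $A$, $B$ without any genericity hypothesis.
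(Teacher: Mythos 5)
Your overall strategy -- block elimination on $\left[\begin{smallmatrix} I_n & A \\ B & I_m \end{smallmatrix}\right]$ in two ways and comparison of determinants -- is exactly the idea behind the paper's one-line proof, and your first factorization is correct. However, your second displayed identity is false as written: multiplying out
\begin{equation*}
 \begin{bmatrix} I_n & A \\ B & I_m \end{bmatrix}
 \begin{bmatrix} I_n & -A \\ 0 & I_m \end{bmatrix}
 =
 \begin{bmatrix} I_n & 0 \\ B & I_m - BA \end{bmatrix}
\end{equation*}
gives the Schur complement of the \emph{top-left} block again, not $\left[\begin{smallmatrix} I_n - AB & 0 \\ B & I_m \end{smallmatrix}\right]$. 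Consequently both of your factorizations evaluate $\det M$ to $\det(I_m - BA)$, and the quantity $\det(I_n - AB)$ never actually appears; the proof as written does not close. The fix is a one-move change: to eliminate the block $A$ and produce $I_n - AB$ you must multiply by the upper-triangular elimination matrix on the \emph{left} (or by the lower-triangular one on the right), i.e.
\begin{equation*}
 \begin{bmatrix} I_n & -A \\ 0 & I_m \end{bmatrix}
 \begin{bmatrix} I_n & A \\ B & I_m \end{bmatrix}
 =
 \begin{bmatrix} I_n - AB & 0 \\ B & I_m \end{bmatrix}.
\end{equation*}
With this correction the argument is complete and valid for arbitrary $A$, $B$ (no invertibility needed), and it coincides in substance with the paper's proof, which packages the two eliminations into the single identity
$\left[\begin{smallmatrix} I_n-AB & 0 \\ B & I_m \end{smallmatrix}\right]\left[\begin{smallmatrix} I_n & -A \\ 0 & I_m \end{smallmatrix}\right]
=\left[\begin{smallmatrix} I_n & -A \\ 0 & I_m \end{smallmatrix}\right]\left[\begin{smallmatrix} I_n & 0 \\ B & I_m-BA \end{smallmatrix}\right]$
and takes determinants once. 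Your remark that the Schur-complement route via Proposition~\ref{app:schur} would require invertibility hypotheses, and that the factorization route avoids them, is correct and worth keeping.
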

\begin{proof}
Take the determinant of the following identity:
\begin{equation}
 \begin{bmatrix}
 I_n-AB & 0 \\
  B     & I_n \\
 \end{bmatrix}
 \begin{bmatrix}
  I_n   & -A \\
  0     & I_m \\
 \end{bmatrix}
=
 \begin{bmatrix}
 I_n    & -A \\
  0     & I_m \\
 \end{bmatrix}
 \begin{bmatrix}
 I_n   & 0 \\
 B     & I_m-BA \\
 \end{bmatrix}
\end{equation} 
\end{proof}

\begin{prop}
\label{app:detuniform}
Let $X$ be a $d \times d$ matrix of the form
\begin{equation}
X=
\begin{small}
\renewcommand{\arraystretch}{0.8}
\begin{bmatrix}
a      & b      & \cdots  & b \\
b      & a      & \cdots  & b \\
\vdots & \vdots & \ddots  & \vdots \\
b      & b      & \cdots  & a  \\
\end{bmatrix}.
\end{small}
\end{equation}
Then $\det X= (a-b)^{d-1}(a+(d-1)b)$ and
\begin{equation}
X^{-1}=
\begin{small}
\renewcommand{\arraystretch}{0.9}
\frac{1}{(a-b)(a+(d-1)b)}
\begin{bmatrix}
a+(d-2)b& -b      & \cdots  & -b \\
-b      & a+(d-2)b& \cdots  & -b \\
\vdots  & \vdots  & \ddots  & \vdots \\
-b      & b       & \cdots  & a+(d-2)b  \\
\end{bmatrix}
\end{small}
\end{equation}
\end{prop}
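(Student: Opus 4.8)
The plan is to diagonalize $X$ explicitly by exhibiting its eigenvectors, which are forced by the symmetry of the matrix. Write $X = (a-b) I_d + b J$, where $J$ is the all-ones $d\times d$ matrix. Since $J$ has rank one, its spectrum is immediate: $J \bs{1} = d \bs{1}$ where $\bs{1}=(1,\ldots,1)^T$, and $J v = 0$ for every $v$ in the $(d-1)$-dimensional hyperplane $\{v : \sum_i v_i = 0\}$. Hence the eigenvalues of $X$ are $a+(d-1)b$ (with eigenvector $\bs{1}$, multiplicity one) and $a-b$ (multiplicity $d-1$). The determinant formula $\det X = (a-b)^{d-1}(a+(d-1)b)$ then follows as the product of eigenvalues.

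For the inverse, I would simply guess the analogous form $X^{-1} = c I_d + c' J$ and solve for the scalars $c, c'$ by imposing $X X^{-1} = I_d$. Using $J^2 = d J$, one computes
\begin{equation}
X X^{-1} = \bigl((a-b)I_d + bJ\bigr)\bigl(cI_d + c'J\bigr) = (a-b)c\, I_d + \bigl((a-b)c' + bc + bc'd\bigr) J,
\end{equation}
so we need $(a-b)c = 1$ and $(a-b)c' + bc + bc'd = 0$. The first gives $c = 1/(a-b)$; substituting into the second and solving gives $c' = -b/\bigl((a-b)(a+(d-1)b)\bigr)$. Then the diagonal entries of $X^{-1}$ are $c + c' = (a+(d-2)b)/\bigl((a-b)(a+(d-1)b)\bigr)$ and the off-diagonal entries are $c' = -b/\bigl((a-b)(a+(d-1)b)\bigr)$, which matches the claimed matrix. (This of course requires $a \neq b$ and $a + (d-1)b \neq 0$ for invertibility, which is exactly the condition that $\det X \neq 0$.)

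There is essentially no obstacle here: the only mild point to be careful about is the bookkeeping of the scalar equation for $c'$, and the observation that the ansatz $X^{-1} = cI + c'J$ is legitimate because $X^{-1}$ must commute with every permutation matrix fixing $X$ (equivalently, $X^{-1}$ lies in the commutative algebra generated by $I$ and $J$). I would present the eigenvalue argument for the determinant and then verify the inverse directly by multiplication, since the verification is short and self-contained.
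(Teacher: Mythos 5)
Your proof is correct and complete: the decomposition $X=(a-b)I_d+bJ$ with the rank-one spectral analysis of $J$ gives the determinant, and the ansatz $X^{-1}=cI_d+c'J$ together with $J^2=dJ$ pins down the inverse exactly as you compute. The paper states this proposition in the appendix without any proof, so there is nothing to compare against; your argument is the standard one and fills that gap cleanly (incidentally, the second column of the last row of the inverse as printed in the paper should read $-b$ rather than $b$, a typo your computation implicitly corrects).
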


\section{On probability distributions}
\label{app:sec:probability}
\begin{prop}
\label{app:prop:normccm}
Let $x, y$ be vector valued random variables following a probability distribution $p$.
If $\var{p}{(x,y)}$ is regular, then
\begin{equation}
  \norm{\corr{p}{y}{x}}_2 < 1,
\end{equation}
where $\corr{p}{y}{x}$ is the \ccm and $\norm{\cdot}_2$ is the norm induced by the inner product.
(Definitions are found in Subsection \ref{sec:regionPD}.)
\end{prop}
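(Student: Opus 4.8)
The plan is to reduce the claim to a standard fact about the spectral norm of a ``normalized'' cross-covariance block. Write $\Sigma := \var{p}{(x,y)}$ in block form with diagonal blocks $A := \var{p}{x}$ and $B := \var{p}{y}$ and off-diagonal block $C := \cov{p}{y}{x}$, so that $\corr{p}{y}{x} = B^{-1/2} C A^{-1/2}$ by definition (see Eq.~(\ref{eq:defccm})). Note $A$ and $B$ are positive definite: each is a principal submatrix of the regular positive semidefinite matrix $\Sigma$, hence positive semidefinite, and if either were singular it would force a linear relation among components making $\Sigma$ singular, contradiction. So the symmetric square roots $A^{1/2}$, $B^{1/2}$ and their inverses exist.

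First I would perform a congruence transformation that ``whitens'' the diagonal blocks. Conjugating $\Sigma$ by $\diag(A^{-1/2}, B^{-1/2})$ yields the matrix
\begin{equation*}
 \tilde{\Sigma} =
 \begin{bmatrix}
  I & A^{-1/2} C^{T} B^{-1/2} \\
  B^{-1/2} C A^{-1/2} & I
 \end{bmatrix}
 =
 \begin{bmatrix}
  I & R^{T} \\
  R & I
 \end{bmatrix},
\end{equation*}
where $R := \corr{p}{y}{x} = B^{-1/2} C A^{-1/2}$. Since congruence preserves positive definiteness and regularity, $\tilde{\Sigma}$ is positive definite.

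The key step is then: a block matrix $\left[\begin{smallmatrix} I & R^{T} \\ R & I \end{smallmatrix}\right]$ is positive definite if and only if $\norm{R}_2 < 1$. This follows from the Schur complement criterion (Proposition \ref{app:schur} identifies the relevant block, and the standard positivity criterion says $\tilde\Sigma \succ 0$ iff the Schur complement $I - R R^{T} \succ 0$), and $I - R R^{T} \succ 0$ is equivalent to all singular values of $R$ being strictly less than $1$, i.e.\ $\norm{R}_2 < 1$. Applying this with $R = \corr{p}{y}{x}$ gives the claim.

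The only genuinely delicate point is the strictness of the inequality, which is exactly where the regularity hypothesis on $\var{p}{(x,y)}$ is used: without it one only gets $\norm{\corr{p}{y}{x}}_2 \le 1$. So the main obstacle is simply being careful that each reduction step (taking principal submatrices, conjugating by the square roots, passing to the Schur complement) preserves \emph{strict} positive definiteness and does not merely preserve positive semidefiniteness; everything else is routine linear algebra already available in Appendix~\ref{sec:linearformula}.
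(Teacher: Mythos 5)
Your proposal is correct, but it proves the statement by a different mechanism than the paper does. The paper's own proof is a direct Cauchy--Schwarz argument: for arbitrary vectors $a,b$ it bounds the bilinear form $\inp{b}{\cov{p}{y}{x}a}$ strictly by $\sqrt{\inp{b}{\var{p}{y}b}}\sqrt{\inp{a}{\var{p}{x}a}}$ (strictness coming from regularity of $\var{p}{(x,y)}$, since equality would force an almost-sure linear relation between $b^{T}(y-\E{}{y})$ and $a^{T}(x-\E{}{x})$), and then identifies the supremum of the normalized bilinear form with $\norm{\corr{p}{y}{x}}_2$. Your route instead whitens the joint covariance by the congruence $\diag(\var{p}{x}^{-1/2},\var{p}{y}^{-1/2})$ and invokes the Schur-complement criterion: the block matrix with identity diagonal blocks and off-diagonal block $R=\corr{p}{y}{x}$ is positive definite iff $I-RR^{T}\succ 0$, i.e.\ iff $\norm{R}_2<1$. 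Both arguments are sound and of comparable length. What yours buys is that the strictness bookkeeping is entirely mechanical --- strict positive definiteness is preserved under congruence and under taking Schur complements, so regularity enters exactly once and propagates automatically; it also reuses the paper's Proposition~\ref{app:schur} rather than a probabilistic inequality. What the paper's version buys is that it avoids matrix square roots and the block-positivity criterion altogether, working only with scalar expectations, which keeps the appendix self-contained at a more elementary level. One cosmetic remark: your justification that the diagonal blocks are invertible can be shortened --- a principal submatrix of a positive definite matrix is positive definite, full stop; no separate singularity argument is needed.
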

\begin{proof}
From the assumption of this proposition,
$\var{p}{y}$ and $\var{p}{x}$ are both regular and
the \ccm given by Eq.~(\ref{eq:defccm}) is well defined.
For arbitrary vector $a$ and $b$,
\begin{align*}
 \inp{b}{\cov{p}{y}{x}a }
&=
\E{p}{b^{T}(y - \E{}{y})(x-\E{}{x})^{T} a } \\
&<
\sqrt{ \inp{b}{\var{}{y}b} } \sqrt{ \inp{a}{\var{}{x}a} },
\end{align*}
where we used Schwartz's inequality.
This inequality must be strict for all $a$ and $b$ because of the regularity of $\var{}{(x,y)}$.
Using the above inequality, we obtain
\begin{equation*}
 \norm{\corr{p}{y}{x}}_2  = \max_{ b\neq 0, a \neq 0 } 
\frac{   \inp{b}{\cov{p}{y}{x}a } }{  \sqrt{ \inp{b}{\var{}{y}b} } \sqrt{ \inp{a}{\var{}{x}a} }  } < 1.
\end{equation*}
\end{proof}

\chapter{Miscellaneous facts on LBP}
\section{Inference on tree}
\label{app:sec:infontree}
The following formula gives the covariance of the sufficient statistics on separated vertices on a tree.
The expression involves covariances of neighboring vertices and inverted variances. 
It is interesting that this type of expressions also appears in the linearization of the LBP update in Theorem \ref{thm:diffofLBP}
and the \Bzf.
\begin{prop}
\label{app:prop:covformula}
Let $\mathcal{I}=\{\mathcal{E}_{\alpha},\mathcal{E}_i \}$ be an \ifa on a tree structured factor graph $H=(V,F)$.
Let $p$ be the probability distribution obtained by an graphical model satisfying Assumption \ref{asm:modelindludes}.
(See Subsection \ref{sec:basicLBP}.)
For vertices $i,j$ of $H$, 
let $(i=i_1,\alpha_1,i_2,\alpha_2,\ldots,\alpha_n,j=i_{n+1})$ be the unique walk from $i$ to $j$
that satisfies $\alpha_l \neq \alpha_{l+1}$.
We have
\begin{align*}
 \cov{p}{\phi_i}{\phi_j} =
\cov{p_{\alpha_n}}{\phi_{i_{n+1}}}{\phi_{i_n}} 
&\var{p_{i_n}}{\phi_{i_n}}^{-1}  \cdots \\
&\cov{p_{\alpha_2}}{\phi_{i_{3}}}{\phi_{i_2}} 
\var{p_{i_2}}{\phi_{i_2}}^{-1} 
\cov{p_{\alpha_1}}{\phi_{i_{2}}}{\phi_{i_1}},
\end{align*}
where $p_{i_l}$ and $p_{\alpha_l}$ are marginal distributions of $p$, and
$\phi_{i_l}$ are sufficient statistics of exponential families $\mathcal{E}_{i_l}$ 
\end{prop}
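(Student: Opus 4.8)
The plan is to prove the formula by induction on the length $n$ of the walk from $i$ to $j$, using the tree structure to factor the distribution along the path. The base case $n=0$ (i.e. $i=j$) is trivial, and the case $n=1$ is precisely the statement that for a single factor $\alpha_1 \ni i_1,i_2$, one has $\cov{p}{\phi_{i_1}}{\phi_{i_2}}=\cov{p_{\alpha_1}}{\phi_{i_1}}{\phi_{i_2}}$, which follows immediately because $\phi_{i_1}$ and $\phi_{i_2}$ depend only on the variables in $\alpha_1$, so the covariance is computed under the marginal $p_{\alpha_1}$.

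For the inductive step, the key observation is the conditional-independence structure on a tree: removing the vertex $i_2$ (equivalently, conditioning on $x_{i_2}$) disconnects $i_1$ from the rest of the path $i_3,\ldots,i_{n+1}$. Concretely, I would use the tower property of conditional expectation to write
\begin{equation*}
\cov{p}{\phi_{i_1}}{\phi_j} = \E{p}{ \cov{p}{\phi_{i_1}}{\phi_j \mid x_{i_2}} } + \cov{p}{ \E{p}{\phi_{i_1}\mid x_{i_2}} }{ \E{p}{\phi_j \mid x_{i_2}} }.
\end{equation*}
By the Markov property on the tree, $\phi_{i_1}$ and $\phi_j$ are conditionally independent given $x_{i_2}$, so the first term vanishes. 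For the second term, I would show that $\E{p}{\phi_{i_1}\mid x_{i_2}}$ is an affine function of $\phi_{i_2}(x_{i_2})$ — here the marginally closed assumption (Assumption \ref{asm:marginallyclosed}) and the fact that $p \in \mathcal{E}(\mathcal{I})$ guarantee that the relevant conditional expectation lies in the span of $\phi_{i_2}$ together with constants — and that the linear coefficient is exactly $\cov{p_{\alpha_1}}{\phi_{i_2}}{\phi_{i_1}}\var{p_{i_2}}{\phi_{i_2}}^{-1}$. This is the standard Gaussian-type regression identity, valid because restricted to the factor $\alpha_1$ the conditional mean of $\phi_{i_1}$ given $x_{i_2}$ is the orthogonal projection onto the sufficient statistic $\phi_{i_2}$. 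Then $\cov{p}{\E{p}{\phi_{i_1}\mid x_{i_2}}}{\phi_j} = \cov{p_{\alpha_1}}{\phi_{i_2}}{\phi_{i_1}}\var{p_{i_2}}{\phi_{i_2}}^{-1}\cov{p}{\phi_{i_2}}{\phi_j}$, and the walk from $i_2$ to $j$ has length $n-1$, so the induction hypothesis finishes the argument.

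The main obstacle I expect is establishing rigorously that $\E{p}{\phi_{i_1}(x_{i_1}) \mid x_{i_2}}$ is affine in $\phi_{i_2}(x_{i_2})$ with the claimed coefficient. This requires a careful look at the exponential-family structure: one must use that the pair $(x_{i_1},x_{i_2})$ has marginal $p_{\alpha_1}$ lying in $\mathcal{E}_{\alpha_1}$, whose sufficient statistics include $\pa{\phi}$, $\phi_{i_1}$ and $\phi_{i_2}$, and then invoke the Legendre/moment-map machinery of Subsection \ref{sec:expfamily} — specifically that conditional expectations within an exponential family are expressible through the moment map — to identify the regression coefficient as $\cov{}{\phi_{i_2}}{\phi_{i_1}}\var{}{\phi_{i_2}}^{-1}$. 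Once this linearization lemma is in place, the rest is bookkeeping with the tree's branching structure, which is routine.
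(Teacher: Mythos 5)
Your proof is correct, but it takes a genuinely different route from the paper's. The paper reduces everything to the three-vertex chain and argues entirely on the dual side: the tree factorization $p=p_{12}p_{23}/p_2$ makes the negative entropy decompose as $\varphi=\varphi_{12}+\varphi_{23}-\varphi_{2}$, hence $\pds{\varphi}{\eta_1}{\eta_3}=0$, and inverting the relation $\sum_{j}\pds{\psi}{\theta_i}{\theta_j}\pds{\varphi}{\eta_j}{\eta_k}=\delta_{i,k}$ at $(i,k)=(1,3)$ and $(2,3)$ yields the product formula; the extension to general walks is only asserted ("other cases are reduced to this case"). You instead work on the primal side with the law of total covariance, the Markov property of the tree, and induction on the walk length, which handles general walks and branching explicitly. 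The one step you flag as the obstacle --- that $\E{p}{\phi_{i_1}\mid x_{i_2}}$ is affine in $\phi_{i_2}(x_{i_2})$ with coefficient $\cov{p_{\alpha_1}}{\phi_{i_1}}{\phi_{i_2}}\var{p_{i_2}}{\phi_{i_2}}^{-1}$ --- is indeed exactly where Assumption \ref{asm:marginallyclosed} enters, and your sketch can be closed as follows: marginal closedness says that $\log\int\exp\bigl(\inp{\pa{\theta}}{\pa{\phi}}+\sum_{k\in\alpha,\,k\neq i_2}\inp{\va{\theta}{k}}{\phi_k}\bigr)\,{\rm d}\nu_{\alpha\smallsetminus i_2}$ is, for \emph{every} natural parameter, an affine function of $\phi_{i_2}(x_{i_2})$; differentiating this identity with respect to $\va{\theta}{i_1}$ shows that the conditional expectation of $\phi_{i_1}$ given $x_{i_2}$ is affine in $\phi_{i_2}$, and the coefficient is then forced by taking covariances against $\phi_{i_2}$. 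You also implicitly use that on a tree the marginal $p_{\alpha_1}$ of $p\in\mathcal{E}(\mathcal{I})$ lies in $\mathcal{E}_{\alpha_1}$ (the bijectivity of $\Pi$ between $L$ and $\mathcal{E}(\mathcal{I})$ for trees), which should be cited. In exchange for needing this regression lemma, your argument isolates precisely which assumption makes the formula work and gives a complete treatment of the general walk, whereas the paper's dual computation is shorter but leaves the reduction to the three-chain implicit.
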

\begin{proof}
Consider a tree $H=(V,F)$ given by $V=\{1,2,3\}$ and $F=\{\alpha=\{1,2\},\beta=\{2,3\}\}$.
We compute the covariance of $\phi_1$ and $\phi_3$ on this graph.
Other cases are reduced to this case.
Thus what we have to show is
\begin{equation}
  \cov{}{\phi_1}{\phi_3}=  \cov{}{\phi_1}{\phi_2} \var{}{\phi_2} ^{-1} \cov{}{\phi_2}{\phi_3}. \label{app:eq:cov13formula}
\end{equation}

Let us define a subfamily of the global exponential family that include the given graphical model:
\begin{align*}
 p(x_1,x_2,x_3;\theta_1,\theta_2,\theta_3) 
=
\exp & \Big( \theta_1 \phi_1(x_1) +\pxw{\bar{\theta}}{12} \phi_{12}(x_1,x_2) + \theta_2 \phi_2(x_2)  \\
& +\pxw{\bar{\theta}}{23} \phi_{23}(x_2,x_3) + \theta_3 \phi_3(x_3) 
- \psi(\theta_1,\theta_2,\theta_3)  \Big).
\end{align*}
From the assumption,
the given distribution $p$ is equal to $p(\bar{\theta})$ for some $\bar{\theta}=(\bar{\theta_1},\bar{\theta_2},\bar{\theta_3})$.
The \eparas are denoted by $(\eta_1,\eta_2,\eta_3)$. 
The variances and covariances are computed by the derivatives of the log partition function:
\begin{equation*}
 \cov{p}{\phi_i}{\phi_j}= \pds{\psi}{\theta_i}{\theta_j}.
\end{equation*}
\begin{claim}
Let $\varphi$ be the Legendre transform of $\psi$, then
 \begin{equation}
 \pds{ \varphi }{\eta_1}{\eta_3} =0 .
\end{equation}
\end{claim}
\begin{proof}[Proof of claim]
Local exponential families $\mathcal{E}_{12}, \mathcal{E}_2$ and $\mathcal{E}_{23}$ are denoted by
\begin{align*}
& b_{12}(x_1,x_2;\vx{\theta}{12}{1},\vx{\theta}{12}{2})= 
\exp \left( \vx{\theta}{12}{1} \phi_1(x_1) + \vx{\theta}{12}{2} \phi_2(x_2) +\bar{\theta}_{12} \phi_{12}(x_1,x_2) - \psi_{12}(\vx{\theta}{12}{1},\vx{\theta}{12}{2})  \right), \\
& b_{2}(x_2;\theta'_2)= \exp \left( \theta'_2 \phi_2(x_2)  - \psi_2(\theta'_2) \right), \\
& b_{23}(x_2,x_3;\vx{\theta}{23}{2},\vx{\theta}{23}{3})= 
\exp \left( \vx{\theta}{23}{2} \phi_2(x_2) + \vx{\theta}{23}{3} \phi_3(x_3) +\bar{\theta}_{23} \phi_{23}(x_2,x_3) - \psi_{23}(\vx{\theta}{23}{2},\vx{\theta}{23}{3})  \right).
\end{align*}
The dual parameter sets are denoted by $(\vx{\eta}{12}{1},\vx{\eta}{12}{2})$, $\eta'_2$ and $(\vx{\eta}{23}{2},\vx{\eta}{23}{3})$.
As usual, we are assuming that the \ifa satisfies Assumptions \ref{asm:expregular} and \ref{asm:marginallyclosed}.
(See Subsections \ref{sec:expfamily} and \ref{sec:infmodel}.)

If we set
\begin{equation*}
 \vx{\eta}{12}{1}= \eta_1, \quad \eta'_2= \vx{\eta}{12}{2}= \vx{\eta}{23}{2} = \eta_2  \quad \text{and}\quad \vx{\eta}{23}{3}=\eta_3,
\end{equation*}
we see that $b_{12}=p_{12}$, $b_2=p_2$ and $b_{23}=p_{23}$.
On the other hand, we have
\begin{equation}
 p(x_1,x_2,x_3) = 
\frac{p_{12}(x_1,x_2) p_{23}(x_2,x_3)}{p_2 (x_2)} \label{app:eq:123treeprob}
\end{equation}
because $H$ is a tree.
From Eqs.~(\ref{app:eq:123treeprob}) and (\ref{eq:negativeentropy}),
we drive
\begin{equation*}
 \varphi(\eta_1,\eta_2,\eta_3)= \varphi_{12}(\eta_1,\eta_2)+ \varphi_{23}(\eta_2,\eta_3) - \varphi_{2}(\eta_2)
\end{equation*}
From this equation, the assertion of this claim is immediately proved.
\end{proof}
Let us go back to the proof of Proposition \ref{app:prop:covformula}.
Using standard results presented in Subsection \ref{sec:expfamily},
it is easy to see that
\begin{equation*}
  \sum_{j=1}^3   \pds{\psi}{\theta_i}{\theta_j}  \pds{\varphi}{\eta_j}{\eta_k} = \delta_{i,k}.
\end{equation*}
Setting $(i,k)=(1,3),(2,3)$, we obtain
\begin{align*}
 \cov{}{\phi_1}{\phi_2}  \pds{\varphi}{\eta_2}{\eta_3} +  \cov{}{\phi_1}{\phi_3}  \pds{\varphi}{\eta_3}{\eta_3} = 0,\\
\var{}{\phi_2}   \pds{\varphi}{\eta_2}{\eta_3}  +\cov{}{\phi_2}{\phi_3}  \pds{\varphi}{\eta_3}{\eta_3}=0.
\end{align*}
We obtain Eq.~(\ref{app:eq:cov13formula}) from these equations.
\end{proof}

\section{The Hessian of $\mathcal{F}$}
\label{sec:HesseF}
This section derives Eq.~(\ref{eq:nablaFtype2}) calculating the second derivatives of $\mathcal{F}$.
Recall that the domain of the type 2 Bethe free energy function $\mathcal{F}$ is
\begin{equation*}
 A (\mathcal{I},\Psi):= 
\{\bstheta=\{ \fa{\theta},\theta_i \} | \pa{\theta}= \pa{\bar{\theta}} , 
\sum_{\alpha \ni i} \va{\bar{\theta}}{i} =(1-d_i)\theta_i+\sum_{\alpha \ni i} \va{\theta}{i} 
~{}^{\forall} \alpha \in F,~{}^{\forall} i \in \alpha \}.
\end{equation*}
We take $\{ \va{\theta}{i} \}_{ \alpha \in F, i \in \alpha } $ as a set of free parameters and thus 
the type 2 \Bfe function Eq.~(\ref{eq:defn:type2BFE}) is
\begin{equation*}
 \mathcal{F}( \{ \va{\theta}{i} \}  )
= -\sum_{\alpha \in F} 
\psi_{\alpha} \left( \pa{\bar{\theta}}, 
\{  \va{\theta}{i} \}_{i \in \alpha}
 \right)
-\sum_{i \in V}(1-d_i) 
\psi_{i}(
\frac{1}{1-d_i} \sum_{ \alpha \ni i} \left(  \va{\bar{\theta}}{i} -  \va{\theta}{i} \right)
). 
\end{equation*}

We introduce another coordinate $\{ \va{\xi}{i} \}_{ \alpha \in F, i \in \alpha } $ by
\begin{equation}
 \va{\theta}{i} =\sum_{ \beta \in N_i \smallsetminus \alpha} \vb{\xi}{i}.
\end{equation}
The first derivatives of $\mathcal{F}$ are
\begin{align*}
 \pd{\mathcal{F}}{\vb{\xi}{i}} 
&= \sum_{ \alpha \in N_i \smallsetminus \beta} \pd{\mathcal{F}}{ \va{\theta}{i}} \\
&= \sum_{ \alpha \in N_i \smallsetminus \beta}
\left( - \va{\eta}{i} + \eta_i \right).
\end{align*}

The second derivatives are 
\begin{align*}
  \pds{\mathcal{F}}{\vx{\theta}{\gamma}{j}}{\vb{\xi}{i}} 
&= \sum_{ \alpha \in N_i \smallsetminus \beta}
\left( - \pd{ \va{\eta}{i} }{ \vx{\theta}{\gamma}{j} } + \pd{\eta_i}{\vx{\theta}{\gamma}{j}} \right) \\
&= \sum_{ \alpha \in N_i \smallsetminus \beta}
\left(
 -\delta_{\alpha, \gamma} \cov{\alpha}{\phi_j}{\phi_i} + \delta_{i,j}\var{}{\phi_i}\left(\frac{-1}{1-d_i} \right) 
\right) \\
&=
\begin{cases}
 -\cov{\gamma}{\phi_j}{\phi_i}           &\text{if } \gamma \in N_i \smallsetminus \beta , j \neq i, \\
 -\var{\gamma}{\phi_i} + \var{i}{\phi_i} &\text{if } j=i, \gamma \neq \beta, \\
  \var{i}{\phi_i}                      &\text{if } j=i, \gamma = \beta, \\
   0                                    &\text{otherwise. }
\end{cases}
\end{align*}
Note that this matrix of the second derivatives is indexed by the directed edge set $\vec{E}$
because $\vx{\theta}{\gamma}{j}$ and $\vb{\xi}{i}$ are indexed by directed edges $(\gamma \rightarrow j)$
and $(\beta \rightarrow i)$ respectively.

At an LBP fixed point, $\var{\gamma}{\phi_i} =\var{i}{\phi_i}$ holds.
Therefore, 
\begin{equation}
 \left[  \pds{\mathcal{F}}{\vx{\theta}{\gamma}{j}}{\vb{\xi}{i}}  \right]
= \diag \left( \var{}{\phi_{t(e)}}| e \in \vec{E} \right) 
\left[ I - \matmu \right],
\end{equation}
where $\bsu=\{  u^{\alpha}_{\edij} \}$ is given by Eq.~(\ref{def:u}).
The above equation is nothing but Eq.~(\ref{eq:nablaFtype2}).

\section{Convexity of the \Bfe function}
\label{app:sec:convexity}
\begin{thm}
For any inference model $\mathcal{I}$ on a factor graph $H$ with nullity $n(H)=0,1$, 
the \Bfe function $F$ is convex on $L(\mathcal{I})$.
\end{thm}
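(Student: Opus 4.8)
The plan is to reduce the convexity statement for an arbitrary inference family to the Bethe--zeta formula and a spectral fact about the directed edge operator $\mathcal{M}$. Convexity of $F$ on the (relatively open, convex) domain $L(\mathcal{I})$ is equivalent to positive semidefiniteness of the Hessian $\nabla^2 F$ at every point of $L$; since $L$ is connected and $\nabla^2 F$ depends continuously on $\bseta$, it suffices to show that $\nabla^2 F$ is nonsingular on all of $L$ together with positive definiteness at a single point. The single point is easy: by Lemma~\ref{lem:zerocorr}, at any pseudomarginal with vanishing covariances $\cov{b_\alpha}{\phi_i}{\phi_j}=0$ the Hessian is positive definite, and such a point exists in $L$ (take $b_\alpha=\prod_{i\in\alpha}b_i$). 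So the real content is nonsingularity of $\nabla^2 F$ everywhere on $L$.

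First I would invoke the Bethe--zeta formula (Theorem~\ref{thm:BZ}): at any $\bseta=\{\pa{\eta},\eta_i\}\in L$,
\begin{equation*}
\zeta_H(\bsu)^{-1}=\det(\nabla^2 F)\prod_{\alpha\in F}\det(\var{b_\alpha}{\fa{\phi}})\prod_{i\in V}\det(\var{b_i}{\phi_i})^{1-d_i},
\end{equation*}
with $u^{\alpha}_{\edij}=\var{b_j}{\phi_j}^{-1}\cov{b_\alpha}{\phi_j}{\phi_i}$. By Assumption~\ref{asm:expregular}, all the variance determinants appearing on the right are strictly positive, so $\det(\nabla^2 F)\neq 0$ if and only if $\zeta_H(\bsu)^{-1}\neq 0$, i.e.\ $\det(I-\matmu)\neq 0$ by the first determinant formula. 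Next I would pass from $\bsu$ to the correlation coefficient matrix: by Lemma~\ref{lem:specutoc} (the similarity transform $\mathcal{Z}$), $\spec{\matmu}=\spec{\matmc}$ where $c^{\alpha}_{\edij}=\corr{b_\alpha}{\phi_j}{\phi_i}$. Hence it is enough to show $1\notin\spec{\matmc}$ for every $\bseta\in L$.

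The key step is the spectral bound. Each entry of $\mathcal{M}(\bs{c})$ is a correlation-coefficient matrix, which by Proposition~\ref{app:prop:normccm} has operator norm strictly less than $1$ (the variance of $(\phi_i)_{i\in\alpha}$ under $b_\alpha$ is nonsingular since $\var{b_\alpha}{\fa{\phi}}$ is positive definite). Combining this with Proposition~\ref{prop:PFboundM}: for a connected $H$ with $n(H)=0$ the operator $\mathcal{M}$ has Perron--Frobenius eigenvalue $1$, and for $n(H)=1$ it has Perron--Frobenius eigenvalue $0$. Running the estimate in the proof of Lemma~\ref{lem:specbound} with $\kappa=\max_{\alpha,i,j}\norm{\corr{b_\alpha}{\phi_i}{\phi_j}}<1$ and $\alpha$ equal to this Perron--Frobenius eigenvalue, one gets $\specr{\matmc}\le\kappa\alpha<1$ when $n(H)=0$, and $\specr{\matmc}=0$ when $n(H)=1$ (in the one-cycle case $\mathcal{M}$ itself is nilpotent on the core, so $\matmc$ is too). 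Either way $\specr{\matmc}<1$, so $1\notin\spec{\matmc}$, giving $\det(\nabla^2 F)\neq 0$ throughout $L$. For disconnected $H$ one applies the connected case componentwise using multiplicativity of $\zeta_H$ and of $F$ over components. Together with the positive-definite base point and continuity, this forces $\nabla^2 F\succ 0$ on all of $L$, hence $F$ is convex.

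The main obstacle I anticipate is the $n(H)=1$ (one-cycle) case, where the Perron--Frobenius eigenvalue of $\mathcal{M}$ is $0$ but $\mathcal{M}$ is not the zero operator: one must argue that $\matmc$ is also nilpotent, or more simply that $\det(I-\matmc)=\zeta_{\core(H)}(\bs{c})^{-1}$ equals the single-prime-cycle expression $\det(I-\pi(\mathfrak{p};\bs{c}))\det(I-\pi(\bar{\mathfrak{p}};\bs{c}))$ from Example~\ref{example1}, which is nonzero because $\norm{\pi(\mathfrak{p};\bs{c})}<1$. Everything else is a packaging of results already in the excerpt; the mild subtlety is that, unlike Theorems~\ref{thm:positive} and~\ref{cor:positivedefiniteregion}, here we do not restrict to multinomial or fixed-mean Gaussian families, so the path connecting $\bseta$ to a zero-covariance point must be avoided — we use connectedness of $L$ and the everywhere-nonvanishing determinant instead of an explicit homotopy.
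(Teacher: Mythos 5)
Your route is genuinely different from the paper's proof of this statement. The paper argues by decomposition: using $\var{b_i}{\phi_i}=\var{b_{\alpha}}{\phi_i}$ on $L$ and the Schur complement, it shows that each difference $\varphi_{\alpha}(\fa{\eta})-\varphi_i(\eta_i)$ has positive semidefinite Hessian, and then (following Heskes) writes the entropy part of $F$ as a nonnegative combination of such convex pieces — this is where $n(H)\leq 1$ enters, as a counting condition. That argument needs no spectral input and no base point, which is exactly why the paper restricts the zeta-based proof of Theorem \ref{thm:convexcondition} to multinomial and fixed-mean Gaussian \ifas and proves the fully general statement this other way. Your proposal instead runs the \Bzf together with $\norm{\corr{b_{\alpha}}{\phi_i}{\phi_j}}_2<1$ to get $\det(\nabla^2 F)\neq 0$ throughout $L$, and then tries to pin down the inertia by connectedness plus one positive definite anchor.

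The anchor is the gap. You take $b_{\alpha}=\prod_{i\in\alpha}b_i$, but for an arbitrary \ifa the product of the marginals need not belong to $\mathcal{E}_{\alpha}$: it corresponds to the natural parameter $(\pa{\theta},\va{\theta}{i_1},\ldots)=(0,\theta_{i_1},\ldots)$, and nothing in Assumptions \ref{asm:expregular} and \ref{asm:marginallyclosed} guarantees that this point lies in the (open) set $\Theta_{\alpha}$, nor more generally that $L$ contains any pseudomarginal with all cross-covariances vanishing so that Lemma \ref{lem:zerocorr} applies. Without such a point, ``$\det\nabla^2 F\neq 0$ on the connected set $L$'' only shows that the signature of the Hessian is constant, not that it is positive; this is precisely the obstruction that forces the proof of Theorem \ref{thm:positive} to restrict to multinomial and fixed-mean Gaussian families, where the homotopy $b_{\alpha}(t)=tb_{\alpha}+(1-t)\prod_i b_i$ stays inside the family. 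Replacing the homotopy by connectedness removes the need for the path but not for its endpoint. A second, repairable slip: you assert $\specr{\mathcal{M}}=1$ for trees and $0$ for unicyclic graphs, and that $\mathcal{M}$ is nilpotent on a unicyclic core. It is the other way around: for a tree $\mathfrak{P}_H=\emptyset$, $\zeta_H\equiv 1$ and $\mathcal{M}$ is nilpotent, while for a core equal to a single $N$-cycle $\det(I-u\mathcal{M})=(1-u^N)^2$, so $\specr{\mathcal{M}}=1$. (The printed proof of Theorem \ref{thm:convexcondition} contains the same transposition.) Your fallback for the unicyclic case — $\det(I-\matmc)=\det(I-\pi(\mathfrak{p}))\det(I-\pi(\bar{\mathfrak{p}}))\neq 0$ because $\norm{\pi(\mathfrak{p};\bs{c})}<1$, as in Example \ref{example1} — is correct and rescues that step; the missing anchor point is what prevents your argument from proving the theorem in the stated generality.
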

\begin{proof}
This proof is an modification of Corollary 1 in \cite{Huniquness}, where only multinomial cases are considered.
First, we prove that
$\varphi_{\alpha}(\fa{\eta})- \varphi_i(\eta_i)$ is convex for all $ i \in \alpha$.
More precisely, we prove the positive semi-definiteness of the Hessian.
From $\var{b_i}{\phi_i}=\var{b_{\alpha}}{\phi}$ and Theorem \ref{app:schur},
we have
\begin{equation}
 \pdseta{\varphi_{i}}{i}{i}= \pdseta{\varphi_{\alpha}}{i}{i} - \pds{\varphi_{\alpha}}{\eta_i}{\pa{\eta}}
\left(    \pds{\varphi_{\alpha}}{\pa{\eta}}{\pa{\eta}}       \right)^{-1} \pds{\varphi_{\alpha}}{\pa{\eta}}{i}.
\end{equation}
Accordingly, $\exists X$
\begin{equation}
 X^{T} \nabla^2 (\varphi_{\alpha}(\fa{\eta})- \varphi_i(\eta_i))X
=
\begin{bmatrix}
\pds{\varphi_{\alpha}}{\pa{\eta}}{\pa{\eta}}   & 0 \\
0 & 0
\end{bmatrix}.
\end{equation}
Therefore, the Hessian of $\varphi_{\alpha}(\fa{\eta})- \varphi_i(\eta_i)$ is positive semidefinite.

The rest of the proof is the same as Theorem 1 and Corollary 1 of \cite{Huniquness}.
\end{proof}

\bibliographystyle{plain}
\bibliography{BP,LS,staphys,zeta}

\end{document}
\endinput

